\pgfplotsset{width=5cm,compat=1.13}  
\newcommand{\N}{\mathbb{N}}
\newcommand{\Z}{\mathbb{Z}}
\newcommand{\Q}{\mathbb{Q}}
\definecolor{niceredbright}{HTML}{bd0310}
\definecolor{nicebluebright}{HTML}{197b9b}
\definecolor{nicered}{HTML}{7f0a13}
\definecolor{niceblue}{HTML}{104354}
\definecolor{nicegreen}{HTML}{217516}
\definecolor{nicepurple}{HTML}{884bab}
\definecolor{nicebg}{HTML}{f6f0e4}
\definecolor{niceredlight}{HTML}{c9888d}
\definecolor{nicebluelight}{HTML}{78a4b8}
\definecolor{nicegreenlight}{HTML}{76de68}
\definecolor{nicepurplelight}{HTML}{bc87db}
\newtheorem{theorem}{Theorem}[section]
\newtheorem{lemma}[theorem]{Lemma}
\newtheorem{definition}[theorem]{Definition}
\newtheorem{proposition}[theorem]{Proposition}
\newtheorem{corollary}[theorem]{Corollary}
\theoremstyle{definition}
\newtheorem{remark}[theorem]{Remark}
\newtheorem{example}[theorem]{Example}
\newcommand{\reducible}{\trianglelefteq\hspace{-1.8mm}\trianglelefteq}
\newcommand{\vect}[1]{\mathbf{#1}}
\newcommand{\vectSet}[1]{\mathbf{#1}}
\newcommand{\vectset}[1]{\vectSet{#1}}
\newcommand{\HybridizationRelation}[0]{\trianglelefteq}
\newcommand{\HybridizationRelationDirected}[0]{\HybridizationRelation_{\text{dir}}}
\newcommand{\RelationClass}[0]{\mathcal{C}}
\newcommand{\UpwardClosureL}[1]{#1 \uparrow_{\vectSet{L}}}
\DeclareMathOperator{\VAS}{\mathcal{V}}
\DeclareMathOperator{\FO}{FO}
\DeclareMathOperator{\Pumps}{Pumps}
\DeclareMathOperator{\dir}{dir}
\DeclareMathOperator{\dirOfRun}{ends}
\DeclareMathOperator{\interior}{int}
\DeclareMathOperator{\target}{tgt}
\DeclareMathOperator{\tgt}{\target}
\DeclareMathOperator{\source}{src}
\DeclareMathOperator{\src}{\source}
\DeclareMathOperator{\effect}{\Delta}
\DeclareMathOperator{\Effect}{\effect}
\DeclareMathOperator{\CharSys}{Char}
\DeclareMathOperator{\HomCharSys}{HomChar}
\DeclareMathOperator{\LocalCharSys}{LocChar}
\DeclareMathOperator{\Parikh}{pk}
\DeclareMathOperator{\Rel}{Rel}
\DeclareMathOperator{\PX}{\vectSet{P}_{\vectSet{X}}}
\newcommand{\ApproximationAlgorithm}[0]{\mathcal{A}_{\RelationClass}}
\newcommand{\qin}[0]{q_{\text{in}}}
\newcommand{\qini}[0]{q_{\text{in,i}}}
\newcommand{\qiniplusone}[0]{q_{\text{in,i+1}}}
\newcommand{\qinj}[0]{q^j_{\text{in}}}
\newcommand{\qfin}[0]{q_{\text{fin}}}
\newcommand{\qfini}[0]{q_{\text{fin,i}}}
\newcommand{\qfinj}[0]{q^j_{\text{fin}}}
\newcommand{\piin}[0]{\pi_{\text{in}}}
\newcommand{\piout}[0]{\pi_{\text{out}}}
\newcommand{\Iin}[0]{I_{\text{in}}}
\newcommand{\Iout}[0]{I_{\text{out}}}
\newcommand{\InfinityNorm}[1]{||#1||_{\infty}}
\DeclareMathOperator{\rank}{rank}
\DeclareMathOperator{\size}{size}
\DeclareMathOperator{\sol}{sol}
\newcommand{\ConsideredModel}[0]{VASSnz}
\newcommand{\COne}[0]{A1}
\newcommand{\CTwo}[0]{A2}
\newcommand{\WeakPOne}[0]{C1}
\newcommand{\WeakPTwo}[0]{C3}
\newcommand{\WeakPThree}[0]{C2}
\newcommand{\WeakPAll}[0]{\WeakPOne-\WeakPTwo}
\newcommand{\POne}[0]{R1}
\newcommand{\PTwo}[0]{R2}
\newcommand{\para}[1]{\vspace{1mm}\textbf{#1}}
\DeclareMathOperator{\Concretizable}{Concrete}
\begin{document}

\title{Reachability and Related Problems in Vector Addition Systems with Nested Zero Tests}

\IEEEoverridecommandlockouts

\author{\IEEEauthorblockN{Roland Guttenberg}
\IEEEauthorblockA{
Technical University of Munich,
Germany 
}
\and
\IEEEauthorblockN{Wojciech Czerwi\'{n}ski$^a$}
\thanks{$^a$~Partially supported by the ERC grant INFSYS, agreement no. 950398.}
\IEEEauthorblockA{
University of Warsaw, Poland 
}
\and
\IEEEauthorblockN{S{\l}awomir Lasota$^b$}
\thanks{$^b$~Partially supported by NCN grant 2021/41/B/ST6/00535 and by  ERC Starting grant INFSYS, agreement no.~950398.}
\IEEEauthorblockA{
University of Warsaw, Poland 
}
}

\maketitle

\begin{abstract}
Vector addition systems with states (VASS), also known as Petri nets, are a popular model of concurrent systems. Many problems from many areas reduce to the reachability problem for VASS, which consists of deciding whether a target configuration of a VASS is reachable from a given initial configuration. In this paper, we obtain an Ackermannian (primitive-recursive in fixed dimension) upper bound for the reachability problem in VASS with nested zero tests. Furthermore, we provide a uniform approach which also allows to decide most related problems, for example semilinearity and separability, in the same complexity. For some of these problems like semilinearity the complexity was unknown even for plain VASS.
\end{abstract}

\begin{IEEEkeywords}
Vector Addition Systems, Extended Vector Addition Systems, Reachability, Separability, Semilinearity
\end{IEEEkeywords}

\section{Introduction}\label{SectionIntroduction}


Vector addition systems (VAS), also known as Petri nets, are a popular model of concurrent systems. VAS have a very rich theory and have been intensely studied. In particular, the \emph{reachability problem} for VAS, which consists of deciding whether a target configuration of a VAS is reachable from a given initial configuration, has been studied for over 50 years. It was proved decidable in the early 1980s \cite{SacerdoteT77, Mayr81,Kosaraju82, Lambert92}, but its complexity (Ackermann-complete) could only be determined recently \cite{LerouxS19, CzerwinskiLLLM19, CzerwinskiO21, Leroux21}. 

In \cite{Leroux09} and \cite{Leroux13}, Leroux proved two fundamental results about the reachability sets of VAS. 
In \cite{Leroux09}, he showed that every configuration outside the reachability set $\vect{R}$ of a VAS is separated from $\vect{R}$ by a semilinear inductive invariant (for basic facts on semilinear sets see e.g. \cite{Haase18}). This immediately led to a very simple algorithm for the reachability problem consisting of two semi-algorithms, one enumerating all possible paths to certify reachability, and one enumerating all semilinear sets and checking if they are separating inductive invariants.
In \cite{Leroux13}, he showed that semilinear relations contained in the reachability relation are flatable, which immediately led to an algorithm for checking whether a semilinear relation is included in or equal to the reachability relation. Both of these results were obtained using abstract \emph{geometric properties} of reachability relations. This strand of research was later continued in \cite{GuttenbergRE23}, where some additional geometric axioms were used to reprove that the semilinearity problem, i.e. deciding whether the reachability relation of a given VAS is semilinear, is decidable. The semilinearity problem was first shown to be decidable in \cite{Hauschildt90}.

One major branch of ongoing research in the theory of VAS studies whether results like the above extend to more general systems \cite{Reinhardt08, Bonnet11, Bonnet12, RosaVelardoF11, AtigG11, LerouxPS14, LerouxST15, HofmanLLLST16, LazicS16, FinkelLS18, LerouxS20, BlondinL23}. In particular, in a famous but very technical paper, Reinhardt proved that the reachability problem is decidable for VASS with nested zero tests (VASSnz) \cite{Reinhardt08}, in which counters can be tested for zero in a restricted manner: There is an order on the counters such that whenever counter \(i\) is tested for \(0\), also all counters \(j \leq i\) are tested for \(0\). Later \cite{AtigG11} proved that reachability in VASS controlled by finite-index grammars is decidable by reducing to VASSnz. In \cite{Bonnet11, Bonnet12}, Bonnet presented a more accessible proof of VASSnz reachability by extending the result of \cite{Leroux09}, separability by inductive semilinear sets. Recently, in \cite{Guttenberg24}, also the result of \cite{Leroux13} was extended to VASSnz.


\para{Our contribution}. The following questions remain for \ConsideredModel: The complexity of the reachability and related problems as well as the decidability status of the semilinearity problem, which we both resolve in this paper. We establish the complexity of reachability, semilinearity and related problems as being Ackermann-complete. The complexity of semilinearity was unknown even for plain VAS. 

To this end, we proceed as follows: In step 1/Lemma \ref{LemmaConvertVASSnzToEVASS} we convert the input VASSnz into what we call monotone \(\RelationClass\)-extended VASS. This is an extension of VASS allowing more general counter operations than only addition. As the framework is quite general, we expect that there might be other applications for \(\RelationClass\)-extended VASS. 

In step 2/Theorem \ref{TheoremIdealDecompositionEVASS} we then provide an algorithm which overapproximates the reachability relation of a given input \(\RelationClass\)-extended VASS, assuming its operations can be overapproximated accordingly. In particular the overapproximation allows to decide reachability in \(\RelationClass\)-extended VASS. Our algorithm to perform the overapproximation is an adaptation of the classic algorithm,
called KLM decomposition, solving reachability in VASS.

Finally we solve also problems related to reachability by observing that our overapproximation can be viewed as an improvement upon a framework of \cite{GuttenbergRE23}: For any class \(\RelationClass\) of systems which can be overapproximated, the reachability, semilinearity, separability, etc. problems are decidable in Ackermann time. We consider finalizing this framework started in \cite{GuttenbergRE23} a main contribution of this paper, since it unifies the ideas behind the algorithms solving the reachability, semilinearity, etc. problems in an elegant fashion.


Above we only stated Ackermann-completeness, but in fact we provide more fine-grained complexity. The usual parameters of \ConsideredModel \ are the dimension \(d\) and the number of priorities \(k\), i.e. the number of different \(i\) such that the VASSnz uses a zero test on all \(j \leq i\). Often \(k\) is just \(1\) or \(2\), because only one or two types of zero tests are used. Our algorithm overapproximating the reachability relation has a time bound of \(\mathfrak{F}_{2kd+2k+2d+4}\) in the fast-growing function hierarchy. The complexity is the result of a sequence of \(2k+2\) Turing reductions, each of which will add \(d+1\) to the subscript of the fast-growing complexity class, ending at \(\mathfrak{F}_{2}\) for dealing with semilinear sets. These reductions form a chain as follows, where \(Reach_k\) stands for reachability in VASSnz with \(k\) priorities, and \(Cover_k\) for coverability, that is, deciding whether it is possible to reach a configuration at least as large as the given target from the given initial configuration: 
\[Reach_k \xrightarrow{+d+1} Cover_k \xrightarrow{+d+1} Reach_{k-1} \xrightarrow{+d+1} \dots\]

\textbf{Outline of the paper.} Section \ref{SectionSimplePreliminaries} provides a few preliminaries. Section \ref{SectionVAS} introduces VASS and \ConsideredModel. Section \ref{AlgorithmToolbox} introduces the most important tools we utilize in our algorithm, and in particular the definition of the overapproximation we compute. Section \ref{SectionMainAlgorithm} introduces monotone \(\RelationClass\)-extended VASS and proves that VASSnz can be converted into extended VASS. Section \ref{SectionProofTheoremEVASS} provides an algorithm overapproximating the reachability relation of \(\RelationClass\)-extended VASS. Finally, Section \ref{SectionHybridization} introduces the geometric properties and shows how to solve most relevant problems of \ConsideredModel \ using only the axioms, before we conclude in Section \ref{SectionConclusion}.


\section{Preliminaries} \label{SectionSimplePreliminaries}


We let $\N, \mathbb{Z}, \mathbb{Q}, \mathbb{Q}_{\geq 0}$ denote the sets of natural numbers containing \(0\), the integers, and the (non-negative) rational numbers respectively. 
Let $\mathbb{N}_{\geq n} = \{ m\in\mathbb{N} \mid m \geq n\}$.
We use uppercase letters for sets/relations and boldface for vectors and sets/relations of vectors. 
Given a vector \(\vect{x} \in \Q^n\), we use an array like notation \(\vect{x}[i]\) to refer to the \(i\)-th coordinate. 

Given sets \(\vectSet{X},\vectSet{Y} \subseteq \mathbb{Q}^n, Z \subseteq \mathbb{Q}\), we write \(\vectSet{X}+\vectSet{Y}:=\{\vect{x}+\vect{y} \mid \vect{x} \in \vectSet{X}, \vect{y} \in \vectSet{Y}\}\) for the Minkowski sum and \(Z \cdot \vectSet{X}:=\{\lambda \cdot \vect{x} \mid \lambda \in Z, \vect{x} \in \vect{X}\}\). By identifying elements \(\vect{x}\in \mathbb{Q}^n\) with \(\{\vect{x}\}\), we define \(\vect{x}+\vectSet{X}:=\{\vect{x}\}+\vectSet{X}\), and similarly \(\lambda \cdot \vectSet{X}:=\{\lambda\} \cdot \vectSet{X}\). 


Given \(\vect{b}=(\vect{b}_s, \vect{b}_t) \in \N^n \times \N^n\) we let \(\Effect(\vect{b}):=\vect{b}_t-\vect{b}_s \in \Z^n\) be the effect of \(\vect{b}\) and extend it to \(\vectSet{F}\subseteq \N^n \times \N^n \) via \(\Effect(\vectSet{F}):=\{\Effect(\vect{b}) \mid \vect{b} \in \vectSet{F}\}\). 

A relation \(\vectSet{R} \subseteq \N^{n_1} \times \N^{n_2}\) is \emph{monotone} if \(n_1=n_2\) and for all \((\vect{x}, \vect{y}) \in \vectSet{R}\) and \(\vect{m} \in \N^n\) we have \((\vect{x}+\vect{m}, \vect{y}+\vect{m}) \in \vectSet{R}\).


Let \(\mathbb{S} \in \{\Q, \Z, \Q_{\geq 0}, \N, \N_{\geq 1}\}\) and let \(\vectSet{F} \subseteq \Z^n\). The set of \(\mathbb{S}\)-linear combinations of \(\vectSet{F}\) is 
\[\mathbb{S}(\vectSet{F}):=\{ \sum_{i=1}^m \lambda_i \vect{f}_i \mid m \in \N, \vect{f}_i \in \vectSet{F}, \lambda_i \in \mathbb{S}\}.\]
By convention, for \(\vectSet{F}=\emptyset\) we have \(\sum_{x \in \emptyset} x:=\vect{0} \in \mathbb{S}(\vectSet{F})\). 
A set \(\vectSet{X}\) is \(\mathbb{S}\)-(finitely) generated  if there exists a (finite) set \(\vectSet{F}\) with \(\vectSet{X}=\mathbb{S}(\vectSet{F})\). The properties \(\mathbb{S}\)-generated and \(\mathbb{S}\)-finitely generated are respectively abbreviated \(\mathbb{S}\)-g. and \(\mathbb{S}\)-f.g.. 


%


There are different names for \(\mathbb{S}\)-g. sets depending on \(\mathbb{S}\), for example \(\Q\)-g. sets are usually called vector spaces, \(\Q_{\geq 0}\)-g. ones are cones, etc. but we will avoid this in favor of the general terminology of being \(\mathbb{S}\)-generated.

A set \(\vectSet{L} \subseteq \N^n\) is \emph{hybridlinear} if \(\vectSet{L}=\vectSet{B}+\N(\vectSet{F})\) for some finite sets \(\vectSet{B}, \vectSet{F} \subseteq \N^n\). If \(\vectSet{B}=\{\vect{b}\}\) can be chosen as a singleton, then \(\vectSet{L}\) is \emph{linear}. A set \(\vectSet{S}\) is \emph{semilinear} if it is a finite union of (hybrid)linear sets. The semilinear sets are equivalently definable via formulas \(\varphi \in \FO(\mathbb{N}, +)\), called Presburger Arithmetic.

The \emph{dimension} of a \(\Q\)-generated set defined as its minimal number of generators is a well-known concept. It can be extended to arbitrary subsets of \(\mathbb{Q}^n\) as follows.

\begin{definition}{\cite{Leroux11}}
Let \(\vect{X} \subseteq \mathbb{Q}^n\). The \emph{dimension} of \(\vect{X}\), denoted \(\dim(\vect{X})\), is the smallest natural number \(k\) such that there exist finitely many \(\Q\)-g. sets \(\vect{V}_i \subseteq \mathbb{Q}^n\) with \(\dim(\vect{V}_i)\leq k\) and \(\vect{b}_i \in \mathbb{Q}^n\) such that \(\vect{X} \subseteq \bigcup_{i=1}^r \vect{b}_i + \vect{V}_i\). 
\end{definition}

Let \(\vectSet{X}_1, \vectSet{X}_2 \subseteq \N^n\) be sets. Then \(\vectSet{X}_1\) and \(\vectSet{X}_2\) \emph{have a non-degenerate intersection} if \(\dim(\vectSet{X}_1 \cap \vectSet{X}_2)=\dim(\vectSet{X}_1)=\dim(\vectSet{X}_2)\).
In the sequel many of our lemmas only work under the assumption that two sets have a non-degenerate intersection, i.e.\ this is a very central notion.

\begin{example}
Intersecting \(\N^2 \cap \N=\N\) is degenerate because we did not intersect same dimension objects. Also 
the intersection
\(\{(x,y) \mid y \geq x\} \cap \{(x,y) \mid y \leq x\}=\{(x,y) \mid y=x\}\) is degenerate because it is lower dimensional. 
On the other hand \(\{(x,y) \mid y \geq x\} \cap \{(x,y) \mid y \leq 2x\}\) is a typical non-degenerate intersection. 
\end{example}

Because ``\(\vectSet{L}\) and \(\vectSet{L}'\) have a non-degenerate intersection'' is rather long, we will often just write ``\(\vectSet{L} \cap \vectSet{L}'\) is non-degenerate''.

All the definitions above defined for sets apply to relations \(\vectSet{R} \subseteq \N^{n_1} \times \N^{n_2}\) by viewing them as sets \(\vectSet{R} \subseteq \N^{n_1+n_2}\).

\para{Fast-Growing Function Hierarchy}: We let \(F_1: \N \to \N, m \mapsto 2m\), and define \(F_d: \N \to \N, m \mapsto F_{d-1}^{(m)}(1)\), where \(F_{d-1}^{(m)}\) is \(m\)-fold application of the function \(F_{d-1}\). For example \(F_2(m)=2^m\), \(F_3=\text{Tower}\) and so on. The functions \(F_d\) are called the fast-growing functions. (One possible) Ackermann-function \(F_{\omega}\) is obtained from these via diagonalization: \(F_{\omega}: \N \to \N, m \mapsto F_m(m)\). The \(d\)-th level of Grzegorczyk's hierarchy \cite{Schmitz16} is the set of functions \(\mathfrak{F}_d:=\{F_{d} \circ r_1 \circ \dots \circ r_k \mid r_1, \dots, r_k \in \mathfrak{F}_{d-1}\}\). I.e. we close \(F_d\) under applying reductions of the lower level \(\mathfrak{F}_{d-1}\). For example \(\mathfrak{F}_3\) is the set of functions obtained by inputting an elementary function into the tower function. We consider \(\mathfrak{F}_d\) as a complexity class via identifying \(\mathfrak{F}_d\) with \(\bigcup_{f \in \mathfrak{F}_d} \mathbf{DTIME}(f)\). Observe that the choice of deterministic time is irrelevant: Starting at \(\mathfrak{F}_3\) these classes are closed under exponential time reductions, and hence time and space complexity classes coincide.

One main way to prove that a function falls into some level of this hierarchy is a theorem of \cite{FigueiraFSS11}. It considers sequences over \(\N^n\). 
Given a sequence \((\vect{x}_0, \vect{x}_1, \dots, \vect{x}_\ell)\in\N^{\ell+1}\), \(m \in \N\) and \(f: \N \to \N\) we call the sequence \((f,m)\)-\emph{controlled} if \(||\vect{x_i}||_{\infty} \leq f^i(m)\) for all \(i\), i.e. the sequence starts below \(m\), and in every step entries grow at most by an application of the function \(f\). The sequence \emph{contains an increasing pair} if \(\vect{x}_i \leq \vect{x}_j\) for some \(i<j\).

\begin{proposition} \label{PropositionFastGrowingComplexity}
\cite[Prop. 5.2]{FigueiraFSS11} Let \(k, \gamma \geq 1\) and  \(f\in\mathfrak{F}_{\gamma}\) be a monotone function with \(f(x) \geq x\) for all \(x\). Then the function mapping \(m \in \N\) to the length of the longest \((f,m)\)-controlled sequence in \(\N^k\) without an increasing pair is in \(\mathfrak{F}_{\gamma+k-1}\).
\end{proposition}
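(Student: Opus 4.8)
The statement is the \emph{length function theorem} for Dickson's lemma on \((\N^d,\le)\), and the natural route is induction on the dimension \(d\), keeping careful track of how the control function \(f\) degrades along a bad sequence. Write \(L_{f,d}(n)\) for the length of the longest \((f,n)\)-controlled bad sequence over \(\N^d\); it suffices to exhibit an explicit upper bound for \(L_{f,d}(n)\) built from \(F_{\gamma+d-1}\) and functions of level \(\mathfrak{F}_{\gamma+d-2}\), since \(L_{f,d}\) is monotone and dominated by that bound. A preliminary routine observation is that a suffix of a \((f,n)\)-controlled sequence starting at position \(t\) is \((f,f^t(n))\)-controlled; this is the only place the side conditions \(f(x)\ge\max(1,x)\) and monotonicity of \(f\) are used.

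For the base case \(d=1\), a bad sequence over \(\N\) is strictly decreasing, so starting from a value \(\le n\) it has length at most \(n+1\); hence \(L_{f,1}\in\mathfrak{F}_\gamma\). For the inductive step, assume the bound for \(d-1\) and let \(\vect{x}_0,\dots,\vect{x}_\ell\) be a \((f,n)\)-controlled bad sequence over \(\N^d\). Since no later element dominates \(\vect{x}_0\), every \(\vect{x}_i\) with \(i\ge 1\) lies in \(\bigcup_{j=1}^d\{\vect{y}:\vect{y}[j]<\vect{x}_0[j]\}\), and each threshold \(\vect{x}_0[j]\) is at most \(n\); that is, each such element has at least one coordinate confined to a set of size \(\le n\). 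Iterating this along the sequence, one maintains after step \(i\) a downward-closed residual region \(D_i=\N^d\setminus\bigcup_{t<i}(\vect{x}_t+\N^d)\), a finite union of generalized rectangles whose side lengths in the bounded directions are all \(\le f^i(n)\), with \(D_i\supsetneq D_{i+1}\) strictly because \(\vect{x}_i\in D_i\setminus D_{i+1}\). Assigning to \(D_i\) an ordinal rank in \(\omega^{d-1}\cdot\O(1)\) (read off the sorted profile of how many free coordinates its rectangles have), every step strictly decreases this rank, since a rectangle with \(m\) free coordinates can only split into ones with fewer free coordinates or lose a free coordinate, the newly bounded side then being \(\le f^i(n)\). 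The number of consecutive steps not lowering the top rank component is bounded by the length of a bad sequence over \(\N^{d-1}\) with the current control parameter, i.e.\ by \(L_{f,d-1}\) of that parameter, which is reset to \(f^i(n)\) each time the rank drops. Unrolling, \(L_{f,d}(n)\) is bounded by a Hardy-style iteration of \(L_{f,d-1}\) along the ordinal \(\omega^{d-1}\) started at \(n\).

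It remains to conclude from the induction hypothesis \(L_{f,d-1}\in\mathfrak{F}_{\gamma+d-2}\). The fact that such a Hardy-type iteration along \(\omega^{d-1}\) of a level-\((\gamma+d-2)\) function lands in \(\mathfrak{F}_{\gamma+d-1}\) — one level of the fast-growing hierarchy per unit of ordinal exponent, mirroring that \(F_{m+1}\) is the iteration of \(F_m\) — together with closure of \(\mathfrak{F}_{\gamma+d-1}\) under composition with lower-level functions, finishes the proof. I expect this final accounting to be the main obstacle: turning the informal slogan ``one dimension costs one level'' into a verified membership in \(\mathfrak{F}_{\gamma+d-1}\) requires controlling the interplay between the residual-region ranks, the repeated speed-up \(n\mapsto f^i(n)\) of the control function, and the subrecursive calculus of Hardy functions; this is the technical heart of \cite{FigueiraFSS11}, and one either invokes that machinery wholesale or reproves the needed lemmas on the Grzegorczyk hierarchy.
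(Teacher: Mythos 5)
The paper does not prove this proposition at all: it is imported verbatim as a known result of \cite{FigueiraFSS11} (the length function theorem for Dickson's lemma), so there is no in-paper argument to compare yours against. Judged on its own, your sketch follows the standard inductive route of that reference, but it is an outline rather than a proof, and two of its load-bearing steps are asserted rather than established.

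First, the combinatorial core is internally unstable. You claim that every step strictly decreases an ordinal rank read off the ``sorted profile of free coordinates'' of the residual region \(D_i\), and then, in the next sentence, you bound the number of \emph{consecutive steps that do not lower the top rank component} by \(L_{f,d-1}\) of the current control parameter. These two claims cannot both be doing work; the second is the one that matters, and it is precisely the hard part. To make it precise you must explain why a maximal run of steps during which the top component of the rank is constant yields a \emph{bad} sequence over \(\N^{d-1}\) that is \emph{controlled} by \(f\) starting from \(f^i(n)\): the elements \(\vect{x}_i\) in such a run live in \(\N^d\), the coordinate that is ``confined'' can change from step to step, and projecting away a coordinate does not in general preserve badness. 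This is exactly the reflection/descent step of \cite{FigueiraFSS11}, and it is missing. Second, you explicitly defer the final accounting---that a Hardy-style iteration along \(\omega^{d-1}\) of a function in \(\mathfrak{F}_{\gamma+d-2}\), with the control parameter reset to \(f^i(n)\) at each rank drop, stays inside \(\mathfrak{F}_{\gamma+d-1}\)---to the cited machinery. Since the statement being proved \emph{is} that accounting, the proposal in effect reduces the proposition to itself plus the reference. Given that the paper itself simply cites \cite{FigueiraFSS11}, a bare citation would have been acceptable; a partial re-derivation that stops exactly at the two genuinely difficult points is not a proof.
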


For example if a rank in \(\N^k\) decreases lexicographically in an algorithm, then the sequence of ranks has no increasing pair. We therefore obtain a complexity bound for the algorithm.

\section{Extended Vector Addition Systems} \label{SectionVAS}


Let \(\RelationClass\) be a class of relations on \(\N^n\). A \(\RelationClass\)-extended VASS (\(\RelationClass\)-eVASS) with \(n\) counters is a finite directed multigraph \((Q,E)\) which is labelled as follows:

\begin{enumerate}
\item Every strongly-connected component (SCC) \(S \subseteq Q\) has a subset \(I(S)\) of active counters.
\item Every edge \(e\) inside an SCC is labelled with \(\vectSet{R}(e) \in \RelationClass\).
\item Every edge \(e\) leaving an SCC is labelled with one of the following three: a relation \(\vectSet{R}(e) \in \RelationClass\), a subset \(I_+(e) \subseteq \{1,\dots, n\}\) of counters to add, or a subset \(I_-(e) \subseteq \{1,\dots, n\}\) of counters to delete.
\end{enumerate}

The set of configurations is \(\bigcup_{SCC\ Q'} Q' \times \N^{I(Q')}\), and an edge \(e\) has semantics \(\to_e\) defined by \((q,\vect{x}) \to_e (p, \vect{y})\) if \(e=(q,p)\) and,
depending on the label of $e$, 
either \((\vect{x}, \vect{y}) \in \vectSet{R}(e)\), or \(\vect{y}=\vect{x}_{I_+(e)}\)  is equal to \(\vect{x}\) with coordinates from \(I_+(e)\) added with value \(0\), or, symmetrically, \(\vect{x}=\vect{y}_{I_-(e)}\).
In particular, a counter deletion performs a zero test.


Intuitively, \(\RelationClass\)-eVASS model finite automata operating on counters with values in \(\N\). An operation consists of a state change of the automaton and updating the counters according to the relation \(\vectSet{R}(e)\) written on the edge. Sometimes it is convenient to change the dimension of the system when leaving an SCC, hence the automaton is allowed to add/delete counters when leaving an SCC.

There are multiple classes of systems which fall into this definition. For example consider the class \(Add\) of relations of the form \(\to_{\vect{a}}\) for \(\vect{a} \in \Z^n\) defined as \(\vect{x} \to_{\vect{a}} \vect{y} \iff \vect{y}=\vect{x}+\vect{a}\). Then \(Add\)-eVASS without counter additions/deletions form the class of \emph{vector addition system with states} (VASS).

Furthermore, counter machines are \(Semil\)-eVASS, where \(Semil\) is the class of semilinear relations. To see this, observe that zero tests \(ZT(I,n):=\{(\vect{x}, \vect{y}) \in \N^n \times \N^n \mid \vect{x}=\vect{y} \text{ and }\vect{x}[i]=0\ \forall\  i \in I\}\) are a special case of linear relations.

Another subclass we will consider are VASSnz. Let \(NZT\) be the class of \emph{nested zero tests}, defined as relations of the form \(NZT(j,n):=ZT(\{1,\dots, j\},n)\) for some \(j \leq n \in \N\). Intuitively, the vector \(\vect{x}\) stays the same and the first \(j\) coordinates are ``tested for \(0\)''. In particular if counter \(j\) is tested also all lower index counters are tested. 

The class \((Add \cup NZT)\)-eVASS is called \ConsideredModel.

Observe that class \(\RelationClass\) is allowed to contain non-deterministic relations \(\vectSet{R} \in \RelationClass\), in this way \(\RelationClass\)-eVASS can naturally model both determinism and non-determinism.

We continue with a few more semantic definitions. Let \(\qin\in Q\) and \(\qfin\in Q\) be states called the initial and final state, respectively. A run \(\rho\) from \(\qin\) to \(\qfin\) is a sequence \((p_0(\vect{x}_0), \dots, p_r(\vect{x}_r))\) of configurations s.t. \(p_i(\vect{x}_i) \to_{e_i} p_{i+1}(\vect{x}_{i+1})\) for some edges \(e_i \in E\) and \(p_0=\qin\) and \(q_r=\qfin\). The source of \(\rho\) is \(\vect{x}_0\), the target is \(\vect{x}_r\), and the source/target pair is \(\dirOfRun(\rho):=(\vect{x}_0, \vect{x}_r)\). We write \(\Omega_{\qin, \qfin}\) for the set of all runs from \(\qin\) to \(\qfin\).  The reachability relation is \(\Rel(\VAS, \qin, \qfin):=\dirOfRun(\Omega_{\qin, \qfin}) \subseteq \N^n \times \N^n\).

The \emph{dimension} of a \(\RelationClass\)-eVASS \(\VAS\), intuitively its ``complexity'', is defined as \(\dim(\VAS):=\max_{q \in Q} \dim(\Q(\Delta(\Rel(\VAS, q, q))))\), i.e.\ for every state \(q \in Q\), we consider the \(\Q\)-g. set generated by all effects of runs from \(q\) to \(q\), take its dimension (number of generators), and finally the maximum over all \(q \in Q\). Clearly \(\dim(\VAS) \leq n\) if \(\VAS\) has \(n\) counters, but the dimension may be considerably smaller. E.g. an important subclass of VASS are \emph{flat} VASS, which are defined by allowing only one cycle on every state. Then in particular the \(\Q\)-g. set of cycle effects has dimension \(1\), i.e.\ all flat VASS have dimension \(1\).

\section{Algorithm Toolbox} \label{AlgorithmToolbox}


In this section we introduce important lemmas, definitions and tricks we will use throughout. This is intended as a quick lookup location containing most relevant lemmas/definitions.

We start with one of the most important  definitions, to be
the basis of our reachability algorithm.
By $\size(\vectSet X)$ we mean the size of any reasonable representation of $\vectSet X$.

\begin{definition} \label{DefinitionGoodOverapproximation}
Let \(\vectSet{X} \subseteq \vectSet{L} \subseteq \N^{d}\) be sets, \(\vectSet{L}=\vectSet{B}+\N(\vectSet{F})\) hybridlinear. Then \(\vectSet{L}\) is an \emph{asymptotic overapproximation} of \(\vectSet{X}\) (written \(\vectSet{X} \HybridizationRelation \vectSet{L}\)) if for every \(\vect{x} \in \vectSet{L}\) and every \(\vect{w} \in \N_{\geq 1}(\vectSet{F})\), there exists \(N \in \N\) such that \(\vect{x}+\N_{\geq N}\vect{w} \subseteq \vectSet{X}\). 

If \(N \leq g(\size(\vectSet{X})+\size(\vect{x})+\size(\vect{w}))\) for some function \(g \colon \N \to \N\), then \(\HybridizationRelation\) is \(g\)-bounded, written \(\HybridizationRelation_{g}\).

We use \(\N_{\geq 1}(\vectSet{F})\) instead of \(\N(\vectSet{F})\), which is crucial.

This definition applies to relations \(\vectSet{X}, \vectSet{L} \subseteq \N^{n_1} \times \N^{n_2}\) by viewing them as sets \(\vectSet{X}, \vectSet{L} \subseteq \N^{n_1+n_2}\).
\end{definition}

I.e.~\(\vectSet{L}\) is an asymptotic overapproximation  of a relation \(\vectSet{X}\) if starting at any point \(\vect{x} \in \vectSet{L}\) and ``walking in an interior direction'' \(\vect{w}\) of \(\vectSet{L}\), eventually all the visited points are in \(\vectSet{X}\). 

\begin{example}
Clearly, every hybridlinear set \(\vectSet{L}\) is its own asymptotic overapproximation. A more interesting example is on the right of Figure \ref{FigureIntuitionSemilinearityAlgorithm}: While \(\vectSet{X}\) is non-semilinear, as long as \(\vect{w}\) is not ``vertical'', i.e.\ \(\N_{\geq 1}\) is indeed crucial, one can find an \(N\) as in Definition \ref{DefinitionGoodOverapproximation}. Observe also that there is no uniform \(N \in \N\) which works for every \(\vect{x}, \vect{w}\): Different \(\vect{x} \in \N^2\) can have a different ``distance'' from \(\vectSet{X}\).
\end{example}

\begin{figure}[h!]
\begin{minipage}{4.5cm}
\begin{tikzpicture}
\begin{axis}[
    axis lines = left,
    xlabel = { },
    ylabel = { },
    xmin=0, xmax=8,
    ymin=0, ymax=8,
    xtick={0,2,4,6,8},
    ytick={0,2,4,6,8},
    ymajorgrids=true,
    xmajorgrids=true,
    thick,
    smooth,
    no markers,
]

\addplot[
    fill=blue,
    fill opacity=0.5,
    only marks,
    ]
    coordinates {
    (0,0)(0,1)(0,2)(0,3)(0,4)(0,5)(0,6)(0,7)(0,8)(1,0)(2,0)(3,0)(4,0)(5,0)(6,0)(7,0)(8,0)(1,1)(1,2)(1,3)(1,4)(1,5)(1,6)(1,7)(1,8)(2,1)(2,2)(2,3)(2,4)(2,5)(2,6)(2,7)(2,8)(3,1)(3,2)(3,3)(4,1)(4,2)(4,3)(5,1)(5,2)(5,3)(6,1)(6,2)(6,3)(7,1)(7,2)(7,3)(8,1)(8,2)(8,3)
    };
    
    
\addplot[
   draw=red,
   no marks,
   very thick,
   ]
   coordinates {
   (0,0)(0,8)
   };
   
\addplot[
   draw=red,
   no marks,
   very thick,
   ]
   coordinates {
   (1,0)(1,8)
   };
   
\addplot[
   draw=red,
   no marks,
   very thick,
   ]
   coordinates {
   (2,0)(2,8)
   };
   
\addplot[
   draw=red,
   no marks,
   very thick,
   ]
   coordinates {
   (0,0)(8,0)
   };
   
\addplot[
   draw=red,
   no marks,
   very thick,
   ]
   coordinates {
   (0,1)(8,1)
   };
   
\addplot[
   draw=red,
   no marks,
   very thick,
   ]
   coordinates {
   (0,2)(8,2)
   };
   
\addplot[
   draw=red,
   no marks,
   very thick,
   ]
   coordinates {
   (0,3)(8,3)
   };

\end{axis}
\end{tikzpicture}
\end{minipage}%
\begin{minipage}{4.5cm}
\begin{tikzpicture}
\begin{axis}[
    axis lines = left,
    xlabel = { },
    ylabel = { },
    xmin=0, xmax=4,
    ymin=0, ymax=16,
    xtick={0,1,2,3,4},
    ytick={0,4,9,16},
    ymajorgrids=true,
    xmajorgrids=true,
    thick,
    smooth,
    no markers,
]
    
\addplot[
    fill=blue,
    fill opacity=0.5,
    only marks,
    ]
    coordinates {
    (0,0)(1,0)(1,1)(2,0)(2,1)(2,2)(2,3)(2,4)(3,0)(3,1)(3,2)(3,3)(3,4)(3,5)(3,6)(3,7)(3,8)(3,9)(4,0)(4,1)(4,2)(4,3)(4,4)(4,5)(4,6)(4,7)(4,8)(4,9)(4,10)(4,11)(4,12)(4,13)(4,14)(4,15)(4,16)
    };
    
\addplot[
    name path=A,
    domain=0:4,
    color=red,
]
{x^2};

\end{axis}
\end{tikzpicture}
\end{minipage}%

\caption{\textit{Left}: Illustration of \(\dim(\vectSet{L} \setminus (\vect{p}+\vectSet{L}))<\dim(\vectSet{L})\) with \(\vectSet{L}=\N^2\) and \(\vect{x}=(3,4)\). The dimension drops from 2 to 1, since the set is now coverable by finitely many lines.  \newline
\textit{Right}: \(\vectSet{X}=\{(x,y)\in \N^2 \mid y \leq x^2\}\) has the asymptotic overapproximation \(\N^2\). It is \(g\)-bounded for a quadratic function \(g\).}\label{FigureIntuitionSemilinearityAlgorithm}
\end{figure}

Some more in-depth understanding of asymptotic overapproximations can be gained by considering closure properties:

\begin{restatable}{lemma}{LemmaShiftGoodOverapproximation} \label{LemmaShiftGoodOverapproximation}
Let \(\vectSet{X} \HybridizationRelation_g \vectSet{L}\) and let \(\vectSet{L}'\) be a hybridlinear set s.t. \(\vectSet{L} \cap \vectSet{L}'\) is non-degenerate. Then \(\vectSet{X} \cap \vectSet{L}' \HybridizationRelation_g \vectSet{L} \cap \vectSet{L}'\).
\end{restatable}

\begin{proof}[Proof idea]
Observe that the definition of asymptotic overapproximation makes a claim about every point \(\vect{x}\in \vectSet{L}\) and every \(\vect{w} \in \N_{\geq 1}(\vectSet{F})\), hence decreasing \(\vectSet{L}\) and decreasing the set \(\N_{\geq 1}(\vectSet{F})\) of possible \(\vect{w}\)'s preserves the property. Hence the main part of the proof is to check that \(\N_{\geq 1}(\vectSet{F})\) decreases through the intersection. Observe that if we for example intersect \(\N^2 \cap (\{0\} \times \N)\), then \(\N_{\geq 1}(\{(1,0),(0,1)\}) \not \supseteq \N_{\geq 1}(\{(0,1)\})\): We would  get the vertical direction \(\vect{w}\) as before. Therefore the non-degenerate intersection is crucial.
\end{proof}

In order for asymptotic overapproximations to be useful, we need an algorithm which given \(\vectSet{X}\in \RelationClass\), splits it into finitely many parts with an asymptotic overapproximation each.

\begin{definition} \label{DefinitionApproximable}
Let \(\RelationClass\) be a class of relations. Then \(\RelationClass\) is \emph{approximable} in \(\mathfrak{F}_{\alpha}\) if 
there is a function
\(g \in \mathfrak{F}_{\alpha}\) and 
an algorithm in \(\mathfrak{F}_{\alpha}\) which given \(\vectSet{X} \in \RelationClass\), outputs finitely many \(\vectSet{X}_1, \dots, \vectSet{X}_k \in \RelationClass\) and hybridlinear relations \(\vectSet{L}_1, \dots, \vectSet{L}_k\) s.t. 
\begin{enumerate}
\item \(\vectSet{X}=\bigcup_{j=1}^k \vectSet{X}_j\),
\item If \(\vectSet{X}\) is monotone, then also all the \(\vectSet{X}_j\) are so.
\item For all \(1 \leq j \leq k\) we have \(\vectSet{X}_j \HybridizationRelation_g \vectSet{L}_j\).
\end{enumerate}
\end{definition}
%
%
We will later give such an algorithm for reachability relations of a large class of eVASS, in particular VASS, \ConsideredModel, etc.

On the left of Figure \ref{FigureIntuitionSemilinearityAlgorithm}, we illustrate an important lemma:

\begin{lemma}[Cor.~D.2 in \cite{Leroux13}] \label{LemmaDimensionDecreaseShiftedL}
Let \(\vectSet{L}=\vect{B}+\N(\vectSet{F})\) be hybridlinear, \(\vect{p} \in \N(\vectSet{F})\). Then \(\dim(\vectSet{L} \setminus (\vect{p}+\vectSet{L}))<\dim(\vectSet{L})\).
\end{lemma}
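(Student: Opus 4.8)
The plan is to reduce the statement to one about finitely generated submonoids of $\N^d$ and then to compare such a monoid with its \emph{saturation}. First I would translate by $-\vect{b}$: since $\dim$ is translation invariant (Lemma~\ref{BasicDimensionProperties}) and $\vectSet{L}\setminus(\vect{p}+\vectSet{L})=\vect{b}+\bigl(\vectSet{P}\setminus(\vect{p}+\vectSet{P})\bigr)$ with $\vectSet{P}:=\N(\vectSet{F})$, this reduces the claim to showing $\dim\bigl(\vectSet{P}\setminus(\vect{p}+\vectSet{P})\bigr)<k$ for $\vect{p}\in\vectSet{P}$, where $k:=\dim\vectSet{P}$ (the case $k=0$ being trivial). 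Then I would introduce the pointed rational polyhedral cone $\vectSet{C}:=\Q_{\geq 0}(\vectSet{F})$, the lattice $\vect{L}_0:=\Z(\vectSet{F})$ of rank $k$, and the saturation $\bar{\vectSet{P}}:=\vectSet{C}\cap\vect{L}_0\supseteq\vectSet{P}$. By Gordan's lemma $\bar{\vectSet{P}}$ is a finitely generated monoid, and since $\Q(\bar{\vectSet{P}})=\Q(\vectSet{F})$, Lemma~\ref{LemmaFromJerome} gives $\dim\bar{\vectSet{P}}=k$. Using $\vect{p}\in\vect{L}_0$ one has $\vect{p}+\bar{\vectSet{P}}=(\vect{p}+\vectSet{C})\cap\vect{L}_0$, and the key set-theoretic observation is
\[
\vectSet{P}\setminus(\vect{p}+\vectSet{P})\ \subseteq\ \bigl(\bar{\vectSet{P}}\setminus(\vect{p}+\bar{\vectSet{P}})\bigr)\ \cup\ \bigl(\vect{p}+(\bar{\vectSet{P}}\setminus\vectSet{P})\bigr).
\]
By Lemma~\ref{BasicDimensionProperties} it then suffices to bound $\dim\bigl(\bar{\vectSet{P}}\setminus(\vect{q}+\bar{\vectSet{P}})\bigr)$ for $\vect{q}\in\bar{\vectSet{P}}$ and $\dim(\bar{\vectSet{P}}\setminus\vectSet{P})$, each by $k-1$.

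For the first bound I would exploit that $\vectSet{C}$ is cut out inside $\Q(\vectSet{F})$ by finitely many facet inequalities $\ell_1(\vect{x})\geq 0,\dots,\ell_m(\vect{x})\geq 0$ with rational $\ell_j$, so that $\vect{q}+\vectSet{C}=\{\vect{x}:\ell_j(\vect{x})\geq\ell_j(\vect{q})\text{ for all }j\}$ and hence $\bar{\vectSet{P}}\setminus(\vect{q}+\bar{\vectSet{P}})=\vect{L}_0\cap\bigcup_{j}\{\vect{x}\in\vectSet{C}:0\leq\ell_j(\vect{x})<\ell_j(\vect{q})\}$. On the rank-$k$ lattice $\vect{L}_0$ each $\ell_j$ is a nonzero homomorphism into $\Q$, so its image is a cyclic group $\delta_j\Z$ and it takes only finitely many values in $[0,\ell_j(\vect{q}))$; each level set $\{\vect{x}\in\vect{L}_0:\ell_j(\vect{x})=t\}$ is a coset of the rank-$(k-1)$ sublattice $\vect{L}_0\cap\ker\ell_j$. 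Thus $\bar{\vectSet{P}}\setminus(\vect{q}+\bar{\vectSet{P}})$ lies in finitely many translates of $(k-1)$-dimensional $\Q$-generated sets, and Lemmas~\ref{BasicDimensionProperties} and~\ref{LemmaFromJerome} give dimension $\leq k-1$. Taking $\vect{q}=\vect{p}$ disposes of the first term of the inclusion.

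For the second bound, $\dim(\bar{\vectSet{P}}\setminus\vectSet{P})<k$, I would construct a ``conductor'' element: a $\vect{c}\in\vectSet{P}$ with $\vect{c}+\bar{\vectSet{P}}\subseteq\vectSet{P}$. Given such $\vect{c}$, one has $\bar{\vectSet{P}}\setminus\vectSet{P}\subseteq\bar{\vectSet{P}}\setminus(\vect{c}+\bar{\vectSet{P}})$, whose dimension is $\leq k-1$ by the previous paragraph; and since the second term of the inclusion above is a translate of $\bar{\vectSet{P}}\setminus\vectSet{P}$, this simultaneously finishes the proof. To build $\vect{c}$, I would first show $\bar{\vectSet{P}}=\vectSet{P}+\vectSet{B}$ for a finite set $\vectSet{B}$: letting $\vect{g}_1,\dots,\vect{g}_r$ generate the monoid $\bar{\vectSet{P}}$ (Gordan's lemma), each $\vect{g}_i\in\vectSet{C}$ admits some $m_i\geq 1$ with $m_i\vect{g}_i\in\vectSet{P}$ (clear denominators in a conic representation of $\vect{g}_i$ over $\vectSet{F}$), so $m\vect{g}_i\in\vectSet{P}$ for all $i$ with $m:=\prod_i m_i$, and reducing coefficients modulo $m$ in any expansion $\vect{x}=\sum_i n_i\vect{g}_i$ yields $\vect{x}\in\vectSet{P}+\vectSet{B}$ with $\vectSet{B}:=\{\sum_i s_i\vect{g}_i:0\leq s_i<m\}$ finite. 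Then, with $\vect{c}_0:=\sum_{l=1}^{n}\vect{f}_l\in\vectSet{P}$ (the sum of all generators, $\vectSet{F}=\{\vect{f}_1,\dots,\vect{f}_n\}$), every $\vect{b}\in\vect{L}_0$ has an \emph{integer} representation $\vect{b}=\sum_l\delta_l\vect{f}_l$, whence $M\vect{c}_0+\vect{b}=\sum_l(M+\delta_l)\vect{f}_l\in\vectSet{P}$ as soon as $M\geq\max_l|\delta_l|$. Applying this to the finitely many $\vect{b}\in\vectSet{B}$ and taking $M$ large enough for all of them gives $\vect{c}:=M\vect{c}_0$ with $\vect{c}+\vectSet{B}\subseteq\vectSet{P}$, hence $\vect{c}+\bar{\vectSet{P}}=\vectSet{P}+(\vect{c}+\vectSet{B})\subseteq\vectSet{P}$.

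I expect the main obstacle to be this second bound — the fact that a finitely generated monoid agrees with its saturation outside a $(k-1)$-dimensional ``conductor'' locus. The first (cone) bound is elementary polyhedral and lattice geometry and the reduction in the first paragraph is purely formal, but the conductor argument needs Gordan's lemma, the denominator clearing to transfer conic representations over $\vectSet{F}$ into actual elements of $\vectSet{P}$, and the modular reduction producing the finite correction set $\vectSet{B}$; none of this is deep, but it is where the real content sits. (Alternatively, one may simply quote this conductor fact from the theory of affine semigroups.)
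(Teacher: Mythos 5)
Your proposal is correct. Note first that the paper does not prove this lemma at all --- it is imported verbatim from \cite{Leroux13} (Corollary D.2) --- so there is no in-paper proof to compare against; Leroux's own argument runs through the order-theoretic characterization of $\N$-finitely generated sets (the facet functionals of $\Q_{\geq 0}(\vectSet{F})$ and the induced wqo $\leq_{\vectSet{P}}$, cf.\ Lemmas \ref{LemmaFinitelyGeneratedCone} and \ref{LemmaFinitelyGeneratedPeriodicSets} in the appendix), whereas you package the same polyhedral geometry as a saturation-plus-conductor argument from affine semigroup theory. All the steps check out: the translation reduction and the inclusion $\vectSet{P}\setminus(\vect{p}+\vectSet{P})\subseteq(\bar{\vectSet{P}}\setminus(\vect{p}+\bar{\vectSet{P}}))\cup(\vect{p}+(\bar{\vectSet{P}}\setminus\vectSet{P}))$ are immediate; the facet bound correctly places $\bar{\vectSet{P}}\setminus(\vect{q}+\bar{\vectSet{P}})$ inside finitely many cosets of the rank-$(k-1)$ lattices $\vect{L}_0\cap\ker\ell_j$ (the image of $\ell_j$ on $\vect{L}_0$ being a nontrivial cyclic group, and pointedness of $\vectSet{C}\subseteq\Q_{\geq 0}^d$ guaranteeing there is at least one facet when $k\geq 1$); and the conductor construction ($m\vect{g}_i\in\vectSet{P}$ by clearing denominators, $\bar{\vectSet{P}}=\vectSet{P}+\vectSet{B}$ by modular reduction, $M\vect{c}_0+\vectSet{B}\subseteq\vectSet{P}$ by making all integer coefficients nonnegative) is sound and uses finiteness of $\vectSet{F}$ exactly where it must. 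What your route buys is self-containedness and a clean split of the two failure modes --- crossing a facet versus landing in the saturation gap --- at the cost of invoking Gordan's lemma and Minkowski--Weyl; Leroux's route reuses machinery this paper already states in its appendix. Either way the lemma is established, and your proof could stand in for the citation.
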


Both Lemma \ref{LemmaShiftGoodOverapproximation} and Lemma \ref{LemmaDimensionDecreaseShiftedL} can be used for recursion in algorithms: By Lemma \ref{LemmaShiftGoodOverapproximation}, problems only occur if the intersection has lower dimension, and by Lemma \ref{LemmaDimensionDecreaseShiftedL}, it is sufficient to understand a set \emph{asymptotically}, as we can deal with the rest via an appropriate recursion.

Finally, since we will often have to deal with projections, which can be slightly annoying, we use the following:

\begin{lemma} \label{LemmaNiceOverapproximationProjection}
If \(\vectSet{X} \HybridizationRelation_g \vectSet{L}\) for some function \(g\) and \(\pi\) is any projection, then \(\pi(\vectSet{X}) \HybridizationRelation_g \pi(\vectSet{L})\).
\end{lemma}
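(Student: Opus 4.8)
The plan is to unwind the definition of nice overapproximation directly, lifting each test point in $\pi(\vectSet{L})$ to a witness in $\vectSet{L}$ and pushing the asymptotic ray conclusion back down through $\pi$. Write $\vectSet{L}=\vect{b}+\N(\vectSet{F})$, so that $\pi(\vectSet{L})=\pi(\vect{b})+\N(\pi(\vectSet{F}))$, which is again linear with generating set $\pi(\vectSet{F})$ (here $\pi(\vectSet{F})$ denotes $\{\pi(\vect{f}) : \vect{f}\in\vectSet{F}\}$). Since $\pi$ is linear (it deletes coordinates), $\pi$ is a monoid homomorphism on $\N^d$, and in particular $\pi(\vectSet{X})\subseteq\pi(\vectSet{L})$ holds because $\vectSet{X}\subseteq\vectSet{L}$. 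So the inclusion requirement in Definition \ref{DefinitionGoodOverapproximation} is immediate; the content is the asymptotic ray condition.

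Now fix $\vect{x}'\in\pi(\vectSet{L})$ and $\vect{w}'\in\N_{\geq 1}(\pi(\vectSet{F}))$. First I would choose a preimage $\vect{x}\in\vectSet{L}$ with $\pi(\vect{x})=\vect{x}'$; such $\vect{x}$ exists since $\vect{x}'\in\pi(\vectSet{L})$. Next, write $\vect{w}'=\sum_{i} \lambda_i\, \pi(\vect{f}_i)$ with all $\lambda_i\geq 1$ and $\vect{f}_i\in\vectSet{F}$, and set $\vect{w}:=\sum_i \lambda_i \vect{f}_i\in\N_{\geq 1}(\vectSet{F})$; then $\pi(\vect{w})=\vect{w}'$ by linearity of $\pi$. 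Apply the hypothesis $\vectSet{X}\HybridizationRelation\vectSet{L}$ to this pair $(\vect{x},\vect{w})$: there is $N\in\N$ with $\vect{x}+\N_{\geq N}\vect{w}\subseteq\vectSet{X}$. Applying $\pi$ and using $\pi(\vect{x}+n\vect{w})=\vect{x}'+n\vect{w}'$ for every $n$, we get $\vect{x}'+\N_{\geq N}\vect{w}'\subseteq\pi(\vectSet{X})$, which is exactly the required condition for $\pi(\vectSet{X})\HybridizationRelation\pi(\vectSet{L})$ at $(\vect{x}',\vect{w}')$.

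The only points that need a little care are: (i) $\pi(\vectSet{L})$ really is linear with generators $\pi(\vectSet{F})$ — true because $\pi$ is a linear map and hence commutes with finite nonnegative combinations and with the base offset; and (ii) the lift of $\vect{w}'$ stays in $\N_{\geq 1}(\vectSet{F})$ — which is why I expand $\vect{w}'$ over $\pi(\vectSet{F})$ with strictly positive coefficients and lift coefficientwise, rather than trying to lift $\vect{w}'$ as an abstract vector. I do not expect a serious obstacle here; the statement is essentially the observation that applying a coordinate‑deleting projection commutes with all the operations in Definition \ref{DefinitionGoodOverapproximation}. If the paper's convention is that the $N$ in Definition \ref{DefinitionApproximable}(4) must also be controlled, one notes additionally that $\size(\vect{x})$ and $\size(\vect{w})$ for the chosen lifts can be bounded polynomially in $\size(\vect{x}')$, $\size(\vect{w}')$ and $\size(\vectSet{L})$ (standard bounds on solutions of the linear systems defining the preimages), so any $\mathfrak{F}_\alpha$ bound on $N$ in terms of the lifted data transfers to a bound in terms of the projected data.
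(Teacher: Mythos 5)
Your overall strategy is exactly the one the paper has in mind (its proof is literally ``immediate from the definition''): lift the test data from $\pi(\vectSet{L})$ to $\vectSet{L}$, apply the hypothesis, and push the ray back down through $\pi$. The inclusion $\pi(\vectSet{X})\subseteq\pi(\vectSet{L})$, the identity $\pi(\vectSet{L})=\pi(\vect{b})+\N(\pi(\vectSet{F}))$, the lift of $\vect{x}'$, and the final projection step are all fine.

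There is, however, one step that fails as written: the lift of $\vect{w}'$. Recall that $\vect{w}\in\N_{\geq 1}(\vectSet{F})$ is meant to use \emph{every} generator of $\vectSet{F}$ with coefficient at least $1$ (this is what makes the proof of Lemma~\ref{LemmaIndependentOfRepresentationOfL} and the Figure~\ref{FigureIntuitionSemilinearityAlgorithm} example work). If $\pi$ is not injective on $\vectSet{F}$, a vector $\vect{w}'\in\N_{\geq 1}(\pi(\vectSet{F}))$ need not admit a decomposition $\sum_{\vect{f}\in\vectSet{F}}\lambda_{\vect{f}}\,\pi(\vect{f})$ with all $\lambda_{\vect{f}}\geq 1$, so your coefficientwise lift need not land in $\N_{\geq 1}(\vectSet{F})$. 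Concretely, take $\vectSet{F}=\{(1,0),(1,1)\}$ and $\pi$ deleting the second coordinate: then $\pi(\vectSet{F})=\{1\}$ and $\vect{w}'=1\in\N_{\geq 1}(\pi(\vectSet{F}))$, but any preimage of $1$ in $\N(\vectSet{F})$ uses exactly one of the two generators, so no lift lies in $\N_{\geq 1}(\vectSet{F})$. The fix is standard and already present in the paper's toolbox: either invoke Lemma~\ref{LemmaIndependentOfRepresentationOfL} (it suffices to verify the ray condition for interior directions of $\N(\pi(\vectSet{F}))$, and every $\vect{w}'\in\N_{\geq 1}(\pi(\vectSet{F}))$ is such a direction), or argue directly that $n\vect{w}'$ lifts to $\N_{\geq 1}(\vectSet{F})$ for $n\geq|\vectSet{F}|$, apply your argument to the points $\vect{x}'+j\vect{w}'$ for $0\leq j\leq n-1$ with step $n\vect{w}'$, and take the union of the resulting $n$ arithmetic progressions to recover the full ray $\vect{x}'+\N_{\geq N}\vect{w}'$. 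With that repair your proof is complete.
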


\begin{proof}
Immediate from the definition.
\end{proof}

\section{VASSnz and monotone eVASS} \label{SectionMainAlgorithm}


In this section we introduce a restriction of \(\RelationClass\)-eVASS, called \emph{monotone} \(\RelationClass\)-eVASS (\(\RelationClass\)-m-eVASS), and prove that \ConsideredModel\ can be converted into \(\RelationClass\)-m-eVASS. This will be the first step of our reachability algorithm.

Similar to how \(Semil\)-eVASS can simulate counter machines, if transitions inside an SCC are allowed to be non-monotone, then \(\RelationClass\)-eVASS will likely be undecidable. Hence we define:

\begin{definition} \label{DefinitionMonotoneEVASS}
Let \(\RelationClass\) be a class of relations. A \emph{monotone} \(\RelationClass\)-eVASS is a \(\RelationClass\)-eVASS where transition labels \(\vectSet{R}(e)\) for edges \(e\) \emph{inside an SCC} are \emph{monotone} relations \(\vectSet{R}(e) \in \RelationClass\).
\end{definition}

For example for \(\RelationClass=Add\) all \(\RelationClass\)-eVASS are monotone, because this class \(\RelationClass\) contains only monotone relations. On the other hand, \emph{monotone} \(Semil\)-eVASS can no longer simulate counter machines, in fact they have the same sections as VASS: Using the 
\emph{controlling counter technique} \cite{CzerwinskiO21}, VASS can in fact simulate zero tests on the exits of SCCs. 

We will later require that \(\RelationClass\) is closed under intersection with semilinear sets, which is not true for the class of reachability relations of \(\RelationClass\)-m-eVASS. Hence we use the class of sections, as considered for VASS in \cite{ClementeCLP17}. 

\begin{definition}
A relation \(\vectSet{X} \subseteq \N^{n_1} \times \N^{n_2}\) is a \(\RelationClass\)-m-eVASS section if \(\vectSet{X}=\pi(\vectSet{R} \cap \vectSet{S})\), where \(\vectSet{S} \subseteq \N^n \times \N^n\) for some \(n \geq n_1, n_2\) is semilinear, \(\vectSet{R}\) is the reachability relation of a \(\RelationClass\)-m-eVASS with \(n\) counters, and \(\pi \colon \N^n \times \N^n \to \N^{n_1} \times \N^{n_2}\) is a \emph{projection}, i.e.\ a function deleting some coordinates.
\end{definition}

Sections are often defined by only allowing zero tests as \(\vectSet{S}\) \cite{ClementeCLP17}. 
These definitions are in fact equivalent:

\begin{restatable}{lemma}{LemmaRestatableEquivalentlyZeroTest} \label{LemmaLabelEquivalentlyZeroTest}
Assume that \(\RelationClass\) contains \(Add\). 

Let \(\vectSet{X}=\pi(\vectSet{R} \cap \vectSet{S}) \subseteq \N^{n_1} \times \N^{n_2}\) be a section of a
\(\RelationClass\)-m-eVASS $\VAS$. Then \(\vectSet{X}=\pi'(\vectSet{R}' \cap WZT(I,I',n))\) for some linear set \(WZT(I,I',n)=\{(\vect{x}, \vect{y})\mid \vect{x}[i]=0\ \forall i \in I, \vect{y}[i]=0\ \forall i \in I'\}\) with \(n \geq n_1, n_2\), \(I,I' \subseteq \{1,\dots, n\}\), and the reachability relation $\vectSet R'$ of a \(\RelationClass\)-m-eVASS
of the same dimension as $\VAS$,
and the new representation is computable in \(\mathbf{PTIME}\).
\end{restatable}

Given a VASSnz \(\VAS\), the number of priorities \(k\) is the number of different \(j\) s.t. some transition has the label \(NZT(j,d)\), i.e.\ the number of different types of zero tests. We prove that VASSnz with \(k\) priorities can be converted into monotone eVASS, whose labels are VASSnz sections with \(k-1\) priorities.

\begin{lemma} \label{LemmaConvertVASSnzToEVASS}
Let \(\RelationClass\):=\((k{-}1)\)-VASSnzSec be the class of sections of VASSnz with \(k{-}1\) priorities. There is a polytime algorithm converting a VASSnz  with $k$ priorities into a monotone \(\RelationClass\)-eVASS  with the same reachability relation and dimension.
\end{lemma}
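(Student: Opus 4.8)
The plan is to isolate the \emph{highest-priority} zero test, say $NZT(j_{\max},d)$, and use it to decompose the VASSnz $\VAS_k$ into segments that no longer use that test, so that each segment is (a section of) a VASSnz with $k-1$ priorities. The key point is that a zero test $NZT(j_{\max},d)$ on coordinates $\{1,\dots,j_{\max}\}$ \emph{freezes} those coordinates at value $0$; between two consecutive occurrences of such a test, the first $j_{\max}$ counters behave like ordinary VASS counters (possibly subject to lower-priority zero tests, which are tests on subsets of $\{1,\dots,j_{\max}\}$), while the remaining counters $\{j_{\max}+1,\dots,d\}$ are unrestricted. Concretely, I would build the monotone $\RelationClass$-eVASS $\VAS'$ as follows. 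Its control graph keeps one SCC per "phase" of $\VAS_k$ in which no $NZT(j_{\max},d)$ occurs. Inside such an SCC we must have \emph{monotone} labels; so rather than putting the original $\VAS_k$-transitions there, I would label the edges with the appropriate VASSnz-section relations. The natural choice: contract each maximal $NZT(j_{\max})$-free sub-VASSnz into a single edge labelled by its reachability relation (restricted/projected suitably), which by definition is a $(k-1)$-VASSnzSec; monotonicity is automatic because reachability relations of VASS-like systems are monotone in the active counters, and the first $j_{\max}$ counters, although they could be nonzero, are genuine counters here so adding $\vect{m}$ to source and target preserves reachability. Between SCCs we place edges corresponding to the $NZT(j_{\max},d)$ transitions; since leaving an SCC allows either a $\RelationClass$-relation or an add/delete of counters, and a zero test on $\{1,\dots,j_{\max}\}$ followed by identity can be encoded by \emph{deleting} coordinates $1,\dots,j_{\max}$ and immediately \emph{re-adding} them with value $0$ — which is exactly the $I_+,I_-$ mechanism of eVASS exits — this fits the definition directly, with the caveat that we must also check the tested coordinates were $0$; I would handle the "was zero" check by intersecting the incoming section's output with the linear set $\{\vect{x}[i]=0 : i\le j_{\max}\}$, which is allowed since sections are closed under intersection with linear sets and projection.

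The key steps, in order: (1) normalize $\VAS_k$ so that every $NZT(j_{\max},d)$-transition is a distinguished edge and every other transition uses priorities $\le k-1$ or is plain addition; (2) define the phase graph whose nodes are "between two $NZT(j_{\max})$ tests" regions and whose SCC structure mirrors that of $\VAS_k$ with the top-priority edges removed; (3) for each phase, define the edge label as the reachability relation of the corresponding $(k-1)$-priority sub-VASSnz, with source/target coordinates matched up, and observe this is a $\RelationClass$-relation that is monotone; (4) define the inter-phase edges via the $I_+/I_-$ delete-and-readd encoding composed with the zero-check, and verify this is a legal eVASS exit edge; (5) prove $\Rel(\VAS',\qin,\qfin)=\Rel(\VAS_k,\qin,\qfin)$ by a straightforward run-by-run correspondence, cutting a $\VAS_k$-run at each top-priority zero test and mapping the pieces to phase-edges, and conversely; (6) check the construction is polytime — the phase graph is of size linear in $\VAS_k$, and the labels are just syntactic descriptions of sub-systems, not computed reachability sets.

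The main obstacle I expect is \textbf{monotonicity of the in-SCC labels together with the bookkeeping of which counters are "active"}. Inside a phase, coordinates $1,\dots,j_{\max}$ are active and behave as counters, but a priori a $(k-1)$-priority sub-VASSnz might itself zero-test some prefix $\{1,\dots,j'\}$ with $j'<j_{\max}$; its reachability relation is still monotone only in the coordinates that are \emph{not} zero-tested along it, so I need to be careful that the eVASS semantics (which adds $\vect{m}\in\N^{I(S)}$ to both endpoints for an in-SCC edge) is compatible — this should work because the section can always be presented over the full coordinate set with the understanding that monotonicity is the property "$(\vect{x},\vect{y})\in\vectSet{R} \Rightarrow (\vect{x}+\vect{m},\vect{y}+\vect{m})\in\vectSet{R}$", and reachability relations of VASSnz-sections do satisfy this when no zero test is applied \emph{at the matched endpoints}, which is exactly our situation since all top-priority tests have been moved to SCC exits. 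A secondary subtlety is ensuring the re-add-with-zero exit edges correctly model "test for zero and continue" rather than "reset to zero"; this is why step (4) must compose with the linear zero-check on the predecessor section rather than relying on the $I_+$ mechanism alone. Finally I would double-check the degenerate base case $k=1$ (no top-priority test to remove beyond the single type, giving a monotone $0$-VASSnzSec-eVASS, i.e.\ essentially a VASS with semilinear-ish monotone labels) so the induction this lemma feeds into is well-founded.
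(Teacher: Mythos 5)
Your high-level strategy is the same as the paper's: peel off the highest-priority zero test, contract the test-free portions into edges labelled by $(k-1)$-priority VASSnz sections, and use the counter delete/re-add mechanism of eVASS. However, there are two concrete gaps in the execution, and they are precisely the points the construction must get right. The first is monotonicity. You keep coordinates $1,\dots,j_{\max}$ active inside a phase SCC and assert the contracted segment's reachability relation is monotone on them because ``no zero test is applied at the matched endpoints''. That is not the right criterion: a $(k-1)$-priority segment may zero-test a prefix $\{1,\dots,j'\}$ with $j'<j_{\max}$ \emph{in the middle} of a run, and then $(\vect{x}+\vect{m},\vect{y}+\vect{m})$ need not belong to the relation when $\vect{m}$ is positive on a tested coordinate. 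The paper's fix is to make the first $m$ counters \emph{inactive} throughout the main SCC: in-SCC edges are labelled with $\pi_m(\Rel(\VAS_{k-1},p,q)\cap\{(\vect{x},\vect{y}) \mid \vect{x}[i]=\vect{y}[i]=0\ \forall i\le m\})$, i.e., segments are forced to start and end with the tested counters at $0$, and those counters are projected away. Monotonicity of the resulting relation on $\N^{d-m}\times\N^{d-m}$ holds because the surviving counters $m+1,\dots,d$ are never touched by any nested zero test. Your parenthetical ``restricted/projected suitably'' gestures at this, but your obstacle paragraph explicitly keeps those counters active, which does not work.

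The second gap is where the top-priority zero tests live. You route every $NZT(j_{\max},d)$ edge ``between SCCs'' and encode it with the $I_+/I_-$ delete-and-re-add mechanism. But that mechanism is only available on edges \emph{leaving} an SCC of $\VAS'$, and in the interesting case the top-priority test lies on a cycle of $\VAS_k$, so any faithful simulation places the corresponding edge \emph{inside} an SCC of $\VAS'$, where it must carry a monotone $\RelationClass$-label and cannot delete or re-add counters. The paper resolves this by performing the deletion once on entering the main component and the re-addition once on leaving it; inside the component, with counters $1,\dots,m$ gone, the zero test becomes the identity relation on $\N^{d-m}$ — a legal monotone in-SCC label. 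Without this, your step (4) cannot handle runs that use the top-priority zero test an unbounded number of times, which is the whole point of the lemma.
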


\begin{proof}
First let us explain the crucial observation this construction is based on: What is a zerotest? It is a linear relation, which is monotone on some counters and \emph{leaves the others} \(0\). If the fixed counters are considered as \emph{non-existent} in the current SCC \(S\) (eVASS have this capability), then a zero test is monotone and may be used as transition label.

Let \(\VAS_k=(Q,E)\) be a VASSnz with $k$ priorities, and
\(m\) be the index of its highest counter which is zero tested.
Let initial/final states be \(\qin, \qfin\). Let \(\VAS_{k-1}\) be the VASSnz obtained from \(\VAS_k\) by deleting all edges zero testing counter \(m\).

The target \(\RelationClass'\)-eVASS \(\VAS'=(Q',E')\) has \(5\) types of states \(Q':=\{src, del, main, add, tgt\} \times Q\) and \(7\) types of edges \(E_1, \dots, E_7\) with \(E':=\bigcup_{i=1}^7 E_i\). The new initial/final states are \(\qin':=(\source, \qin)\) and \(\qfin':=(\target, \qfin)\). Let us start with the simplest type of edge: \(E_1=\{e_1\}\), where \(e_1: (src, \qin) \to (tgt,\qfin)\) is labelled with \(\Rel(\VAS_{k-1}, \qin, \qfin)\). This allows \(\VAS'\) to simulate runs of \(\VAS_k\) not using zero tests on \(m\).

Next let us explain the states \(\{main\} \times Q\), where the interesting computation happens: The set of active counters for these states is 
\(\{m+1,\dots, d\}\), i.e.\ in these states we assume that the first \(m\) counters have been deleted. We have two types of edges on \(main\) called \(E_2\) and \(E_3\), where \(E_2\) simulates zero tests on \(m\). Formally, we define \(E_2:=\{((main,q), (main,p)) \mid (q,p) \in E, \text{ and has the label }\vectSet{R}(q,p)=NZT(m,d)\}\). We give every \(e_2 \in E_2\) the label \(\vectSet{R}(e_2):=\{(\vect{x}, \vect{x}) \mid \vect{x} \in \N^{d-m}\}\). Observe that this label faithfully applies the zero test: The first \(m\) counters do not exist in this SCC, and on the rest of the counters the ``zero test'' is the identity.

Next we consider \(E_3\). The idea is that in state \(main\) we want to still be able to apply transitions of \(\VAS_{k-1}\). Set \(E_3:=(\{main\} \times Q) \times (\{main\} \times Q)\), i.e.\ for all \((p,q) \in Q^2\) we have an edge.  For every \(e_3=(main, p) \to (main, q) \in E_3\) we set the label \(\vectSet{R}(e_3)\) of \(e_3\) to
\[\pi_m(\Rel(\VAS_{k-1}, p, q) \cap \{(\vect{x}, \vect{y}) \mid \vect{x}[i]=\vect{y}[i]=0\ \forall 1 \leq i \leq m\}),\] where \(\pi_m\) projects the first \(m\) coordinates away. Hence edges \(e_3\) apply arbitrary runs of \(\VAS_{k-1}\) between the corresponding states which \emph{preserve value} \(0\) on the first \(m\) counters.

Next we explain the states \(del\) and \(add\): They ensure that when entering and leaving the \(main\) component we have only the counters \(m+1, \dots, d\). 
We set \(E_4:=\{((del, p), (main, p)) \mid p \in Q\}\), and add the label \(I_-(e_4):=\{1,\dots, m\}\), i.e.\ we delete the first \(m\) many counters and do not change the state of \(\VAS_k\). Similarly, we set \(E_5:=\{((main, p),(add, p)) \mid p \in Q\}\) with label \(I_+(e_5):=\{1,\dots, m\}\), which readds the first \(m\) counters. 

We set \(E_6=\{((\source, \qin),(del, p)) \mid p \in Q\}\), and for every \(e_6 \in E_6\) the label \(\vectSet{R}(e_6):=\Rel(\VAS_{k-1}, q_{in}, p) \cap Z$, where $Z=\{(\vect{x}, \vect{y}) \mid \vect{y}[1]= \ldots = \vect{y}[m] = 0\}\), i.e.\ we perform any run of \(\VAS_{k-1}\) which starts in \(q_{in}\) and sets the first \(m\) counters to \(0\). 

Symmetrically, we set \(E_7:=\{((add, p), (\target, \qfin)) \mid p \in Q\}\) and  \(\vectSet{R}(e_7):=\Rel(\VAS_{k-1}, p, \qfin)\cap Z^{-1}\), i.e.\ we end with any run of \(\VAS_{k-1}\) that starts with the first $m$ counters equal 0 and ends in $\qfin$. This completes the construction of \(\VAS'\).

To see that \(\VAS'\) is equivalent to \(\VAS_k\), let \(\rho\) be any run of \(\VAS_k\). If \(\rho\) does not use a zero-test on \(m\) then we simulate \(\rho\) using \(E_1\). Otherwise \(\rho\) eventually visits a configuration \(\vect{x}\) with \(\vect{x}[i]=0\) for all \(1 \leq i \leq m\) to enable the zero test. 

Write \(\rho=\rho_{\text{pre}} \rho_{\text{mid}} \rho_{\text{suf}}\) where \(\rho_{\text{pre}}\) is the prefix until the first time \(\vect{x}[i]=0\ \forall\ 1 \leq i \leq m\) and \(\rho_{\text{suf}}\) is the suffix starting from the last time we have \(\vect{x}[i]=0\ \forall\ 1 \leq i \leq m\). Then \(\rho_{\text{pre}}\) and \(\rho_{\text{suf}}\) do not use zero tests on \(m\), hence we can simulate them using edges of type \(E_6\) and \(E_7\) respectively. To simulate \(\rho_{mid}\), we use edge types \(E_2\) and \(E_3\) repeatedly.
\end{proof}

Lemma \ref{LemmaConvertVASSnzToEVASS} shows that even if \(\RelationClass\) contains non-monotone relations, one can recover some monotonicity at the cost of complicated edge labels. This might seem like a difficulty, but they will turn out to be surprisingly easy to deal with.

Recall Definition \ref{DefinitionApproximable} of approximability.  We will now overapproximate monotone \(\RelationClass\)-eVASS and therefore \ConsideredModel:

\begin{theorem} \label{TheoremIdealDecompositionEVASS}
Let \(\alpha \geq 2\) and let \(\RelationClass\) be a class of relations approximable in \(\mathfrak{F}_{\alpha}\), containing \(Add\) and effectively closed under intersection with semilinear relations in \(\mathfrak{F}_{\alpha}\).
Then sections of monotone \(\RelationClass\)-eVASS are approximable in \(\mathfrak{F}_{\alpha+2d+2}\).
\end{theorem}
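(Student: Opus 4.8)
The plan is to run an adaptation of the KLM algorithm, carrying nice overapproximations and $\RelationClass$-labels in place of exact semilinear sets and plain additions. \emph{Setup.} Given a section $\vectSet{X}=\pi(\vectSet{R}\cap\vectSet{L})$ with $\vectSet{R}$ the reachability relation of a monotone $\RelationClass$-eVASS $\VAS$, the projection $\pi$ is dealt with only at the very end via Lemma \ref{LemmaNiceOverapproximationProjection}, and the linear constraint $\vectSet{L}$ is carried as part of the boundary data of the objects below and absorbed into their global characteristic system. Since the SCCs of $\VAS$ form a DAG, every run visits a fixed simple sequence of SCCs glued by exit edges; there are finitely many such sequences, so I would fix one, $S_0\, e_1\, S_1\cdots e_\ell\, S_\ell$, and approximate the set of source/target pairs of runs along it. On every exit edge $e_i$ carrying an $\RelationClass$-label and on every edge inside each $S_j$ (these are monotone $\RelationClass$-relations) I would call the $\mathfrak{F}_\alpha$ approximation algorithm for $\RelationClass$ to split the label into finitely many pieces $\vectSet{R}_m\HybridizationRelation\vectSet{L}_m$, again in $\RelationClass$ and again monotone inside SCCs by Definition \ref{DefinitionApproximable}(2); fixing one piece per edge (finitely many combinations) leaves an SCC-sequence whose every edge carries both an $\RelationClass$-relation and a linear nice overapproximation.

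\emph{The KLM loop.} On such an object I would run the classical decomposition. Define \emph{decomposition objects} as KLM sequences — an SCC-sequence together with $\N_{\omega}$-valued boundary constraints between consecutive components (each coordinate a fixed natural number or the wildcard $\omega$) — and their homogeneous and global characteristic linear systems, built from the counter values, the edge multiplicities (Parikh images), the linear overapproximations $\vectSet{L}_m$ on the edges, and the constraint $\vectSet{L}$. Call an object \emph{perfect} if these systems admit a solution strictly positive on every edge multiplicity and on every free ($\omega$) boundary coordinate. For a \emph{perfect} object $\xi$, the nice overapproximation $\vectSet{L}(\xi)$ is the linear set of source/target pairs cut out by the global characteristic system, and $\vectSet{X}(\xi)\HybridizationRelation\vectSet{L}(\xi)$ follows by composing the edge-level overapproximations $\vectSet{R}_m\HybridizationRelation\vectSet{L}_m$ along the edges and around the cycles of each $S_j$ via Lemma \ref{LemmaShiftGoodOverapproximation}: perfectness is precisely what guarantees that all intersections arising in that composition are non-degenerate, and monotonicity of the in-SCC relations supplies the interior pumping directions $\vect{w}\in\N_{\ge1}(\vectSet{F})$ required by Definition \ref{DefinitionGoodOverapproximation}; the threshold $N$ is the maximum of the edge-level thresholds (each bounded by the $\RelationClass$-approximation's function $g\in\mathfrak{F}_\alpha$) and a linear-algebraic constant (in $\mathfrak{F}_3\subseteq\mathfrak{F}_\alpha$ since $\alpha\ge 3$). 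For a \emph{non-perfect} object one of the familiar KLM phenomena occurs — a boundary coordinate meant to be free is in fact bounded, or an edge cannot be pumped — and I would decompose it into finitely many objects of strictly smaller \emph{rank}: bounded coordinates are read off by a finite case split (which also strictly drops the number of active counters of that component), and unpumpable edges are deleted or force a split of a component; monotonicity of $\vectSet{X}$, when present, is inherited because the constraints ever imposed are shift-invariant (on effects, not absolute values). Separately, whenever the composition in Lemma \ref{LemmaShiftGoodOverapproximation} would require a \emph{degenerate} intersection, one intersects the linear sets involved with the lower-dimensional space they span and recurses; by Lemma \ref{LemmaDimensionDecreaseShiftedL} this strictly decreases an ambient dimension bounded by $2d$, so it terminates.

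\emph{Complexity.} The ranks of decomposition objects live in well-orders to which Proposition \ref{PropositionFastGrowingComplexity} applies, and each decomposition step enlarges objects by at most an $\mathfrak{F}_\alpha$ amount, so the induced controlled sequences of ranks have bounded length. Accounting for the two nested sources of recursion — the KLM rank, and the dimension decrease of Lemma \ref{LemmaDimensionDecreaseShiftedL} (an ambient dimension $\le 2d$) — two applications of Proposition \ref{PropositionFastGrowingComplexity} each contribute a summand on the order of $d$ to the subscript, the residual linear algebra stays in $\mathfrak{F}_3$, and one obtains that the number of recursive calls, the overall running time, and the produced function $g$ are all in $\mathfrak{F}_{\alpha+2d+2}$. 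Throughout, every piece output is again a section of a monotone $\RelationClass$-eVASS: the boundary constraints, the fixed counter values, and the $\vectSet{L}$-constraint are imposed by intersecting labels with semilinear relations, and the restructuring of components (pumping transitions, splitting SCCs) uses only $Add$-edges — this is where the hypotheses ``$\RelationClass$ contains $Add$'' and ``closed under intersection with semilinear relations'' are used.

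\emph{Main obstacle.} I expect the crux to be the perfect case together with the exact definition of ``perfect'': it must be strong enough that the composition of Lemma \ref{LemmaShiftGoodOverapproximation} always applies (all relevant intersections non-degenerate) and that monotonicity yields genuinely interior pumping directions, yet weak enough that \emph{every} non-perfect object admits a rank-decreasing decomposition covering the same reachability relation — the delicate heart of KLM, here further complicated by having to maintain the $\HybridizationRelation$-relation and the explicit bound on $N$ rather than merely a semilinear description. A secondary subtlety is the precise bookkeeping that yields exactly $\mathfrak{F}_{\alpha+2d+2}$, and the verification that every decomposed piece remains a section of a monotone $\RelationClass$-eVASS with monotonicity inherited when $\vectSet{X}$ is monotone.
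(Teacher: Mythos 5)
Your overall architecture is the same as the paper's: decompose into KLM-style sequences of components carrying $\RelationClass$-labels with nice overapproximations, define perfectness, output the projection of a characteristic system for perfect objects, and drive termination by a rank together with Proposition \ref{PropositionFastGrowingComplexity}. But there is a concrete gap in your perfect case. You claim that for a perfect object $\vectSet{X}(\xi)\HybridizationRelation\vectSet{L}(\xi)$ ``follows by composing the edge-level overapproximations along the edges and around the cycles of each $S_j$ via Lemma \ref{LemmaShiftGoodOverapproximation}.'' That lemma composes a \emph{fixed finite} number of relations; the reachability relation of a strongly-connected component involves unboundedly many edge traversals, so no finite composition of edge overapproximations yields it. This is exactly where the paper has to work hardest (Lemma \ref{LemmaLocalCorrectness}, Case 2): one must realize every point $\vect{x}+m\vect{w}$ of the overapproximation by an actual run of the form $\mathbf{up}^{j_1+j_2}\,\mathbf{diff}^{j_1}\mathbf{diff}_1^{j_2}\,\rho_{\Z}'\,\mathbf{dwn}^{j_1+j_2}$, and building it requires ingredients your perfectness does not provide: forward/backward coverability of every configuration (the paper's P7) to obtain $\mathbf{up}$ and $\mathbf{dwn}$, unboundedness of all auxiliary variables (P6) to get a full-support homogeneous solution, and bases $\vect{b}(e)\in\vectSet{R}(e)$ (P1) so that edges can be traversed without consuming periods. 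Monotonicity of the in-SCC relations does not by itself ``supply the interior pumping directions''; it only lets one transfer enabledness upward once the pumping runs exist.

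Relatedly, your stated definition of perfect (the characteristic systems admit a strictly positive solution) is only the $\Theta_1$-type condition. You then list ``an edge cannot be pumped'' as a failure mode of non-perfect objects, but that is not implied by the failure of your condition — coverability is a semantic property that cannot be read off the linear system and needs its own decision procedure (the paper uses a backwards-coverability computation, made possible here by computing minimal elements of $\N^{I}\times(\vect{y}+\N^{I})\cap\vectSet{R}(e)$ via the $\RelationClass$-approximation algorithm). This also corrects your complexity bookkeeping: the two $+d$ summands come from (a) that backwards-coverability loop, a controlled sequence in $\N^{d_i}$ without an increasing pair, giving $\mathfrak{F}_{\alpha+d+1}$ per decomposition step, and (b) the outer lexicographic rank in $\N^{d+1}$, giving the final $\mathfrak{F}_{\alpha+2d+2}$. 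The bounded-depth recursion of Lemma \ref{LemmaDimensionDecreaseShiftedL} contributes only $+1$ per application, not a $+d$.
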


We move the proof of Theorem \ref{TheoremIdealDecompositionEVASS} to Section \ref{SectionProofTheoremEVASS}, and first use it to show Theorem \ref{TheoremVASSnzIdealDecomposition}, our first main result:

\begin{theorem} \label{TheoremVASSnzIdealDecomposition}
The class of sections of VASSnz of dimension \(d\) and with \(k\) priorities is approximable in \(\mathfrak{F}_{2kd+2k+2d+4}\). 

The class of all VASSnz sections is approximable in \(\mathfrak{F}_{\omega}\) time.
\end{theorem}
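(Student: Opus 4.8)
The plan is to induct on the number of priorities $k$, using Lemma~\ref{LemmaConvertVASSnzToEVASS} to peel off one priority and Theorem~\ref{TheoremIdealDecompositionEVASS} to pay for each peeling with an additive increase of $2d+2$ in the fast-growing subscript. Precisely, I would prove that for every $k\ge 0$ the class of sections of $d$-dimensional VASSnz with at most $k$ priorities is approximable in $\mathfrak{F}_{2kd+2k+2d+5}$; the second sentence of the theorem then follows by a uniform diagonalisation over $d$ and $k$.

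For the base case $k=0$, a VASS is an $Add$-eVASS and, since $Add$ relations are monotone and $Add\subseteq Semil$, also a \emph{monotone} $Semil$-eVASS. The class $Semil$ contains $Add$, is closed under intersection with semilinear relations, and is approximable in $\mathfrak{F}_3$: one decomposes a semilinear relation into its linear pieces, each of which is its own nice overapproximation (with $N=0$, so the controlling function is constant), adding the diagonal directions $(e_j,e_j)$ to each period set when the input relation is monotone so that the pieces stay monotone. Theorem~\ref{TheoremIdealDecompositionEVASS} applied with $\RelationClass=Semil$ and $\alpha=3$ then gives that sections of monotone $Semil$-eVASS — and therefore VASS sections, which by the controlling-counter technique coincide with these — are approximable in $\mathfrak{F}_{3+2d+2}=\mathfrak{F}_{2d+5}$.

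For the inductive step, assume the class $\RelationClass:=(k-1)$-VASSnzSec of sections of VASSnz with at most $k-1$ priorities is approximable in $\mathfrak{F}_{2(k-1)d+2(k-1)+2d+5}=\mathfrak{F}_{2kd+2k+3}$. This class contains $Add$ (an $Add$ relation is the reachability relation of a one-transition VASS) and is closed under intersection with semilinear relations: a section $\pi(\vectSet{R}\cap\vectSet{L})$ intersected with a semilinear $\vectSet{S}$ equals $\pi(\vectSet{R}\cap(\vectSet{L}\cap\pi^{-1}(\vectSet{S})))$, and since $\pi^{-1}(\vectSet{S})$ is semilinear the set $\vectSet{L}\cap\pi^{-1}(\vectSet{S})$ is a finite union of linear sets, exhibiting the result as a finite union of VASSnz sections (finite unions cost nothing, as Definition~\ref{DefinitionApproximable} already outputs finitely many pieces). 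By Lemma~\ref{LemmaConvertVASSnzToEVASS} a $k$-VASSnz converts in polynomial time into an equivalent monotone $\RelationClass$-eVASS, and intersecting with the linear part and projecting lifts this to sections, so $k$-VASSnz sections are sections of monotone $\RelationClass$-eVASS; conversely, inlining the — syntactically very simple — labels produced by Lemma~\ref{LemmaConvertVASSnzToEVASS} shows that sections of monotone $\RelationClass$-eVASS are again $k$-VASSnz sections, so the two classes coincide. Now Theorem~\ref{TheoremIdealDecompositionEVASS} with $\alpha=2kd+2k+3\ge 3$ yields that this class is approximable in $\mathfrak{F}_{\alpha+2d+2}=\mathfrak{F}_{2kd+2k+2d+5}$, which closes the induction.

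For the second sentence, an algorithm for an arbitrary VASSnz section first reads off $d$ and $k$ — both at most the input size $n$ — and then runs the algorithm above, whose running time is of the form $n\mapsto F_{p(n)}(q(n))$ for fixed polynomials $p,q$; by standard properties of the hierarchy $F_{p(n)}(q(n))\le F_{r(n)}(r(n))=F_\omega(r(n))$ for a suitable polynomial $r$, so this function lies in $\mathfrak{F}_\omega$. I expect the genuine difficulty to lie not in this arithmetic but in the class bookkeeping underlined above: verifying that $(k-1)$-VASSnzSec really satisfies all three hypotheses of Theorem~\ref{TheoremIdealDecompositionEVASS} (closure under intersection with semilinear relations being the delicate one), and checking that the monotone $\RelationClass$-eVASS produced by Lemma~\ref{LemmaConvertVASSnzToEVASS} has sections lying back inside the class the induction is stated for — so that the induction hypothesis at level $k-1$ and the conclusion at level $k$ are phrased over compatible classes of relations and Definition~\ref{DefinitionApproximable} is met on the nose.
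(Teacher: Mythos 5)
Your proposal is correct and follows essentially the same route as the paper: induction on $k$ with base case "VASS are monotone $Semil$-eVASS, $Semil$ is approximable in $\mathfrak{F}_3$," and inductive step combining Lemma~\ref{LemmaConvertVASSnzToEVASS} with Theorem~\ref{TheoremIdealDecompositionEVASS} at $\alpha=2kd+2k+3$, followed by diagonalisation for the $\mathfrak{F}_\omega$ claim. The extra bookkeeping you supply (closure of $(k-1)$-VASSnzSec under intersection with semilinear relations, preserving monotonicity of the linear pieces) is left implicit in the paper but is verified correctly in your write-up.
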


\begin{proof}
By induction on \(k\).

\para{\(k=0\):} VASS are a special case of monotone Semil-eVASS, and since the class of semilinear sets is approximable in \(\mathfrak{F}_2\) (namely we have \(\vectSet{L} \HybridizationRelation \vectSet{L}\ \forall\ \vectSet{L}\)), we hence obtain \(\mathfrak{F}_{2+2d+2}=\mathfrak{F}_{2d+4}\) time for VASS by Theorem \ref{TheoremIdealDecompositionEVASS}.

\para{\(k{-}1 \to k\):} By Lemma \ref{LemmaConvertVASSnzToEVASS} we can convert \ConsideredModel\ with \(k\) priorities into \(\RelationClass\)-eVASS for \(\RelationClass:=\) \((k{-}1)\)-VASSnzSec. By induction together with Theorem \ref{TheoremIdealDecompositionEVASS}, \(\RelationClass\)-eVASS are  approximable in 
\(\mathfrak{F}_{2kd+2k+2d+4}\).

\smallskip

For the class of all VASSnz we immediately obtain \(\mathfrak{F}_{\omega}\).
\end{proof}


\begin{corollary} \label{CorollaryVASSnzIdealDecomposition}
Reachability for \ConsideredModel \ is in \(\mathfrak{F}_{\omega}\).
\end{corollary}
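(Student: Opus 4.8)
The plan is to reduce a single reachability query to an emptiness test for a VASSnz section, and then read off emptiness directly from the output of the approximation algorithm guaranteed by Theorem~\ref{TheoremVASSnzIdealDecomposition}. Concretely, fix a VASSnz \(\VAS\) of dimension \(d\) with designated states \(\qin,\qfin\), together with a source/target pair \((\vect{x}_{in},\vect{x}_{fin}) \in \N^d \times \N^d\); deciding reachability means deciding whether \((\vect{x}_{in},\vect{x}_{fin}) \in \Rel(\VAS,\qin,\qfin)\). I would consider the relation \(\vectSet{S} := \Rel(\VAS,\qin,\qfin) \cap \{(\vect{x}_{in},\vect{x}_{fin})\}\). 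Since \(\{(\vect{x}_{in},\vect{x}_{fin})\}\) is a linear subset of \(\N^d \times \N^d\) and the identity map is a projection, \(\vectSet{S}\) is a VASSnz section, and \((\vect{x}_{in},\vect{x}_{fin})\) is reachable if and only if \(\vectSet{S} \neq \emptyset\).

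Next I would invoke Theorem~\ref{TheoremVASSnzIdealDecomposition}: the class of all VASSnz sections is approximable in \(\mathfrak{F}_{\omega}\), so one call of the approximation algorithm on \(\vectSet{S}\) runs in \(\mathfrak{F}_{\omega}\) time (in the size of \(\VAS\) and of \(\vect{x}_{in},\vect{x}_{fin}\)) and returns finitely many sections \(\vectSet{S}_1,\dots,\vectSet{S}_m\), linear sets \(\vectSet{L}_1,\dots,\vectSet{L}_m\), and a function \(g\), with \(\vectSet{S} = \bigcup_{j=1}^m \vectSet{S}_j\) and \(\vectSet{S}_j \HybridizationRelation \vectSet{L}_j\) for each \(j\). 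The key step is then the observation that \(\vectSet{S} \neq \emptyset\) if and only if \(m \geq 1\). One direction is immediate since \(\vectSet{S}\) is the union of the \(\vectSet{S}_j\). For the other, suppose \(m \geq 1\) and write \(\vectSet{L}_1 = \vect{b}_1 + \N(\vectSet{F}_1)\); then \(\vect{b}_1 \in \vectSet{L}_1\), and \(\N_{\geq 1}(\vectSet{F}_1)\) is non-empty (it contains \(\vect{0}\) when \(\vectSet{F}_1 = \emptyset\) by the summation convention, and any element of \(\vectSet{F}_1\) otherwise), so choosing \(\vect{w} \in \N_{\geq 1}(\vectSet{F}_1)\) and applying Definition~\ref{DefinitionGoodOverapproximation} to \(\vectSet{S}_1 \HybridizationRelation \vectSet{L}_1\) at the point \(\vect{x} = \vect{b}_1\) yields some \(N\) with \(\vect{b}_1 + N\vect{w} \in \vectSet{S}_1 \subseteq \vectSet{S}\), hence \(\vectSet{S} \neq \emptyset\). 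Therefore the whole procedure is: build the description of \(\vectSet{S}\), run the approximation algorithm, and answer ``reachable'' exactly when it outputs at least one component; this is all within \(\mathfrak{F}_{\omega}\).

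I do not expect any real obstacle: the argument is essentially just unpacking Definition~\ref{DefinitionApproximable} and Definition~\ref{DefinitionGoodOverapproximation} in the degenerate case of a singleton linear set, plus the already-established Theorem~\ref{TheoremVASSnzIdealDecomposition}. The only point that deserves a line of care is ensuring that \(m \geq 1\) faithfully detects \(\vectSet{S} \neq \emptyset\), i.e.\ that the algorithm cannot legitimately emit an empty component \(\vectSet{S}_j = \emptyset\) paired with a (necessarily non-empty) linear \(\vectSet{L}_j\) — but this follows from exactly the interior-direction argument above, since \(\vectSet{S}_j \HybridizationRelation \vectSet{L}_j\) forces \(\vectSet{S}_j\) to contain \(\vect{b}_j + N\vect{w}\). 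If desired, the same reasoning can be phrased as the classical pair of semi-algorithms (one enumerating runs to certify reachability, one invoking the approximation to certify non-reachability), but the one-shot version above is cleaner and already gives the \(\mathfrak{F}_{\omega}\) bound.
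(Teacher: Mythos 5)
Your proposal is correct and matches the paper's intended argument: the corollary is stated as a trivial consequence of Theorem~\ref{TheoremVASSnzIdealDecomposition}, and the paper later spells out exactly your reasoning in the proof of Theorem~\ref{TheoremUseHybridizationToDecideEverything}(1) — apply the approximation (Axiom 2) to the section pinned to the source/target pair and observe that \(\vectSet{X}_j \HybridizationRelation \vectSet{L}_j\) forces \(\vectSet{X}_j \neq \emptyset\), so emptiness is detected by \(k=0\). Your handling of the degenerate singleton linear set and of \(\N_{\geq 1}(\vectSet{F}_j)\) is the right level of care; nothing is missing.
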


\begin{proof}
If \(\vectSet{X} \HybridizationRelation \vectSet{L}\), then \(\vectSet{X} \neq \emptyset\) by simply choosing any \(\vect{x} \in \vectSet{L}\) and \(\vect{w} \in \N_{\geq 1}(\vectSet{F})\) and obtaining an  \(\vect{x}+N\vect{w} \in \vectSet{L}\). Therefore a given \(\vectSet{X}\) is non-empty if and only if the algorithm
of Theorem \ref{TheoremVASSnzIdealDecomposition} applied to \(\vectSet{X}\) outputs some 
\(\vectSet{X_j} \HybridizationRelation \vectSet{L}_j\).
\end{proof}

Hence we can now focus solely on proving Theorem \ref{TheoremIdealDecompositionEVASS}.

\section{Approximating monotone eVASS} \label{SectionProofTheoremEVASS}


In this section we prove Theorem \ref{TheoremIdealDecompositionEVASS}, i.e.\ we prove that the class of sections of monotone \(\RelationClass\)-eVASS is approximable.

In order to overapproximate a section \(\vectSet{X}=\pi(\vectSet{R} \cap \vectSet{L})\), we first perform some preprocessing: By Lemma \ref{LemmaLabelEquivalentlyZeroTest} we assume that \(\vectSet{L}=WZT(I, I',n)\) is a zero test, and afterwards by Lemma \ref{LemmaNiceOverapproximationProjection} we may overapproximate \(\vectSet{R} \cap WZT(I, I',n)\) instead. Since removing a counter and subsequently readding it simulates a zero test, we can even assume that $\vectSet X = \vectSet R$.
%
Hence the main data structure of our algorithm are simply \(\RelationClass\)-m-eVASS%
\footnote{
Notably, this is simpler than standard KLM sequences.
},
abbreviated below as \emph{m-eVASS}. 

\begin{algorithm}[h!]
\caption{Approximation(\(\vectSet{X}=\Rel(\VAS, \qin, \qfin)\))}\label{AlgorithmMainStructure}
\begin{algorithmic}
\State \textbf{Output}: Set of Pairs \((\vectSet{X}_j, \vectSet{L}_j)\) of reachability relations \(\vectSet{X}_j=\Rel(\VAS_j, \qinj, \qfinj)\) of m-eVASS \(\VAS_j\) and hybridlinear \(\vectSet{L}_j\) such that \(\vectSet{X}= \bigcup_{j=1}^k \vectSet{X}_j\) and \(\vectSet{X}_j \HybridizationRelation \vectSet{L}_j\) for all \(j\).
\State \textbf{Initialize}: Workset \(\gets\) \(\{\vectSet{X}\}\).
\While{exists \(\vectSet{X} \in\) Workset: \(\vectSet{X}\) is not perfect}
    \State Workset \(\gets\) (Workset \(\setminus \{\vectSet{X}\})\ \cup\) Decompose(\(\vectSet{X}\)).
\EndWhile
\State \textbf{Return}\!\! \(\{(\vectSet{X}, \pi_{\source_\vectSet X, \target_\vectSet X}(\sol(\CharSys_\vectSet{X})) \mid \vectSet{X} \in \text{Workset}\}\).
\end{algorithmic}
\end{algorithm}


The main idea of the algorithm is the following (see Algorithm \ref{AlgorithmMainStructure}): Maintain a workset of current m-eVASS. While one of them does not fulfill a condition we call ``perfectness'', decompose such an m-eVASS into finitely many m-eVASS which are ``smaller'' w.r.t. some well-founded ordering. If every m-eVASS is perfect, output each of them ($\vectSet X$) together with the projection of solutions of the characteristic integer linear program ($\CharSys_\vectSet X$)  
to specific variables \(\source_\vectSet X\) and \(\target_\vectSet X\).

The rest of this section is structured as follows: First we set up some simplifying assumptions. Then we define
 the characteristic system, and then using the system we define
 perfectness. Fourthly, we define the rank of an m-eVASS. Then we describe how to decide perfectness and decompose. We end with a correctness proof and complexity analysis.

\para{Setup}: For the rest of this section fix an m-eVASS \((\VAS, \qin, \qfin)\).
As the class $\RelationClass$ is approximable in $ \mathfrak{F}_{\alpha}$,
there is a function $g\in \mathfrak{F}_{\alpha}$ and an algorithm \(\ApproximationAlgorithm\) that 
given \(\vectSet{R} \in \RelationClass\), in time \(\mathfrak{F}_{\alpha}\) outputs finitely many \(\vectSet{R}_j, \vectSet{L}_j\) s.t. \(\vectSet{R}=\bigcup_{j=1}^k \vectSet{R}_j\) with \(\vectSet{R}_j \HybridizationRelation_g \vectSet{L}_j\).
We start with two basic assumptions: 

\begin{enumerate}
\item[\COne] W.l.o.g. the SCC graph of m-eVASS is a line.
\item[\CTwo] W.l.o.g. for every edge \(e\) labelled with $\vectSet{R}(e)$, we have a linear asymptotic overapproximation \(\vectSet{R}(e) \HybridizationRelation_g \vectSet{L}(e)=\vect{b}(e)+\N(\vectSet{F}(e))\) which is \emph{basic}, i.e., \(\vect{b}(e) \in \vectSet{R}(e)\). 
\end{enumerate}

To have \COne, in exponential time we can split any graph into exponentially many line graphs, preserving the set of paths. 

Concerning \CTwo, 
%
%
suppose that some edge \(e\) labelled with $\vectSet{R}(e)$ does not fulfill \CTwo. 
%
We apply \(\ApproximationAlgorithm\) 
on \(\vectSet{R}(e)\), replacing \(e\) by edges \(e_1, \dots, e_k\) with the same source/target and \(\vectSet{R}(e_j):=\vectSet{R}_j\) and \(\vectSet{L}(e_j):=\vectSet{L}_j=\vect{b}_j+\N(\vectSet{F}_j)\). Here we already assumed that \(\vectSet{B}_j=\{\vect{b}_j\}\) is a singleton, since by Lemma \ref{LemmaShiftGoodOverapproximation}, whenever \(\vectSet{X} \HybridizationRelation_g \vectSet{B}_j+\N(\vectSet{F}_j)\), then for all \(\vect{b}_j \in \vectSet{B}_j\) we have also \(\vectSet{X} \cap (\vect{b}_j+\N(\vectSet{F}_j)) \HybridizationRelation_g \vect{b}_j+\N(\vectSet{F}_j)\). Namely, the intersection \((\vect{b}_j+\N(\vectSet{F}_j)) \cap (\vectSet{B}_j+\N(\vectSet{F}_j))\) is non-degenerate. 
Due to monotonicity-preservation in Definition \ref{DefinitionApproximable},
As \(\vectSet{R}(e)\) was monotone, each of \(\vectSet{R}_j\) is also monotone.
It remains to ensure that the overapproximations
\(\vectSet{R}(e_j) \HybridizationRelation_g \vectSet{L}(e_j)\) are basic.

To obtain this, choose \(\vect{w}_j:=\sum_{\vect{f} \in \vectSet{F}_j} \vect{f}\) and \(\vect{p}_j:=g(\size(\vectSet{X}_j)+\size(\vect{b}_j)+\size(\vect{w}_j)) \cdot \vect{w}_j\) and set \(\vect{b}_j':=\vect{b}_j+\vect{p}_j\). By Lemma \ref{LemmaShiftGoodOverapproximation}, \(\vectSet{L}_j':=\vect{b}_j'+\N(\vectSet{F}_j)\) is an asymptotic overapproximation of \(\vectSet{R}_j \cap \vectSet{L}_j'\), and by definition of the function \(g\) we have \(\vect{b}_j' \in \vectSet{R}_j=\vectSet{R}(e_j)\). It remains to overapproximate also \(\vectSet{R}_j \cap (\vectSet{L}_j \setminus (\vect{p}_j+\vectSet{L}_j))\): By Lemma \ref{LemmaDimensionDecreaseShiftedL} \(\dim(\vectSet{L}_j \setminus (\vect{p}_j+\vectSet{L}_j))< \dim(\vectSet{L}_j)\). Hence we finish by recursively applying  \(\ApproximationAlgorithm\).

\CTwo \ is running in \(\mathfrak{F}_{\alpha+1}\): We have \(g \in \mathfrak{F}_{\alpha}\), and recursion depth is at most the number of counters \(n\), i.e.\ linear.

After achieving \CTwo \ we may need to recompute \COne: Since edges may vanish during \CTwo,
which happens when \(\ApproximationAlgorithm\) yields the empty list of overapproximations,
SCCs may split further.

\begin{example}
Suppose \(\vectSet{R}(e)=\{(x,y) \in \N \times \N \mid x \geq 1, y \leq x^2\}\),
similarly to Figure \ref{FigureIntuitionSemilinearityAlgorithm}. 
\(\ApproximationAlgorithm\) might at first output \(\vectSet{L}_1(e)=\N^2\). However, \((0,0) \not \in \vectSet{R}(e)\). Instead the algorithm computes \(g(\dots)\) (say \(1\)), adds \(\vectSet{L}_1'(e):=(1,1)+\N^2\) as an overapproximation and recursively overapproximates \(\vectSet{R}(e) \cap (\{0\} \times \N\)) and \(\vectSet{R}(e) \cap (\N \times \{0\})\) to further add \(\vectSet{L}_2'(e)=(\N+1) \times \{0\}\).
\end{example}

\vspace{-1mm}
\para{Notation}: We write \((\VAS_i=(Q_i, E_i), \qini, \qfini)\) for the \(i\)-th SCC of \((\VAS, \qin, \qfin)\) \emph{including trivial SCCs} (those with no edges), for $i = 1, \ldots, r$. 
We let \(e_i\) be the unique edge from the \(i\)-th to the \((i+1)\)-st SCC, and \(n_i\) to be the number of counters of \(\VAS_i\).
The edges $e_1, \ldots, e_{r-1}$ we call \emph{bridges}.

\para{Characteristic System}: We define an integer linear program (ILP) whose goal is to overapproximate reachability in an m-eVASS. It requires access to the overapproximations \(\vectSet{L}(e)\) as in \CTwo. It has three types of nonnegative integer variables: 
\begin{itemize}
\item
Vectors of variables \(\vect{x}_1, \dots, \vect{x}_r\) and \(\vect{y}_1, \dots, \vect{y}_r\) of appropriate length,
namely  \(\vect{x}_i, \) and \(\vect{y}_i\) have length  \(n_i\). 
These stand for the source/target configurations of \(\VAS_i\). 
\item
\emph{Edge variables} \(\#(e)\) for every edge \(e\) \emph{inside an SCC} that
count how often the edge $e$ is used.
\item
\emph{Period variables} \(\#(\vect{p})\) for every period $\vect p$  of every linear set \(\vectSet{L}(e)\) 
that count how often $\vect p$ is used.
Depending on $e$, period variables split into bridge and SCC ones.
\end{itemize}
The \(\pi_{\source, \target}\) in Algorithm \ref{AlgorithmMainStructure} denotes the projection to \(\vect{x}_1, \vect{y}_r\).

\smallskip

For every \(i\) we will define an ILP \(\LocalCharSys_i\) which overapproximates reachability in \(\VAS_i\). 
The full ILP is then defined in terms of these local ILP as follows:
\begin{align}\label{eq:CharSys}
\CharSys:=\bigwedge_{i=1}^r \LocalCharSys_i \wedge \bigwedge_{i=1}^{r-1} (\vect{y}_i, \vect{x}_{i+1}) \in \vectSet{L}(e_i).
\end{align}
\vspace{-3mm}
\begin{remark}
Above, we write ``\(\vect{x} \in \vectSet{L}\)'', where \(\vect{x}\) 
is a vector of variables and \(\vectSet{L}=\vect{b}+\N(\{\vect{p}_1, \dots, \vect{p}_\ell\})\) is a linear set/relation. 
The expanded technical meaning is that we add the set of equations
\(\vect{x}-\sum_{j=1}^\ell \#(\vect{p}_j) \cdot \vect{p}_j=\vect{b}\).
We will frequently use this notation later on, implicitly assuming additional auxiliary period variables when necessary. 
\end{remark}

The first step is of definition of \(\LocalCharSys_i\) is
to overapproximate the set of paths in the control graph via the so-called \emph{Euler-Kirchhoff-equations}.
%
Let $\text{EK}_i$ 
denote the following ILP with edge variables \(\#(e)\) for every \(e \in E_i\):

\[\bigwedge_{q \in Q_i}\ \sum_{e \in \text{in}(q)} \#(e)-\sum_{e \in \text{out}(q)} \#(e)=\mathds{1}_{q=\qfini}(q)-\mathds{1}_{q=\qini}(q),\] where in(\(q\)) and out(\(q\)) are the in- and outgoing neighbourhoods of state \(q\), respectively, and \(\mathds{1}_S \colon Q_i \to \{0,1\}\) for a condition \(S\) is the indicator function, i.e.\ \(\mathds{1}_S(q)=1\) if \(q\) fulfills \(S\), and \(\mathds{1}_S(q)=0\) otherwise.

We write the corresponding homogeneous ILP, where we replace the RHS
\(\mathds{1}_{q=\qfini}(q)-\mathds{1}_{q=\qini}(q)\) by \(0\), as \(\text{HEK}_i\). 

Intuitively, \(\#(e)\) is the number of times an edge \(e\) is used, and for every state which is not initial or final, we require it to be entered as often as left, the final state has to be entered one more time than left etc. This leads to Euler's lemma:

\begin{lemma} \label{LemmaBasicEulerKirchhoff}
Let \(\pi\) be a path from \(\qini\) to \(\qfini\). Then \(\Parikh(\pi)\), Parikh image of \(\pi\), fulfills 
$\text{EK}_i$. Conversely, if \(\vect{w} \in \N_{\geq 1}^E\) fulfills $\text{EK}_i$ then there is a path \(\pi\) with \(\Parikh(\pi)=\vect{w}\). Observe that the converse requires \(\vect{w}[e]\geq 1\) for every \(e\).
Analogous connection holds between cycles in $\VAS_i$ and solutions of $\text{HEK}_i$.
\end{lemma}

Now we can continue defining \(\LocalCharSys\) for an SCC \(\VAS_i\). Remember that for \(\vect{b}=(\vect{b}_s, \vect{b}_t) \in \N^{n_i} \times \N^{n_i}\) we write \(\Effect(\vect{b}):=\vect{b}_t-\vect{b}_s\) for the effect, and that \(\vectSet{L}(e)=\vect{b}(e)+\N(\vectSet{F}(e))\) is an asymptotic overapproximation of \(\vectSet{R}(e)\). 

Let \(\vectSet{F}(\VAS_i):=\bigcup_{e \in E_i} \vectSet{F}(e) \subseteq \N^{n_i} \times \N^{n_i}\). For every \(\vect{w}_i \in \N^{E_i}\) define \(\Effect(\vect{w}_i):=\sum_{e \in E_i}\vect{w}_i(e) \cdot \Effect(\vect{b}(e))\), i.e.\ the effect of an edge sequence only taking the bases \(\vect{b}(e)\) into account. \[\LocalCharSys_i:=\text{EK}_i(\vect{w}_i) \wedge \vect{y}_i-\vect{x}_i \in \Effect(\vect{w}_i) + \Effect(\N(\vectSet{F}(\VAS_i))).\]

Let us give intuition on this overapproximation: Assume first that the overapproximations \(\vectSet{L}(e)\) of the edges have no periods, i.e.\ that \(\vectSet{F}(\VAS_i)=\emptyset\). Then the overapproximation says the following: The effect \(\vect{y}_i-\vect{x}_i\) of the pair \((\vect{x}_i, \vect{y}_i)\) has to be the effect of a Parikh vector \(\vect{w}_i\) fulfilling $\text{EK}_i$. Usually this is called \(\Z\)-VASS-reachability. With periods, we require that \(\vect{y}_i-\vect{x}_i-\Effect(\vect{w}_i)\) is the effect of some periods. I.e.\ the \emph{effect} which we did not manage to produce using normal \(\Z\)-reachability may be compensated by using periods arbitrarily often. The system does not distinguish between which linear relation the necessary periods would be coming from. 

\para{Perfectness:}
We write $\sol(\CharSys)$ for the set of solutions of $\CharSys$. A variable is called \emph{unbounded}
if there are solutions assigning arbitrary large values to this variable.

\begin{definition}[\textbf{Perfectness}] An m-eVASS \((\VAS, \qin, \qfin)\) satisfying \COne-\CTwo
\ is \emph{perfect} if:\label{DefinitionPerfectness}%

\begin{enumerate}
\item[\WeakPOne] In \(\CharSys\), every bridge period variable is unbounded.
\item[\WeakPThree] If a source variable \(x_{i,j}\) or target variable \(y_{i,j}\) is bounded, then all solutions of \(\CharSys\) agree on the value of this variable.
\item[\WeakPTwo] For every non-trivial \(\VAS_i\) and counter \(j\) s.t.~the source variable \(x_{i,j}\) and target variable \(y_{i,j}\) are bounded, \(\VAS_i\) has a cycle with non-zero effect on \(j\).
\item[\POne] In \(\CharSys\), every SCC period variable and every edge variable is unbounded.
\item[\PTwo] \emph{Every} configuration of every \(\VAS_i\) can be forwards- and backwards-covered, i.e.\ for every \(i\) and every configuration \(p(\vect{x})\) of \(\VAS_i\), we have 
\[\Rel(\VAS_i,\qini, p) \cap (\pi_{\vect x_i}(\sol(\CharSys))) \times (\vect{x}+\N^{n_i})) \neq \emptyset,\]\[\Rel(\VAS_i, p, \qfini) \cap ((\vect{x}+\N^{n_i}) \times \pi_{\vect y_i}(\sol(\CharSys))) \neq \emptyset.\]
\end{enumerate}
\end{definition}

\para{Recap}: We explained the algorithm structure, defined the characteristic system and perfectness. Next we define the rank, followed by deciding perfectness and the decomposition.

\begin{definition}[\textbf{Rank}] The \(\rank(\VAS) \in \N^{d+1}\) of an m-eVASS is similar to \cite{LerouxS19} for VASS. Recall that the dimension of an m-eVASS is defined using the vector space spanned by cycle effects. For every \(1\leq i\leq r\), if \(\VAS_i\) is a non-trivial SCC, we define \(\vect{r}_i \in \N^{d+1}\) as \(\vect{r}_i[\dim(\VAS_i)]=|Q_i|\) and \(\vect{r}_i[j]=0\) otherwise. For trivial SCCs we have \(\vect{r}_i=0^{d+1}\). Finally, \(\rank(\VAS)=\sum_{i=1}^r \vect{r}_i\) is the sum of the ranks of all SCCs.

We order ranks using the \emph{lexicographic ordering} on \(\N^{d+1}\).
\end{definition}

The intuition on the rank is as follows: If we replace a ``high-dimensional'' SCC by an arbitrary number of ``lower-dimensional'' SCCs, then the rank decreases. For technical reasons, SCCs are weighted with the number of states.

\para{Deciding Perfectness and Decomposition}: We explain how to decide perfectness (Def. \ref{DefinitionPerfectness}) and decompose otherwise.

Following Leroux \cite{LerouxS19}, we split the properties of perfectness  into two categories: \emph{cleaning properties}  \COne-\CTwo, \WeakPAll, and \emph{actual properties} \POne \ and \PTwo. The difference is that the decomposition for a cleaning property \emph{does not decrease the rank}. In exchange, these properties essentially do not add new states or edges to SCCs (except replacing an edge by a number of parallel edges),
and hence by repetitively doing the corresponding decomposition steps 
we arrive at an m-eVASS fulfilling \emph{all of} \COne-\CTwo, \WeakPAll.
The termination argument does not refer thus to rank. 
%

This is in contrast to \POne \ and \PTwo: When a transition is bounded (may only be used finitely often) and hence \POne \ does not hold, we will remove this transition from SCCs. Removing transitions might cause certain configurations to no longer be coverable in the SCC, and property \PTwo \ is lost. Conversely if a counter is bounded in \PTwo \ and we store it in the control state, then the characteristic system now has new variables for the new edges, etc. and some of these might be bounded.

With this new knowledge, consider again Algorithm \ref{AlgorithmMainStructure}. We can now describe what Decompose(\(\vectSet{X}\)) does: It first guarantees properties \COne-\CTwo, \WeakPAll \ simultaneously since they are cleaning properties, and afterwards applies one actual decomposition for property \POne \ or \PTwo. This way, even though \COne-\CTwo, \WeakPAll \ did not decrease the rank, the whole procedure Decompose did. This leads to the following two lemmas:

\begin{lemma}[Cleaning] \label{LemmaCleaning}
Let \((\VAS, \qin, \qfin)\) be an m-eVASS. Then one can compute in \(\mathfrak{F}_{\alpha+2}\) a finite set of m-eVASS \((\VAS_1, q_{\text{in}}^{1}, q_{\text{fin}}^{1}), \dots, (\VAS_k, q_{\text{in}}^{k}, q_{\text{fin}}^{k})\) s.t. \(\rank(\VAS_j) \leq \rank(\VAS)\), \(\Rel(\VAS, \qin, \qfin)=\bigcup_{j=1}^k \Rel(\VAS_j, \qinj, \qfinj)\), and every \((\VAS_j, \qinj, \qfinj)\) fulfills properties \COne-\CTwo, \WeakPAll.
\end{lemma}

\begin{lemma}[Decomposition] \label{LemmaActualDecomposition}
Let \((\VAS, \qin, \qfin)\) be an m-eVASS fulfilling \COne-\CTwo, \WeakPAll \ which is not perfect. Then one can in time \(\mathfrak{F}_{\alpha+d+1}(\size(\VAS))\) compute a finite set of m-eVASS \((\VAS_1, q_{\text{in,1}}, q_{\text{fin,1}}), \dots, (\VAS_k, q_{\text{in,k}}, q_{\text{fin,k}})\) s.t. \(\rank(\VAS_j) < \rank(\VAS)\) and \(\Rel(\VAS, \qin, \qfin)=\bigcup_{j=1}^k \Rel(\VAS_j, \qinj, \qfinj)\).
\end{lemma}


\begin{proof}[\textbf{Proof of Lemma \ref{LemmaCleaning}}] 
The proof proceeds by achieving  \COne-\CTwo \ 
(as explained above), \WeakPAll \ respectively in sequence.

\para{\WeakPOne:} In exponential time we can check boundedness of the variables in \(\CharSys\).

Let \(e_i\) be a bridge edge where \WeakPOne \ does not hold. Write \(\vectSet{L}(e_i)=\vect{b}(e_i)+\N(\vectSet{F}(e_i))\), and let \(\vectSet{F}'(e_i) \subsetneq \vectSet{F}(e_i)\) be the set of unbounded periods. Let \(\vectSet{B}_i\) be the set of possible values of the bounded periods. Essentially, we will replace \(\vectSet{L}(e_i)\) by 
\[\vectSet{L}'(e_i):=(\vect{b}(e_i)+\vectSet{B}_i \cdot (\vectSet{F}(e_i) \setminus \vectSet{F}'(e_i)))+\N(\vectSet{F}'(e_i)),\] where \(\cdot\) denotes matrix multiplication. Matrix multiplication is a succinct way to express that the (bounded) periods \(\vect{p} \in \vectSet{F}(e_i) \setminus \vectSet{F}'(e_i)\) should be used as often as prescribed by \(\vectSet{B}_i\). There is however a difficulty: If \(\vectSet{L}'(e_i) \cap \vectSet{L}(e_i)\) is degenerate, then \((\vectSet{R}(e_i)\cap \vectSet{L}'(e_i)) \HybridizationRelation \vectSet{L}'(e_i)\) might not hold. In this case we 
take $\vectSet R(e_i) := \vectSet R(e_i) \cap \vectSet L'(e_i)$ and recompute $\vectSet L(e_i)$, as in
\CTwo. In this case  \(\CharSys\) changes to accommodate the newly introduced overapproximation, and again some periods might be bounded, and we have to repeat the decomposition for \WeakPOne. Observe however the very important fact that the degenerate intersection guarantees \(\dim(\vectSet{L}'(e_i))<\dim(\vectSet{L}(e_i))\), which ensures that after at most linearly many repetitions, \WeakPOne \ holds. 

\begin{example}
Let \(\vectSet{R}(e_i)=\{(x,y) \mid y \leq x^2\} \HybridizationRelation \N^2=\N(\{(1,0), (0,1)\})\) as in the right of Figure \ref{FigureIntuitionSemilinearityAlgorithm}. Assume that \((1,0)\) is bounded, with its only value being \(1\). Then we replace \(\N^2\) by \(\vectSet{L}'(e_i)=(1,0)+\N(\{(0,1)\})\). However, clearly \(\vectSet{R}(e_i) \cap \vectSet{L}'(e_i)\) is finite, in particular we do not have \((\vectSet{R}(e_i) \cap \vectSet{L}'(e_i)) \HybridizationRelation \vectSet{L}'(e_i)\). We have to re-approximate \(\vectSet{R}(e_i) \cap \vectSet{L}'(e_i)\), replacing it by the finitely many points. In the worst case, these points might again be bounded in the global ILP, and we might have to repeat the process. However, the dimension decreases, and the recursion therefore terminates.
\end{example}
\vspace{-2mm}

\para{\WeakPThree:} Assume that some variable \(y_{i,j}\) (\(x_{i,j}\) is symmetric) is bounded and has the set of possible values \(F\). For every \(f\in F\), we create from $\VAS$ a new m-eVASS \(\VAS_f\) 
by adding just after $\VAS_i$ the following sequence of four bridge edges: 
\[\text{Sub}_{j,f}  \ ; \ \text{Del}_j \ ; \  \text{RevDel}_j \ ; \  \text{Add}_{j,f},\] 
where
$\text{Sub}_{j,f}$
subtracts \(f\) from counter \(j\), $\text{Del}_j$ deletes counter \(j\) (which now correctly has value \(0\)), 
$\text{RevDel}_j$ readds counter \(j\), and $\text{Add}_{j,f}$ sets counter \(j\) from its current value \(0\) to \(f\).
This essentially tests  $y_{i,j}=f$.
As a result $\CharSys$ changes, and we may need to repeat the decomposition for \WeakPOne.

\para{\WeakPTwo:} To check whether there is a cycle with non-zero effect on counter \(j\), first we check whether some period of an edge \(e\) changes \(j\). If yes, then clearly such a cycle exists. Otherwise, for the exponentially many possible supports \(S \subseteq E_i\) a cycle could have, and \(\sim \in \{>,<\}\),  we write down an ILP as follows: 
\[\Big(\bigwedge_{e \in S} \vect{w}(e)\geq 1\Big) \wedge \text{HEK}_i(\vect{w})\wedge \Delta(\vect{w})(j) \sim 0.\]
If one of these ILPs has a solution, then a required cycle exists.
Otherwise, the value of counter $j$ can be uniquely determined from the state of the automaton. 
Suppose \WeakPThree \ holds.
In particular there is a unique exit value \(out\) at \(\qfini\), since the entry value \(in\) at \(\qini\) is bounded by assumption,
and hence unique by \WeakPThree. 
We replace \(\VAS_i\) by \(\VAS_i'\),
preceded and succeded by the following bridge edges
\[\text{Sub}_{j,in} \ ; \  \text{Del}_j \ ; \  \VAS_i' \ ; \ \text{RevDel}_j \ ; \ \text{Add}_{j,out}\] 
that essentially test  $x_{i,j}=in$, $y_{i,j}=out$.
Here, \(\VAS_i'\) is obtained from \(\VAS_i\) by deleting counter \(j\) and \emph{reapproximating} edges using \CTwo, to accommodate the fact that fixing the value of \(j\) might have modified relations $\vectSet R(e)$.
Again, $\CharSys$ changes and we may need to repeat the decomposition for \WeakPOne, \WeakPThree.

\para{Termination:}
The claimed 
complexity bound \(\mathfrak{F}_{\alpha+2}\) is shown as follows.
We invoke the decompositions iteratively in the above specified order, 
namely invoke \WeakPThree \ only if \WeakPOne \ already holds,
and \WeakPTwo \ only if both \WeakPOne \ and \WeakPThree \ hold.
\WeakPOne \ terminates in \(\mathfrak{F}_{\alpha+1}\), as argued above.
Invocations of \WeakPThree \ decomposition eventually terminate, as it reduces the number of source/target variables of non-trivial SCCs which do not have a unique solution in $\CharSys$, while
all the decompositions only introduce new trivial SCCs, or split an existing SCC into smaller ones
(in \COne-\CTwo, which can clearly only happen linearly many times), or introduce bridge edges that satisfy \WeakPOne.
Likewise, invocations of \WeakPTwo \ decomposition also terminate, as it deletes one counter from a non-trivial SCC.
The number of iterations is thus elementary, and 
since \COne-\CTwo \ are in \(\mathfrak{F}_{\alpha+1}\), the whole
\WeakPAll \ decomposition terminates in \(\mathfrak{F}_{\alpha+2}\). 
\end{proof}


\begin{proof}[\textbf{Proof of Lemma \ref{LemmaActualDecomposition} (sketch)}] We have to explain how to decide properties \POne \ and \PTwo \ of Definition \ref{DefinitionPerfectness}, and how to decompose if they do not hold. We assume \COne-\CTwo, \WeakPAll.

\para{\POne:}
In exponential time we can check boundedness of the variables in \(\CharSys\).
It remains to decompose. 

Let \((\VAS_i, \qini, \qfini)\) be the SCC where some edge variable or period variable is bounded. If \(\#(e)\) is bounded by some \(n\), 
then we store in the control state how often $e$ was used already.
This amounts to creating \(n+1\) copies of the SCC without $e$, and redirecting \(e\) to point to the next copy.
If a period \(\vect{p}\) of some linear relation \(\vectSet{L}(e)\) is bounded, then we store in the control state how often we used this period already, up to the bound, and remove $\vect p$ from periods of $e$. For all \(i \leq j \leq B\) where \(B\) is the bound, we have an edge from the \(i\)-th copy to the \(j\)-th which is $e$ restricted 
to use the period exactly \(j-i\) many times. What makes this decomposition step tricky is that \emph{every} bounded variable of \(\VAS_i\) has to be removed \emph{at once}, otherwise the rank does not decrease.

\para{\PTwo:} 
Since the second part of \PTwo \ corresponds to the first one if we turn around all edges, we w.l.o.g. only consider the first part. It can be reformulated as follows:

\begin{restatable}{lemma}{LemmaRestatableEquivalentPumpingCondition} \label{LemmaEquivalentPropertySix}
Let \(I \subseteq \{1,\dots, n_i\}\) be the set of counters whose source variable \(x_{i,j}\) is bounded. Let \(\pi \colon \N^{n_i} \to \N^{I}\) be the projection removing counters outside \(I\). Consider the m-eVASS \(\pi(\VAS_i)\) obtained from \(\VAS_i\) by deleting counters outside \(I\). Let \(\vect{b}'\) be the unique (since \WeakPThree \ holds) vector of values which solutions of \(\CharSys\) assign to these bounded \(x_{i,j}\).

The first part of \PTwo \ holds \(\iff\) \((\vect{b}', \vect{b}'+\mathbf{up}) \in \Rel(\pi(\VAS_i), \qini, \qini)\) for some vector \(\mathbf{up} \in \N_{\geq 1}^{I}\).
\end{restatable}

\begin{proof}
By standard techniques for VASS.
\end{proof}

To decide the property in Lemma \ref{LemmaEquivalentPropertySix} we use the well-known \emph{backwards coverability algorithm}. It works as follows: We maintain finite sets \(\vectSet{B}^q \subseteq \N^I\), one per
state $q\in Q_i$, s.t. we know for all configurations \(\vect{x} \in \vectSet{B}+\N^I\), there exists \(\vect{y} \geq \vect{b}+1^I\) with \((\vect{x}, \vect{y}) \in \Rel(\pi(\VAS_i), q, \qini)\). Initially we can choose \(\vectSet{B}^{\qini}:=\{\vect{b}+1^I\}\), since we can choose \(\vect{x}=\vect{y} \geq \vect{b}+1^I,\) and trivially \((\vect{x}, \vect{x}) \in \Rel(\pi(\VAS_i), \qini, \qini)\). 
For $q\neq \qini$ we put initially \(\vectSet{B}^{q}:=\emptyset\).

Then in a loop we enlarge the sets \(\vectSet{B}^q\) by ``applying transitions backwards'': If we find \(\vect{x} \not \in \vectSet{B}^p+\N^I\) s.t. \(\vect{x} \to_e \vectSet{B}^q+\N^{I}\), for $e=(p,q)\in E_i$, then we can set \(\vectSet{B}^p := \vectSet{B}^p \cup \{\vect{x}\}\). In order to apply a transition \(e\) backwards, as in \CTwo \ compute \emph{basic} asymptotic overapproximations \(\vectSet{L}_1, \dots, \vectSet{L}_k\) of \((\N^{I} \times (\vectSet{B}_i+\N^{I})) \cap \vectSet{R}(e)\) and use their bases.

The backwards coverability algorithm has an easy termination guarantee by Proposition \ref{PropositionFastGrowingComplexity}: Since we only add vectors which are not larger than a vector we already had, we produce a sequence without an increasing pair. Applying a transition backwards takes \(\mathfrak{F}_{\alpha+1}\) time similar to \CTwo. Strictly speaking the vectors are in \(\N^I\), i.e.\ have potentially up to \(n_i\) entries, leading to \(\mathfrak{F}_{\alpha+n_i+1}\) by Proposition \ref{PropositionFastGrowingComplexity}, but using a slightly cleverer analysis we can bound the complexity by \(\mathfrak{F}_{\alpha+d+1}\), i.e.\ we can show that the dimension of \(\VAS_i\) suffices.

This procedure has the additional advantage that if \PTwo \ does not hold, then we obtain a bound \(B \in \N\) on some counter \(j \in I\) in \(\pi(\VAS_i)\). The bound \(B\) continues to hold if we readd the projected counters, i.e.\ also in \(\VAS_i\) counter \(j\) is bounded by \(B\), and we decompose by deleting counter \(j\) and instead storing it in the control state.
\end{proof}

\para{Recap:} We finished the proof of Lemma \ref{LemmaActualDecomposition}, which is the last decomposition step. Hence perfectness is now decidable, and Algorithm \ref{AlgorithmMainStructure} can be implemented. It remains to give a bound on the running time and explain correctness.

\para{Complexity:} By Lemma \ref{LemmaCleaning} and Lemma \ref{LemmaActualDecomposition}, a single loop iteration is doable in \(\mathfrak{F}_{\alpha+d+1}\). Furthermore, the rank \(\in \N^{d+1}\) decreases lexicographically. Hence we are in the setting of Proposition \ref{PropositionFastGrowingComplexity}: The sequence of ranks is a \((f,m)\)-controlled sequence  in \(\N^{d+1}\) without an increasing pair for some \(f \in \mathfrak{F}_{\alpha+d+1}\), hence it has length at most \(\mathfrak{F}_{\alpha+2d+2}\).

\para{Correctness:} We have to show that \(\Rel(\VAS, \qin, \qfin)\) has \(\pi_{\vect{x}_1, \vect{y}_r}(\sol(\CharSys))\) as an  overapproximation, namely
%
$\Rel(\VAS, \qin, \qfin) \HybridizationRelation_g \pi_{\vect{x}_1, \vect{y}_r}(\sol(\CharSys))$
for some \(g \in \mathfrak{F}_{\alpha+2d+2}\),
which follows by Theorem \ref{TheoremCorrectness} stated below, combined with 
Lemma \ref{LemmaNiceOverapproximationProjection}.


Let \(\rho\) be a run of \(\VAS\) from \(\qin\) to \(\qfin\). It induces a solution \(\sol(\rho)\) of \(\CharSys\) by letting \(\sol(\rho)[\vect{x}_i]\) equal the configuration at which \(\rho\) enters the SCC \(\VAS_i\), accordingly for \(\sol(\rho)[\vect{y}_i]\). For every edge counter variable \(\#(e)\) we set \(\sol(\rho)[\#(e)]=\) number of times \(e\) is used, accordingly \(\sol(\rho)[\#(\vect{p})]\) is the \emph{total} number of times period \(\vect{p}\) is used, across all uses of the edge \(e\) which \(\vect{p}\) belongs to.

We call a solution \(\vect{s}\in \sol(\CharSys)\) \emph{concretizable} if there is a run \(\rho\) of \(\VAS\) from \(\qin\) to \(\qfin\) whose induced solution \(\sol(\rho)\) of \(\CharSys\) equals \(\vect{s}\). We let \(\Concretizable(\VAS, \qin, \qfin)\) be the set of concretizable solutions. Correctness states the following.

\begin{theorem} 
If \((\VAS, \qin, \qfin)\) is perfect, then \(\Concretizable(\VAS, \qin, \qfin) \HybridizationRelation_g \sol(\CharSys)\) for some \(g \in \mathfrak{F}_{\alpha+2d+2}\). \label{TheoremCorrectness}
\end{theorem}

We reprove a property of VASS: An \(n\)-fold repetition of a cycle is enabled iff the first and last repetition of the cycle are.


\begin{lemma} \label{LemmaRepeatedCycleExecutable}
Let \((\vect{x}, \vect{y}) \in \Rel(\VAS_i, p,p)\) be some cycle, and \(n \in \N\). Let \((\vect{x}', \vect{y}')\) be s.t. \(\vect{y}'-\vect{x}'=n(\vect{y}-\vect{x})\) and \(\vect{x}', \vect{y}' \geq \vect{x}\). Then \((\vect{x}', \vect{y}') \in \Rel(\VAS_i, p,p)\).
\end{lemma}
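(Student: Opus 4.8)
The claim is the classical "pumping" fact for VASS, now in the setting of monotone $\RelationClass$-eVASS: if a cycle $(\vect{x},\vect{y}) \in \Rel(\VAS_i,p,p)$ is executable and $(\vect{x}',\vect{y}')$ has the right total effect (a multiple $n$ of the original) and lies pointwise above $\vect{x}$, then it can be realized as an $n$-fold repetition. I would prove it by induction on $n$. The base cases $n=0$ (take $\vect{x}'=\vect{y}'$, so $(\vect{x}',\vect{x}') \in \Rel(\VAS_i,p,p)$ by reflexivity of the reachability relation since $p$ reaches $p$ by the empty run) and $n=1$ (immediate from monotonicity applied to $(\vect{x},\vect{y})$, since $\vect{x}' \geq \vect{x}$ forces $\vect{x}'-\vect{x} \in \N^{d_i}$, and then $(\vect{x}',\vect{x}'+(\vect{y}-\vect{x})) = (\vect{x}',\vect{y}') \in \Rel(\VAS_i,p,p)$) are easy.

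For the inductive step from $n$ to $n+1$: I am given $(\vect{x}',\vect{y}')$ with $\vect{y}'-\vect{x}' = (n+1)(\vect{y}-\vect{x})$ and $\vect{x}',\vect{y}' \geq \vect{x}$. The idea is to split this into one execution of the cycle followed by $n$ executions. Set $\vect{z} := \vect{x}' + (\vect{y}-\vect{x})$. I need $\vect{z} \geq \vect{0}$, i.e.\ that the intermediate configuration is a genuine configuration, which could fail if $\vect{y}-\vect{x}$ has negative coordinates and $\vect{x}'$ is small there; but here is where $\vect{x}' \geq \vect{x}$ enters: by monotonicity applied to the original cycle $(\vect{x},\vect{y})$ with shift $\vect{m} := \vect{x}'-\vect{x} \in \N^{d_i}$, we get $(\vect{x}',\vect{z}) = (\vect{x}+\vect{m},\vect{y}+\vect{m}) \in \Rel(\VAS_i,p,p)$, and in particular $\vect{z} \in \N^{d_i}$. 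Now I want to apply the induction hypothesis to $(\vect{z},\vect{y}')$: its effect is $\vect{y}'-\vect{z} = (n+1)(\vect{y}-\vect{x}) - (\vect{y}-\vect{x}) = n(\vect{y}-\vect{x})$, as required; and I need $\vect{z},\vect{y}' \geq \vect{x}$. For $\vect{y}'$ this is given. For $\vect{z}$: on coordinates where $\vect{y}-\vect{x} \geq 0$ we have $\vect{z} = \vect{x}' + (\vect{y}-\vect{x}) \geq \vect{x}' \geq \vect{x}$; on coordinates where $\vect{y}-\vect{x} < 0$, the equation $\vect{y}' - \vect{x}' = (n+1)(\vect{y}-\vect{x})$ together with $\vect{y}' \geq \vect{x}$... hmm, this needs a small argument. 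Actually on a negative coordinate $j$: $\vect{z}[j] = \vect{x}'[j] + (\vect{y}[j]-\vect{x}[j])$ and $\vect{x}'[j] = \vect{y}'[j] - (n+1)(\vect{y}[j]-\vect{x}[j]) \geq \vect{x}[j] + (n+1)(\vect{x}[j]-\vect{y}[j])$, so $\vect{z}[j] \geq \vect{x}[j] + (n+1)(\vect{x}[j]-\vect{y}[j]) - (\vect{x}[j]-\vect{y}[j]) = \vect{x}[j] + n(\vect{x}[j]-\vect{y}[j]) \geq \vect{x}[j]$ since $n \geq 0$ and $\vect{x}[j]-\vect{y}[j] > 0$ on that coordinate. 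So $\vect{z} \geq \vect{x}$ in all cases. By the induction hypothesis $(\vect{z},\vect{y}') \in \Rel(\VAS_i,p,p)$, and composing with $(\vect{x}',\vect{z}) \in \Rel(\VAS_i,p,p)$ via transitivity of $\Rel(\VAS_i,p,p)$ gives $(\vect{x}',\vect{y}') \in \Rel(\VAS_i,p,p)$, completing the induction.

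The two ingredients used are exactly the ones available: monotonicity of the semantics (which holds on every edge inside the SCC $\VAS_i$ by Definition~\ref{DefinitionMonotoneEVASS}, hence of $\Rel(\VAS_i,p,p)$ as a composition of monotone relations — noting that $\Rel(\VAS_i,p,p)$ is monotone and transitively closed) and transitivity of $\Rel(\VAS_i,p,p)$ (it is a set of cycles at the same state $p$, so closed under $\circ$). The only mildly delicate point, and the one I expect to be the main obstacle to write cleanly, is the coordinatewise bookkeeping showing $\vect{z} = \vect{x}'+(\vect{y}-\vect{x}) \geq \vect{x}$ (so that the induction hypothesis applies on the tail) — in particular handling coordinates where the cycle has negative effect, where one must combine the hypothesis $\vect{y}' \geq \vect{x}$ with the effect equation rather than use $\vect{x}' \geq \vect{x}$ directly. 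Everything else is a routine application of monotonicity and transitivity.
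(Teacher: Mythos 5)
Your proof is correct and is essentially the paper's argument: both repeat the cycle $n$ times via monotonicity and verify that every intermediate configuration stays $\geq \vect{x}$ so the next repetition is enabled. The only difference is packaging — the paper observes that the intermediate after $k$ repetitions is the convex combination $\frac{n-k}{n}\vect{x}'+\frac{k}{n}\vect{y}'$ of two vectors $\geq \vect{x}$ (hence itself $\geq \vect{x}$), which replaces your coordinatewise case analysis on the sign of $\vect{y}[j]-\vect{x}[j]$.
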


\begin{proof}
Since \(\vect{x}' \geq \vect{x}\), by monotonicity we can use the cycle and hence reach \(\vect{x}'+(\vect{y}-\vect{x})=\vect{x}'+\frac{1}{n}(\vect{y}'-\vect{x}')=\frac{n-1}{n}\vect{x}'+\frac{1}{n}\vect{y}'\). This is a convex combination of vectors \(\geq \vect{x}\), and hence still \(\geq \vect{x}\). We can therefore repeat the cycle reaching \(\frac{n-2}{n}\vect{x}'+\frac{2}{n}\vect{y}'\), another such convex combination. We repeat this \(n\) times.
\end{proof}

\begin{proof}[Proof of Theorem \ref{TheoremCorrectness}]
Clearly \(\sol(\CharSys)\) is hybridlinear since it is the solution set of an ILP. Specifically, we have \(\sol(\CharSys)=\vectSet{B}+\N(\vectSet{F})\), where \(\vectSet{B}\) is the set of minimal solutions and \(\vectSet{F}\) is the set of minimal non-zero solutions of \(\HomCharSys\), the homogeneous version of $\CharSys$~\eqref{eq:CharSys}. Hence the main part is to show that for any \(\vect{s} \in \vectSet{B}+\N(\vectSet{F})\) and every \(\vect{h} \in \N_{\geq 1}(\vectSet{F})\) there exists a threshold \(N \in \N\) bounded by \(\mathfrak{F}_{\alpha+2d+2}\) such that \(\vect{s}+\N_{\geq N} \vect{h} \subseteq \Concretizable(\VAS, \qin, \qfin)\). Here \(\vect{h} \in \N_{\geq 1}(\vectSet{F})\) is important: This means that \(\vect{h}\) assigns a non-zero value to every variable for which there is a non-zero solution of \(\HomCharSys\), i.e.\ \(\vect{h}\) is a so-called \emph{full support} homogeneous solution. By adding \(\vect{h}\) often enough, we have to create a run for all large enough \(m > N\). 

We first observe that if a solution \(\vect{s}\) is concretizable when restricted to every single SCC \(\VAS_i\), 
and is also concretizable when restricted to every single bridge,
then it is concretizable. 
Hence in the sequel we only consider w.l.o.g. a single SCC \((\VAS_i, \qini, \qfini)\),
obtaining for each of them separately a threshold $N_i$,
and later consider a single bridge edge $e$, obtaining for each of them a threshold $N(e)$.
As the global threshold $N$ we take then the maximum of all of thresholds
$N_i$, $N(e)$.

\para{I. SCCs:}
We proceed in three steps: First we show that (any) solution gives rise to a so-called \(\Z\)-run \(\rho_{\Z}\). 
As step 2 we use the pumping sequences of \PTwo \ to transform the \(\Z\)-run into an \emph{almost-run} \(\rho_{\N}^{(m)}\) with \(\sol(\rho_{\N}^{(m)})=\vect{s}+m \vect{h}\), for sufficiently large $m$,
where an almost-run \(\rho_{\N}\) is a sequence of transitions which would be a run if \(\vectSet{R}(e)=\vectSet{L}(e) = \vectSet b(e){\uparrow}\) were to hold for every edge \(e\). 
This step will essentially reuse the techniques used for VASS.
At last, we refine step 2 to transform the \(\Z\)-run into an actual run. 

In the following we have to access the value a solution assigns to specific variables of \(\CharSys\). We write \(\vect{s}[v]\) for the value solution \(\vect{s}\) assigns to variable \(v\), often using vectors instead of single variables. In particular, we write \(\vect{s}[\vect{x}_i]\) and \(\vect{s}[\vect{y}_i]\) for the vectors the solution \(\vect{s}\) assigns to the source/target variables \(\vect{x}_i, \vect{y}_i\). 

\para{Step 1 (\(\Z\)-\textbf{runs}):} 
By adding \(\vect{h}\) at least once we can assume that \(\vect{s}\) is a full support solution. 
We define \(\Z\)-semantics of edges \(e \in E_i\):
\[\Z\text{-sem}(\vectSet{L}(e)):=\{(\vect{x}, \vect{y}) \in \Z^{d_i} \times \Z^{d_i} \mid \vect{y}-\vect{x}\in \Delta(\vectSet{L}(e))\},\]
and use it to define a \(\Z\)-run as a sequence of pairs \((q_j(\vect{z}_j))_{i=1\ldots r}\) with \(q_j \in Q_i, \vect{z}_j \in \Z^d\) s.t. 
$q_1 = \qini$, $q_r = \qfini$, \((q_j, q_{j+1}) = e_j \in E_i\), and \((\vect{z}_j, \vect{z}_{j+1}) \in \Z\text{-sem}(\vectSet{L}(e_j))\). By Euler's Lemma \ref{LemmaBasicEulerKirchhoff}, full support solutions of \(\LocalCharSys_i\) are essentially sequences of steps in the \(\Z\)-semantics, where we forgot about the step at which we used the periods, we only remember the number of periods. By simply using all periods $\vect p(e)$ on the first instance of an edge \(e\), any solution of \(\LocalCharSys_i\) gives rise to a \(\Z\)-run 
$\rho_\Z$ from $\vect z_1 = s[\vect x_i]$ to $\vect z_r = s[\vect y_i]$. 

\para{Step 2 (\(\Z\)-\textbf{runs} \(\to\) \textbf{almost-runs}):} 
By an \emph{almost-run} we mean a $\Z$-run whose every step satisfies
\((\vect{z}_j, \vect{z}_{j+1}) \in \N\text{-sem}(\vectSet{L}(e_j))\), form the more restrictive
\emph{$\N$-semantics} that requires $\vect z_j, \vect z_{j+1}\in\N^{d_i}$ but not necessarily 
$(\vect z_j, \vect z_{j+1}) \in \vectSet L(e_j)$:
\[\N\text{-sem}(\vectSet{L}(e)):=\{(\vect{x}, \vect{y}) \in \N^{d_i} \times \N^{d_i} \mid \vect{y}-\vect{x}\in \Delta(\vectSet{L}(e))\}.\]

We write \(\vect{p}(e)\) for the vector of all period variables of an edge \(e\). If an edge \(e\) has periods \(\vect{p}_1, \dots, \vect{p}_\ell\), then we will succinctly write/abuse notation \(\sum \vect{s}[\vect{p}(e)]:=\sum_{j=1}^\ell \vect{s}[\#(\vect{p}_j)] \cdot \vect{p}_j\), i.e.\ we sum the periods of \(\vectSet{L}(e)\) as often as prescribed by the vector \(\vect{p}(e)\).

By Lemma \ref{LemmaEquivalentPropertySix}, since \PTwo \ holds, we get runs \(\mathbf{up}\) and \(\mathbf{dwn}\) which are enabled at \(\vect{s}[\vect{x}_i]+m \vect{h}[\vect{x}_i]\) and reverse enabled at \(\vect{s}[\vect{y}_i]+m \vect{h}[\vect{y}_i]\) for all large enough \(m\), respectively, and increase the counters from \(\Iin\) and \(\Iout\). Here \(\Iin\) is the set of counters which are 0 in \(\vect{h}[\vect{x}_i]\), and likewise for \(\Iout\) and \(\vect{h}[\vect{y}_i]\). We remark that \(\mathbf{up}\) and \(\mathbf{dwn}\) might however have negative effects on \(\{1,\dots, n_i\} \setminus \Iin\) and \(\{1,\dots,n_i\} \setminus \Iout\) respectively. 
We will augment them with a third \(\Z\)-run \(\mathbf{diff}\) whose only requirement is that we want to have 
\[sol(\mathbf{up})+\sol(\mathbf{dwn})+\sol(\mathbf{diff})=c \vect{h}\] for some constant \(c\in \N\). We proceed to construct \(\mathbf{diff}\).

By \POne, \(\vect{h}\) assigns a positive value to all edge and period variables \(\#(e)\) and \(\#(\vect{p})\). Hence there exists a large enough constant \(c \in \N\) s.t. 
the run $\mathbf{up}$ is enabled in $\vect s[\vect x_i] +c \vect h[\vect x_i]$ and ends in a vector positive on
all counters outside $\Iin$, 
the run $\mathbf{dwn}$ starts in
a vector positive on all counters outside $\Iout$ and is reverse enabled in $\vect s[\vect y_i] +c \vect h[\vect y_i]$, 
and for every $e \in E_i$ we have
\begin{align} \label{eq:c}
\begin{aligned}
c \cdot \vect{h}[\# e]&-\sol(\mathbf{up})[\# e]-\sol(\mathbf{dwn})[\# e] &\geq 1, \\
c \cdot \vect{h}[\vect{p}(e)]&-\sol(\mathbf{up})[\vect{p}(e)]-\sol(\mathbf{dwn})[\vect{p}(e)] &\geq 0. 
\end{aligned}
\end{align}

By Euler's Lemma \ref{LemmaBasicEulerKirchhoff} there exists a cycle
\(\Z\)-run \(\mathbf{diff}\) with \(\sol(\mathbf{diff})[\#  e]=c \cdot \vect{h}[\# e]-\sol(\mathbf{up})[\# e]-\sol(\mathbf{dwn})[\# e] \geq 1\) for every $e\in E_i$. Since the cycle \(\mathbf{diff}\) uses every edge \(e\), for every edge \(e\) we add \(c \cdot \vect{h}[\vect{p}(e)]-\sol(\mathbf{up})[\vect{p}(e)]-\sol(\mathbf{dwn})[\vect{p}(e)]\) many uses of every period of \(\vectSet{L}(e)\) into the first occurrence of \(e\). This way we achieved \(\sol(\mathbf{up})+\sol(\mathbf{dwn})+\sol(\mathbf{diff})=c \vect{h}\) as required.
Note that transferring periods between different occurrences of the same edge $e$ preserves 
the effect of $\mathbf{diff}$, but may change the configuration $\mathbf{diff}$ is enabled at.

Consider, for every large enough \(m=\ell\cdot c \geq N_i\) multiple of \(c\), the \(\Z\)-run \(\mathbf{up}^{\ell} \mathbf{diff}^{\ell} \rho_{\Z}\, \mathbf{dwn}^{\ell}\). It is easy to check that this is an almost-run: Since \(\mathbf{up}\) and respectively \(\mathbf{down}\) have only positive (resp. only negative) effects on counters not increased already by \(\vect{h}\), we can by monotonicity automatically apply them an arbitrary number \(\ell\in \N\) of times. So \(\mathbf{up}^{\ell}\) is certainly enabled at $\vect s[\vect x_i] + m \vect h[\vect x_i]$, and likewise 
\(\mathbf{dwn}^{\ell}\) is reverse enabled at $\vect s[\vect y_i] + m \vect h[\vect y_i]$. Since both source and target of \(\mathbf{diff}^m\) become arbitrarily large, by Lemma \ref{LemmaRepeatedCycleExecutable} \(\mathbf{diff}^m\) is enabled for all large enough \(m\). Since source/target are now large enough, also \(\rho_{\Z}\) is an almost-run.

We have constructed a run for every large enough $m\geq N_1$ multilple of $c$, while
Theorem \ref{TheoremCorrectness} claims such a run for every large enough \(m \in \N\), not only multiples of \(c\). We achieve this by a simple trick: We repeat the above construction of \(\mathbf{diff}\) to obtain \(\mathbf{diff}_1\) with 
\[\sol(\mathbf{up})+\sol(\mathbf{dwn})+\sol(\mathbf{diff}_1)=(c+1) \vect{h}.\]
We restrict to \(N_i \geq c^2\), and observe that any \(m \geq c^2\) can be written as \(m=j_1 c + j_2 (c+1)\) with \(j_1, j_2 \in \N\). This allows us to define for every \(m \geq N_i\) the following almost-run:
 \[\rho_{\N}^{(m)}:=\mathbf{up}^{j_1+j_2} \mathbf{diff}^{j_1} \mathbf{diff}_1^{j_2} \rho_\Z\,  \mathbf{dwn}^{j_1+j_2}. \]
Enabledness still follows as before, finishing step 2.

\para{Step 3 (\(\Z\)-\textbf{runs} \(\to\) \textbf{runs}):} 
We refine the previous step so that the above-defined almost-run  
becomes an actual run,
by taking sufficiently \emph{larger} values of $c$ and $m$.

According to our construction,  
all periods from $\vect p(e)$ appearing in $\mathbf{diff}$, $\mathbf{diff}_1$ or $\rho_\Z$
are used at first occurrence of $e$,
for every edge $e\in E_i$.
Non-first occurrences $(\vect z_j, \vect z_{j+1})  \in \N\text{-sem}(\vectSet L(e_j))$
are thus easy:
$\vect z_{j+1} - \vect z_j = \Delta(\vect b(e_j))$, and
by choosing $m$ sufficiently large, we lift  $(\vect z_j, \vect z_{j+1})$
to guarantee that $(\vect z_j, \vect z_{j+1})  = \vect b(e_j) + (\vect a, \vect a)$ 
for some $\vect a \in \N^{d_i}$.
By the so far unused assumption all $\vectSet L(e_j)$ are basic, $\vect b(e_j) \in \vectSet R(e_j)$,
which together with monotonicity
gives us $(\vect z_j, \vect z_{j+1})  \in \vectSet R(e_j)$ as required.
Note the above argument is not sufficient for
the first occurrences of edges, as it only gives us
$(\vect z_j, \vect z_{j+1}) \in\vectSet L(e_j)$,
which does not have to imply $(\vect z_j, \vect z_{j+1})  \in \vectSet R(e_j)$.

In order to make the first occurrences of edges in $\mathbf{diff}$ (and $\mathbf{diff}_1$) into actual steps, we sufficiently increase the constant $c$.
Since \(\vect{h}[\vect{p}(e)]\) uses every period of \(\vectSet{L}(e)\), we have \(\sum \vect{h}[\vect{p}(e)] \in \N_{\geq 1}({\vectSet{F}(e)})\). For every \(e_j\) which is the first occurrence of an edge \(e\) we can therefore invoke \(\vectSet{R}(e_j) \HybridizationRelation_g \vectSet{L}(e_j)\) with 
\(\vect{x}(e_j):= (\vect{z}_j, \vect{z}_{j+1})\)
and \(\vect{w}(e_j):=\sum \vect{h}[\vect{p}(e_j)]\) and obtain a number \(N(e_j)\)
 s.t. inserting \(m \vect{h}[\vect{p}(e_j)]\) periods for any number \(m \geq N(e_j)\) yields
 $(\vect{z}_j, \vect{z}_{j+1}) + m \vect w(e_j) \in \vectSet R(e_j)$.
Taking $c' := c + \max_{e_j} N(e_j)$ we get a  $\textbf{diff}$
satisfying
\((\vect{z}_j, \vect{z}_{j+1}) \in \vectSet R(e_j)\) also for the first occurrence of \(e\).
Hence for sufficiently large $m$, all repetitions of  $\mathbf{diff}$ (and $\mathbf{diff}_1$) in $\rho_{\N}^{(m)}$
are actual runs.


Finally, by increasing $c$ further we can get some additional periods in $\mathbf{diff}$ or $\mathbf{diff}_1$,
which can be distributed 
among steps of \(\rho_{\Z}\) to ensure that \(\rho_{\Z}\) is an actual run as well.
%
%
%

\para{II. Bridge edges:}
Consider a bridge edge $e=(\qfini,\qiniplusone)$,  $\vect z := \vect s[\vect y_i]$ and
$\vect z' := \vect s[\vect x_{i+1}]$.
We can not use monotonicity, but
$\CharSys$ \eqref{eq:CharSys} guarantees that $(\vect z, \vect z')\in \vectSet L(e)$. 
We may thus immediately invoke
\(\vectSet{R}(e) \HybridizationRelation_g \vectSet{L}(e)\), similarly as in Step 3, 
and obtain a number \(N(e)\)
 s.t. inserting \(m \vect{h}[\vect{p}(e)]\) periods for any number \(m \geq N(e)\) yields
 $(\vect{z}, \vect{z}') + m \vect w \in \vectSet R(e)$, where $\vect w = \sum \vect{h}[\vect{p}(e)]$.

\para{Bounding N:} The main requirement on the minimal number \(N\) of homogeneous solutions to add is
the inequalities \eqref{eq:c},
as well as \(c' \geq N(e_j)\) for first occurrences of \(e \in E_i\). That \(N(e_j)\) has size at most \(\mathfrak{F}_{\alpha}\) follows since class \(\RelationClass\) is approximable in \(\mathfrak{F}_{\alpha}\), for the other condition observe that the size of the pumping sequences corresponds roughly to the complexity of deciding coverability, i.e.\ is at most \(\mathfrak{F}_{\alpha+d+1}\). Hence we can find a function \(g \in \mathfrak{F}_{\alpha+d+1} \subseteq \mathfrak{F}_{\alpha+2d+2}\) as required.
\end{proof}

%

%

\section{Asymptotic Overapproximation} \label{SectionHybridization}


In this section we prove that for classes \(\RelationClass\), which are approximable in \(\mathfrak{F}_{\alpha}\) and effectively closed under intersection with semilinear sets in \(\mathfrak{F}_{\alpha}\), in the following called \(\mathfrak{F}_{\alpha}\)-\emph{effective classes}, like sections of \ConsideredModel, 
a wide range of problems can be solved. We start with the following initial list.

\begin{restatable}{theorem}{TheoremUseHybridizationToDecideEverything} \label{TheoremUseHybridizationToDecideEverything}
Let \(\RelationClass\) be \(\mathfrak{F}_{\alpha}\)-effective.
Then the following problems are decidable in the complexity stated:
\begin{enumerate}
\item[(1)] Reachability (is \(\vectSet{X}\) non-empty?) in \(\mathfrak{F}_{\alpha}\).
\item[(2)] Boundedness (is \(\vectSet{X}\) finite?) in \(\mathfrak{F}_{\alpha}\).
\item[(3)] Given \(\vectSet{X}\) and semilinear \(\vectSet{S}\), is \(\vectSet{X} \subseteq \vectSet{S}\)? 
In \(\mathfrak{F}_{\alpha}\).
\item[(4)] \(\mathcal{F}\)-separability for \(\mathcal{F}=\) Semil, Mod, Unary. In \(\mathfrak{F}_{\alpha+1}\).
\end{enumerate}
\end{restatable}

\begin{proof}[Proof of Theorem \ref{TheoremUseHybridizationToDecideEverything}(1)-(3)]
(1) See Corollary \ref{CorollaryVASSnzIdealDecomposition}.

(2): If \(\vectSet{X} \HybridizationRelation \vectSet{L}\), and \(\vectSet{L}\) has a period, then \(\vectSet{X}\) is infinite: Simply pump the period. Hence for (2) we apply the approximation algorithm and check whether some \(\vectSet{L}_j\) has a period.

(3): Check that \(\vectSet{X}_{new}:=\vectSet{X} \cap \vectSet{S}^{C}=\emptyset\) via (1), where \(\vectSet{S}^C\) is the complement of \(\vectSet{S}\).
\end{proof}

\para{Separability:}
Next let us define separability problems. Fix a class \(\mathcal{F}\) of relations, e.g. \(\mathcal{F}=\) Semilinear, \(\mathcal{F}=\) Unary (recognizable sets, i.e.~sets definable via monadic Presburger predicates), or
\(\mathcal{F}=\) Modulo (sets definable by modulo constraints on coordinates).
%
Two sets \(\vectSet{X}\) and \(\vectSet{Y}\) are \(\mathcal{F}\)-\emph{separable} if there exists \(\vectSet{S} \in \mathcal{F}\) s.t. \(\vectSet{X} \subseteq \vectSet{S}\) and \(\vectSet{Y} \cap \vectSet{S}=\emptyset\). The \(\mathcal{F}\)-\emph{separability problem} asks given \(\vectSet{X}\) and \(\vectSet{Y}\), are they \(\mathcal{F}\)-separable?
%
To solve the separability problems, we use the following definition:

\begin{definition}
Let \(\vectSet{X}\) be any set. Let \(\vectSet{S} \in \mathcal{F}\) with \(\vectSet{X} \subseteq \vectSet{S}\). Then \(\vectSet{S}\) is called \emph{up-to-boundary-optimal} (utbo) \(\mathcal{F}\) \emph{overapproximation of} \(\vectSet{X}\) if any other overapproximation \(\vectSet{S}' \in \mathcal{F}\) fulfills \(\vectSet{S}+\vect{p} \subseteq \vectSet{S}'\) for some vector \(\vect{p}\in \N^d\).
\end{definition}

Intuitively, considering a set $\vectSet S \in \mathcal{F}$ including $\vectSet X$ to be an \(\mathcal{F}\) overapproximation  of $\vectSet X$,
an utbo  \(\mathcal{F}\) overapproximation $\vectSet X$ is, up to a shift by some vector $\vect p$, 
the optimal (the least) among all  \(\mathcal{F}\) overapproximations of $\vectSet X$.

In fact, an asymptotic overapproximation of $\vectSet X$ is an utbo semilinear overapproximation of
$\vectSet X$:

\begin{lemma} \label{LemmaUTBOApproximation}
Let \(\vectSet{X} \HybridizationRelation \vectSet{L}=\vectSet{B}+\N(\vectSet{F})\) and \(\vectSet{S}\) be semilinear s.t. \(\vectSet{X} \subseteq \vectSet{S}\). Then there exists \(\vect{p}\in \N(\vectSet{F})\) s.t. \(\vect{p}+\vectSet{L} \subseteq \vectSet{S}\).
\end{lemma} 

\begin{proof}
Define \(\vectSet{S}':=\vectSet{L} \setminus \vectSet{S}\). Write \(\vectSet{S}'=\bigcup_{j=1}^k \vectSet{L}_j\) as a union of linear sets \(\vectSet{L}_j=\vect{b}_j+\N(\vectSet{F}_j)\). 
We claim that all the $\vectSet L_j$ are "parallel to the boundary" of \(\vectSet{L}\), namely
\[
\N(\vectSet{F}_j) \cap \N_{\geq 1}(\vectSet{F}) = \emptyset,
\] 
which would essentially finish the proof. 
%

%
Suppose, towards contradiction, that \(\vect{w}_j \in \N(\vectSet{F}_j) \cap \N_{\geq 1}(\vectSet{F})\) for some  \(j \in I\). Since \(\vectSet{X} \HybridizationRelation \vectSet{L}\), there exists \(N \in \N\) s.t. \(\vect{b}_j+N \vect{w}_j \in \vectSet{X}\), in particular \(\vectSet{X} \cap \vectSet{L}_j \neq \emptyset\). 
However, \(\vectSet{X} \cap \vectSet{L}_j \subseteq \vectSet{S} \cap (\vectSet{L} \setminus \vectSet{S})=\emptyset\), contradiction.

Therefore all \(\vectSet{L}_j\) for \(j=1, \dots, k\) are parallel to the boundary of \(\vectSet{L}\) as claimed, and we now simply choose \(\vect{p} \in \N(\vectSet{F})\) large enough that \((\vect{p}+\vectSet{L}) \cap \vectSet{S}'=\emptyset\). 
\end{proof}

Now we can obtain Theorem \ref{TheoremUseHybridizationToDecideEverything}(4) as a consequence of the following:

\begin{theorem} \label{TheoremDecidingSeparability}
Let \(\RelationClass\) be \(\mathfrak{F}_{\alpha}\)-effective.
Let \(\mathcal{F}\) be a class of relations s.t. the following hold:
\begin{itemize}
\item \(\mathcal{F}\) is effectively closed under Boolean operations in \(\mathfrak{F}_{\alpha}\).
\item Given \(\vectSet{T} \in \mathcal{F}\), \(\dim(\vectSet{T})\) is computable in \(\mathfrak{F}_{\alpha}\).
\item There is an algorithm in \(\mathfrak{F}_{\alpha}\) that
given an utbo semilinear overapproximation \(\vectSet{S}\) of \(\vectSet{X}\), outputs
an utbo \(\mathcal{F}\) overapproximation \(\vectSet{S}'\) of the same \(\vectSet{X}\).
\end{itemize}
Then \(\mathcal{F}\)-separability is decidable in \(\mathfrak{F}_{\alpha+1}\).
\end{theorem}

\begin{proof}
We check \(\mathcal{F}\)-separability for \(\vectSet{X}\) and \(\vectSet{Y}\) as follows:

Step 1:  Apply the overapproximation algorithm to write \(\vectSet{X}=\vectSet{X}_1 \cup \dots \cup \vectSet{X}_k\) and \(\vectSet{Y}=\vectSet{Y}_1 \cup \dots \cup \vectSet{Y}_s\) with \(\vectSet{X}_j \HybridizationRelation \vectSet{L}_j\) and \(\vectSet{Y}_m \HybridizationRelation \vectSet{S}_m\) for all \(j,m\). Since \(\mathcal{F}\) is closed under all Boolean operations, it is sufficient and necessary to \(\mathcal{F}\)-separate every \(\vectSet{X}_j\) from every \(\vectSet{Y}_m\). 

Step 2: For all indices \(j,m\) do: By Lemma \ref{LemmaUTBOApproximation}, \(\vectSet{L}_j\) is an utbo semilinear overapproximation of \(\vectSet{X}_j\) and similarly \(\vectSet{S}_m\) is an utbo semilinear overapproximation of \(\vectSet{Y}_m\). From these compute utbo \(\mathcal{F}\) overapproximations \(\vectSet{L}_j'\) and \(\vectSet{S}_m'\) of \(\vectSet{X}_j\) and \(\vectSet{Y}_m\) respectively using bullet point 3.

Step 3: Finally we check whether \(\vectSet{L}_j' \cap \vectSet{S}_m'\) is non-degenerate for \emph{some} \(j\) and \(m\). This can be done by bullet points 1 and 2. 

Step 4: If such \(j,m\) exist, then the algorithm rejects. Otherwise the algorithm continues recursively with \(\vectSet{X}_{\text{new}}:=\vectSet{X} \cap \bigcup_{j,m} (\vectSet{L}_j' \cap \vectSet{S}_m')\) and \(\vectSet{Y}_{\text{new}}:=\vectSet{Y} \cap \bigcup_{j,m} (\vectSet{L}_j' \cap \vectSet{S}_m')\), which are of lower dimension.

Correctness: We prove that the algorithm can safely reject if any \(\vectSet{L}_j' \cap \vectSet{S}_m'\) is non-degenerate. We remove the indices for readability, refering to \(\vectSet{X}_j, \vectSet{L}_j'\) as \(\vectSet{X}, \vectSet{L}\) and \(\vectSet{Y}_m, \vectSet{S}_m'\) as \(\vectSet{Y}, \vectSet{S}\).

Since \(\mathcal{F}\) is closed under complement, \(\vectSet{X}, \vectSet{Y}\) are \(\mathcal{F}\)-separable if and only if there exist \(\vectSet{T}_{\vectSet{X}}, \vectSet{T}_{\vectSet{Y}} \in \mathcal{F}\) s.t. \(\vectSet{X} \subseteq \vectSet{T}_{\vectSet{X}}, \vectSet{Y} \subseteq \vectSet{T}_{\vectSet{Y}}\) and \(\vectSet{T}_{\vectSet{X}} \cap \vectSet{T}_{\vectSet{Y}}=\emptyset\). We claim that such \(\vectSet{T}_{\vectSet{X}}, \vectSet{T}_{\vectSet{Y}}\) do not exist.

Indeed, take
arbitrary \(\vectSet{T}_{\vectSet{X}},\vectSet{T}_{\vectSet{Y}} \in \mathcal{F}\) with \(\vectSet{X} \subseteq \vectSet{T}_{\vectSet{X}}\) and \(\vectSet{Y} \subseteq \vectSet{T}_{\vectSet{Y}}\), as candidates for separating. Since \(\vectSet{L}\) and \(\vectSet{S}\) are utbo \(\mathcal{F}\) overapproximations of \(\vectSet{X}, \vectSet{Y}\) respectively, there exist shifts \(\vect{x}, \vect{y}\) such that \(\vect{x}+\vectSet{L} \subseteq \vectSet{T}_{\vectSet{X}}\) and \(\vect{y}+\vectSet{S} \subseteq \vectSet{T}_{\vectSet{Y}}\). Since \(\vectSet{L} \cap \vectSet{S}\) is non-degenerate, also \((\vect{x}+\vectSet{L}) \cap (\vect{y}+\vectSet{S})\) is non-degenerate, and in particular \(\vectSet{T}_{\vectSet{X}} \cap \vectSet{T}_{\vectSet{Y}} \supseteq (\vect{x}+\vectSet{L}) \cap (\vect{y}+\vectSet{S}) \neq \emptyset\) is non-empty, which invalidates the candidates 
\(\vectSet{T}_{\vectSet{X}},\vectSet{T}_{\vectSet{Y}}\).

Complexity Bound: The recursion depth of Step 4 is at most \(n\), i.e.\ linear, 
and one call takes \(\mathfrak{F}_{\alpha}\), hence we obtain \(\mathfrak{F}_{\alpha+1}\).
\end{proof}

We remark that while Theorem \ref{TheoremDecidingSeparability} may be applied for \(\mathcal{F} \not \subseteq\) Semilinear, already for \(\mathcal{F}=\) Semilinear separability of \(\vectSet{X}\) and \(\vectSet{Y}\) is equivalent to disjointness of \(\vectSet{X}\) and \(\vectSet{Y}\). This fact is similar to \cite[Cor. 6.5]{GuttenbergRE23} and is a consequence of the algorithm of Theorem \ref{TheoremDecidingSeparability}  for \(\mathcal{F}=\) Semilinear that only rejects (in step 4) if \(\vectSet{L}_i \cap \vectSet{S}_j\) non-degenerate, i.e.~by Lemma \ref{LemmaShiftGoodOverapproximation}, only when \(\vectSet{X}_i \cap \vectSet{Y}_j \HybridizationRelation \vectSet{L}_i \cap \vectSet{S}_j\), which implies \(\vectSet{X}_i \cap \vectSet{Y}_j \neq \emptyset\). Otherwise the algorithm outputs a semilinear separator.

We are ready to finish the proof of Theorem \ref{TheoremUseHybridizationToDecideEverything}.

\begin{proof}[Proof of Theorem \ref{TheoremUseHybridizationToDecideEverything}(4)]
It is easy to check that the classes mentioned fulfill bullet points 1-3 of Theorem \ref{TheoremDecidingSeparability}. Hence Theorem \ref{TheoremUseHybridizationToDecideEverything}(6) follows from Theorem \ref{TheoremDecidingSeparability}.
\end{proof}

\para{Semilinearity:}
Finally, we called Theorem \ref{TheoremUseHybridizationToDecideEverything} the \emph{initial} list, because another host of problems can be shown to be decidable.
To this aim we use, in place of \(\HybridizationRelation\), a stronger relatrion \(\HybridizationRelation_s\)
defined in \cite[Def.~4.5]{GuttenbergRE23}.
While the definition of \(\HybridizationRelation_s\) is too long  to be repeated here, let us explain the main difference.

%
%
%

\begin{example}
Recall that \(\vectSet{X}:=\{(x,y) \mid y \leq x^2\}\) fulfills \(\vectSet{X} \HybridizationRelation \N^2\), in fact \(\vectSet{X} \HybridizationRelation_s \N^2\). Now let \(\vectSet{Y} \subseteq \N\) be some arbitrary infinite set, best an undecidable one, and define \(\vectSet{X}':=\vectSet{X} \cup \{0\} \times \vectSet{Y}\). Then we automatically still have \(\vectSet{X}' \HybridizationRelation \N^2\) since \(\HybridizationRelation\) only becomes easier by adding points to $\vectSet X$. However, we do \emph{not} have \(\vectSet{X}' \HybridizationRelation_s \N^2\) as strong overapproximation is designed to ``detect complex unrelated parts'' in the set.
\end{example}

Next we say that a class \(\RelationClass\) is \(\mathfrak{F}_{\alpha}\)-\emph{strongly effective} if it is \(\mathfrak{F}_{\alpha}\)-effective and the approximation algorithm can be assumed to output \(\vectSet{X}_i \HybridizationRelation_s \vectSet{L}_i\) instead of just \(\vectSet{X}_i \HybridizationRelation \vectSet{L}_i\). In the full version we show the following analogue of Theorem \ref{TheoremIdealDecompositionEVASS}:

\begin{theorem}
Let \(\alpha \geq 2\) and let \(\RelationClass\) be an \(\mathfrak{F}_{\alpha}\)-strongly effective class of relations containing \(Add\).
Then sections of monotone \(\RelationClass\)-eVASS are \(\mathfrak{F}_{\alpha+2d+2}\)-strongly effective. Moreover, the algorithm is still the one given in Section \ref{SectionProofTheoremEVASS}.\label{TheoremStronglyApproximableEVASS}
\end{theorem}

We immediately obtain the analogue of Theorem \ref{TheoremVASSnzIdealDecomposition}:

\begin{corollary}
The class of sections of VASSnz of dimension \(d\) and with \(k\) priorities is \(\mathfrak{F}_{2kd+2k+2d+4}\)-strongly effective.
\end{corollary}

To explain how to apply strong approximability to deciding semilinearity, we need two preliminaries from \cite{GuttenbergRE23}.

\begin{definition}
Let \(\vectSet{L}=\vectSet{B}+\N(\vectSet{F})\) be hybridlinear and \(\vectSet{X} \subseteq \vectSet{L}\). If there is \(\vect{p} \in \N(\vectSet{F})\) s.t. \(\vect{p}+\vectSet{L} \subseteq \vectSet{X}\), then \(\vectSet{X}\) is called \emph{reducible} w.r.t.~$\vectSet L$ (written $\vectSet X \reducible \vectSet L$), otherwise \emph{irreducible}. 

Such a vector \(\vect{p}\) is called a \emph{reduction point}.
\end{definition}

Intuitively, if \(\vectSet{X}\) is reducible with reduction point \(\vect{p}\), then $\vectSet{X}$ and $\vectSet{L}$ coincide for points larger than $\vect{p}$.
By definition, $\vectSet X \reducible \vectSet L$ implies $\vectSet X \HybridizationRelation \vectSet L$, though it does not imply $\vectSet X \HybridizationRelation_s \vectSet L$.
\begin{theorem}
\cite[Theorem~5.7]{GuttenbergRE23} Let class \(\RelationClass\) be \(\mathfrak{F}_{\alpha}\)-strongly effective. Then there is an algorithm in \(\mathfrak{F}_{\alpha+1}\) which given \(\vectSet{X}\in \RelationClass\) and a hybridlinear set \(\vectSet{L}\), decides whether \(\vectSet{X} \reducible \vectSet{L}\), and if yes computes a reduction point \(\vect{p}\). \label{TheoremDecidingReducibility}
\end{theorem}

The second preliminary is the following.

\begin{proposition} \cite[Prop.~5.5]{GuttenbergRE23}
Let \(\RelationClass\) be \(\mathfrak{F}_{\alpha}\)-strongly effective. Then there is an algorithm in \(\mathfrak{F}_{\alpha+1}\) which given \(\vectSet{X} \in \RelationClass\) computes a partition \(\N^n=\bigcup_{j=1}^k \vectSet{L}_j\) s.t. for every \(j\) either \((\vectSet{X} \cap \vectSet{L}_j) \HybridizationRelation \vectSet{L}_j\)
or \(\vectSet{X} \cap \vectSet{L}_j=\emptyset\). \label{PropositionPartition}
\end{proposition}

I.e.\ the approximation algorithm can be assumed to decompose instead of \(\vectSet{X}\) the surrounding space \(\N^n\) into a partition \(\vectSet{L}_j\) s.t. in every part \(\vectSet{X}\) is either empty or has good asymptotics.

%
%

We can now finish with the remaining part of the list of decidable problems:

\begin{restatable}{theorem}{TheoremUseHybridizationToDecideEverythingPartTwo} \label{TheoremUseHybridizationToDecideEverythingPartTwo}
Let \(\RelationClass\) be \(\mathfrak{F}_{\alpha}\)-strongly effective. Then the following problems are decidable in the complexity stated:
\begin{enumerate}
\item[(1)] Semilinearity (is \(\vectSet{X}\) semilinear, and 
if yes, output a semilinear representation) in \(\mathfrak{F}_{\alpha+2}\).
\item[(2)] Given \(\vectSet{X}\) and semilinear \(\vectSet{S}\), is \(\vectSet{S} \subseteq \vectSet{X}\)? In 
\(\mathfrak{F}_{\alpha+2}\).
\end{enumerate}
\end{restatable}

\begin{proof}[Proof of Theorem \ref{TheoremUseHybridizationToDecideEverything}]
(1): Lemma \ref{LemmaUTBOApproximation} shows in particular that if \(\vectSet{X} \HybridizationRelation \vectSet{L}\), and \(\vectSet{X}\) is semilinear, then \(\vectSet{X} \reducible \vectSet L\). Namely \(\vectSet{X}\) itself is a semilinear overapproximation. In contraposition: If \(\vectSet{X} \HybridizationRelation \vectSet{L}\) and \(\vectSet{X} \not\reducible \vectset L\), 
then \(\vectSet{X}\) is non-semilinear. 

We obtain an obvious algorithm for deciding semilinearity of a set \(\vectSet{X}\): Use Proposition \ref{PropositionPartition} to decompose \(\N^n=\bigcup_{j=1}^k \vectSet{L}_j\). Let \(J\) be the set of indices with \(\vectSet{X} \cap \vectSet{L}_j \HybridizationRelation \vectSet{L}_j\). We ignore all \(j \not \in J\), and for all \(j\in J\) we 
check whether \(\vectSet{X} \cap \vectSet{L}_j\) is reducible w.r.t. \(\vectSet{L}_j\). If no, then \(\vectSet{X} \cap \vectSet{L}_j\) and therefore \(\vectSet{X}\) is non-semilinear by the above contraposition statement. Otherwise we obtain a reduction point \(\vect{p}_j\) with \(\vect{p}_j+\vectSet{L}_j \subseteq \vectSet{X}\) and continue recursively with \(\vectSet{X}_{new}:=\vectSet{X} \setminus \bigcup_{j \in J} (\vect{p}_j+\vectSet{L}_j)\).

Complexity bound: Proposition \ref{PropositionPartition} needs \(\mathfrak{F}_{\alpha+1}\) time, and we have recursion depth at most \(n\), hence the time bound is an \(n\)-fold application of a function in \(\mathfrak{F}_{\alpha+1}\), leading to \(\mathfrak{F}_{\alpha+2}\).

(2): Apply the semilinearity algorithm on \(\vectSet{X}_{new}:=\vectSet{X} \cap \vectSet{S}\). If you obtain a semilinear representation equivalent to \(\vectSet{S}\), return true, otherwise return false. 
The complexity is again \(\mathfrak{F}_{\alpha+2}\).
\end{proof}

Theorems 
\ref{TheoremUseHybridizationToDecideEverything}, \ref{TheoremStronglyApproximableEVASS} and \ref{TheoremUseHybridizationToDecideEverythingPartTwo} in particular imply:

\begin{corollary}
The reachability, boundedness, semilinearity and \(\mathcal{F}\)-separability (for \(\mathcal{F}=\) Semilinear, Modulo, Unary) problems are decidable in \(\mathfrak{F}_{\omega}\) for VASS and \ConsideredModel.
\end{corollary}

%

\section{Conclusion} \label{SectionConclusion}


To achieve our results, culminating in 
Theorems \ref{TheoremUseHybridizationToDecideEverything} and
\ref{TheoremUseHybridizationToDecideEverythingPartTwo}, 
we have introduced monotone \(\RelationClass\)-eVASS, indirectly also answering the question of ``What types of transitions can we allow a VASS to have while remaining decidable?'' Namely, by Theorem \ref{TheoremIdealDecompositionEVASS}, under minor assumptions on \(\RelationClass\), monotone \(\RelationClass\)-eVASS sections are approximable if and only if \(\RelationClass\) is approximable.

This leaves the following questions: 1) Can the parameterized complexity be improved? On the level of \(\RelationClass\)-eVASS, can we avoid \(+d\) for coverability? Does the index have to depend on \(k\)? Do VASSnz have lower complexity than nested \(\RelationClass\)-eVASS? In particular, one can ask about the parameterized complexity for both VASSnz and \(\RelationClass\)-eVASS.

2) Are there other applications of \(\RelationClass\)-eVASS, or other classes \(\RelationClass\) for which they are interesting to consider?

3) Are there other classes of systems which are approximable and reducible, i.e.\ other systems where Theorem \ref{TheoremUseHybridizationToDecideEverything} applies? For example Pushdown VASS?

\newpage

\bibliography{Main.bib}
\appendix


\subsection{Appendix of Section \ref{AlgorithmToolbox}}

In this section we prove the closure property Lemma \ref{LemmaShiftGoodOverapproximation} from the main text.

First we observe that in Lemma \ref{LemmaShiftGoodOverapproximation} we have not stated which hybridlinear representation will be used for \(\vectSet{L} \cap \vectSet{L}'\), hence we have to prove that \(\vectSet{X} \HybridizationRelation \vectSet{L}\) is independent of this choice. This requires a few lemmas.

\begin{lemma}
Let \(\vectSet{X} \subseteq \vectSet{B}+\N(\vectSet{F}) \subseteq \N^n\). Then \(\vectSet{X} \HybridizationRelation \vectSet{B}+\N(\vectSet{F})\) if and only if \((\vectSet{X} \cap (\vect{b}+\N(\vectSet{F}))) \HybridizationRelation \vect{b}+\N(\vectSet{F})\) for all \(\vect{b} \in \vectSet{B}\). \label{LemmaReduceToLinear}
\end{lemma}

\begin{proof}
``\(\Rightarrow\)'': Let \(\vect{x} \in (\vectSet{X} \cap (\vect{b}+\N(\vectSet{F})))\), and \(\vect{w} \in \N_{\geq 1}(\vectSet{F})\). We have to show that there exists \(N \in \N\) s.t. \(\vect{x}+\N_{\geq N} \vect{w} \subseteq \vectSet{X}\) and \(\vect{x}+\N_{\geq N} \vect{w} \subseteq \vect{b}+\N(\vectSet{F})\). 

For \(\vectSet{X}\) this follows from \(\vectSet{X} \HybridizationRelation \vectSet{B}+\N(\vectSet{F})\), for \(\vect{b}+\N(\vectSet{F})\) this follows since \(\vect{w}\) is a period of the linear set \(\vect{b}+\N(\vectSet{F})\).

``\(\Leftarrow\)'': Let \(\vect{x} \in \vectSet{X}\) and \(\vect{w} \in \N_{\geq 1}(\vectSet{F})\). We have to again construct \(N\). Let \(\vect{b} \in \vectSet{B}\) s.t. \(\vect{x} \in \vect{b}+\N(\vectSet{F})\), which exists since \(\vectSet{X} \subseteq \vectSet{B}+\N(\vectSet{F})\). Use the assumption for \(\vect{b}\).
\end{proof}

By Lemma \ref{LemmaReduceToLinear} we can w.l.o.g. simplify representations to linear ones. 

The simplest way to now finish proving independence of the representation is to consider the following definition only depending on \(\vectSet{L}\), not some representation of it.

\begin{definition} \label{DefinitionInterior}
Let \(\vectSet{P}\) be \(\N\)-g., and \(\vect{v} \in \vectSet{P}\). 

The vector \(\vect{v}\) is in the \emph{interior} of \(\vectSet{P}\) if for every \(\vect{x} \in \vectSet{P}\), there exists \(m \in \N\) s.t. \(m \vect{v}-\vect{x} \in \vectSet{P}\).

The set of all interior vectors is denoted \(\interior(\vectSet{P})\).
\end{definition}

Accordingly, we define the interior of a linear set as \(\interior(\vectSet{L})=\vect{b}+\interior(\vectSet{L}-\vect{b})\), where \(\vect{b}\) is the base point of \(\vectSet{L}\). Observe that since linear sets are \(\subseteq \N^n\), the base point is the unique minimal point in \(\vectSet{L}\), hence this definition does not depend on the representation. 

Now we can prove that asymptotic overapproximations do not depend on the representation, by proving that \(\vect{w} \in \N_{\geq 1}(\vectSet{F})\) can equivalently be replaced by \(\vect{w} \in \interior(\vectSet{L})-\vect{b}\):

\begin{lemma} \label{LemmaIndependentOfRepresentationOfL}
Let \(\vectSet{L}=\vect{b}+\N(\vectSet{F})\) be a linear set, and \(\vectSet{X} \subseteq \vectSet{L}\). Then in Definition \ref{DefinitionGoodOverapproximation} ``\(\forall \vect{w} \in \N_{\geq 1}(\vectSet{F})\)'' can equivalently be replaced by ``\(\forall \vect{w} \in \interior(\vectSet{L})-\vect{b}\)''. 

In particular, the definition of asymptotic overapproximation is independent of the representation of \(\vectSet{L}\).
\end{lemma}

\begin{proof}
First observe that by definition of \(\interior(\vectSet{L})\), we have \(\interior(\vectSet{L})-\vect{b}=(\vect{b}+\interior(\vectSet{L}-\vect{b}))-\vect{b}=\interior(\vectSet{L}-\vect{b})=\interior(\N(\vectSet{F}))\).

``\(\Leftarrow\)'': We assume the line containment property holds for all \(\vect{w} \in \interior(\vectSet{L})- \vect{b}=\interior(\N(\vectSet{F}))\) and prove it for all \(\vect{w} \in \N_{\geq 1}(\vectSet{F})\). It suffices to prove \(\N_{\geq 1}(\vectSet{F}) \subseteq \interior(\N(\vectSet{F}))\). Hence let \(\vect{w} \in \N_{\geq 1}(\vectSet{F})\). To prove \(\vect{w} \in \interior(\N(\vectSet{F}))\), let \(\vect{x} \in \N(\vectSet{F})\) arbitrary. We have to show that there exists \(m \in \N\) s.t. \(m \vect{w} - \vect{x} \in \N(\vectSet{F})\). Write \(\vect{x}=\sum_{\vect{f} \in \vectSet{F}} \lambda_{\vect{f}} \vect{f}\), and simply define \(n:=\max_{\vect{f} \in \vectSet{F}} \lambda_{\vect{f}}\). Then \(m \vect{w} - \vect{x}\) uses every \(\vect{f} \in \vectSet{F}\) at least \(m-\lambda_{\vect{f}} \geq 0\) times, i.e.\ is in \(\N(\vectSet{F})\) as claimed.

``\(\Rightarrow\)'': We assume the line containment property holds for all \(\vect{w} \in \N_{\geq 1}(\vectSet{F})\) and have to prove it for all \(\vect{w} \in \interior(\N(\vectSet{F}))\). Hence let \(\vect{w} \in \interior(\N(\vectSet{F}))\) and \(\vect{x} \in \vectSet{L}\). We claim the following: There exists \(m \in \N\) s.t. \(m \vect{w} \in \N_{\geq 1}(\vectSet{F})\). 

Proof of claim: For all \(\vect{f} \in \vectSet{F}\), we do the following: We use that \(\vect{w}\) is in the interior to obtain \(m_{\vect{f}} \in \N\) s.t. \(m_{\vect{f}} \vect{w}- \vect{f} \in \N(\vectSet{F})\). Then \(m:=\sum_{\vect{f} \in \vectSet{F}} m_{\vect{f}}\) proves the claim.

Therefore let \(m \in \N\) be s.t. \(m \vect{w} \in \N_{\geq 1}(\vectSet{F})\). For all \(0 \leq j \leq m-1\) we define \(\vect{x}_j:=\vect{x}+j \vect{w}\). Observe that \[\bigcup_{j=0}^{m-1} \vect{x}_j+m \N \vect{w}=\vect{x}+\N \vect{w},\] since \(\N=\{0, \dots, m-1\}+m\N\). For all \(0 \leq j \leq m-1\) we use our assumption with \(\vect{x}_j \in \vectSet{L}\) and \(m\vect{w} \in \N_{\geq 1}(\vectSet{F})\) to obtain \(N_j \in \N\) s.t. \(\vect{x}_j+\N_{\geq N_j} m \vect{w} \subseteq \vectSet{X}\). Defining \(N:=m+\max_{0 \leq j \leq m-1} N_j\), we obtain \(\vect{x}+\N_{\geq N} \vect{w} \subseteq \vectSet{X}\).
\end{proof}

We can now proceed towards the closure property Lemma \ref{LemmaShiftGoodOverapproximation}. The main argument will be that because \(\vectSet{L} \cap \vectSet{L}'\) is non-degenerate, the interior \(\interior(\vectSet{L} \cap \vectSet{L}')\) will be ``almost all'' of \(\vectSet{L}'\). Towards this end, we first need to introduce lemmas towards dimension.

The first half of properties of dimension is basic.

\begin{restatable}{lemma}{BasicDimensionProperties}
Let \(\vectSet{X}, \vectSet{X}' \subseteq \mathbb{Q}^n, \vect{b}\in \mathbb{Q}^n\). Then \(\dim(\vectSet{X})=\dim(\vect{b}+\vectSet{X})\) and 
\(\dim(\vectSet{X} \cup \vectSet{X}')=\max \{\dim(\vectSet{X}), \dim(\vectSet{X}')\}\). 

Further, if \(\vectSet{X} \subseteq \vectSet{X}'\), then \(\dim(\vectSet{X}) \leq \dim(\vectSet{X}')\). \label{BasicDimensionProperties}
\end{restatable}

The second half deals with linear sets. For \(\Q\)-generated sets \(\vectSet{V}\) and \(\vectSet{V}_i\) it is known that \(\vectSet{V} \subseteq \bigcup_{i=1}^r \vect{b}_i+\vectSet{V}_i\) implies \(\vectSet{V} \subseteq \vectSet{V}_i\) for some \(i\), similar results hold for any \(\N\)-g. set \(\vectSet{P}\subseteq \Z^n\) and lead to the following lemma.

\begin{lemma}{\cite[Lemma 5.3]{Leroux11}} \label{LemmaFromJerome}
Let \(\vectSet{P} \subseteq \Z^n\) be \(\N\)-generated. Then \(\dim(\vectSet{P})=\dim(\Q(\vectSet{P}))\).
\end{lemma}

Now we can start with ``the interior takes up almost all of \(\vectSet{L}'\)'', through two lemmas.

\begin{lemma}\label{LemmaBoundaryLowerDimension}
Let \(\vectSet{P}=\N(\vectSet{F})\) be \(\N\)-finitely generated.

Then \(\dim(\vectSet{P} \setminus \interior(\vectSet{P})) < \dim(\vectSet{P})\).
\end{lemma}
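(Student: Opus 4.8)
The plan is to reduce the statement to a purely geometric fact about $\Q$-generated cones and then transfer it back using Lemma \ref{LemmaFromJerome}. First I would observe that both $\vectSet{P}$ and its interior are closed under $\N$-linear combinations in the following sense: $\interior(\vectSet{P})$ is nonempty (take $\vect{v}:=\sum_{\vect{f}\in\vectSet{F}}\vect{f}$, which lies in the interior by the same computation as in the ``$\Leftarrow$'' direction of Lemma \ref{LemmaIndependentOfRepresentationOfL}), and moreover $\interior(\vectSet{P})+\vectSet{P}\subseteq\interior(\vectSet{P})$. Fixing one interior vector $\vect{v}_0$, this gives $\vect{v}_0+\vectSet{P}\subseteq\interior(\vectSet{P})$, so $\vectSet{P}\setminus\interior(\vectSet{P})\subseteq \vectSet{P}\setminus(\vect{v}_0+\vectSet{P})$.

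The key step is then to apply Lemma \ref{LemmaDimensionDecreaseShiftedL}: with $\vectSet{L}:=\vect{0}+\N(\vectSet{F})=\vectSet{P}$ (a linear set) and $\vect{p}:=\vect{v}_0\in\N(\vectSet{F})$, the lemma yields $\dim(\vectSet{P}\setminus(\vect{v}_0+\vectSet{P}))<\dim(\vectSet{P})$. Combining this with the inclusion from the previous paragraph and the monotonicity of $\dim$ under inclusion (Lemma \ref{BasicDimensionProperties}), we get $\dim(\vectSet{P}\setminus\interior(\vectSet{P}))\leq\dim(\vectSet{P}\setminus(\vect{v}_0+\vectSet{P}))<\dim(\vectSet{P})$, which is exactly the claim.

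The main thing to verify carefully — the one place where a genuine (though short) argument is needed rather than a citation — is the inclusion $\vect{v}_0+\vectSet{P}\subseteq\interior(\vectSet{P})$, i.e.\ that shifting $\vectSet{P}$ by an interior vector lands inside the interior. This follows because $\interior(\vectSet{P})$ is an additive sub-semigroup absorbing $\vectSet{P}$: if $\vect{v}\in\interior(\vectSet{P})$ and $\vect{u}\in\vectSet{P}$, then for any $\vect{x}\in\vectSet{P}$ pick $n$ with $n\vect{v}-\vect{x}\in\vectSet{P}$; then $n(\vect{v}+\vect{u})-\vect{x}=(n\vect{v}-\vect{x})+n\vect{u}\in\vectSet{P}$, so $\vect{v}+\vect{u}\in\interior(\vectSet{P})$. (One should note the edge case $\vectSet{F}=\emptyset$, where $\vectSet{P}=\{\vect{0}\}=\interior(\vectSet{P})$ and the difference is $\emptyset$, so $\dim=-\infty<0$ holds trivially.) Everything else is an immediate invocation of the already-established lemmas, so I do not anticipate a substantial obstacle.
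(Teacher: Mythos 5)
Your proof is correct, but it takes a genuinely different route from the paper's. The paper argues directly from the definition of dimension: for each non-interior \(\vect{v}=\sum_{\vect{f}\in\vectSet{F}}\lambda_{\vect{f}}\vect{f}\) it extracts a generator \(\vect{x}\in\vectSet{F}\) witnessing non-interiority, observes \(\vect{x}\notin\Q(\vectSet{F}')\) for the support \(\vectSet{F}'=\{\vect{f}\mid\lambda_{\vect{f}}\geq 1\}\), concludes \(\dim(\Q(\vectSet{F}'))<\dim(\Q(\vectSet{P}))=\dim(\vectSet{P})\) via Lemma~\ref{LemmaFromJerome}, and covers \(\vectSet{P}\setminus\interior(\vectSet{P})\) by the finitely many such proper subspaces \(\Q(\vectSet{F}')\). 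You instead prove the absorption property \(\interior(\vectSet{P})+\vectSet{P}\subseteq\interior(\vectSet{P})\) (your argument for it is correct; together with \(\vect{v}_0=\sum_{\vect{f}\in\vectSet{F}}\vect{f}\in\interior(\vectSet{P})\) it amounts to \(\N_{\geq 1}(\vectSet{F})\subseteq\interior(\vectSet{P})\), which the paper establishes separately inside Lemma~\ref{LemmaIndependentOfRepresentationOfL}), and then delegate the dimension drop to Lemma~\ref{LemmaDimensionDecreaseShiftedL} plus monotonicity of \(\dim\). Your version is shorter and isolates a reusable structural fact about the interior; the paper's is self-contained (it does not route through the imported Corollary~D.2 of Leroux) and produces an explicit finite cover of the boundary by lower-dimensional \(\Q\)-generated sets. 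One small point of care: Lemma~\ref{LemmaDimensionDecreaseShiftedL} is stated here for \emph{linear} sets, i.e.\ with base and generators in \(\N^d\), while the present lemma allows arbitrary \(\N\)-f.g.\ \(\vectSet{P}=\N(\vectSet{F})\) with \(\vectSet{F}\subseteq\Z^d\); your invocation is literally in scope only when \(\vectSet{F}\subseteq\N^d\). Since every use of the lemma in the paper is of that form, and Leroux's original statement holds for periodic subsets of \(\Z^d\) anyway, this is a cosmetic mismatch rather than a gap.
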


\begin{proof}
First observe that by Lemma \ref{LemmaFromJerome} we have \(\dim(\vectSet{P})=\dim(\Q(\vectSet{P}))\), hence it suffices to prove \(\dim(\vectSet{P} \setminus \interior(\vectSet{P})) < \dim(\Q(\vectSet{P}))\). I.e.\ we have to show that \(\vectSet{P} \setminus \interior(\vectSet{P})\) is contained in a union of lower dimensional vector spaces/\(\Q\)-generated sets. To this end, consider any vector \(\vect{v} \in \vectSet{P} \setminus \interior(\vectSet{P})\). 

Write \(\vect{v}=\sum_{\vect{f} \in \vectSet{F}} \lambda_{\vect{f}} \vect{f}\). Since \(\vect{v}\) is not interior, there exists a \(\vect{x} \in \vectSet{P}\) (in fact \(\vect{x} \in \vectSet{F})\) s.t. there is no scalar \(n \in \N\) with \(n \vect{v} \geq \vect{x}\). Let \(\vectSet{F}':=\{\vect{f} \in \vectSet{F} \mid \lambda_{\vect{f}} \geq 1\}\) be the set of coefficients used for \(\vect{v}\). Then \(\vect{x} \not \in \Q(\vectSet{F}')\) by the above: Otherwise such a scalar \(n \in \N\) would exist. 

Hence \(\dim(\Q(\vectSet{F}'))<\dim(\Q(\vectSet{F}')+\Q(\vect{x}))\leq \dim(\Q(\vectSet{P}))\). 

Since we chose \(\vect{v} \in \vectSet{P} \setminus \interior(\vectSet{P})\) arbitrary, we obtain 
\[\vectSet{P} \setminus \interior(\vectSet{P}) \subseteq \bigcup_{\vectSet{F}' \subseteq \vectSet{F}, \dim(\Q(\vectSet{F}'))<\dim(\vectSet{P})} \Q(\vectSet{F}'),\]
which has lower dimension than \(\vectSet{P}\).
\end{proof}

The second lemma states a dichotomy: If \(\vectSet{P}' \subseteq \vectSet{P}\), then either \(\interior(\vectSet{P}') \subseteq \interior(\vectSet{P})\), or their intersection is empty. The lemma even states \(\vectSet{P}' \cap \interior(\vectSet{P})\) being empty.

\begin{lemma}\label{LemmaInteriorDichotomy}
Let \(\vectSet{P}' \subseteq \vectSet{P}\) be \(\N\)-generated sets. Then either \(\interior(\vectSet{P}') \subseteq \interior(\vectSet{P})\) or \(\vectSet{P}' \cap \interior(\vectSet{P})=\emptyset\).
\end{lemma}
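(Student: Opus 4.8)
The plan is to prove the dichotomy directly in the form: if $\vectSet{P}' \cap \interior(\vectSet{P}) \neq \emptyset$, then $\interior(\vectSet{P}') \subseteq \interior(\vectSet{P})$; this gives exactly one of the two alternatives (if the intersection is empty we are already in the second case). So fix a witness $\vect{v}_0 \in \vectSet{P}' \cap \interior(\vectSet{P})$ and let $\vect{v} \in \interior(\vectSet{P}')$ be arbitrary; the goal is to show $\vect{v} \in \interior(\vectSet{P})$. Membership $\vect{v} \in \vectSet{P}$ is immediate from $\vect{v} \in \vectSet{P}' \subseteq \vectSet{P}$, so by Definition \ref{DefinitionInterior} it only remains to verify the absorption condition: for every $\vect{x} \in \vectSet{P}$ there is some $n \in \N$ with $n\vect{v} - \vect{x} \in \vectSet{P}$.

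The key step is to chain the two available absorption witnesses multiplicatively. Given $\vect{x} \in \vectSet{P}$: since $\vect{v}_0 \in \interior(\vectSet{P})$ there is $m \in \N$ with $m\vect{v}_0 - \vect{x} \in \vectSet{P}$; and since $\vect{v} \in \interior(\vectSet{P}')$ and $\vect{v}_0 \in \vectSet{P}'$, there is $k \in \N$ with $k\vect{v} - \vect{v}_0 \in \vectSet{P}' \subseteq \vectSet{P}$. Now use that an $\N$-generated set is closed under multiplication by naturals and under addition: $m(k\vect{v} - \vect{v}_0) \in \vectSet{P}$, hence $mk\vect{v} - \vect{x} = m(k\vect{v} - \vect{v}_0) + (m\vect{v}_0 - \vect{x}) \in \vectSet{P}$. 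Taking $n := mk$ completes the verification, and since $\vect{v} \in \interior(\vectSet{P}')$ was arbitrary we conclude $\interior(\vectSet{P}') \subseteq \interior(\vectSet{P})$.

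I do not expect a genuine obstacle here; the only points requiring care are (i) not assuming $m \geq 1$ or $k \geq 1$ — the algebraic identity $n\vect{v} - \vect{x} = m(k\vect{v} - \vect{v}_0) + (m\vect{v}_0 - \vect{x})$ holds for all $m,k \in \N$, so the argument is robust to degenerate witnesses — and (ii) invoking only the monoid-with-scaling closure of $\N$-generated sets rather than any specific finite generating set, which keeps the whole argument independent of the chosen representation of $\vectSet{P}$ and $\vectSet{P}'$.
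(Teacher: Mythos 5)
Your proof is correct and is essentially the paper's own argument: both fix a witness in $\vectSet{P}' \cap \interior(\vectSet{P})$, obtain the two absorption witnesses, and combine them via the identity $mk\vect{v} - \vect{x} = m(k\vect{v} - \vect{v}_0) + (m\vect{v}_0 - \vect{x})$ using closure of $\N$-generated sets under addition and natural scaling. Your extra remarks about degenerate witnesses and the membership condition $\vect{v} \in \vectSet{P}$ are fine but do not change the argument.
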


\begin{proof}
Assume that \(\vectSet{P}' \cap \interior(\vectSet{P})\neq \emptyset\). We have to prove that \(\interior(\vectSet{P}') \subseteq \interior(\vectSet{P})\). Let \(\vect{w} \in \interior(\vectSet{P}')\). It suffices to prove \(\vect{w} \in \interior(\vectSet{P})\). To prove this, we have to take an arbitrary \(\vect{x} \in \vectSet{P}\) and prove that there exists \(m \in \N\) s.t. \(m \vect{w}-\vect{x} \in \vectSet{P}\). 

Since \(\vectSet{P}' \cap \interior(\vectSet{P}) \neq \emptyset\), there exists \(\vect{v} \in \vectSet{P}' \cap \interior(\vectSet{P})\). By definition of \(\interior(\vectSet{P})\) there exists \(m_1 \in \N\) s.t. \(m_1 \vect{v}-\vect{x} \in \vectSet{P}\). Since \(\vect{w} \in \interior(\vectSet{P}')\) and \(\vect{v} \in \vectSet{P}'\), there exists \(m_2 \in \N\) s.t. \(m_2 \vect{w} - \vect{v} \in \vectSet{P}'\). It follows that 
\[m_1m_2 \vect{w} - \vect{x}=m_1(m_2\vect{w}-\vect{v})+(m_1 \vect{v} - \vect{x})\in \vectSet{P}' + \vectSet{P} \subseteq \vectSet{P},\]

since \(\vectSet{P}' \subseteq \vectSet{P}\) are \(\N\)-generated.
\end{proof}

Now we have everything ready to prove the closure property from the main text.
\LemmaShiftGoodOverapproximation*

\begin{proof}
Part 1 (Intersection): We stated the important observation in the main text, and said it remains to prove what we can now write as \(\interior(\vectSet{L} \cap \vectSet{L}') \subseteq \interior(\vectSet{L})\).

Let \(\vectSet{X} \HybridizationRelation \vectSet{L}\) and \(\vectSet{L} \cap \vectSet{L}'\) be non-degenerate. We claim that \(\interior(\vectSet{L}) \cap \vectSet{L}'\neq \emptyset\): Assume otherwise. Then \(\vectSet{L} \cap \vectSet{L}' \subseteq \vectSet{L} \setminus \interior(\vectSet{L})\). By Lemma \ref{BasicDimensionProperties} and Lemma \ref{LemmaBoundaryLowerDimension}, we obtain 
\[\dim(\vectSet{L} \cap \vectSet{L}') \leq \dim(\vectSet{L} \setminus \interior(\vectSet{L}))<\dim(\vectSet{L})=\dim(\vectSet{L} \cap \vectSet{L}'),\]
where the last equality is due to the intersection being non-degenerate. This is a contradiction.

Hence \(\interior(\vectSet{L}) \cap \vectSet{L}' \neq \emptyset\). By the dichotomy in Lemma \ref{LemmaInteriorDichotomy}, we obtain \(\interior(\vectSet{L} \cap \vectSet{L}') \subseteq \interior(\vectSet{L})\) as required.
\end{proof}

\subsection{Appendix of Section \ref{SectionMainAlgorithm}}

\LemmaRestatableEquivalentlyZeroTest*

\begin{proof}
We remark that the lemma explicitly allows us to zerotest different coordinates in the input and output, in fact in our construction the initial configuration will be \(\vect{0}\).

Let \(m\) be the number of counters of \(\VAS\). We will create \(\VAS'\) with \(n=4m\) counters. The idea is simple: We will start at the configuration \(0^{4m}\), and write some element of \(\vectSet{R}\) onto the first \(2m\) counters. Afterwards, we will move an element of \(\vectSet{L}\) from the first \(2m\) to the last \(2m\) counters, and finish with testing the first \(2m\) counters for \(0\). I.e.\ in total we use the zero test \(WZT(\{1,\dots, 4m\}, \{1,\dots, 2m\}, 4m)\). Since we moved an element of \(\vectSet{L}\), we have indirectly checked membership in \(\vectSet{L}\), and since we wrote an element of \(\vectSet{R}\), we checked membership in \(\vectSet{R}\).

Formally: Write \(\VAS=(Q,E,\qin, \qfin)\) and \(\vectSet{L}=\vect{b}+\N(\{\vect{f}_1, \dots, \vect{f}_k\})\). We define \(\VAS'=(Q',E',q_1,q_{\vect{f}_k})\) as follows:
\begin{align*}
&Q':=\{q_1, \dots, q_m\} \cup Q \cup \{q_{\vect{f}_j} \mid 1 \leq j \leq k\} \\
&E':=\{(q_j, q_{j+1}) \mid 1 \leq j \leq m-1\} \cup E \\
&\cup \{(q_m, \qin), (\qfin, q_{\vect{f}_1})\} \cup \{(q_{\vect{f}_j}, q_{\vect{f}_{j+1}}) \mid 1 \leq j \leq k-1\} \\
&\cup \{(q_j, q_j) \mid 1 \leq j \leq m\} \cup \{(q_{\vect{f}_j}, q_{\vect{f}_j}) \mid 1 \leq j \leq k\}.
\end{align*}
Edges \(e \in E\) keep their label \(\vectSet{R}(e)\), except they now operate on the second \(m\) counters, because this is closer to the intuition above. All the other edges only perform addition. Therefore with \(\vect{a}\in \Z^{4m}\) we write \(\vectSet{R}(q,q')=\vect{a}\) to express that \(\vectSet{R}(q,q')=\{(\vect{x}, \vect{y})\in \N^{4m} \times \N^{4m} \mid \vect{y}=\vect{x}+\vect{a})\}\).

We define \(\vectSet{R}(q_j, q_{j+1})=\vectSet{R}(q_m, \qin)=\vectSet{R}(q_{\vect{f}_j}, q_{\vect{f}_{j+1}})=0^{4m}\) as well as \(\vectSet{R}(q_j,q_j)=(\vect{e}_j, \vect{e}_j, 0^m, 0^m)\), where \(\vect{e}_j \in \N^m\) is the \(j\)-th unit vector. Finally, \(\vectSet{R}(\qfin, q_{\vect{f}_1})=(-\vect{b},+\vect{b})\) and \(\vectSet{R}(q_{\vect{f}_j}, q_{\vect{f}_j})=(-\vect{f}_j,\vect{f}_j)\), remember that both \(\vect{f}_j\) and \(\vect{b}\) are vectors in \(\N^{2m}\).

The states \(q_1, q_m\) ensure that once we enter \(\qin\), we have a configuration \((\vect{x}, \vect{x}, 0^m, 0^m)\) for some \(\vect{x} \in \N^m\). To ensure this, \(q_j\) adds an arbitrary amount to counters \(j, j+m\). Then we run \(\VAS\) without change, but on the second \(m\) counters, to end up with \((\vect{x}, \vect{y}, 0^m, 0^m)\) s.t. \((\vect{x}, \vect{y}) \in \Rel(\VAS,\qin,\qfin)\).

Next we subtract an arbitrary element of \(\vectSet{L}\) from the first \(2m\) counters and readd it on the last \(2m\) counters. To do this, the states \(q_{\vect{f}_j}\) are responsible for moving period \(\vect{f}_j\) respectively, and the edge \((\qfin, q_{\vect{f}_1})\) moves \(\vect{b}\).

Regarding dimension, we observe that every state \(q \not \in Q\) has exactly one cycle, i.e.\ the corresponding cycle space dimension is \(1\). Therefore the maximal cycle space is attained in \(Q\), which is an SCC we did not touch, i.e.\ we still have \(\dim(\VAS')=\dim(\VAS)\). Moreover, clearly the aforementioned description of \(Q'\) and \(E'\) can be written in polynomial time as claimed.
\end{proof}

\subsection{Appendix of Section \ref{SectionProofTheoremEVASS}}

There are some minor checks we did not perform in the main text, in particular the following:

\begin{enumerate}
\item Why are edges inside an SCC always monotone, i.e.\ why are the new \(\VAS_j\) still \emph{monotone} \(\RelationClass\)-eVASS?
\item Why do cleaning properties not disturb the respective properties restored before?
\item Why does the rank decrease/not increase?
\end{enumerate}

Hence we have to revisit every one of the decomposition steps, and ensure these three properties.

\textbf{\COne}: Monotonicity inside SCCs is preserved as we do not change edge labels. By removing full SCCs we can at most cause a decrease in the rank.

\textbf{\CTwo}: Monotonicity inside SCCs is preserved as we explicitly required it in Definition \ref{DefinitionApproximable}, which our \(\ApproximationAlgorithm\) adheres to. Since we only change edges and not states, the rank remains the same by definition, or may decrease if approximating an edge happened to reduce cycle space dimension.

\textbf{\WeakPOne}: We only change the label of \(\vectSet{L}(e_i)\), which is not inside an SCC. Both monotonicity and rank only depend on SCCs.

\textbf{\WeakPThree}: Monotonicity is preserved since we did not change anything inside an SCC. Clearly we remain a line graph, and no edge labels have changed, preserving \CTwo. Regarding \WeakPOne \ and \WeakPTwo, write the set of solutions of \(\CharSys\) as \(\vectSet{B}+\N(\vectSet{F})\). After the replacement, we have the set of solutions \(\vectSet{B}'+\N(\vectSet{F})\), where \(\vectSet{B}' \subseteq \vectSet{B}\) contains the solutions assigning value \(f\) to \(y_{i,j}\). Since boundedness of variables only depends on \(\N(\vectSet{F})\), bounded variables remain bounded and unbounded variables remain unbounded.

Regarding the rank, observe that the rank is defined only considering non-trivial strongly-connected components, and the only SCCs we add are trivial.

\textbf{\WeakPTwo}: Deleting a counter which is already stored in the state only performs changes on edge labels. Projecting preserves monotonicity. Furthermore, we restricted the behaviour of edges, therefore the cycle space did not increase. Since we moreover left the number of states the same, the rank is preserved.

\textbf{\POne}: Let \((\VAS_i=(Q_i, E_i), \qini, \qfini)\) be an SCC in an m-eVASS s.t. some edge or period is bounded. For every edge \(e\) we write \(\vectSet{L}(e)=\vect{b}(e)+\N(\vectSet{F}(e))\). Let \(\vectSet{F}'(e) \subseteq \vectSet{F}(e)\) be the set of non-monotonicity(!) periods which are bounded, where a monotonicity period is a period \(\vect{p}\) which fulfills \(\Effect(\vect{p})=\vect{0}\). For all \(\vect{p} \in \vectSet{F}'(e)\), let \(k(\vect{p})\) be the maximal number of times this period can be used. Let \(E_i' \subseteq E_i\) be the set of edges which are bounded. For all \(e \in E_i'\), let \(k(e)\) be the maximal number of times edge \(e\) can be taken. Let \([k]:=\{0, \dots, k\}\) be the index set of \(k\), beware we start at \(0\) and end at \(k\), so we include \(k+1\) values. We use the new set of states \(Q_i':=Q_i \times W\) where
\[W:= \prod_{e \in E_i'} [k(e)] \times \prod_{e \in E_i, \vect{p} \in \vectSet{F}'(e)} [k(\vect{p})],\]
i.e.\ we track \emph{every} bounded value simultaneously in the state. The idea for the set of edges \(E_i'\) is as follows: If we take a bounded edge \(e\), we increment the corresponding counter in the state by \(1\), blocking if we are at the maximum already. Similarly, if we use a bounded period \(\vect{p}\) some \(j \leq k(\vect{p})\) number of times, then we increment the corresponding counter by \(j\), blocking if we would exceed the maximum. Formally, for every \(e=(p,q) \in E_i'\), every \((p, \vect{w}_1, \vect{w}_2) \in Q_i'\) and every vector \(\vect{w} \in \N^{\vectSet{F}'(e)}\) (counting how often we use the periods) s.t. \(\vect{w}_1(e)\leq k(e)-1\) and \(\vect{w}_2(\vect{p})+\vect{w}(\vect{p})\leq k(\vect{p})\) for all \(\vect{p} \in \vectSet{F}'(e)\), we add an edge 
\[((p, \vect{w}_1, \vect{w}_2), (q, \vect{w}_1+1_e, \vect{w}_2+\vect{w})) \in E_i',\]
where \(\vect{w}_1+1_e\) is the same as \(\vect{w}_1\), except \((\vect{w}_1+1_e)(e)=\vect{w}_1(e)+1\). We add similar edges for \(e=(p,q) \in E_i \setminus E_i'\), the difference is we do not add any \(1_e\) to \(\vect{w}_1\).

For edge labels, there is an important mistake which has to be avoided: After a degenerate intersection, we might lose the fact that \(\vectSet{R}(e) \HybridizationRelation \vectSet{L}(e)\). Since bounding a period in particular is a degenerate intersection, we will hence most certainly lose property \CTwo. Instead let \(e'=((p, \vect{w}_1, \vect{w}_2), (q, \vect{v}_1, \vect{v}_2)) \in E_i',\) and let \(e=(p,q)\in E_i\) be the edge \(e'\) originated from. We define \(\vect{b}'(e):=\vect{b}(e)+(\vect{v}_2-\vect{w}_2) \cdot \vectSet{F}'(e)\) and use the label 
\[\vectSet{R}(e'):=\left(\vect{b}'(e)+\N(\vectSet{F}(e) \setminus \vectSet{F}'(e))\right) \cap \vectSet{R}(e).\]

Finally, we set \(\qini':=(\qini, \vect{0}, \vect{0})\), and define the decomposed m-EVASS. We create for every \((\vect{w}_1, \vect{w}_2) \in W\) an m-eVASS 
\[\VAS_{i, \vect{w}_1, \vect{w}_2}:=(Q_i', E_i', \qini', (\qfini, \vect{w}_1, \vect{w}_2)).\] 
We define \(\VAS_{\vect{w}_1, \vect{w}_2}\) as an adaptation of the full m-eVASS \(\VAS\): Namely we replace \(\VAS_i\) by \(\VAS_{i, \vect{w}_1, \vect{w}_2}\). Finally we return \(\{\VAS_{\vect{w}_1, \vect{w}_2} \mid \vect{w}_1, \vect{w}_2 \text{ as above}\}\).

Edge labels \(\vectSet{R}(e')\) are still monotone since we have explicitly not removed the monotonicity periods and, as intersection of monotone relations, \(\vectSet{R}(e')\) is monotone again.

To see that the rank decreases, we refer to \cite{LerouxS19}: Leroux observed that when removing all bounded objects at once, the dimension of the vector space of cycle effects decreases. Observe that not removing the monotonicity periods does not influence the vector space of cycle effects. Since the dimension decreases, we are allowed to add as many SCCs/states of the lower dimension as we like while still decreasing the rank.

\textbf{\PTwo}: In the main text we left a lemma without proof, which we will prove here, and moreover we have to prove that coverability scales with dimension, not the number of counters.

\LemmaRestatableEquivalentPumpingCondition*

\begin{proof}
For both implications write \(\pi_{\src_i}(\sol(\CharSys(\VAS)))=\vectSet{B}+\N(\vectSet{F})\), and observe \(\pi(\vectSet{B})=\{\vect{b}'\}\).

``\(\Rightarrow\)'': Choose configuration \(\qin(\vect{b}+1^{n_i})\), and apply property \PTwo: We obtain vectors \(\vect{x}, \vect{y}\) s.t. \((\vect{x}, \vect{y}) \in \Rel(\VAS_i, \qini, \qini)\),  \(\vect{x} \in \vectSet{B}+\N(\vectSet{F})\) and \(\vect{y} \geq \vect{b}+1^{n_i}\). Defining \(\mathbf{up}:=\vect{y}-\vect{b}\) and observing that deleting counters makes existence of runs easier, we obtain \((\vect{b}', \vect{b}'+\pi(\mathbf{up}))=(\pi(\vect{x}), \pi(\vect{y})) \in \vectSet{R}(\pi(\VAS_i))\) as required.

``\(\Leftarrow\)'': By definition of monotone \(\RelationClass\)-eVASS, all edges inside an SCC have monotone semantics \(\vectSet{R}(e)\). Since \(\VAS_i\) and hence \(\pi(\VAS_i)\) is strongly-connected, \emph{every} edge has monotone semantics. This implies that we also have \((\pi(\vect{b})+m\mathbf{up}, \pi(\vect{b})+(m+1)\mathbf{up}) \in \Rel(\pi(\VAS_i), \qini, \qini)\) for all \(m \in \N\). By transitivity of \(\Rel(\pi(\VAS_i), \qini, \qini)\) we therefore obtain \((\pi(\vect{b}), \pi(\vect{b})+m\mathbf{up}) \in \Rel(\pi(\VAS_i), \qini, \qini)\) for all \(m \in \N\), reaching arbitrarily large configurations.

To see that this transfers back to \(\VAS_i\), again use monotonicity: Let \(p(\vect{x}_t)\) be any configuration of \(\VAS_i\). We have to show that we can cover it. Let \(m:=||\vect{x}_t||_{\infty}\), and let \(\rho\) be a run in \(\pi(\VAS_i)\) reaching \(\geq m\) on every counter. Write \(\rho=p_0(\vect{x}_0'), \dots, p_k(\vect{x}_k')\) with \(\vect{x}_0'=\pi(\vect{b}), p_0=\qini\) and \(p_k=p\). Since the steps \((\vect{x}_l', \vect{x}_{l+1}')\in \pi(\vectSet{R}(e_l))\) exist in the projection, there exist vectors \(\vect{x}_l, \vect{y}_l \in \N^{n_i}\) s.t. \(\pi(\vect{x}_l)=\vect{x}_l', \pi(\vect{y}_l)=\vect{x}_{l+1}'\) and \((\vect{x}_l, \vect{y}_l) \in \vectSet{R}(e_l)\). Define \(\vect{x} \in \N^{n_i}\) by \(\vect{x}[j]=\vect{x}_0'[j]\) if \(j \in I\), and \(\vect{x}[j]=\sum_{l=0}^{k-1} \vect{x}_l[j]\) otherwise. Similarly define \(\vect{y}\) by \(\vect{y}[j]=\vect{x}_k'[j]\) if \(j \in I\) and \(\vect{y}[j]=\sum_{l=1}^{k} \vect{y}_l[j]\) otherwise. In particular \(\pi(\vect{x})=\pi(\vect{b})\) and \(\pi(\vect{y})[j] \geq m\) for every \(j \in I\). By monotonicity and transitivity we have \((\vect{x}, \vect{y}) \in \Rel(\VAS_i, \qini, p)\). Since \(\pi\) only projects away coordinates which can be increased by \(\N(\vectSet{F})\), by taking every period of \(\vectSet{F}\) a total of \(\InfinityNorm{\vect{x}}\) many times, we have \(\vect{x}_{\text{new}}:=\vect{b}+\InfinityNorm{\vect{x}} \sum_{\vect{f} \in \vectSet{F}} \vect{f}\geq \vect{x}.\) By monotonicity we obtain \((\vect{x}_{\text{new}}, \vect{y}+\vect{x}_{\text{new}}-\vect{x}) \in \Rel(\VAS_i, \qini, p)\), obtaining a run in \(\VAS_i\) starting in \(\pi_{\src_i}(\sol(\CharSys(\VAS)))\) and reaching large values as claimed.
\end{proof}

\textbf{Coverability depending on d}: It remains to prove that as claimed in the main text, the complexity of the backwards coverability algorithm scales with the dimension \(d\), not the number of counters \(n\). Let \(\vectSet{BReach}\) be the set of backwards reachable vectors. 

Our technique is as follows: In order to bound the complexity of the backwards coverability algorithm, we will exhibit a semilinear set \(\vectSet{S}\) s.t. \(\vectSet{BReach} \subseteq \vectSet{S}\), and an injective function \(\rank' \colon \vectSet{S} \to F \times \N^d\) for some finite set \(F\), which is bounded by an elementary function. We order \(F \times \N^d\) pointwise on \(\N^d\) and on \(F\) we use equality. We require \(\rank'\) to have the following property, which one might call a form of order-preserving:

\begin{enumerate}
\item[(OP)] Let \(\vect{x}, \vect{y} \in \vectSet{S}\) with \(\rank'(\vect{x})\leq \rank'(\vect{y})\). Then \(\vect{x} \leq \vect{y}\).
\end{enumerate}

Observe that this would suffice to prove the claimed complexity bound of the backwards coverability algorithm: The sequence of vectors \(\vect{x}_1, \vect{x}_2,\dots\) added by the backwards coverability algorithm has no increasing pair. Since \(\vectSet{BReach} \subseteq \vectSet{S}\), we can consider the sequence \(\rank'(\vect{x}_1), \rank'(\vect{x}_2),\dots\), which by (OP) again contains no increasing pair. Hence in order to bound the length of the sequence \(\vect{x}_1, \vect{x}_2, \dots\), we bound the length of \(\rank'(\vect{x}_1), \rank'(\vect{x}_2), \dots\) instead, for which we can now apply Proposition \ref{PropositionFastGrowingComplexity} with \(k=d\) to obtain \(\mathfrak{F}_{\alpha+d+1}\).

Therefore it suffices to define \(\vectSet{S}\) and \(\rank'\). Let \(\vectSet{V}\) be the vector space of cycle effects. We have \(\dim(\vectSet{V}) \leq d\), and for all states \(q \in Q_i\), there is a vector \(\vect{v}_q \in \Q^n\) s.t. whenever we enter state \(q\), our counter valuation is in \(\vect{v}_q+\vectSet{V}\). Hence we have \(\vectSet{BReach} \subseteq (\{\vect{v}_q \mid q\in Q_i\} + \vectSet{V}) \cap \N^n=:\vectSet{S}\). Observe that \(\vectSet{S}\) is semilinear. Furthermore, we have \(\dim(\vectSet{S}) \leq \dim(\vectSet{V}) \leq d\) by Lemma \ref{BasicDimensionProperties}. 

It remains to define the function \(\rank'\), and prove (OP). We use a theorem by Ginsburg and Spanier \cite{ginsburg1963bounded}, or more precisely we use \cite[Thm.~12]{ChistikovH16}, where the exponential bound for the involved objects was proven:

\begin{theorem}
\cite[Thm.~12]{ChistikovH16} Let \(\vectSet{S}\) be any semilinear set. 

One can in exponential time compute an \emph{unambiguous decomposition} of \(\vectSet{S}\), i.e.\ a semilinear description \(\vectSet{S}=\bigcup_{j=1}^k \vect{b}_j + \N(\vectSet{F}_j)\) s.t. the union is disjoint, and \(|\vectSet{F}_j|=\dim(\Q(\vectSet{F}_j))\), i.e.\ the sets \(\vectSet{F}_j\) of periods are linearly independent.
\end{theorem}

The reason this decomposition is called unambiguous is that every point \(\vect{x} \in \vectSet{S}\) can then be uniquely represented by the index \(j\) s.t. \(\vect{x} \in \vect{b}_j+\N(\vectSet{F}_j)\), together with the coefficients \(\lambda_1, \dots, \lambda_{|\vectSet{F}_j|} \in \N\) of the unique linear combination. 

In our case, we compute an unambiguous decomposition of \(\vectSet{S}\), the overapproximation of \(\vectSet{BReach}\). Since \(\dim(\vectSet{S}) \leq d\), we obtain \(\vectSet{S}=\bigcup_{j=1}^k \vect{b}_j + \N(\vectSet{F}_j)\) where \(|\vectSet{F}_j| \leq d\ \forall j\). This allows us to represent any \(\vect{x} \in \vectSet{S}\) uniquely by \(\rank'(\vectSet{S})=(j, \lambda_1, \dots, \lambda_{|\vectSet{F}_j|}, 0, \dots, 0) \in \{1,\dots, k\} \times \N^d\), i.e.\ the representation as above potentially padded with \(0\)'s if \(|\vectSet{F}_j| < d\).

\textbf{Proof of (OP)}: The function \(\rank'\) is injective since the decomposition of \(\vectSet{S}\) is unambiguous, and if \(\rank'(\vect{x}) \leq \rank'(\vect{y})\), then \(\vect{y} \in \vect{x}+\N(\vectSet{F}_j)\), where \(j\) is the first entry of \(\rank'(\vect{x})\). In particular, \(\vect{x} \leq \vect{y}\), proving (OP).

\textbf{\PTwo \ Rank Decrease}: To see that the rank decreases, observe first of all that since \WeakPTwo \ holds, there is a cycle with non-zero effect on the deleted counter. We can proceed exactly as in \cite{LerouxS19}: Clearly cycles in the new m-eVASS are cycles in the old m-eVASS with effect \(0\) on the deleted counter. Hence the new vector space of cycle effects is contained in the old vector space. Since the cycle with non-zero effect on the deleted counter which existed before does not exist anymore, the vector space and therefore dimension strictly decrease. Therefore it does not matter how many SCC/states we create, we decrease in the lexicographic ordering.

Projecting counters clearly preserves monotonicity. 

%

\subsection{Example that asymptotic overapproximations have to be basic}
In \CTwo, we required that the asymptotic overapproximations of edges \(e\) have to be basic. This is necessary in order for the algorithm to be correct. If some edge \(e\) had an asymptotic overapproximation which is not basic, then \(\pi_{\src, \tgt}(\sol(\CharSys))\) might not be an asymptotic overapproximation even if the m-eVASS is perfect. As a counter example we use the \(2\)-dimensional monotone \(Semil\)-eVASS in Figure \ref{FigureExamplePropertyOne}.

\begin{figure}[h!]
\begin{centering}
\begin{tikzpicture}
		
		\newcommand*{\distancesubx}{2.5cm}
	
		\node[place, double] (A) at (1,0) {q};
		\node[place, double] (B) at (8, 0) {p};
		\node[white!100] (C) at (0,0) {};
		
		\path[->, thick, out=30, in=150, looseness=0.6] (A) edge[] node[above] {\(\vectSet{R}(e_1)=\text{dec}(x)+\N(\{\text{inc}(x,y),\text{dec}(x,y),\text{dec}(x)\})+\text{Id}\)} (B);
		\path[->, thick, out=210, in=-30, looseness=0.6] (B) edge[] node[below] {\(\vectSet{R}(e_2)=\vectSet{L}(e_2)=\text{dec}(y)+\text{Id}\)} (A);
		\path[->, thick] (C) edge[] (A);
\end{tikzpicture}
\end{centering}

\caption{Example that asymptotic overapproximations have to be basic.}\label{FigureExamplePropertyOne}
\end{figure}

The labels of the edges are respectively increments and decrements of variables only. In particular \(\text{inc}(x,y):=((0,0),(1,1))\), \(\text{dec}(x,y):=((1,1), (0,0))\), \(\text{dec}(x):=((1,0),(0,0))\) and \(\text{dec}(y):=((0,1), (0,0))\) increment and decrement the respective variables. Finally, adding \(\text{Id}:=\N(\{((1,0),(1,0)), ((0,1),(0,1))\})\) guarantees monotonicity. Edge \(e_2\) is basic, while for edge \(e_1\) we have \(\vectSet{L}(e_1)=\N(\{\text{inc}(x,y),\text{dec}(x,y),\text{dec}(x)\})+\text{Id}\), i.e.\ the the decrement on \(x\) does not have to be performed. The source configuration is \(q(1,1)\) and the target is \(q(2,1)\). 

Intuition: The only reason for requiring a \(2\)-dimensional example and adding the ``\(\text{inc}(x,y), \text{dec}(x,y)\)'' parts to \(e_1\) is to ensure that \PTwo \ holds, because we can increment and decrement both variables in tandem. The actual basis of the example however is \(1\)-dimensional with the value \(y-x\): Since \(\vectSet{R}(e_1) \HybridizationRelation \vectSet{L}(e_1)\) is not basic, \(\CharSys\) believes that using \(e_1 e_2\) without any periods decrements \(y\) without decrementing \(x\), therefore decreasing \(y-x\). However, in actuality \(x\) would also be decremented and \(y-x\) would therefore remain the same. This leads to \(\CharSys\) not detecting that the value \(y-x\) cannot be decreased, claiming reachability.

Formally: Clearly all properties except \CTwo \ hold. However, there is a semilinear inductive invariant \(\vectSet{S}=q:\{(x,y) \mid y-x \geq 0\}, p: \{(x,y) \mid y-x \geq 1\}\) proving non-reachability.

\subsection{Appendix of Section \ref{SectionHybridization}}

This section of the appendix is split into two parts. First we have to introduce some theory regarding \emph{directed hybridlinear} sets, a subclass of hybridlinear sets which are close to linear sets, but admits a closure property which the class of linear sets does not possess (Lemma \ref{LemmaDirectedHybridlinearNondegenerateIntersection}). To this end, we will introduce multiple equivalent definitions of directed hybridlinear. This is done in part 1. 

In part 2, we then introduce heavy theory from the VASS community to prove Theorem \ref{TheoremStronglyApproximableEVASS}. In particular we will define \(\HybridizationRelation_s\).

\subsection{Appendix of Section \ref{SectionHybridization}, Part 1}

In this part we define directed hybridlinear sets and prove basic properties for this class. Towards this end, we recall some lemmas regarding \(\mathbb{S}\)-finitely generated (f.g.) sets for \(\mathbb{S} \in \{\N, \Q_{\geq 0}, \Z, \Q\}\). Recall that \(\mathbb{S}(\vectSet{F}):=\{ \sum_{i=1}^m \lambda_i \vect{f}_i \mid m \in \N, \vect{f}_i \in \vectSet{F}, \lambda_i \in \mathbb{S}\}\) and for (finite) $\vectSet{F}$ we say that the set \(\mathbb{S}(\vectSet{F})\) is \(\mathbb{S}\)-(finitely) generated. Also recall that a partial order \((\vectSet{X}, \leq_{\vectSet{X}})\) is a \emph{well-quasi-order} (wqo) if \(\leq\) is well-founded and every subset \(\vectSet{U} \subseteq \vectSet{X}\) has finitely many minimal elements. The most famous example of a wqo is \((\N^n, \leq_{\N^n})\), where \(\leq_{\N^n}\) is the component-wise \(\leq\) ordering.

For \(\mathbb{S} \in \{\Z, \Q\}\) every \(\mathbb{S}\)-g. set is \(\mathbb{S}\)-f.g., because the whole space \(\Q^n\) or respectively \(\Z^n\) is finitely generated, and moving to a substructure does not increase the number of necessary generators. For \(\Q\)-generated sets, this is well-known linear algebra, for \(\Z\)-generated sets this follows from the existence of the hermite normal form for integer matrices (see \cite{LinearProgramming}, Chapter 4). For \(\mathbb{S} \in \{\N, \Q_{\geq 0}\}\) the picture is different. 

For \(\mathbb{S} = \Q_{\geq 0}\) we have:

\begin{lemma}\cite[Cor. 7.1a]{LinearProgramming}
Let \(\vectSet{C} \subseteq \Q^n\) be a \(\Q_{\geq 0}\)-g. set.

Then \(\vectSet{C}=\{\vect{x} \in \Q^n \mid A \vect{x} \geq \vect{0}\}\) is the preimage of \(\Q_{\geq 0}^{n'}\) for some matrix \(A \in \Z^{n' \times n}\) iff \(\vectSet{C}\) is \(\Q_{\geq 0}\)-finitely generated.\label{LemmaFinitelyGeneratedCone}%
\end{lemma}

The essential idea behind this lemma is the following: A \(\Q_{\geq 0}\)-f.g. set (or equivalently the set of points on some ray from the top of a pyramid to the base) can be classified depending on what the base of the pyramid looks like. For an ice cream cone this is a circle, for the pyramids in Egypt a square. The lemma classifies pyramids where the base is a polygon: The base of the pyramid has finitely many vertices (generators of the cone) if and only if the base has finitely many faces/side edges. Every side edge gives rise to an inequality \(\vect{a} \vect{x} \geq 0\), which we can rescale to obtain \(\vect{a} \in \Z^n\).

For \(\mathbb{S} = \N\)  in \cite{Leroux13} Leroux proved a similar characterization.

\begin{definition}
Let \(\vectSet{P}\) be \(\N\)-g.. The canonical partial order on \(\vectSet{P}\) is \(\leq_{\vectSet{P}}\) defined via \(\vect{x} \leq_{\vectSet{P}} \vect{y}\) if \(\vect{y}=\vect{x}+\vect{p}\) for some \(\vect{p}\in \vectSet{P}\).
\end{definition}

\begin{lemma}\cite[Lemma V.5]{Leroux13}
Let \(\vectSet{P} \subseteq \N^n\) be \(\N\)-g.. T.F.A.E.:

\begin{enumerate}
\item[(1)] \(\vectSet{P}\) is \(\N\)-finitely generated.
\item[(2)] \((\vectSet{P}, \leq_{\vectSet{P}})\) is a wqo. 
\item[(3)] \(\Q_{\geq 0} \cdot \vectSet{P}\) is \(\Q_{\geq 0}\)-finitely generated.
\end{enumerate} \label{LemmaFinitelyGeneratedPeriodicSets}
\end{lemma}

\begin{proof}[Proof sketch]
(1) \(\Rightarrow\) (3): Obvious (use the same generators).

 (3) \(\Rightarrow\) (2): (Only intuition): We show this by providing an order isomorphism \((\vectSet{P}, \leq_{\vectSet{P}}) \simeq (\N^{n'}, \leq)\) to the wqo on \(\N^{n'}\).
 
The ordering \(\leq_{\vectSet{P}}\) (up to minor details) fulfills \(\vect{x} \leq_{\vectSet{P}} \vect{y}\) iff \(A \vect{x} \leq A \vect{y}\), where \(A\) is the matrix of Lemma \ref{LemmaFinitelyGeneratedCone} for \(\Q_{\geq 0} \cdot \vectSet{P}\).
We have \(A\vect{x}\) and \(A\vect{y}\) in \(\N^{n'} = \Z^{n'} \cap \Q_{\geq 0}^{n'}\). Indeed, \(A\vect{x}\) is an integer because we performed multiplication and addition of integers. Additionally $\vect{x} \in \vectSet{P} \subseteq \Q_{\geq 0} \cdot \vectSet{P}\) is in the preimage of \(\Q_{\geq 0}^{n'}\) by choice of \(A\). Hence multiplication by \(A\) is the required order isomorphism. Intuitively, this isomorphism shows that $\leq_{\vectSet{P}}$ orders the points with respect to the distance to the borders of the \(\Q_{\geq 0}\)-f.g. set \(\Q_{\geq 0} \cdot \vectSet{P}\): the larger the distances to the borders the larger the point w.r.t. $\leq_{\vectSet{P}}$.
 
 (2) \(\Rightarrow\) (1): As \((\vectSet{P}, \leq_{\vectSet{P}})\) is a wqo, \(\vectSet{P} \setminus \{\vect{0}\}\) has finitely
 many minimal elements w.r.t. \(\leq_{\vectSet{P}}\). These generate $\vectSet{P}$.
\end{proof}

So far these results are known, we now utilize the ordering \(\leq_{\vectSet{P}}\) of Lemma \ref{LemmaFinitelyGeneratedPeriodicSets}(2) to classify semilinear sets.

\begin{restatable}{definition}{DefinitionRestatablePreservants} \label{DefinitionPreservants}
Let \(\vectSet{X} \subseteq \N^n\). A vector \(\vect{p}\) is a \emph{preservant} of \(\vectSet{X}\) if \(\vectSet{X}+\vect{p} \subseteq \vectSet{X}\). The set of all preservants is denoted \(\PX\).
\end{restatable}

Clearly \(\PX\) is \(\N\)-generated, i.e.\ closed under addition, for every set \(\vectSet{X}\). In many cases however, \(\PX\) is very small compared to \(\vectSet{X}\), see the left of Figure \ref{FigureIntuitionCones}. \(\PX\) being large and finitely generated in fact characterizes linear sets and more generally hybridlinear sets.

\begin{figure}[h!]
\begin{minipage}{4.5cm}
\begin{tikzpicture}
\begin{axis}[
    axis lines = left,
    xlabel = { },
    ylabel = { },
    xmin=0, xmax=8,
    ymin=0, ymax=8,
    xtick={0,2,4,6,8},
    ytick={0,2,4,6,8},
    ymajorgrids=true,
    xmajorgrids=true,
    thick,
    smooth,
    no markers,
]

\addplot+[
    name path=A,
    color=blue,
]
coordinates {(0,0) (8,8)};

\addplot+[
    name path=B,
    domain=0:8,
    color=blue,
]
{log2(x+1)};

\addplot[blue!40] fill between[of=A and B];

\addplot[
fill=red,
fill opacity=0.7,
only marks,
]
coordinates {
(0,0)(1,0)(2,0)(3,0)(4,0)(5,0)(6,0)(7,0)(8,0)
};


\end{axis}
\end{tikzpicture}
\end{minipage}%
\begin{minipage}{4.5cm}
\begin{tikzpicture}
\begin{axis}[
    axis lines = left,
    xlabel = { },
    ylabel = { },
    xmin=0, xmax=4,
    ymin=0, ymax=8,
    xtick={0,1,2,3,4},
    ytick={0,2,4,6,8},
    ymajorgrids=true,
    xmajorgrids=true,
    thick,
    smooth,
    no markers,
]
    
\addplot+[
    name path=A,
    color=blue,
    thick,
    ]
    coordinates {
    (0,0)(4,8)
    };
    
\addplot+[
    name path=B,
    color=blue,
    thick,
]
coordinates {
    (0,0)(4,0)
    };
    
\addplot[blue!40] fill between[of=A and B];

\addplot+[
    name path=C,
    color=red,
    thick,
    ]
    coordinates {
    (0,1)(4,5)
    };
    
\addplot+[
    name path=D,
    color=red,
    thick,
]
coordinates {
    (0,1)(4,1)
    };
    
\addplot[red!40] fill between[of=C and D];

\addplot[
    fill=black,
    fill opacity=0.7,
    only marks,
    ]
    coordinates {
    (1,1)(1,2)
    };

\end{axis}
\end{tikzpicture}
\end{minipage}%

\caption{\textit{Left}: The set \(\vectSet{X}:=\{(x,y) \in \N^2 \mid x \geq y \geq \log_2(x+1)\}\) (blue region) \(\cup \)\{x-axis\} (red) fulfills \(\PX=\{\vect{0}\}\). Namely vectors \((x,y)\) with \(y>0\) are \(\not \in \PX\) because of the red points, and for \(x>0, y=0\) the reason is the blue region.  \newline
\textit{Right}: The blue \(\{(x,y) \in \N^2 \mid 0 \leq y \leq 2x\}\) and the red \(1 \leq y \leq 1+x\) regions depict linear sets. \(\vectSet{L}_1 \cap \vectSet{L}_2\) is not linear anymore, as it requires both the black points as base points.}\label{FigureIntuitionCones}
\end{figure}

\begin{restatable}{proposition}{PropositionCharacterizeHybridlinear}
Let \(\vectSet{X}\subseteq \N^n\) be any set. 

Then \(\vectSet{X}\) is hybridlinear if and only if \((\vectSet{X}, \leq_{\PX})\) is a wqo. \label{PropositionCharacterizeHybridlinear}
\end{restatable}

\begin{proof}
If \(\vectSet{X}=\emptyset\) then both statements trivially hold. In the sequel we hence assume \(\vectSet{X} \neq \emptyset\).

``\(\Leftarrow\)'': Let \(\vect{x} \in \vectSet{X}\). Since \((\vectSet{X}, \leq_{\PX})\) is a wqo, also \((\vect{x}+\PX, \leq_{\PX})\) is a wqo. This is isomorphic to \((\PX, \leq_{\PX})\), hence by Lemma \ref{LemmaFinitelyGeneratedPeriodicSets} \(\PX\) is finitely generated. Let \(\vect{b}_1, \dots, \vect{b}_r\) be the minimal elements of \(\vectSet{X}\) w.r.t. \(\leq_{\PX}\), observe that \(\vectSet{X}=\{\vect{b}_1, \dots, \vect{b}_r\}+\PX\) and we are done.

``\(\Rightarrow\)'': Write \(\vectSet{X}=\{\vect{b}_1, \dots, \vect{b}_r\}+\N(\vectSet{F})\). Write \(\vectSet{P}:=\N(\vectSet{F})\). We will first show that \((\PX, \leq_{\PX})\) is a wqo by proving the claim \(\Q_{\geq 0} \cdot \vectSet{P}=\Q_{\geq 0} \cdot \PX\) and then using Lemma \ref{LemmaFinitelyGeneratedPeriodicSets}.

By definition of \(\PX\) we have \(\vectSet{P} \subseteq \PX\). Observe that the other inclusion is not always true, as \( \vectSet{X}=\{0,1\}+2\N\) fulfills \(\PX=\N\). On the other hand we claim that \(r! \cdot \PX \subseteq \vectSet{P}\), where \(r!\) denotes the factorial of \(r\), which would finish the proof of \(\Q_{\geq 0} \vectSet{P}=\Q_{\geq 0} \PX\). 

Let \(\vect{p} \in \PX\) be any vector. Consider the map \(f_{\vect{p}} \colon \vectSet{X} \to \vectSet{X}, \vect{x} \mapsto \vect{x}+\vect{p}\). We will prove that this map induces a map \(\tau \colon \{1,\dots, r\} \to \{1,\dots, r\}\). The important observation is that if we have \(f_{\vect{p}}(\vect{b}_i) \in \vect{b}_j +\vectSet{P}\), then in fact also \(f_{\vect{p}}(\vect{b}_i+\vectSet{P}) \subseteq \vect{b}_j + \vectSet{P}\), since 

\(f_{\vect{p}}(\vect{b}_i+\vectSet{P})=f_{\vect{p}}(\vect{b}_i)+\vectSet{P} \subseteq (\vect{b}_j + \vectSet{P})+\vectSet{P} \subseteq \vect{b}_j+\vectSet{P}\).

For every \(i\), since \(f_{\vect{p}}(\vect{b}_i) \in \vectSet{X}\), there is \(\tau(i)=j\) with \(f_{\vect{p}}(\vect{b}_i) \in \vect{b}_j + \vectSet{P}\). Make any such choice of \(\tau\), thereby defining a map \(\tau \colon \{1,\dots, r\} \to \{1,\dots, r\}\). Since \(\tau\) is a map between finite sets, \(\tau\) has some cycle. Let \(m\) be the length of this cycle, and \(i\) some element of the cycle. Since \(m\)-fold application of \(f_{\vect{p}}\) causes \(\vect{b}_i\) to move back into \(\vect{b}_i+\vectSet{P}\), we obtain \(\vect{b}_i+m \vect{p} \in \vect{b}_i+\vectSet{P}\), which implies \(m \vect{p} \in \vectSet{P}\) and also \(r! \vect{p} \in \vectSet{P}\). The claim is therefore proven.

By Lemma \ref{LemmaFinitelyGeneratedPeriodicSets} we obtain the following implications: \(\vectSet{P}=\N(\vectSet{F})\) is \(\N\)-f.g. implies \(\Q_{\geq 0}\vectSet{P}=\Q_{\geq 0} \PX\) is \(\Q_{\geq 0}\)-f.g. implies that \((\PX, \leq_{\PX})\) is a wqo. Observe that \((\PX, \leq_{\PX}) \simeq (\vect{b}_i+\PX, \leq_{\PX})\) for all \(i\), because adding the vector \(\vect{b}_i\) on both sides does not influence the definition \(\vect{y}=\vect{x}+\vect{p}\). Since finite unions of wqo's are wqos, we obtain that \((\vectSet{X}, \leq_{\PX})\) is a wqo as claimed.
\end{proof}

\begin{corollary}
A set \(\vectSet{X}\) is linear if and only if \((\vectSet{X}, \leq_{\PX})\) is a wqo with a unique minimal element.
\end{corollary}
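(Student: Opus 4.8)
The plan is to obtain this as a direct consequence of Proposition \ref{PropositionCharacterizeHybridlinear}, exploiting that a linear set is precisely a hybridlinear set with a single base point. So the whole proof reduces to pinning down what the "single base point" condition translates to in terms of the canonical order $\leq_{\PX}$: namely, the existence of a (unique) minimal element.

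For the forward direction, I would assume $\vectSet{X} = \vect{b} + \N(\vectSet{F})$ is linear. Then $\vectSet{X}$ is in particular hybridlinear, so Proposition \ref{PropositionCharacterizeHybridlinear} already gives that $(\vectSet{X}, \leq_{\PX})$ is a wqo. It remains to see that $\vect{b}$ is the unique minimal element. Since every period is a preservant, $\N(\vectSet{F}) \subseteq \PX$, and hence every $\vect{x} \in \vectSet{X}$ can be written $\vect{x} = \vect{b} + \vect{p}$ with $\vect{p} \in \N(\vectSet{F}) \subseteq \PX$, i.e. $\vect{b} \leq_{\PX} \vect{x}$. Thus $\vect{b}$ lies below every element of $\vectSet{X}$, so it is the unique minimal element with respect to $\leq_{\PX}$.

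For the backward direction, I would assume $(\vectSet{X}, \leq_{\PX})$ is a wqo with a unique minimal element $\vect{b}$. By Proposition \ref{PropositionCharacterizeHybridlinear}, $\vectSet{X}$ is hybridlinear, and moreover (from the "$\Leftarrow$" part of its proof) one has $\vectSet{X} = \{\vect{b}_1, \dots, \vect{b}_r\} + \PX$, where $\vect{b}_1, \dots, \vect{b}_r$ are exactly the minimal elements of $\vectSet{X}$ w.r.t.\ $\leq_{\PX}$, and where $\PX$ is $\N$-finitely generated (this last point being Lemma \ref{LemmaFinitelyGeneratedPeriodicSets} applied to $(\PX, \leq_{\PX})$, which is isomorphic to $(\vect{b}+\PX, \leq_{\PX})$, a sub-wqo of $\vectSet{X}$). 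Uniqueness of the minimal element forces $r = 1$, so $\vectSet{X} = \vect{b} + \PX$; writing $\PX = \N(\vectSet{F}')$ for a finite set $\vectSet{F}'$ yields that $\vectSet{X} = \vect{b} + \N(\vectSet{F}')$ is linear.

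There is no real obstacle here — the content is entirely carried by Proposition \ref{PropositionCharacterizeHybridlinear} and Lemma \ref{LemmaFinitelyGeneratedPeriodicSets}. The only point requiring a little care is the bookkeeping in the backward direction: one must read off from the proof of the proposition that the base points in the hybridlinear decomposition can be taken to be the $\leq_{\PX}$-minimal elements, so that "unique minimal element" genuinely collapses the decomposition to $r = 1$ rather than merely giving some decomposition with one of several base points minimal.
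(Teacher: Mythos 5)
Your proof is correct and follows exactly the route the paper intends: the corollary is stated without proof as an immediate consequence of Proposition \ref{PropositionCharacterizeHybridlinear}, and your argument (periods are preservants, so the base point of a linear set is the unique $\leq_{\PX}$-minimum; conversely, the hybridlinear decomposition from the proposition's proof uses the minimal elements as base points, so uniqueness forces a single base point, with $\N$-finite generation of $\PX$ coming from Lemma \ref{LemmaFinitelyGeneratedPeriodicSets}) is precisely the intended justification.
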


However, rather than a unique minimal element, an important condition for a wqo is \emph{directedness}.

\begin{definition}
Let \((\vectSet{X}, \leq_{\vectSet{X}})\) be a wqo. We say that \((\vectSet{X}, \leq_{\vectSet{X}})\) is \emph{directed} if for all \(\vect{x},\vect{y} \in \vectSet{X}\), there exists \(\vect{z} \in \vectSet{X}\) such that \(\vect{x} \leq_{\vectSet{X}} \vect{z}\) and \(\vect{y} \leq_{\vectSet{X}} \vect{z}\). 

\(\vectSet{X}\) is \emph{directed hybridlinear} if \((\vectSet{X}, \leq_{\PX})\) is a directed wqo.
\end{definition}

A good intuition is that a set \(\vectSet{L}\) is directed hybridlinear if it is equal to a linear set minus finitely many points. Though as a formal statement this is only true in dimension \(2\): Starting in dimension \(3\) for example the set \(\{(1,1,0), (1,0,1),(0,1,1)\}+\N^3\) is directed hybridlinear, but missing three lines compared to the linear set \(\N^3\). In dimension \(4\) we can remove planes from \(\N^4\) and so on.

Next we give equivalent definitions of directed hybridlinear, preventing descriptions like \(\N=\{0,1\}+(2\N)\) in the sequel.

\begin{restatable}{lemma}{LemmaDirectedHybridlinearEquivalence}\label{LemmaDirectedHybridlinearEquivalence}
Let \(\vectSet{X}\) be a hybridlinear set. T.F.A.E.:
\begin{enumerate}
\item[(1)] \(\vectSet{X}\) is directed hybridlinear.
\item[(2)] There exists a representation \(\vectSet{X}=\{\vect{b}_1, \dots, \vect{b}_r\}+\N(\vectSet{F})\) s.t. \((\vect{b}_i+\N(\vectSet{F})) \cap (\vect{b}_j+\N(\vectSet{F})) \neq \emptyset\) for all \(1 \leq i,j \leq r\).
\item[(3)] There exists a representation \(\vectSet{X}=\{\vect{b}_1, \dots, \vect{b}_r\}+\N(\vectSet{F})\) s.t. \(\vect{b}_i-\vect{b}_j \in \Z(\vectSet{F})\) for all \(1 \leq i,j \leq r\).
\end{enumerate}
\end{restatable}

\begin{proof}
For \(\vectSet{X}=\emptyset\) all statements hold. Hence assume \(\vectSet{X} \neq \emptyset\).

(1) \(\Rightarrow\) (2): Since \((\vectSet{X}, \leq_{\PX})\) is a wqo, \(\vectSet{X}\) has finitely many minimal elements \(\vect{b}_1, \dots, \vect{b}_r\). We have \(\vectSet{X}=\{\vect{b}_1, \dots, \vect{b}_r\}+\PX\). We claim that this representation fulfills 2).

Proof of claim: Observe first that since \((\vectSet{X}, \leq_{\PX})\) is a wqo, also \((\PX, \leq_{\PX})\) is a wqo, and hence by Lemma \ref{LemmaFinitelyGeneratedPeriodicSets} \(\PX\) is \(\N\)-f.g.. Using that \((\vectSet{X}, \leq_{\PX})\) is directed for each pair \((\vect{b}_i, \vect{b}_j)\) with \(1 \leq i < j \leq r\), we obtain elements \(\vect{z}_{i,j}\) such that \(\vect{b}_i \leq_{\PX} \vect{z}_{i,j}\) and \(\vect{b}_j \leq_{\PX} \vect{z}_{i,j}\), i.e.\ we have \(\vect{b}_i+\vect{p}=\vect{z}_{i,j}=\vect{b}_j+\vect{p}'\) for some \(\vect{p}, \vect{p}' \in \PX\). Hence \(\vect{z}_{i,j} \in (\vect{b}_i + \PX) \cap (\vect{b}_j + \PX)\), proving non-emptiness.

(2) \(\Rightarrow\) (3): Let \(\{\vect{b}_1, \dots, \vect{b}_r\}+\N(\vectSet{F})\) be such a representation. We claim that the same representation works for 3). Let \(1 \leq i,j \leq r\). By (2) there exists \(\vect{z}_{i,j} \in (\vect{b}_i + \N(\vectSet{F})) \cap (\vect{b}_j+\N(\vectSet{F}))\). I.e. \(\vect{z}_{i,j}-\vect{b}_i \in \N(\vectSet{F})\) and \(\vect{b}_j - \vect{z}_{i,j} \in - \N(\vectSet{F})\). Hence \(\vect{b}_j - \vect{b}_i=(\vect{b}_j - \vect{z}_{i,j})+(\vect{z}_{i,j}-\vect{b}_i) \in \Z(\vectSet{F})\).

(3) \(\Rightarrow\) (1): Let \(\vectSet{X}=\{\vect{b}_1, \dots, \vect{b}_r\}+\N(\vectSet{F})\) be such a representation. Let \(\vect{x}=\vect{b}_i+\vect{p}_{\vect{x}}\) and \(\vect{y}=\vect{b}_j+\vect{p}_{\vect{y}}\) be arbitrary points in \(\vectSet{X}\). First observe that \(\N(\vectSet{F}) \subseteq \PX\), hence it is enough to show that there exists a point \(\vect{z} \in \vectSet{X}\) with \(\vect{x} \leq_{\N(\vectSet{F})} \vect{z}\) and \(\vect{y} \leq_{\N(\vectSet{F})} \vect{z}\). 

Write \(\vect{b}_j - \vect{b}_i=\vect{p}_+ - \vect{p}_-\) with \(\vect{p}_+, \vect{p}_- \in \N(\vectSet{F})\).
Therefore \(\vect{b}_i+\vect{p}_+ = \vect{b}_j+\vect{p}_-\).
Define \(\vect{z}:=\vect{b}_i+\vect{p}_{\vect{x}}+\vect{p}_{\vect{y}} +\vect{p}_+ \in \vectSet{X}\). We have \(\vect{z}=\vect{x}+(\vect{p}_{\vect{y}} + \vect{p}_+) \geq_{\N(\vectSet{F})} \vect{x}\). We also have \(\vect{z}=\vect{y}+(\vect{p}_{\vect{x}}+\vect{p}_-) \geq_{\N(\vectSet{F})} \vect{y}\).
\end{proof}

For an example of a directed hybridlinear set which is not linear, see \(\vectSet{L}_1 \cap \vectSet{L}_2\) on the right of Figure \ref{FigureIntuitionCones}. In fact the figure even shows that a non-degenerate intersection [remember this means that \(\dim(\vectSet{L}_1 \cap \vectSet{L}_2)=\dim(\vectSet{L}_1)=\dim(\vectSet{L}_2)\)] of linear sets is not necessarily linear anymore. On the other hand, an example of a hybridlinear set which is not directed is the union of two parallel lines, for example \(\{(0,0), (0,1)\}+ \N (1,0) \subseteq \N^2\). In Primitive 2, such hybridlinear sets would create problems, since the disjoint components cannot interact, and should be treated differently.

Remember the main goal of this part was to obtain the closure property under non-degenerate intersection. We only need one last lemma from \cite{GuttenbergRE23}.

\begin{lemma}[\cite{GuttenbergRE23}, Prop. 3.9 (3.), special case of finitely generated]
Let \(\vectSet{P}:=\N(\vectSet{F})\) and \(\vectSet{P}':=\N(\vectSet{F}')\) be \(\N\)-f.g. sets with a non-degenerate intersection. Then \(\Z(\vectSet{F}) \cap \Z(\vectSet{F}')=\Z(\vectSet{P} \cap \vectSet{P}')\).\label{LemmaFinitelyGeneratedNondegenerateIntersection}%
\end{lemma}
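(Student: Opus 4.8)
The plan is to prove the two inclusions separately. One direction is free of hypotheses: since $\vectSet{P}\cap\vectSet{P}'\subseteq\vectSet{P}=\N(\vectSet{F})\subseteq\Z(\vectSet{F})$ and likewise $\vectSet{P}\cap\vectSet{P}'\subseteq\Z(\vectSet{F}')$, and since $\Z(\vectSet{F})\cap\Z(\vectSet{F}')$ is a subgroup of $\Z^d$, it already contains every $\Z$-linear combination of elements of $\vectSet{P}\cap\vectSet{P}'$, i.e.\ $\Z(\vectSet{P}\cap\vectSet{P}')\subseteq\Z(\vectSet{F})\cap\Z(\vectSet{F}')$. All the content, and the only place non-degeneracy is used, is the reverse inclusion $\Z(\vectSet{F})\cap\Z(\vectSet{F}')\subseteq\Z(\vectSet{P}\cap\vectSet{P}')$.

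First I would use non-degeneracy to pin down a common ambient vector space. Writing $V:=\Q(\vectSet{F})=\Q(\N(\vectSet{F}))$ and $V':=\Q(\vectSet{F}')$, Lemma~\ref{LemmaFromJerome} applied to the $\N$-generated sets $\vectSet{P}$, $\vectSet{P}'$ and $\vectSet{P}\cap\vectSet{P}'$ gives $\dim V=\dim\vectSet{P}$, $\dim V'=\dim\vectSet{P}'$ and $\dim\Q(\vectSet{P}\cap\vectSet{P}')=\dim(\vectSet{P}\cap\vectSet{P}')$; together with the non-degeneracy hypothesis $\dim(\vectSet{P}\cap\vectSet{P}')=\dim\vectSet{P}=\dim\vectSet{P}'$ and the inclusions $\Q(\vectSet{P}\cap\vectSet{P}')\subseteq V$, $\Q(\vectSet{P}\cap\vectSet{P}')\subseteq V'$, equality of dimensions forces $\Q(\vectSet{P}\cap\vectSet{P}')=V=V'$. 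In particular $\vectSet{P}\cap\vectSet{P}'$ spans all of $V$, so the cone $\Q_{\geq 0}(\vectSet{P}\cap\vectSet{P}')$ is full-dimensional in $V$, hence so are $C:=\Q_{\geq 0}\vectSet{P}$, $C':=\Q_{\geq 0}\vectSet{P}'$ and $C\cap C'$. Pick a lattice point $\vect{v}\in\vectSet{P}\cap\vectSet{P}'$ lying in the topological interior of $C\cap C'$ relative to $V$ (obtained by perturbing, within $C\cap C'$, a point in the relative interior of $\Q_{\geq 0}(\vectSet{P}\cap\vectSet{P}')$ so that all generators appearing have positive coefficient, then clearing denominators). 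Then $\vect{v}$ lies in the relative interiors of $C$ and of $C'$, and a short clearing-of-denominators argument (using $\vect{v}\in\vectSet{P}$) upgrades this to $\vect{v}\in\interior(\vectSet{P})$ and $\vect{v}\in\interior(\vectSet{P}')$ in the sense of Definition~\ref{DefinitionInterior}, in the same spirit as the manipulations in the proof of Lemma~\ref{LemmaIndependentOfRepresentationOfL}.

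Then I would run a pumping argument to absorb lattice errors simultaneously on both sides. Given $\vect{z}\in\Z(\vectSet{F})\cap\Z(\vectSet{F}')$, write $\vect{z}=\sum_{\vect{f}\in\vectSet{F}}\mu_{\vect{f}}\vect{f}$ with $\mu_{\vect{f}}\in\Z$; since $\vect{v}\in\interior(\vectSet{P})$ there are $m_{\vect{f}}\in\N$ with $m_{\vect{f}}\vect{v}-\vect{f}\in\vectSet{P}$, and then $\vect{z}+N'\vect{v}\in\vectSet{P}$ for every $N'\geq N_1:=\sum_{\mu_{\vect{f}}<0}(-\mu_{\vect{f}})m_{\vect{f}}$, because $\vect{z}+N_1\vect{v}$ rewrites as a non-negative integer combination of the $\vect{f}$ with $\mu_{\vect{f}}\geq 0$ and of the vectors $m_{\vect{f}}\vect{v}-\vect{f}$, and one may then add multiples of $\vect{v}\in\vectSet{P}$. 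The same computation with a $\Z(\vectSet{F}')$-representation of $\vect{z}$, which exists since $\vect{z}\in\Z(\vectSet{F}')$, and with $\vect{v}\in\interior(\vectSet{P}')$, yields some $N_2$ with $\vect{z}+N'\vect{v}\in\vectSet{P}'$ for all $N'\geq N_2$. Taking $N:=\max(N_1,N_2)$ gives $\vect{z}+N\vect{v}\in\vectSet{P}\cap\vectSet{P}'$, and since $\vect{v}\in\vectSet{P}\cap\vectSet{P}'$ as well, $\vect{z}=(\vect{z}+N\vect{v})-N\vect{v}\in\Z(\vectSet{P}\cap\vectSet{P}')$, completing the reverse inclusion.

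I expect the main obstacle to be the second step: verifying that non-degeneracy genuinely forces the common vector space $V$ and that $\vectSet{P}\cap\vectSet{P}'$ is full-dimensional in $V$, and then carefully translating the topological interior of $C\cap C'$ into the combinatorial notion $\interior(\cdot)$ of Definition~\ref{DefinitionInterior} while exhibiting a lattice point $\vect{v}\in\vectSet{P}\cap\vectSet{P}'$ inside that interior. The remaining steps are routine — the easy inclusion is immediate, and the pumping step is a direct calculation of the kind already used for Lemma~\ref{LemmaIndependentOfRepresentationOfL}. It is worth noting that the argument never needs $\vectSet{P}\cap\vectSet{P}'$ itself to be $\N$-\emph{finitely} generated; it only uses that $\vectSet{P},\vectSet{P}'$ are, that $\vectSet{P}\cap\vectSet{P}'$ is $\N$-generated, and that it is full-dimensional in $V$ — the last being exactly what non-degeneracy buys.
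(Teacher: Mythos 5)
The paper itself gives no proof of this lemma --- it is imported from \cite{GuttenbergRE23}, Prop.~3.9(3), with only the \(\N\) versus \(-\N\) counterexample as motivation --- so your argument has to stand on its own, and it does: it is correct. The easy inclusion is handled properly, and the hard inclusion rests on the right mechanism, which is also the ``fusing lines along a common interior direction'' technique the paper uses elsewhere in the same appendix (Lemmas~\ref{LemmaNondegenerateInclusion} and~\ref{LemmaNonDegenerateCombineVectors}): non-degeneracy produces a single \(\vect{v} \in \vectSet{P} \cap \vectSet{P}'\) lying in \(\interior(\vectSet{P}) \cap \interior(\vectSet{P}')\) in the sense of Definition~\ref{DefinitionInterior}, and then any \(\vect{z} \in \Z(\vectSet{F}) \cap \Z(\vectSet{F}')\) can be pushed into \(\vectSet{P} \cap \vectSet{P}'\) by adding \(N\vect{v}\) and recovered as \((\vect{z}+N\vect{v}) - N\vect{v} \in \Z(\vectSet{P} \cap \vectSet{P}')\). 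Your final computation \(\vect{z}+N_1\vect{v} = \sum_{\mu_{\vect{f}} \geq 0} \mu_{\vect{f}}\vect{f} + \sum_{\mu_{\vect{f}} < 0} (-\mu_{\vect{f}})(m_{\vect{f}}\vect{v}-\vect{f})\) is exactly right, and your closing observation about which hypotheses are actually used is accurate.

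Two refinements. First, the construction of \(\vect{v}\) via the topological relative interior of \(C \cap C'\) is more roundabout than necessary: the paper's own Lemmas~\ref{LemmaBoundaryLowerDimension} and~\ref{LemmaInteriorDichotomy} deliver it directly, since \(\dim(\vectSet{P} \cap \vectSet{P}') = \dim(\vectSet{P})\) rules out \(\vectSet{P} \cap \vectSet{P}' \subseteq \vectSet{P} \setminus \interior(\vectSet{P})\), whence the dichotomy gives \(\interior(\vectSet{P} \cap \vectSet{P}') \subseteq \interior(\vectSet{P})\) and symmetrically for \(\vectSet{P}'\); any element of the nonempty \(\interior(\vectSet{P} \cap \vectSet{P}')\) then serves as \(\vect{v}\) (this shortcut does use that \(\vectSet{P} \cap \vectSet{P}'\) is \(\N\)-f.g., which follows from Dickson's lemma via Lemma~\ref{LemmaFinitelyGeneratedPeriodicSets}, whereas your route avoids it). Second, the ``short clearing-of-denominators argument'' upgrading membership in the relative interior of the cone to \(\interior(\vectSet{P})\) is genuinely needed and should be spelled out --- the two notions differ for sets with holes such as \(\N(\{(1,0),(1,2),(1,3)\})\) --- but it does go through: from \(\vect{v}-\tfrac{1}{k}\vect{f} \in \Q_{\geq 0}(\vectSet{F})\) one clears denominators to get \(m\vect{v} - c\vect{f} \in \N(\vectSet{F})\) for a positive integer \(c\), and adds \((c-1)\vect{f}\) to conclude \(m\vect{v}-\vect{f} \in \N(\vectSet{F})\).
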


Essentially this says that if the sets \(\vectSet{P}\) and \(\vectSet{P}'\) are ``similar enough'', in the sense of having a non-degenerate intersection, then the  \(\Z\)-generated set of the intersection is the intersection of the original \(\Z\)-generated sets. Clearly this will not be correct otherwise, simply intersect sets \(\vectSet{P}:=\N, \vectSet{P}':=-\N\) ``in opposite directions'', then \(\vectSet{P} \cap \vectSet{P}'=\{\vect{0}\}\) and hence \(\Z(\vectSet{P} \cap \vectSet{P}')=\{0\}\), but on the other hand we have \(\Z(\vectSet{P})=\Z=\Z(\vectSet{P}')\), and therefore \(\Z(\vectSet{P})\cap \Z(\vectSet{P}')=\Z\) is larger. Next we show the closure property.

\begin{restatable}{lemma}{LemmaDirectedHybridlinearNondegenerateIntersection}
Let \(\vectSet{L}\) and \(\vectSet{L}'\) be directed hybridlinear with a non-degenerate intersection. Then \(\vectSet{L} \cap \vectSet{L}'\) is dir. hybridlinear. \label{LemmaDirectedHybridlinearNondegenerateIntersection}
\end{restatable}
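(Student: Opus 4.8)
The plan is to reduce the statement to the corresponding fact about linear sets plus a bookkeeping argument about the finitely many base points. Write \(\vectSet{L}=\{\vect{b}_1,\dots,\vect{b}_k\}+\N(\vectSet{F})\) and \(\vectSet{L}'=\{\vect{b}'_1,\dots,\vect{b}'_m\}+\N(\vectSet{F}')\) as in the definition of directed hybridlinear. First I would observe that \(\vectSet{L}\cap\vectSet{L}'\) is a finite union of sets of the form \((\vect{b}_i+\N(\vectSet{F}))\cap(\vect{b}'_j+\N(\vectSet{F}'))\), each of which is semilinear, hence \(\vectSet{L}\cap\vectSet{L}'\) is semilinear and in particular can be written as a finite union of linear pieces. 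The non-degeneracy hypothesis says \(\dim(\vectSet{L}\cap\vectSet{L}')=\dim(\vectSet{L})=\dim(\vectSet{L}')\); since by Lemma \ref{BasicDimensionProperties} the dimension of the union is the max of the dimensions of the pieces, at least one piece \((\vect{b}_i+\N(\vectSet{F}))\cap(\vect{b}'_j+\N(\vectSet{F}'))\) has full dimension \(\dim(\vectSet{L})\), i.e. it too is a non-degenerate intersection of the two \emph{linear} sets \(\vect{b}_i+\N(\vectSet{F})\) and \(\vect{b}'_j+\N(\vectSet{F}')\).

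Next I would invoke the linear version of the statement: a non-degenerate intersection of two linear sets \(\vect{b}+\N(\vectSet{F})\) and \(\vect{b}'+\N(\vectSet{F}')\) is (up to finitely many boundary points, i.e. directed hybridlinear) again linear with period set, say, \(\N(\vectSet{G})\) where \(\Q(\vectSet{G})=\Q(\vectSet{F})\cap\Q(\vectSet{F}')\) — this is essentially the classical semilinearity-of-intersections argument combined with Lemma \ref{LemmaFromJerome} (the period cone of the intersection has the full dimension \(\dim(\vectSet{L})\) precisely because the intersection is non-degenerate). Fix such a common period set \(\vectSet{G}\); it will serve as the period set of the whole intersection. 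The key point is that \emph{every} non-empty piece \((\vect{b}_i+\N(\vectSet{F}))\cap(\vect{b}'_j+\N(\vectSet{F}'))\), whatever its dimension, has all of its periods lying in \(\N(\vectSet{G})\) asymptotically, so each such piece is contained in a finite union \(\{\vect{c}^{(ij)}_1,\dots\}+\N(\vectSet{G})\) of translates of \(\N(\vectSet{G})\), with the translates differing by elements of \(\N(\vectSet{F})\) (hence also comparable among the full-dimensional piece). Collecting the base points over all \(i,j\) gives \(\vectSet{L}\cap\vectSet{L}'=\{\vect{c}_1,\dots,\vect{c}_t\}+\N(\vectSet{G})\) for a finite set of base points.

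It remains to verify the directed-hybridlinear condition, namely that \((\vect{c}_p+\N(\vectSet{G}))\cap(\vect{c}_q+\N(\vectSet{G}))\neq\emptyset\) for all \(p,q\). Here I would use that the full-dimensional piece already witnesses \(\dim(\N(\vectSet{G}))=\dim(\vectSet{L})=\dim(\vectSet{L}\cap\vectSet{L}')\), so \(\N(\vectSet{G})\) spans the same \(\Q\)-space as each individual piece; two translates of \(\N(\vectSet{G})\) that both meet \(\vectSet{L}\cap\vectSet{L}'\) and whose difference lies in \(\Q(\vectSet{G})\) will have non-empty intersection once we move far enough out (choose a base point \(\vect{c}_p\) and add a large multiple of the sum of generators of \(\vectSet{G}\); this lands in every translate simultaneously). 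I expect the main obstacle to be precisely this last bookkeeping: showing that the finitely many base points can be chosen to pairwise satisfy the overlap condition, which amounts to checking that the differences \(\vect{c}_p-\vect{c}_q\) lie in \(\Q(\vectSet{G})\) (inherited from the fact that the original base points \(\vect{b}_i\) overlap within \(\vectSet{L}\) and symmetrically for \(\vectSet{L}'\), plus \(\Q(\vectSet{G})=\Q(\vectSet{F})\cap\Q(\vectSet{F}')\)), and then absorbing the resulting finitely many boundary discrepancies into the base-point set. The rest is the standard linear-algebra/semilinear-set manipulation that the paper elsewhere treats as routine.
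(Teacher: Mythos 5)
Your overall plan --- find one common period set for the whole intersection, collect finitely many base points, then check that the base points pairwise overlap --- runs parallel to the paper's, but it has a genuine gap and a circularity.

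The decisive gap is in your directedness check. You reduce the overlap condition \((\vect{c}_p+\N(\vectSet{G}))\cap(\vect{c}_q+\N(\vectSet{G}))\neq\emptyset\) to ``\(\vect{c}_p-\vect{c}_q\in\Q(\vectSet{G})\), then move far enough out by adding a large multiple of the sum of generators.'' That is not sufficient: any common point \(\vect{z}\) of the two translates forces \(\vect{c}_p-\vect{c}_q=(\vect{z}-\vect{c}_q)-(\vect{z}-\vect{c}_p)\in\Z(\vectSet{G})\), so if \(\vect{c}_p-\vect{c}_q\notin\Z(\vectSet{G})\) the translates are disjoint forever. Take \(\vectSet{G}=\{2\}\subseteq\N\), \(\vect{c}_p=0\), \(\vect{c}_q=1\): the difference lies in \(\Q(\vectSet{G})=\Q\), yet \(2\N\) and \(1+2\N\) never meet. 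Directedness is a lattice/congruence condition, not a dimension condition --- this is exactly Lemma \ref{LemmaDirectedHybridlinearEquivalence}(3) in the appendix, which characterizes it by \(\vect{b}_i-\vect{b}_j\in\Z(\vectSet{F})\), not \(\Q(\vectSet{F})\). From the hypotheses you get for free that \(\vect{c}_p-\vect{c}_q\in\Z(\vectSet{F})\cap\Z(\vectSet{F}')\); to conclude membership in \(\Z(\N(\vectSet{F})\cap\N(\vectSet{F}'))\) you need the identity \(\Z(\vectSet{F})\cap\Z(\vectSet{F}')=\Z(\N(\vectSet{F})\cap\N(\vectSet{F}'))\), which holds only under non-degeneracy (Lemma \ref{LemmaFinitelyGeneratedNondegenerateIntersection}, quoted from \cite{GuttenbergRE23}). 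That identity is the one place where the non-degeneracy hypothesis does real work in the paper's proof, and your argument never touches it; working only with \(\Q\)-spans discards precisely the congruence information that directedness is about.

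Secondly, the ``linear version of the statement'' you invoke --- that a non-degenerate intersection of two linear sets is directed hybridlinear with a common period set --- is not classical semilinearity of intersections; it is the lemma being proved in the case of single base points, and its directedness part is the hard part. The paper avoids this by proving everything at once: \((\vectSet{L},\leq_{\N(\vectSet{F})})\) and \((\vectSet{L}',\leq_{\N(\vectSet{F}')})\) are wqos by Proposition \ref{PropositionCharacterizeHybridlinear}, so by Dickson's lemma \(\vectSet{L}\cap\vectSet{L}'\) is a wqo under \(\leq_{\N(\vectSet{F})\cap\N(\vectSet{F}')}\), and its finitely many minimal elements are the base points, with period set exactly \(\N(\vectSet{F})\cap\N(\vectSet{F}')\). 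Note that taking the period set to be this intersection (rather than some \(\vectSet{G}\) pinned down only by \(\Q(\vectSet{G})=\Q(\vectSet{F})\cap\Q(\vectSet{F}')\)) is also what guarantees the inclusion \(\vect{c}_p+\N(\vectSet{G})\subseteq\vectSet{L}\cap\vectSet{L}'\), which your write-up needs for the equality \(\vectSet{L}\cap\vectSet{L}'=\{\vect{c}_1,\dots,\vect{c}_t\}+\N(\vectSet{G})\) but does not justify. If you replace the \(\Q\)-level reasoning by \(\Z\)-level reasoning and invoke the \(\Z\)-intersection lemma, your outline essentially becomes the paper's proof.
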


\begin{proof}
Write \(\vectSet{L}=\{\vect{b}_1, \dots, \vect{b}_r\}+\N(\vectSet{F})\) and \(\vectSet{L}'=\{\vect{c}_1, \dots, \vect{c}_s\}+\N(\vectSet{F}')\), where these representations fulfill (3) of Lemma \ref{LemmaDirectedHybridlinearEquivalence}.
By Proposition~\ref{PropositionCharacterizeHybridlinear} orders $(\vectSet{L}, \leq_{\N(\vectSet{F})})$ and $(\vectSet{L}', \leq_{\N(\vectSet{F'})})$ are wqos.
By Dickson's lemma the intersection \((\vectSet{L} \cap \vectSet{L}', \leq_{\N(\vectSet{F})} \cap \leq_{\N(\vectSet{F}')})\) is again a wqo. Hence there exist finitely many minimal elements \(\{\vect{d}_1, \dots, \vect{d}_k\}\) of \(\vectSet{L} \cap \vectSet{L}'\). By definition of \(\leq_{\N(\vectSet{F})} \cap \leq_{\N(\vectSet{F}')}=\leq_{\N(\vectSet{F}) \cap \N(\vectSet{F})'}\) we obtain \(\vectSet{L} \cap \vectSet{L}'=\{\vect{d}_1, \dots, \vect{d}_k\}+(\N(\vectSet{F}) \cap \N(\vectSet{F}'))\) is hybridlinear, directedness remains. 

We prove that this is a representation fulfilling (3) of Lemma \ref{LemmaDirectedHybridlinearEquivalence}. Hence let \(1 \leq i,j \leq k\). By Lemma \ref{LemmaFinitelyGeneratedNondegenerateIntersection} it is enough to prove that \(\vect{d}_j - \vect{d}_i \in \Z(\vectSet{F})\) and \(\vect{d}_j - \vect{d}_i \in \Z(\vectSet{F}')\). We show the first, the second follows by symmetry. We have \(\vect{d}_i, \vect{d}_j \in \vectSet{L} \cap \vectSet{L}' \subseteq \vectSet{L}\), hence we can write \(\vect{d}_i=\vect{b}_m+\vect{p}\) and \(\vect{d}_j=\vect{b}_l + \vect{p}'\) for some \(1 \leq m,l \leq r\) and \(\vect{p}, \vect{p}' \in \N(\vectSet{F})\). Since we chose $\vect{b}_i$ satisfying point (3) of Lemma \ref{LemmaDirectedHybridlinearEquivalence}, we have $\vect{b}_l - \vect{b}_m \in \Z(\vectSet{F})$ and therefore:

\(\vect{d}_j-\vect{d}_i=(\vect{b}_l - \vect{b}_m)+(\vect{p}' - \vect{p}) \in \Z(\vectSet{F})+\Z(\vectSet{F})=\Z(\vectSet{F})\).
\end{proof}

The other property we wanted to show is in Lemma \ref{LemmaReduceDimension}. Let us start with a useful notation. Let \(\vectSet{L}=\vectSet{B}+\N(\vectSet{F})\) be a directed hybridlinear set, and \(\vect{x} \in \vectSet{L}\) a point. We write \(\UpwardClosureL{\vect{x}}:=\vect{x}+\N(\vectSet{F})\) for the ``upward-closure'' of the point \(\vect{x}\).

\begin{lemma}
Let \(\vectSet{L}\) be directed hybridlinear, and \(\vect{x} \in \vectSet{L}\). 

Then \(\dim(\vectSet{L} \setminus \UpwardClosureL{\vect{x}})<\dim(\vectSet{L})\). \label{LemmaReduceDimension}
\end{lemma}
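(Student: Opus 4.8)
The plan is to reduce Lemma~\ref{LemmaReduceDimension} to the already-established Lemma~\ref{LemmaDimensionDecreaseShiftedL}, which handles the purely linear case, together with the dimension bookkeeping from Lemma~\ref{BasicDimensionProperties}. Write \(\vectSet{L}=\{\vect{b}_1, \dots, \vect{b}_r\}+\N(\vectSet{F})\) using a representation satisfying condition (2)/(3) of Lemma~\ref{LemmaDirectedHybridlinearEquivalence}, and for each \(i\) set \(\vectSet{L}_i := \vect{b}_i + \N(\vectSet{F})\), so that \(\vectSet{L}=\bigcup_{i=1}^r \vectSet{L}_i\). The point \(\vect{x} \in \vectSet{L}\) lies in some \(\vectSet{L}_m\), say \(\vect{x} = \vect{b}_m + \vect{p}\) with \(\vect{p}\in\N(\vectSet{F})\), and by definition \(\UpwardClosureL{\vect{x}} = \vect{x}+\N(\vectSet{F})\). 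I would then observe that
\[\vectSet{L}\setminus \UpwardClosureL{\vect{x}} \ \subseteq\ \bigcup_{i=1}^r \bigl(\vectSet{L}_i \setminus \UpwardClosureL{\vect{x}}\bigr),\]
so by the union clause of Lemma~\ref{BasicDimensionProperties} it suffices to bound \(\dim(\vectSet{L}_i \setminus \UpwardClosureL{\vect{x}})\) for each \(i\) individually.

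The key step is to handle each \(\vectSet{L}_i\) separately. For \(i=m\): \(\UpwardClosureL{\vect{x}} = \vect{x}+\N(\vectSet{F}) = \vect{x} + (\vectSet{L}_m - \vect{b}_m)\), i.e.\ it is the shifted linear set \(\vect{p}' + \vectSet{L}_m\) where \(\vect{p}' = \vect{x} - \vect{b}_m = \vect{p} \in \N(\vectSet{F})\); hence Lemma~\ref{LemmaDimensionDecreaseShiftedL} applied to the linear set \(\vectSet{L}_m\) and period \(\vect{p}\) gives \(\dim(\vectSet{L}_m \setminus \UpwardClosureL{\vect{x}}) < \dim(\vectSet{L}_m) \le \dim(\vectSet{L})\). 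For \(i \neq m\), the subtlety is that \(\UpwardClosureL{\vect{x}}\) need not be a sub-linear-set shift of \(\vectSet{L}_i\). Here I use directedness via condition (3) of Lemma~\ref{LemmaDirectedHybridlinearEquivalence}: \(\vect{b}_m - \vect{b}_i \in \Z(\vectSet{F})\), so write \(\vect{b}_m - \vect{b}_i = \vect{q}_+ - \vect{q}_-\) with \(\vect{q}_\pm \in \N(\vectSet{F})\). Then for any \(\vect{y} \in \vectSet{L}_i\) with \(\vect{y} \ge_{\N(\vectSet{F})} \vect{x} + \vect{q}_+\) we have \(\vect{y} = \vect{x} + \vect{q}_+ + \vect{z}\) for some \(\vect{z}\in\N(\vectSet{F})\), and since \(\vect{q}_+ + \vect{z} \in \N(\vectSet{F})\) this places \(\vect{y} \in \vect{x} + \N(\vectSet{F}) = \UpwardClosureL{\vect{x}}\). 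Consequently \(\vectSet{L}_i \setminus \UpwardClosureL{\vect{x}} \subseteq \vectSet{L}_i \setminus \bigl((\vect{x}+\vect{q}_+) + \N(\vectSet{F})\bigr)\); the subtracted set is \(\vect{p}_i + \vectSet{L}_i\) for the period \(\vect{p}_i := \vect{x} + \vect{q}_+ - \vect{b}_i \in \N(\vectSet{F})\) (it is in \(\N(\vectSet{F})\) because \(\vect{x}+\vect{q}_+ = \vect{b}_m + \vect{p} + \vect{q}_+\) and \(\vect{b}_m - \vect{b}_i = \vect{q}_+ - \vect{q}_-\) give \(\vect{p}_i = \vect{p} + 2\vect{q}_+ - \vect{q}_- \in \Z(\vectSet{F})\cap \N^d\); cleaner: \(\vect{x}+\vect{q}_+ \in \vectSet{L}_m \cap \vectSet{L}_i\) by the computation \(\vect{b}_i + \vect{p} + 2\vect{q}_+ - \vect{q}_- = \vect{b}_m + \vect{p} + \vect{q}_+\), hence \(\vect{x}+\vect{q}_+ - \vect{b}_i \in \N(\vectSet{F})\)). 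Now Lemma~\ref{LemmaDimensionDecreaseShiftedL} applied to \(\vectSet{L}_i\) with period \(\vect{p}_i\) yields \(\dim(\vectSet{L}_i\setminus((\vect{x}+\vect{q}_+)+\N(\vectSet{F}))) < \dim(\vectSet{L}_i) \le \dim(\vectSet{L})\), and monotonicity of \(\dim\) (Lemma~\ref{BasicDimensionProperties}) transfers this to \(\vectSet{L}_i \setminus \UpwardClosureL{\vect{x}}\).

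Combining, every term \(\vectSet{L}_i \setminus \UpwardClosureL{\vect{x}}\) has dimension strictly below \(\dim(\vectSet{L})\) — note \(\dim(\vectSet{L}_i) \le \dim(\vectSet{L})\) holds by the inclusion clause of Lemma~\ref{BasicDimensionProperties} since \(\vectSet{L}_i \subseteq \vectSet{L}\) — so the finite union clause gives \(\dim(\vectSet{L}\setminus \UpwardClosureL{\vect{x}}) = \max_i \dim(\vectSet{L}_i\setminus\UpwardClosureL{\vect{x}}) < \dim(\vectSet{L})\), as desired. The one genuine obstacle is the \(i \neq m\) case: one has to exhibit a single linear shift of \(\vectSet{L}_i\) that sits below \(\UpwardClosureL{\vect{x}}\), and this is exactly where directedness of \(\vectSet{L}\) is used — for a non-directed hybridlinear set (e.g.\ two parallel lines) the statement would fail, since removing an up-set from one line leaves the whole other line untouched. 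A minor point to double-check is that the chosen representation of \(\vectSet{L}\) is the one with pairwise-intersecting components; if \(\vectSet{L}\) is given via another representation we first pass to the minimal-elements representation, which is directed by Lemma~\ref{LemmaDirectedHybridlinearEquivalence}, and \(\dim\) is representation-independent.
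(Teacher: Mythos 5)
Your overall route is the same as the paper's: decompose \(\vectSet{L}\) into the linear components \(\vectSet{L}_i=\vect{b}_i+\N(\vectSet{F})\), use directedness to place a shifted copy of each \(\vectSet{L}_i\) inside \(\UpwardClosureL{\vect{x}}\), and invoke Lemma~\ref{LemmaDimensionDecreaseShiftedL} componentwise. The paper packages this slightly more compactly by picking one common point \(\vect{y}\in\UpwardClosureL{\vect{x}}\cap\bigcap_i\UpwardClosureL{\vect{b}_i}\) (a common upper bound of \(\vect{x}\) and all base points, which exists by directedness) and then bounding \(\vectSet{L}\setminus\UpwardClosureL{\vect{y}}\subseteq\bigcup_i(\UpwardClosureL{\vect{b}_i}\setminus\UpwardClosureL{\vect{y}})\), whereas you construct a separate witness per component; these are the same idea.

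However, your witness for \(i\neq m\) has a sign error that makes the stated claim false. With \(\vect{b}_m-\vect{b}_i=\vect{q}_+-\vect{q}_-\), the point you should add to \(\vect{x}\) is \(\vect{q}_-\), not \(\vect{q}_+\): indeed \(\vect{x}+\vect{q}_-=\vect{b}_m+\vect{p}+\vect{q}_-=\vect{b}_i+(\vect{p}+\vect{q}_+)\in\vectSet{L}_i\), and \(\vect{q}_-\in\N(\vectSet{F})\) keeps the point inside \(\UpwardClosureL{\vect{x}}\). Your choice \(\vect{x}+\vect{q}_+\) satisfies only \(\vect{x}+\vect{q}_+-\vect{b}_i=\vect{p}+2\vect{q}_+-\vect{q}_-\in\Z(\vectSet{F})\), which is what your displayed identity actually proves; the conclusion ``hence \(\vect{x}+\vect{q}_+-\vect{b}_i\in\N(\vectSet{F})\)'' does not follow. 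Concretely, for \(\vectSet{L}=\{(1,0),(0,1)\}+\N^2\), \(\vect{x}=\vect{b}_m=(1,0)\), \(\vect{b}_i=(0,1)\), one has \(\vect{q}_+=(1,0)\), \(\vect{q}_-=(0,1)\), and \(\vect{x}+\vect{q}_+=(2,0)\notin(0,1)+\N^2\), while \(\vect{x}+\vect{q}_-=(1,1)\) does lie in both components. With this one-symbol fix the rest of your argument (the inclusion \(\vectSet{L}_i\setminus\UpwardClosureL{\vect{x}}\subseteq\vectSet{L}_i\setminus((\vect{x}+\vect{q}_-)+\N(\vectSet{F}))\), the application of Lemma~\ref{LemmaDimensionDecreaseShiftedL}, and the union bound from Lemma~\ref{BasicDimensionProperties}) goes through and matches the paper's proof.
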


\begin{proof}
For the case of \(\vectSet{L}\) linear see Corollary D.3 of \cite{Leroux13}/ remember Lemma \ref{LemmaDimensionDecreaseShiftedL}. We will now prove the lemma by reducing to the linear case.

Write \(\vectSet{L}=\{\vect{b}_1, \dots, \vect{b}_r\}+\N(\vectSet{F})\). Since \(\vectSet{L}\) is directed, there exists a point \(\vect{y} \in \UpwardClosureL{\vect{x}} \cap \bigcap_{i=1}^r \UpwardClosureL{\vect{b}_i}\). In particular \(\UpwardClosureL{\vect{y}} \subseteq \UpwardClosureL{\vect{x}}\), hence we have 
\[\vectSet{L} \setminus (\UpwardClosureL{\vect{x}}) \subseteq \vectSet{L} \setminus (\UpwardClosureL{\vect{y}}) \subseteq \bigcup_{i=1}^r [(\UpwardClosureL{\vect{b}_i}) \setminus (\UpwardClosureL{\vect{y}})]\]
We obtain \(\dim(\UpwardClosureL{\vect{b}_i} \setminus \UpwardClosureL{\vect{y}}) < \dim(\UpwardClosureL{\vect{b}_i})\leq \dim(\vectSet{L})\) by the linear case.
\end{proof}

\subsection{Appendix of Section \ref{SectionHybridization}, Part 2}


In this subsection we introduce some heavy VASS theory in order to prove Theorem \ref{TheoremStronglyApproximableEVASS}. The essence of the proof is the following definition.

\begin{definition}
Let \(\vectSet{X} \subseteq \N^n\) be any set. A vector \(\vect{v} \in \N^n\) is a \emph{pump} of \(\vectSet{X}\) if there exists a point \(\vect{x}\) s.t. \(\vect{x}+\N \vect{v} \subseteq \vectSet{X}\).

The set of all pumps of \(\vectSet{X}\) is denoted \(\Pumps(\vectSet{X})\).
%
\end{definition}

I.e.\ a vector \(\vect{v}\) is a pump of \(\vectSet{X}\) if some infinite line with step \(\vect{v}\) is contained in \(\vectSet{X}\). For example the pumps of a linear set \(\N(\vectSet{F})\) are given as follows:

\begin{lemma}\cite[Lemma F.1, special case]{Leroux13}
Let \(\N(\vectSet{F})\) be \(\N\)-f.g.. Then \(\Pumps(\N(\vectSet{F}))=\Q_{\geq 0}(\vectSet{F}) \cap \Z(\vectSet{F})\). \label{LemmaPumpsLinearSet}
\end{lemma}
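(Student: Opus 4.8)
The plan is to prove both inclusions of $\Pumps(\N(\vectSet{F})) = \Q_{\geq 0}(\vectSet{F}) \cap \Z(\vectSet{F})$ directly. First I would reduce to the case where $\vectSet{F}$ is finite and $\vectSet{F} \subseteq \N^d$: since $\N(\vectSet{F})$ is $\N$-f.g.\ and $\N(\vectSet{F}) \subseteq \N^d$, the finitely many minimal nonzero elements of $\N(\vectSet{F})$ form a finite generating set contained in $\N^d$, and replacing $\vectSet{F}$ by such a set changes neither $\N(\vectSet{F})$ nor $\Q_{\geq 0}(\vectSet{F})$ nor $\Z(\vectSet{F})$ (each generating set is contained in the $\mathbb{S}$-span of the other). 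Abbreviate $\vectSet{P} := \N(\vectSet{F})$.

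For the inclusion $\Pumps(\vectSet{P}) \subseteq \Q_{\geq 0}(\vectSet{F}) \cap \Z(\vectSet{F})$, let $\vect{v}$ be a pump with witness $\vect{x}$, so $\vect{x} + n\vect{v} \in \vectSet{P}$ for all $n \in \N$. Taking $n = 0$ and $n = 1$ gives $\vect{v} = (\vect{x}+\vect{v}) - \vect{x} \in \vectSet{P} - \vectSet{P} = \Z(\vectSet{F})$. For membership in the cone, note that $\Q_{\geq 0}\cdot\vectSet{P} = \Q_{\geq 0}(\vectSet{F})$ is $\Q_{\geq 0}$-finitely generated by Lemma \ref{LemmaFinitelyGeneratedPeriodicSets}, so by Lemma \ref{LemmaFinitelyGeneratedCone} there is an integer matrix $A$ with $\Q_{\geq 0}(\vectSet{F}) = \{\vect{z} \mid A\vect{z} \geq \vect{0}\}$. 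Since $\vect{x} + n\vect{v} \in \vectSet{P} \subseteq \Q_{\geq 0}(\vectSet{F})$, we get $\tfrac{1}{n}A\vect{x} + A\vect{v} \geq \vect{0}$ for every $n \geq 1$; letting $n \to \infty$ forces $A\vect{v} \geq \vect{0}$, i.e.\ $\vect{v} \in \Q_{\geq 0}(\vectSet{F})$.

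For the reverse inclusion, let $\vect{v} \in \Q_{\geq 0}(\vectSet{F}) \cap \Z(\vectSet{F})$. From $\vect{v} \in \Z(\vectSet{F})$, split a $\Z$-combination into its positive and negative parts to write $\vect{v} = \vect{p}_+ - \vect{p}_-$ with $\vect{p}_+, \vect{p}_- \in \vectSet{P}$. From $\vect{v} \in \Q_{\geq 0}(\vectSet{F})$, clearing denominators in a $\Q_{\geq 0}$-combination yields $m \in \N_{\geq 1}$ with $m\vect{v} \in \vectSet{P}$. I claim $\vect{x} := (m-1)\vect{p}_-$ witnesses $\vect{v}$ as a pump. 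Indeed, writing any $n \in \N$ as $n = qm + r$ with $q \in \N$ and $0 \leq r \leq m-1$,
\[
\vect{x} + n\vect{v} = (m-1)\vect{p}_- + (qm+r)\vect{v} = q\,(m\vect{v}) + r\,\vect{p}_+ + (m-1-r)\,\vect{p}_-,
\]
and all three coefficients are nonnegative, so $\vect{x} + n\vect{v} \in \vectSet{P}$. Hence $\vect{x} + \N\vect{v} \subseteq \vectSet{P}$, so $\vect{v} \in \Pumps(\vectSet{P})$.

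I do not expect a serious obstacle here; this is essentially a self-contained special case of the cited lemma. The only point requiring care is the $\subseteq$ direction, where one must invoke the half-space description of finitely generated rational cones (Lemmas \ref{LemmaFinitelyGeneratedCone} and \ref{LemmaFinitelyGeneratedPeriodicSets}) rather than a topological-closedness argument, since we are working over $\Q^d$; the limit argument above makes this rigorous.
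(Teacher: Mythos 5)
Your proof is correct. The paper does not prove this lemma at all --- it imports it verbatim from Leroux (Lemma F.1 of the cited work) --- so there is no in-paper argument to compare against; your write-up is a valid self-contained replacement. Both directions check out: the reduction to a finite $\vectSet{F}\subseteq\N^d$ is legitimate (any two generating sets of the same $\N$-generated set have the same $\Q_{\geq 0}$- and $\Z$-spans), the half-space description from Lemma \ref{LemmaFinitelyGeneratedCone} makes the ``limit'' step in the $\subseteq$ direction rigorous over $\Q^d$ (and matches how the paper itself argues closedness in Lemma \ref{LemmaObviousPropertiesOfPumps}), and the identity $\vect{x}+n\vect{v}=q\,(m\vect{v})+r\,\vect{p}_+ +(m-1-r)\,\vect{p}_-$ cleanly handles the only delicate point, namely that $\vect{v}$ itself need not lie in $\N(\vectSet{F})$ even though $m\vect{v}$ does.
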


Observe in particular that the \(\Pumps\) are not only \(\N(\vectSet{F})\), but slightly more: Consider the left of Figure \ref{FigureIntuitionDoublePumps} for an example for the difference. Intuitively, we fill out the ``cone''.

\begin{figure}[h!]
\begin{minipage}{4.5cm}
\begin{tikzpicture}
\begin{axis}[
    axis lines = left,
    xlabel = { },
    ylabel = { },
    xmin=0, xmax=5,
    ymin=0, ymax=15,
    xtick={0,1,2,3,4,5},
    ytick={0,3,6,9,12,15},
    ymajorgrids=true,
    xmajorgrids=true,
    thick,
    smooth,
    no markers,
]

\addplot[
    color=blue,
]
{3 * x};


\addplot+[
    color=red,
    name path=A,
    domain=2:5, 
]
{3};

\addplot+[
    color=red,
    name path=B,
    domain=2:5, 
]
{3*x-3};

\addplot[
    color=red!40,
]
fill between[of=A and B];

\addplot[
    fill=blue,
    fill opacity=0.5,
    only marks,
    ]
    coordinates {
    (0,0)(1,0)(1,2)(1,3)(2,0)(2,2)(2,3)(2,4)(2,5)(2,6)(3,0)(3,2)(3,3)(3,4)(3,5)(3,6)(3,7)(3,8)(3,9)(4,0)(4,2)(4,3)(4,4)(4,5)(4,6)(4,7)(4,8)(4,9)(4,10)(4,11)(4,12)(5,0)(5,2)(5,3)(5,4)(5,5)(5,6)(5,7)(5,8)(5,9)(5,10)(5,11)(5,12)(5,13)(5,14)(5,15)
    };

\end{axis}
\end{tikzpicture}
\end{minipage}%
\begin{minipage}{4.5cm}
\begin{tikzpicture}
\begin{axis}[
    axis lines = left,
    xlabel = { },
    ylabel = { },
    xmin=0, xmax=4,
    ymin=0, ymax=16,
    xtick={0,1,2,3,4},
    ytick={0,4,8,12,16},
    ymajorgrids=true,
    xmajorgrids=true,
    thick,
    smooth,
    no markers,
]
    
\addplot+[
    name path=E,
    domain=0:4,
    color=blue,
]
{x^2};
    
\addplot+[
    name path=F,
    domain=0:4,
    color=blue,
]
coordinates {(0,0) (4,0)};
    
\addplot[blue!40] fill between[of=E and F];

\end{axis}
\end{tikzpicture}
\end{minipage}%

\caption{\textit{Left}: Consider the following example of an \(\N\)-g. set from \cite{GuttenbergRE23}: \(\N(\vectSet{F})\) for \(\vectSet{F}:=\{(1,0), (1,2), (1,3)\}\). This set fulfills \(\Q_{\geq 0}(\vectSet{F}) \cap \Z(\vectSet{F})=\{(x,y) \mid 0 \leq y \leq 3x\}\), and is almost equal to this, except for the points with \(y=1\): These are \emph{holes}, as they were called in \cite{GuttenbergRE23}. \newline
\textit{Right}: The blue \(\{(x,y) \in \N^2 \mid 0 \leq y \leq x^2\}\) has the set of Pumps \(\{(0,0)\} \cup \{(x,y) \in \N^2 \mid x>0\}\). After the second application we stabilize to \(\N^2\). This happens with any set which has a partner \(\vectSet{L}\) for \(\vectSet{X} \HybridizationRelation \vectSet{L}\): The second application of \(\Pumps\) adds the boundary, stabilizing to \(\vectSet{L}\).}\label{FigureIntuitionDoublePumps}
\end{figure}

\textbf{Essence of }\(\HybridizationRelation\)/Proof of Theorem \ref{TheoremStronglyApproximableEVASS}: The essence is the following: Overapproximate the given VASSnz section \(\vectSet{X}\), and for every obtained m-eVASS \((\VAS, \qin, \qfin)\) compute a semilinear representation of \(\Pumps(\Rel(\VAS,\qin,\qfin))\). The difficult part of this is visible at first glance: Given an arbitrary VASSnz section \(\vectSet{X}\), which can be highly non-semilinear, why would \(\Pumps(\vectSet{X})\) even be semilinear in the first place? The proof is complicated, and was in case of VASS given by Leroux in \cite{Leroux13} and then extended to VASSnz by Guttenberg in \cite{Guttenberg24}. We have to trace their ideas here, which requires basic understanding about the \(\Pumps\) operator.

\textbf{Essentials about the Pumps operator}: We start with the observation that \(\Pumps\) is a monotone operator, and we know a large class of fixed points of \(\Pumps\): Both euclidean closed \(\Q_{\geq 0}\)-g. sets (by Lemma \ref{LemmaFinitelyGeneratedCone} in particular all \(\Q_{\geq 0}\)-\emph{f}.g. sets fulfill this) as well as \(\Z\)-g. sets are fixed points of \(\Pumps\).

\begin{lemma} \label{LemmaObviousPropertiesOfPumps}
Let \(\vectSet{C}\) be \(\Q_{\geq 0}\)-g. and euclidean closed. 

Then \(\Pumps(\vectSet{C})=\vectSet{C}\).

Let \(\Z(\vectSet{F})\) be \(\Z\)-g.. Then \(\Pumps(\Z(\vectSet{F}))=\Z(\vectSet{F})\).

Let \(\vectSet{X} \subseteq \vectSet{X}'\). Then \(\Pumps(\vectSet{X}) \subseteq \Pumps(\vectSet{X}')\). 
\end{lemma}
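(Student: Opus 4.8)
Lemma \ref{LemmaObviousPropertiesOfPumps} has three assertions; I would handle them in increasing order of difficulty, though all three are elementary. The plan is to argue directly from the definitions of $\Pumps$, of $\mathbb{S}$-generated sets, and of euclidean closedness.

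\textbf{Monotonicity.} I would dispatch the third statement first, since it is purely formal and is the only one used later as a black box. If $\vectSet{X} \subseteq \vectSet{X}'$ and $\vect{v} \in \Pumps(\vectSet{X})$, then by definition there is a point $\vect{x}$ with $\vect{x} + \N \vect{v} \subseteq \vectSet{X} \subseteq \vectSet{X}'$, hence $\vect{v} \in \Pumps(\vectSet{X}')$. That is the whole argument.

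\textbf{$\Z$-generated sets.} For $\vectSet{L} = \Z(\vectSet{F})$: one inclusion is trivial because $\vect{0} \in \vectSet{L}$ and $\vect{0} + \N\vect{v} \subseteq \vectSet{L}$ for any $\vect{v} \in \vectSet{L}$ (using that $\vectSet{L}$ is closed under $\N$-sums, being a group under addition), so $\vectSet{L} \subseteq \Pumps(\vectSet{L})$. Conversely, if $\vect{v} \in \Pumps(\vectSet{L})$ then $\vect{x} + n\vect{v} \in \vectSet{L}$ for some $\vect{x}$ and all $n \in \N$; taking $n=0$ and $n=1$ and subtracting (legitimate since $\Z(\vectSet{F})$ is a subgroup of $\Z^d$, hence closed under differences) gives $\vect{v} = (\vect{x}+\vect{v}) - \vect{x} \in \vectSet{L}$. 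One small caveat to address: the statement writes ``$\Z(\vectSet{F})$ be $\Z$-g.'' but then refers to $\Pumps(\vectSet{L})$; I would just read $\vectSet{L} := \Z(\vectSet{F})$, which is clearly the intent, and note that since we work inside $\N^d$ the relevant object is really $\vectSet{L} \cap \N^d$, but the same subtraction argument goes through because a pump line lies entirely in $\N^d$ automatically.

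\textbf{Euclidean-closed $\Q_{\geq 0}$-generated cones.} This is the only part needing a genuine (if short) topological argument, and I expect it to be the main obstacle — mainly in getting the closedness used correctly. The inclusion $\vectSet{C} \subseteq \Pumps(\vectSet{C})$ follows as before: $\vect{0} \in \vectSet{C}$ (the empty sum) and $\N \vect{v} \subseteq \vectSet{C}$ for $\vect{v} \in \vectSet{C}$, since a $\Q_{\geq 0}$-generated set is closed under multiplication by naturals. For the reverse inclusion, suppose $\vect{v} \in \Pumps(\vectSet{C})$, so $\vect{x} + n\vect{v} \in \vectSet{C}$ for some $\vect{x}$ and all $n \geq 1$. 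Then $\tfrac{1}{n}(\vect{x} + n\vect{v}) = \tfrac{1}{n}\vect{x} + \vect{v} \in \vectSet{C}$ because $\vectSet{C}$ is closed under multiplication by $\tfrac1n \in \Q_{\geq 0}$. Letting $n \to \infty$, the sequence $\tfrac1n\vect{x} + \vect{v}$ converges to $\vect{v}$, and since $\vectSet{C}$ is euclidean closed, $\vect{v} \in \vectSet{C}$. Finally I would remark that by Lemma \ref{LemmaFinitelyGeneratedCone} every $\Q_{\geq 0}$-finitely generated set is of this closed form (being a finite intersection of closed half-spaces), which justifies the parenthetical remark in the statement and is what is actually invoked in the $\Pumps$-fixed-point analysis to follow.
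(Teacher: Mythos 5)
Your proposal is correct and matches the paper's own proof essentially step for step: the same trivial monotonicity argument, the same subtraction $\vect{v}=(\vect{x}+\vect{v})-\vect{x}$ for the $\Z$-generated case, and the same sequence $\tfrac{1}{n}(\vect{x}+n\vect{v})\to\vect{v}$ combined with euclidean closedness for the cone case. The small caveats you flag (reading $\vectSet{L}:=\Z(\vectSet{F})$, and the ambient-space issue) are harmless and consistent with the paper's reading.
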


\begin{proof}
``\(\subseteq\)'': Let \(\vect{v} \in \vectSet{C}\). Then \(\N \vect{v} \subseteq \Q_{\geq 0}(\vect{v}) \subseteq \vectSet{C}\), and we obtain \(\vect{v} \in \Pumps(\vectSet{C})\).

``\(\supseteq\)'': Let \(\vect{v} \in \Pumps(\vectSet{C})\). We have to show \(\vect{v} \in \vectSet{C}\). Since \(\vect{v} \in \Pumps(\vectSet{C})\), we obtain \(\vect{x}+\N \vect{v} \subseteq \vectSet{C}\) for some \(\vect{x}\). We define \(\vect{x}_m:=\frac{1}{m}(\vect{x}+m \vect{v}) \in \vectSet{C}\). Since \(\vectSet{C}\) is euclidean closed, we obtain that the limit \(\vect{v}=\lim_{m \to \infty} \vect{x}_m \in \vectSet{C}\).

``\(\subseteq\)'': Similar to above we have \(\N \vect{v} \subseteq \Z(\vect{v}) \subseteq \Z(\vectSet{F})\).

``\(\supseteq\)'': Let \(\vect{v} \in \Pumps(\Z(\vectSet{F}))\). Then \(\exists \vect{x}\) s.t. \(\vect{x}+\N \vect{v} \subseteq \Z(\vectSet{F})\). Hence \(\vect{v}=(\vect{x}+\vect{v})-\vect{x} \in \Z(\vectSet{F})-\Z(\vectSet{F})=\Z(\vectSet{F})\).

Monotonicity of Pumps is obvious.
\end{proof}

Now we can extend Lemma \ref{LemmaPumpsLinearSet} to directed hybridlinear sets \(\vectSet{L}\) (while Lemma \ref{LemmaPumpsLinearSet} does not hold for \(\{0,1\}+2\N\)!):

\begin{lemma} \label{LemmaPumpsDirectedHybridlinear}
Let \(\vectSet{L}=\vectSet{B}+\N(\vectSet{F})\) be directed hybridlinear. Then \(\Pumps(\vectSet{L})=\Q_{\geq 0}(\vectSet{F}) \cap \Z(\vectSet{F})\).
\end{lemma}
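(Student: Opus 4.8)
The plan is to prove the two inclusions separately, in both cases bootstrapping from the linear case (Lemma~\ref{LemmaPumpsLinearSet}) together with the monotonicity and fixed-point facts in Lemma~\ref{LemmaObviousPropertiesOfPumps} and the structural characterisation of directed hybridlinearity in Lemma~\ref{LemmaDirectedHybridlinearEquivalence}. Throughout I will use the obvious fact that $\Pumps$ is shift-invariant: $\Pumps(\vect{b}+\vectSet{Y})=\Pumps(\vectSet{Y})$ for any $\vect{b}$, since replacing a witnessing line $\vect{x}+\N\vect{v}$ by $\vect{b}+\vect{x}+\N\vect{v}$ and back is harmless.

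For ``$\supseteq$'': fix a representation $\vectSet{L}=\{\vect{b}_1,\dots,\vect{b}_r\}+\N(\vectSet{F})$. Then $\vect{b}_1+\N(\vectSet{F})\subseteq\vectSet{L}$, so by monotonicity of $\Pumps$ and shift-invariance we get $\Pumps(\N(\vectSet{F}))=\Pumps(\vect{b}_1+\N(\vectSet{F}))\subseteq\Pumps(\vectSet{L})$. By Lemma~\ref{LemmaPumpsLinearSet}, $\Pumps(\N(\vectSet{F}))=\Q_{\geq 0}(\vectSet{F})\cap\Z(\vectSet{F})$, which gives the inclusion.

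For ``$\subseteq$'': take $\vect{v}\in\Pumps(\vectSet{L})$, so $\vect{x}+\N\vect{v}\subseteq\vectSet{L}=\bigcup_{i=1}^r(\vect{b}_i+\N(\vectSet{F}))$ for some $\vect{x}$. First, since $\vectSet{L}$ is directed hybridlinear, I may choose the representation to satisfy condition (3) of Lemma~\ref{LemmaDirectedHybridlinearEquivalence}, i.e.\ $\vect{b}_i-\vect{b}_1\in\Z(\vectSet{F})$ for all $i$; then $\vect{b}_i+\N(\vectSet{F})\subseteq\vect{b}_1+\Z(\vectSet{F})+\N(\vectSet{F})=\vect{b}_1+\Z(\vectSet{F})$, so $\vectSet{L}\subseteq\vect{b}_1+\Z(\vectSet{F})$. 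Monotonicity of $\Pumps$, shift-invariance, and the second part of Lemma~\ref{LemmaObviousPropertiesOfPumps} then yield $\vect{v}\in\Pumps(\vectSet{L})\subseteq\Pumps(\vect{b}_1+\Z(\vectSet{F}))=\Pumps(\Z(\vectSet{F}))=\Z(\vectSet{F})$. Second, to show $\vect{v}\in\Q_{\geq 0}(\vectSet{F})$, assign to each $n\in\N$ an index $i(n)$ with $\vect{x}+n\vect{v}\in\vect{b}_{i(n)}+\N(\vectSet{F})$; since there are finitely many indices, some fixed $i_0$ occurs for an infinite set $S\subseteq\N$. For $n\in S$ we have $\vect{x}+n\vect{v}-\vect{b}_{i_0}\in\N(\vectSet{F})\subseteq\Q_{\geq 0}(\vectSet{F})$, hence $\tfrac1n(\vect{x}-\vect{b}_{i_0})+\vect{v}\in\Q_{\geq 0}(\vectSet{F})$ as the cone is closed under positive scaling; letting $n\to\infty$ along $S$, the left side tends to $\vect{v}$. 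Since $\vectSet{F}$ is finite, $\Q_{\geq 0}(\vectSet{F})$ is $\Q_{\geq 0}$-finitely generated, hence by Lemma~\ref{LemmaFinitelyGeneratedCone} a polyhedral cone $\{\vect{z}:A\vect{z}\geq\vect{0}\}$ and so euclidean closed; therefore $\vect{v}\in\Q_{\geq 0}(\vectSet{F})$. Combining, $\vect{v}\in\Q_{\geq 0}(\vectSet{F})\cap\Z(\vectSet{F})$, completing the proof.

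The only delicate point is the ``$\subseteq$'' direction, precisely because a directed hybridlinear set is \emph{not} simply a translate of a cone: one has to first reduce to a single translate of $\Z(\vectSet{F})$ using the directedness condition (3), and separately run a pigeonhole argument over the base points to land inside the closed cone $\Q_{\geq 0}(\vectSet{F})$. Everything else is an immediate appeal to the already-established linear case and the elementary closure properties of $\Pumps$.
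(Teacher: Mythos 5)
Your proof is correct and follows essentially the same route as the paper's: ``$\supseteq$'' by reduction to the linear case (Lemma~\ref{LemmaPumpsLinearSet}), and ``$\subseteq$'' by separately establishing membership in the closed cone $\Q_{\geq 0}(\vectSet{F})$ and in the grid $\Z(\vectSet{F})$, the latter resting on the directedness condition that base-point differences lie in $\Z(\vectSet{F})$. The only cosmetic differences are that the paper gets $\vect{v}\in\Z(\vectSet{F})$ by subtracting two consecutive points of the witnessing line whereas you embed all of $\vectSet{L}$ into the single coset $\vect{b}_1+\Z(\vectSet{F})$, and your pigeonhole-plus-limit argument for the cone half carefully spells out what the paper leaves implicit in its appeal to Lemma~\ref{LemmaObviousPropertiesOfPumps}.
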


\begin{proof}
``\(\supseteq\)'' follows from Lemma \ref{LemmaPumpsLinearSet}.

``\(\subseteq\)'': Since \(\Q_{\geq 0}(\vectSet{F})\) is a closed cone, we obtain \(\Pumps(\vectSet{L}) \subseteq \Q_{\geq 0}(\vectSet{F})\) by Lemma \ref{LemmaObviousPropertiesOfPumps}. It remains to prove \(\Pumps(\vectSet{L}) \subseteq \Z(\vectSet{F})\). Let \(\vect{v} \in \Pumps(\vectSet{L})\). Then there exists \(\vect{x}\) s.t. \(\vect{x}+\N \vect{v} \subseteq \vectSet{L}\). Let \(\vect{x}_0:=\vect{x}\) and \(\vect{x}_1:=\vect{x}+\vect{v}\). Write \(\vect{x}_1=\vect{b}_1+\vect{w}_1\) and \(\vect{x}_0=\vect{b}_0+\vect{w}_0\) with \(\vect{w}_0, \vect{w}_1 \in \N(\vectSet{F})\) and \(\vect{b}_0, \vect{b}_1 \in \vectSet{B}\). By Lemma \ref{LemmaDirectedHybridlinearEquivalence}(2) we have \(\vect{b}_1-\vect{b}_0 \in \N(\vectSet{F})-\N(\vectSet{F})=\Z(\vectSet{F})\). This implies 
\[\vect{v}=\vect{x}_1-\vect{x}_0=(\vect{b}_1-\vect{b}_0)+(\vect{w}_1-\vect{w}_0) \in \Z(\vectSet{F})+\Z(\vectSet{F})=\Z(\vectSet{F}),\]
finishing the proof.
\end{proof}

The relation of pumps to our problem is that if \(\vectSet{X} \HybridizationRelation \vectSet{L}\) holds for a directed hybridlinear set \(\vectSet{L}\), then we have information not about \(\Pumps(\vectSet{X})\), but about the \emph{second} application of the \(\Pumps\) operator, see also the right of Figure \ref{FigureIntuitionDoublePumps}. From now on we write \(\HybridizationRelationDirected\) to mean \(\HybridizationRelation\) with the added constraint that \(\vectSet{L}\) is directed hybridlinear.

\begin{lemma} \label{LemmaPumpsOfNiceSet}
Let \(\vectSet{X}\) be \(\N\)-g., and \(\vectSet{X} \HybridizationRelationDirected \vectSet{L}=\vectSet{B}+\N(\vectSet{F})\). Then \(\Pumps(\Pumps(\vectSet{X}))=\Q_{\geq 0}(\vectSet{F}) \cap \Z(\vectSet{F})\).
\end{lemma}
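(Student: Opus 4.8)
The plan is to prove the two inclusions separately; both reduce to the closure properties of $\Pumps$ collected in Lemma~\ref{LemmaObviousPropertiesOfPumps} together with the descriptions of $\Pumps$ on linear and directed hybridlinear sets in Lemmas~\ref{LemmaPumpsLinearSet} and~\ref{LemmaPumpsDirectedHybridlinear}. For the inclusion ``$\subseteq$'': since $\vectSet{X} \subseteq \vectSet{L}$, monotonicity of $\Pumps$ (Lemma~\ref{LemmaObviousPropertiesOfPumps}) and Lemma~\ref{LemmaPumpsDirectedHybridlinear} give $\Pumps(\vectSet{X}) \subseteq \Pumps(\vectSet{L}) = \Q_{\geq 0}(\vectSet{F}) \cap \Z(\vectSet{F})$, and applying $\Pumps$ once more, $\Pumps(\Pumps(\vectSet{X})) \subseteq \Pumps(\Q_{\geq 0}(\vectSet{F}) \cap \Z(\vectSet{F}))$. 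Now $\Q_{\geq 0}(\vectSet{F})$ is a finitely generated, hence euclidean closed, cone (Lemma~\ref{LemmaFinitelyGeneratedCone}) and $\Z(\vectSet{F})$ is $\Z$-generated, so by Lemma~\ref{LemmaObviousPropertiesOfPumps} both are fixed points of $\Pumps$; and by monotonicity $\Pumps(\vectSet{A} \cap \vectSet{B}) \subseteq \Pumps(\vectSet{A}) \cap \Pumps(\vectSet{B})$, so $\Pumps(\Q_{\geq 0}(\vectSet{F}) \cap \Z(\vectSet{F})) \subseteq \Q_{\geq 0}(\vectSet{F}) \cap \Z(\vectSet{F})$, which is the claimed inclusion.

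For the inclusion ``$\supseteq$'', the key step is $\N_{\geq 1}(\vectSet{F}) \subseteq \Pumps(\vectSet{X})$. Indeed, let $\vect{w} \in \N_{\geq 1}(\vectSet{F})$; since $\vectSet{L} = \vectSet{B}+\N(\vectSet{F})$ is nonempty, fix $\vect{b} \in \vectSet{B} \subseteq \vectSet{L}$, and applying the promise $\vectSet{X} \HybridizationRelation \vectSet{L}$ at $\vect{b}$ and $\vect{w}$ yields $N \in \N$ with $(\vect{b}+N\vect{w}) + \N\vect{w} \subseteq \vectSet{X}$, i.e.\ $\vect{w} \in \Pumps(\vectSet{X})$. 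Writing $\vect{w}_0 := \sum_{\vect{f} \in \vectSet{F}} \vect{f}$ and using that $\N_{\geq 1}(\vectSet{F})$ consists precisely of the combinations of $\vectSet{F}$ in which every generator occurs at least once, we have $\N_{\geq 1}(\vectSet{F}) = \vect{w}_0 + \N(\vectSet{F})$, hence $\vect{w}_0 + \N(\vectSet{F}) \subseteq \Pumps(\vectSet{X})$. Finally let $\vect{v} \in \Q_{\geq 0}(\vectSet{F}) \cap \Z(\vectSet{F})$. By Lemma~\ref{LemmaPumpsLinearSet}, $\vect{v} \in \Pumps(\N(\vectSet{F}))$, so there is $\vect{x}$ with $\vect{x}+\N\vect{v} \subseteq \N(\vectSet{F})$; shifting by $\vect{w}_0$ gives $(\vect{w}_0+\vect{x}) + \N\vect{v} \subseteq \vect{w}_0 + \N(\vectSet{F}) = \N_{\geq 1}(\vectSet{F}) \subseteq \Pumps(\vectSet{X})$, so $\vect{v} \in \Pumps(\Pumps(\vectSet{X}))$. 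This proves $\Q_{\geq 0}(\vectSet{F}) \cap \Z(\vectSet{F}) \subseteq \Pumps(\Pumps(\vectSet{X}))$ and finishes the proof.

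The main obstacle is not technical but conceptual: recognising \emph{why} the statement concerns the \emph{second} iterate of $\Pumps$. A single application $\Pumps(\vectSet{X})$ need only contain the ``interior'' directions $\N_{\geq 1}(\vectSet{F})$ of $\vectSet{L}$ and may genuinely miss the lower-dimensional boundary of the cone $\Q_{\geq 0}(\vectSet{F}) \cap \Z(\vectSet{F})$ (cf.\ the right of Figure~\ref{FigureIntuitionDoublePumps}); the point is that $\N_{\geq 1}(\vectSet{F})$ is itself a linear set with period set $\vectSet{F}$, so one further application of $\Pumps$ recovers the full cone via Lemma~\ref{LemmaPumpsLinearSet}. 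The remaining points are routine bookkeeping: the degenerate case $\vectSet{F} = \emptyset$ (then $\vect{w}_0 = \vect{0}$, both sides are $\{\vect{0}\}$, and one only needs $\vectSet{X} \neq \emptyset$, which holds because $\vectSet{X} \HybridizationRelation \vectSet{L}$ with $\vectSet{L} \neq \emptyset$), and checking that the vectors produced lie in $\N^d$ so that the $\Pumps$ operator is being applied to legitimate arguments.
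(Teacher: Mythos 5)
Your proof is correct and follows essentially the same route as the paper's: the lower bound via $\N_{\geq 1}(\vectSet{F}) \subseteq \Pumps(\vectSet{X})$ (from the definition of $\HybridizationRelation$) combined with $\Pumps(\N_{\geq 1}(\vectSet{F})) = \Q_{\geq 0}(\vectSet{F}) \cap \Z(\vectSet{F})$, and the upper bound via monotonicity of $\Pumps$ together with the fact that closed $\Q_{\geq 0}$-generated sets and $\Z$-generated sets are fixed points. You merely spell out the identity $\N_{\geq 1}(\vectSet{F}) = \vect{w}_0 + \N(\vectSet{F})$ and the shift argument explicitly where the paper invokes Lemma~\ref{LemmaPumpsDirectedHybridlinear} directly.
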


\begin{proof}
Write \(\vectSet{L}=\vectSet{B}+\N(\vectSet{F})\). By definition of \(\HybridizationRelation\), we have \(\N_{\geq 1}(\vectSet{F}) \subseteq \Pumps(\vectSet{X})\). Since \(\Pumps\) is monotone, by Lemma \ref{LemmaPumpsDirectedHybridlinear} we obtain \(\Q_{\geq 0}(\vectSet{F}) \cap \Z(\vectSet{F}) = \Pumps(\vectSet{L})=\Pumps(\N_{\geq 1}(\vectSet{F})) \subseteq \Pumps(\Pumps(\vectSet{X}))\). Furthermore, by Lemma \ref{LemmaObviousPropertiesOfPumps}, no matter how often we apply pumps on \(\vectSet{L}\), we will remain at \(\Pumps(\vectSet{L})=\Q_{\geq 0}(\vectSet{F}) \cap \Z(\vectSet{F})\), since these are a closed \(\Q_{\geq 0}\)-g. set and a \(\Z\)-g. set. Since \(\vectSet{X} \subseteq \vectSet{L}\) and pumps is a monotone operator, we obtain that also arbitrarily many applications of pumps on \(\vectSet{X}\) lead to at most \(\Q_{\geq 0}(\vectSet{F}) \cap \Z(\vectSet{F})\). In total we obtain \(\Pumps(\Pumps(\vectSet{X}))=\Q_{\geq 0}(\vectSet{F}) \cap \Z(\vectSet{F})\).
\end{proof}

Applying Lemma \ref{LemmaPumpsOfNiceSet} for an m-eVASS \((\VAS, \qin, \qfin)\), since we have shown \(\vectSet{X} \HybridizationRelation \pi_{\text{src}, \text{tgt}}(\sol(\CharSys(\VAS)))\), we know that \(\Pumps(\Pumps(\Rel(\VAS, \qin, \qfin)))=\Pumps(\pi_{\text{src}, \text{tgt}}(\sol(\CharSys(\VAS))))\) is semilinear, in fact of the form \(\Q_{\geq 0}(\vectSet{F}) \cap \Z(\vectSet{F})\). But as in the right of Figure \ref{FigureIntuitionDoublePumps} one question remains: Which parts of the boundary are in the actual set \(\Pumps(\Rel(\VAS, \qin, \qfin))\)?

For the rest of the appendix fix an m-eVASS \((\VAS, \qin, \qfin)\), and use the same notation for its SCCs, etc. as in Section \ref{SectionProofTheoremEVASS}.

\textbf{Overview}: First we define a class of sets we call \emph{uniform}, and is used in the definition of \(\HybridizationRelation_s\).

Then computing \(\Pumps(\Rel(\VAS, \qin, \qfin))\) a \(2\)-step process. In Step 1 we prove that \(\vectSet{X}_i:=\Rel(\VAS_i,\qini,\qfini) \cap \pi_{\vect{x}_i, \vect{y}_i}(\sol(\CharSys))\) is uniform, and compute \(\Pumps_i:=\Pumps(\vectSet{X}_i)\). The set \(\vectSet{X}_i\) essentially describes the runs of the \(i\)-th SCC which are ``globally relevant'': The intersection ensures that the runs considered have the correct values at source/target as required by the characteristic system, which guarantees that the run can be part of a global solution/run. 

In step 2 we will then prove that the pumps of a composition of \emph{uniform} relations is the composition of the pumps, and therefore obtain \(\Pumps(\Rel(\VAS, \qin, \qfin))\) as
\[\Pumps_1 \circ \Pumps(\vectSet{R}(e_1)) \circ \dots \circ \Pumps(\vectSet{R}(e_{r-1})) \circ \Pumps_r.\]
I.e.\ similar to how the characteristic system simply composes the overapproximations of edges and SCCs, we will also simply compose the pumps.

We now proceed with defining \(\HybridizationRelation_s\).

\subsection{Defining strong overapproximations}

Finally, we define uniform sets. We have to start with three definitions, the first of which is mainly a reminder.

\DefinitionRestatablePreservants*

\begin{definition}
Let \(\vectSet{X} \subseteq \N^d\), and \(\vect{v} \in \N^d\). Then \(\vect{v}\) is a \emph{direction} of \(\vectSet{X}\) if \(\exists n \in \N\) s.t. \(n \vect{v} \in \Pumps(\vectSet{X})\).

The set of all directions is denoted \(\dir(\vectSet{X})\).
\end{definition}

I.e.\ a direction is simply an arbitrarily scaled pump. This has the advantage that for many sets, \(\dir(\vectSet{X})\) is \(\Q_{\geq 0}\)-generated. Namely compare for example with Lemma \ref{LemmaPumpsLinearSet} and Lemma \ref{LemmaPumpsDirectedHybridlinear}: The Pumps were an intersection of two sets, a \(\Z\)-g. set and a \(\Q_{\geq 0}\)-g. set, the \(\Q_{\geq 0}\)-g. set is exactly the set of directions. A similar property will hold for m-eVASS.

The third necessary definition is the following.

\begin{definition}
Let \(\vectSet{X} \subseteq \N^d\). Then \(\vectSet{X}\) is called \emph{well-directed} if for every infinite sequence \(\vect{x}_0, \vect{x}_1, \dots \in \vectSet{X}\) there exists an infinite set \(N \subseteq \N \) of indices (i.e.\ an infinite subsequence) s.t. \(\vect{x}_j+\N(\vect{x}_k-\vect{x}_j) \subseteq \vectSet{X}\) for all \(k>j\) in \(N\).
\end{definition}

Let us explain these definitions, and why preservants make another appearance here. The importance of preservants is best explained using the left of Figure \ref{FigureIntuitionPreservants}. The blue and red sets depicted there are closed under addition and contain \(\vect{0}\), hence they fulfill \(\vectSet{X}=\PX\). However, the union has very little preservants, i.e.\ vectors which can be pumped everywhere: The blue and red sets are simply too different. The operator \(\Pumps\) however would not see this problem.

Other simple observations to gain understanding about preservants is that \(\PX\) is always closed under addition and contains \(\vect{0}\), i.e.\ is \(\N\)-generated, or that the function graph of the sin/cos functions has \((2\pi,0)\) as preservant, but then automatically also all multiples \((2\pi m, 0)\).

\begin{figure}[h!]
\begin{minipage}{4.5cm}
\begin{tikzpicture}
\begin{axis}[
    axis lines = left,
    xlabel = { },
    ylabel = { },
    xmin=0, xmax=8,
    ymin=0, ymax=8,
    xtick={0,2,4,6,8},
    ytick={0,2,4,6,8},
    ymajorgrids=true,
    xmajorgrids=true,
    thick,
    smooth,
    no markers,
]

\addplot+[
    color=red,
    name path=A,
    domain=0:8, 
]
{log2(x+1)};

\addplot+[
    color=red,
    name path=B,
    domain=0:8, 
]
{x};

\addplot[
    color=blue!40,
]
fill between[of=A and B];

\addplot[
    fill=red,
    fill opacity=0.5,
    only marks,
    ]
    coordinates {
    (0,0)(1,0)(2,0)(3,0)(4,0)(5,0)(6,0)(7,0)(8,0)
    };

\end{axis}
\end{tikzpicture}
\end{minipage}%
\begin{minipage}{4.5cm}
\begin{tikzpicture}
\begin{axis}[
    axis lines = left,
    xlabel = { },
    ylabel = { },
    xmin=0, xmax=8,
    ymin=0, ymax=8,
    xtick={0,2,4,6,8},
    ytick={0,2,4,6,8},
    ymajorgrids=true,
    xmajorgrids=true,
    thick,
    smooth,
    no markers,
]

\addplot[
    color=red,
]
coordinates {(2,3) (8,9)};

\addplot[
    color=red,
]
coordinates {(2,2) (8,8)};
    
\addplot+[
    name path=E,
    domain=2:8,
    color=blue,
]
{8};
    
\addplot+[
    name path=F,
    domain=2:8,
    color=blue,
]
coordinates {(2,4) (8,10)};
    
\addplot[blue!40] fill between[of=E and F];

\addplot+[
    name path=G,
    domain=2:8,
    color=blue,
]
coordinates {(2,1) (9,8)};
    
\addplot+[
    name path=H,
    domain=2:8,
    color=blue,
]
coordinates {(2,1) (8,1)};
    
\addplot[blue!40] fill between[of=G and H];

\end{axis}
\end{tikzpicture}
\end{minipage}%

\caption{\textit{Left}: The set \(\vectSet{X}=\vectSet{X}_1 \cup \vectSet{X}_2\) with \(\vectSet{X}_1:=\{(x,y) \mid \log_2(x+1) \leq y \leq x\}\) and \(\vectSet{X}_2:=\{(x,y) \mid y=0\}\) (the blue and red parts respectively) fulfills \(\PX=\{0\}\). Namely a vector \((x,y)\) with \(y>0\) is not a preservant due to the red points, and if \(y=0\) then the blue points are the problem. However, \(\Pumps(\vectSet{X})=\{(x,y) \mid 0 \leq y \leq x\}\) does not see this problem at all. Intuitively, \(\Pumps(\vectSet{X})\) consider pumps which work \emph{somewhere}, and \(\PX\) contains pumps which \emph{work everywhere}. \newline
\textit{Right}: In case of the depicted blue set, we would find a complete extraction. However, clearly we are not reducible yet: We have to do a recursive call on the red lines.}\label{FigureIntuitionPreservants}
\end{figure}

Well-directed is a completely different property: It says that the set is ``well-structured enough'' s.t. between infinitely many points, some pumping will always be possible. A set which is not well-directed is for example \(\{2^x \mid x\in \N\}\), which contains infinitely many points, but no infinite line. However, even \(\N\)-g. sets may not be well-directed, for example \(\vectSet{P}:=\{(x,m) \mid x \text{ has at most }m\text{ bits set in binary}\}\) is \(\N\)-g., but not well-directed. We will not go into details, but due to a well-quasi-order on runs, it is guaranteed that all the sets we will consider are automatically well-directed.

We can now define uniform sets.

\begin{definition} \label{DefinitionUniform}
Let \(\emptyset \neq \vectSet{X} \subseteq \N^d\). Then \(\vectSet{X}\) is called \emph{uniform} if \(\vectSet{X}\) is well-directed and the following hold:

\begin{enumerate}
\item \(\Pumps(\vectSet{X})=\Pumps(\PX)\),
\item \(\dir(\vectSet{X})\) is definable in \(\FO(\Q, \leq, +)\),
\item There exists a finite set \(\vectSet{B}\) s.t. \(\vectSet{B}+\Pumps(\Pumps(\vectSet{X}))\) is directed and \(\vectSet{X} \subseteq \vectSet{B}+\Pumps(\Pumps(\vectSet{X}))\).
\end{enumerate}

In such a case we write \(\vectSet{X} \HybridizationRelation_s (\vectSet{B}+\Pumps(\Pumps(\vectSet{X})))\).
\end{definition}

There are a lot of parts to this definition, but first understand that the main conditions are 1+2). The others like \(\neq \emptyset\), well-directed, etc. are secondary but necessary to ensure the closure property we want. So what does 1) say? Again, the left of Figure \ref{FigureIntuitionPreservants} is a good example: \(\vectSet{X}=\vectSet{X}_1 \cup \vectSet{X}_2\) is highly non-uniform, in fact \(\Pumps(\vectSet{X})\) has dimension \(2\) while \(\Pumps(\PX)=\PX=\{\vect{0}\}\) is small. However, both \(\vectSet{X}_1\) and \(\vectSet{X}_2\) themselves are closed under addition and hence uniform. Uniform is exactly meant to capture that every point can ``pump approximately the same''. But uniform is just enough of an extension of ``closed under addition'', that perfect m-eVASS fall into it (as we will now show).

On the other hand, condition 2) is used to guarantee that \(\Pumps(\vectSet{X})\) is semilinear.

A comment for people familiar with VASS theory: In prior works most abstract statements started with ``Let \(\vectSet{X}=\vect{b}+\vectSet{P}\) for \(\vectSet{P}\) smooth''. One reason for the definition of uniform is that \emph{essentially all} these statements still work if we replace this with ``Let \(\vectSet{X}\) be uniform''. Hence we are now able to take full advantage of theory by Leroux \cite{Leroux11,Leroux12, Leroux13} and Guttenberg \cite{GuttenbergRE23, Guttenberg24}, but can do so \emph{computationally}.

\subsection{Step 1 Towards Strong Approximability}

The answer for how to compute the pumps of an SCC \(\VAS_i\) is a formula first given by Leroux in \cite{Leroux13} and then extended to VASSnz in \cite{Guttenberg24}. 

The necessary \(\Z\)-g. set we will need is easy to read off from the characteristic system \(\CharSys(\VAS)\): The solutions of its homogeneous variant \(\HomCharSys\) form an \(\N\)-f.g. set \(\sol(\HomCharSys)\). We need its \(\Z\)-g. set \(\Z(\sol(\HomCharSys))\), which, as for every \(\N\)-g. set, can be obtained as \(\sol(\HomCharSys)-\sol(\HomCharSys)\). I.e.\ as the set of all differences between homogeneous solutions. To obtain the \(\Z\)-g. set for our SCC, we simply project to the source/target variables for our SCC, i.e.\ we obtain 
\[\text{Grid}:=\pi_{\vect{x}_i, \vect{y}_i}(\sol(\HomCharSys)-\sol(\HomCharSys)).\]

To obtain the directions, we use the crux of the aforementioned formula by Leroux. Let \(\vectSet{C}_i\) be the set of pairs \((\vect{e}, \vect{f})\in \N^{n_i} \times \N^{n_i}\) s.t. \(\vect{f}-\vect{e}\) is the effect of some cycle in \(\VAS_i\). \emph{Beware}: We do \emph{not} require that the cycle uses every edge. Hence this relation \emph{cannot} be captured using a single integer linear program, it requires one per possible support as in the decomposition for \WeakPTwo. Leroux's formula is
\begin{align*}
\dir(\Rel(\VAS_i, \qini, \qfini))=\dir(\vectSet{C}_i)^{\ast}
\end{align*}
where \(\ast\) is transitive closure w.r.t. composition as always.

This is not hard to explain: This removes disconnected solutions of the characteristic system, but not fully: We are allowed to perform any number \(k\in \N\) of \emph{connected} cycles in the SCC in sequence, even if the \(i\)-th cycle is not connected to the \((i+1)\)-st. The idea is also known under the name \emph{sequentially enabled cycles}: Imagine we have cycles \(\rho_1, \dots, \rho_k\) with \(\dirOfRun(\rho_j)=(\vect{v}_{j-1}, \vect{v}_j)\). Then a run \(\rho\) from \(\qini\) to \(\qfini\) which enables the cycles \(\rho_1, \dots, \rho_k\) in this sequence shows \((\vect{v}_0, \vect{v}_k) \in \Pumps(\Rel(\VAS_i, \qini, \qfini))\), even though the cycles \(\rho_j\) were disconnected. To see this, write \(\rho=\rho_0' \rho_1' \dots \rho_k'\), where \(\rho_{j+1}\) is enabled after performing \(\rho_j'\). Then the runs \(\rho_0' \rho_1^{\ast} \rho_1' \dots \rho_k^{\ast} \rho_k'\) prove \(\dirOfRun(\rho)+\N(\vect{v}_0, \vect{v}_k)\subseteq \Rel(\VAS_i, \qini, \qfini)\).

Now we obtain 
\begin{align}
\Pumps_i=\text{Grid} \cap \dir(\vectSet{C}_i)^{\ast}. \tag{*} \label{EquationPumpsFormula}
\end{align}
The one question mark which remains is: How do we compute the transitive closure? Leroux proved the following, where a \(\Q_{\geq 0}\)-g. relation is called \emph{definable} if it is definable in linear arithmetic, i.e.\ in \(\FO(\Q, \leq, +)\):

\begin{theorem}\cite[Theorem VII.1]{Leroux13}
Let \(\vectSet{C}_1, \dots, \vectSet{C}_k \subseteq \Q_{\geq 0}^{n_i} \times \Q_{\geq 0}^{n_i}\) be relations which are reflexive, definable and \(\Q_{\geq 0}\)-g.. Then \((\bigcup_{j=1}^k \vectSet{C}_j)^{\ast}\) is \(\Q_{\geq 0}\)-g. and \emph{again definable}.

A corresponding formula \(\varphi \in \FO(\Q, \leq, +)\) can be computed in polynomial time from formulas \(\varphi_j\) for the \(\vectSet{C}_j\).
\end{theorem}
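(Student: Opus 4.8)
The plan is to prove the three assertions — that $\vectSet{R} := (\bigcup_{j=1}^k \vectSet{C}_j)^{\ast}$ is $\Q_{\geq 0}$-generated, that it is definable in $\FO(\Q,\leq,+)$, and that a defining formula is computable in polynomial time — in that order, since the first is elementary and feeds into the other two. Write $\vectSet{C} := \bigcup_j \vectSet{C}_j$. Because every $\vectSet{C}_j$ is reflexive, the diagonal relation is contained in each of them, so $\vectSet{R} = \bigcup_{n \geq 0} \vectSet{C}^n$ and this union is increasing.

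First I would show $\vectSet{R}$ is $\Q_{\geq 0}$-generated, i.e.\ closed under $\Q_{\geq 0}$-scaling and addition. Scaling is immediate, since every word-composition $\vectSet{C}_w := \vectSet{C}_{w_1} \circ \cdots \circ \vectSet{C}_{w_n}$ is a composition of cones and hence a cone. For addition one uses that each $\vectSet{C}_j$, being $\Q_{\geq 0}$-generated, is closed under addition; that composition of relations closed under addition is again closed under addition (add the two intermediate witnesses); and a diagonal-insertion trick: if $(a,b) \in \vectSet{C}_w$ and $(c,d) \in \vectSet{C}_{w'}$ then, since the diagonal is contained in $\vectSet{C}_w$ and in $\vectSet{C}_{w'}$, we get $(a+c,b+c) = (a,b)+(c,c) \in \vectSet{C}_w$ and $(b+c,b+d) = (b,b)+(c,d) \in \vectSet{C}_{w'}$, hence $(a+c,b+d) \in \vectSet{C}_w \circ \vectSet{C}_{w'} \subseteq \vectSet{R}$. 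Thus $\vectSet{R}$ is a convex cone.

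For definability and computability I would argue by induction on $\dim(\vectSet{R}) \leq 2 d_i$, exploiting that a $\Q_{\geq 0}$-generated definable relation is determined by its Euclidean closure $\overline{\vectSet{R}}$ — a closed rational polyhedral cone — together with, recursively, the strictly lower-dimensional pieces of $\vectSet{R}$ that lie on the proper faces of $\overline{\vectSet{R}}$. The top-dimensional part is computed separately by a terminating procedure: $\overline{\vectSet{R}}$ should equal the closed cone $\Q_{\geq 0}(\vectSet{D})$ generated by the set $\vectSet{D}$ of ``cycle directions'', i.e.\ directions of the relations $\vectSet{C}_w$; one shows $\vectSet{D}$ is a finite union of rational polyhedral cones obtained by a fixed number of linear programs ranging over the finitely many possible supports of words, the inclusion $\Q_{\geq 0}(\vectSet{D}) \subseteq \overline{\vectSet{R}}$ being trivial and the reverse relying on a Carathéodory-type argument plus the monotonicity and reflexivity of the $\vectSet{C}_j$ (a direction in the relative interior of $\Q_{\geq 0}(\vectSet{D})$ is genuinely reachable by iterating, not merely a limit). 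For the boundary one checks that intersecting $\vectSet{R}$ with a face $\vectSet{F}$ of $\overline{\vectSet{R}}$ again yields the reflexive-transitive closure of finitely many reflexive, definable, $\Q_{\geq 0}$-generated relations — the restrictions $\vectSet{C}_j \cap (\vectSet{F} \times \vectSet{F})$ — now of strictly smaller dimension, so the induction hypothesis applies; assembling the top-dimensional cone with the finitely many recursively computed boundary relations gives a formula for $\vectSet{R}$. Since the dimension drops at each recursion level there are at most $2 d_i$ levels, and each individual step (closures, faces, the cycle-direction cone via a bounded number of LPs, and Boolean combinations of $\FO(\Q,\leq,+)$ formulas) is polynomial, so the whole computation is polynomial.

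The hard part — and the technical core of Leroux's original argument — is exactly the interior/boundary split: proving rigorously that $\overline{\vectSet{R}}$ is precisely the cycle-direction cone (the saturation direction, that every interior direction actually belongs to $\vectSet{R}$), and that the restriction of $\vectSet{R}$ to a face of $\overline{\vectSet{R}}$ is again the closure of conic definable relations of the stated kind so that the induction goes through. This is where reflexivity and convexity of the $\vectSet{C}_j$ are used most heavily, and the example $\vectSet{C} = \{(x,y)\in\Q_{\geq 0}^2 : x \leq y \leq 2x\}$, whose closure $\{(x,y) : x \leq y\}$ is attained only in the limit, already shows that no bounded-length unrolling of $\vectSet{C}$ can suffice, so the geometric ``limit direction'' reasoning is genuinely unavoidable.
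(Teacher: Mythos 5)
First, a point of reference: the paper does not prove this statement. It is imported verbatim as \cite{Leroux13}, Theorem VII.1, and used as a black box; the only comment the paper adds is that the subsequent quantifier elimination may cause an elementary blowup. So there is no in-paper proof to compare against, and your proposal has to stand entirely on its own.

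On its own merits, your first part --- that \(\vectSet{R}=(\bigcup_j \vectSet{C}_j)^{\ast}\) is \(\Q_{\geq 0}\)-generated --- is correct and complete: word-compositions of cones are cones, and the diagonal-insertion trick correctly combines reflexivity with closure under addition of each \(\vectSet{C}_w\) to get closure under addition of the whole union. The second part, however, contains a genuine gap that you yourself flag: the entire content of the theorem is the saturation step, i.e.\ that \(\overline{\vectSet{R}}\) is a \emph{computable rational polyhedral} cone (an increasing union of rational polyhedral cones need not have polyhedral closure in general) and that every rational point of \(\operatorname{relint}(\overline{\vectSet{R}})\) is actually attained after a bounded, computable number of compositions. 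You name this as ``the technical core of Leroux's original argument'' and do not supply it; a proof whose only nontrivial step is deferred to the paper being cited is not a proof. Two further concrete problems: (i) the plan to obtain the ``cycle directions'' via ``a fixed number of linear programs ranging over the finitely many possible supports of words'' imports VASS-specific machinery (supports of Parikh images, Euler--Kirchhoff) into a purely relational setting where the \(\vectSet{C}_j\) are arbitrary definable cones rather than edge-effect relations, so it is unclear what those linear programs would be; and (ii) the polynomial-time accounting is not justified --- a polyhedral cone in dimension \(2d_i\) can have exponentially many faces, so ``recurse on every proper face over \(2d_i\) levels, each step polynomial'' does not give a polynomial bound without an argument that the recursion tree stays small, or an observation that the output formula may retain quantifiers (which is consistent with the paper's remark that eliminating them afterwards costs elementary time).
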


I.e.\ for \(\Q_{\geq 0}\)-g. relations instead of semilinear relations the transitive closure is easy to compute. We remark that while the theorem states polynomial time, we have to perform quantifier elimination on the obtained formula afterwards, which causes a possibly elementary blowup.

We have to prove that \(\vectSet{X}_i:=\Rel(\VAS_i,\qini,\qfini) \cap \pi_{\vect{x}_i, \vect{y}_i}(\sol(\CharSys))\) is uniform and that the above formula remains correct also in our setting, since our setting is a slight extension of \cite{Leroux13} and \cite{Guttenberg24}.

\begin{lemma} \label{LemmaSingleComponentUniform}
Let \(\vectSet{X}_i:=\Rel(\VAS_i,\qini,\qfini) \cap \pi_{\vect{x}_i, \vect{y}_i}(\sol(\CharSys))\) be an SCC in a perfect m-eVASS for \(\RelationClass=\)VASSnzSec\((k-1)\), the sections of \(k-1\) priority VASSnz. Then \(\vectSet{X}_i\) is uniform with \(\Pumps(\vectSet{X}_i)\) as in formula \eqref{EquationPumpsFormula}.
\end{lemma}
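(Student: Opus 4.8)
\textbf{Proof plan for Lemma \ref{LemmaSingleComponentUniform}.}

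The plan is to verify the four conditions of Definition \ref{DefinitionUniform} together with well-directedness and non-emptiness for the single strongly-connected component $\vectSet{X}_i = \Rel(\VAS_i,\qini,\qfini) \cap \vectSet{L}_i$, exploiting heavily that the ambient $\RelationClass$-KLM sequence is \emph{perfect}, i.e.\ P1--P7 hold. First, non-emptiness: by P5 every counter has a cycle changing it and by P7 every configuration is forwards/backwards-coverable, and by Lemma \ref{LemmaLocalCorrectness} (Case 2) we in fact already know $\vectSet{X}_i \HybridizationRelation \vectSet{L}_i'$ where $\vectSet{L}_i' \ne \emptyset$ contains at least $\vect{b}_i \in \vectSet{L}_i$; combined with Axiom 1 / Definition \ref{DefinitionGoodOverapproximation}, $\vectSet{X}_i \ne \emptyset$. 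Second, well-directedness: this is the ``standard'' ingredient inherited from the VASS/VASSnz community --- runs of a VASSnz form a well-quasi-order (this is exactly what makes $\Pumps$ of such relations well-behaved in \cite{Leroux13,Guttenberg24}), so between infinitely many configurations visited one extracts an infinite subsequence along which the intermediate runs embed monotonically, yielding the required pumpable lines. I would cite the corresponding statement from \cite{Guttenberg24} rather than redo the wqo argument.

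Next the heart of the matter: computing $\Pumps(\vectSet{X}_i)$ and establishing conditions 1--4. The key claim is formula \eqref{EquationPumpsFormula}, $\Pumps(\vectSet{X}_i) = \dir(\vectSet{C})^{\ast} \cap \Pumps(\vectSet{L}_i)$, where $\vectSet{C}$ is the (semilinear, support-by-support) relation of pairs $(\vect{e},\vect{f})$ with $\vect{f}-\vect{e}$ an effect of \emph{some} cycle in $\VAS_i$, allowing periods only on edges in the chosen support. The ``$\subseteq$'' inclusion is the easy direction: $\vectSet{X}_i \subseteq \vectSet{L}_i$ gives $\Pumps(\vectSet{X}_i) \subseteq \Pumps(\vectSet{L}_i)$, and a pumpable line in $\vectSet{X}_i$ necessarily decomposes into finitely many connected-cycle effects (tracking which edges the line's increments touch), so its direction lies in $\dir(\vectSet{C})^{\ast}$. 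The ``$\supseteq$'' inclusion is the sequentially-enabled-cycles argument sketched in the excerpt: given directions realised by cycles $\rho_1,\dots,\rho_k$ with $\dirOfRun(\rho_j)=(\vect{v}_{j-1},\vect{v}_j)$, property P7 (every configuration forwards/backwards coverable) provides a run $\rho$ from $\qini$ to $\qfini$ enabling all of them in sequence, and monotonicity of edges inside the SCC (Definition \ref{DefinitionMonotoneEVASS}) lets us pump each $\rho_j^{\ast}$, giving $\dirOfRun(\rho) + \N(\vect{v}_0,\vect{v}_k) \subseteq \Rel(\VAS_i,\qini,\qfini)$; one also has to intersect with $\vectSet{L}_i$, which is fine because $\vect{v}_j$ already lies in $\Pumps(\vectSet{L}_i) = \Q_{\geq 0}(\vectSet{F}_i) \cap \Z(\vectSet{F}_i)$ by Lemma \ref{LemmaPumpsDirectedHybridlinear} (using that $\vectSet{L}_i$ is directed hybridlinear by P3, P4 and Lemma \ref{LemmaDirectedHybridlinearNondegenerateIntersection}), so taking all relevant periods of $\vectSet{L}_i$ often enough keeps the line inside $\vectSet{L}_i$.

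From formula \eqref{EquationPumpsFormula} the four conditions follow. For condition 3, $\dir(\vectSet{X}_i)$: note $\dir(\vectSet{X}_i)$ is the $\Q_{\geq 0}$-conic hull of $\Pumps(\vectSet{X}_i)$, $\dir(\vectSet{C})$ is definable (it is a $\Q_{\geq 0}$-generated relation, one per support), and by Leroux's Theorem \cite[Thm.\ VII.1]{Leroux13} the reflexive-transitive closure $\dir(\vectSet{C})^{\ast}$ is again definable in $\FO(\Q,\leq,+)$; intersecting with the semilinear $\Pumps(\vectSet{L}_i)$ and taking the conic hull preserves definability. For conditions 1 and 2, I would invoke the fact that $\vectSet{X}_i \HybridizationRelation \vectSet{L}_i'$ (Lemma \ref{LemmaLocalCorrectness}) together with Lemma \ref{LemmaPumpsOfNiceSet}: this immediately gives $\Pumps(\Pumps(\vectSet{X}_i)) = \Q_{\geq 0}(\vectSet{F}) \cap \Z(\vectSet{F})$ for the period set $\vectSet{F}$ of $\vectSet{L}_i'$, which is directed hybridlinear after adding a finite base set $\vectSet{B}$ (so condition 4 holds), and $\vectSet{X}_i \subseteq \vectSet{L}_i' \subseteq \vectSet{B} + \Pumps(\Pumps(\vectSet{X}_i))$ (condition 2). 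For condition 1, $\Pumps(\vectSet{X}_i) = \Pumps(\PX)$ where $\PX$ is the preservant set of $\vectSet{X}_i$: the inclusion $\Pumps(\PX) \subseteq \Pumps(\vectSet{X}_i)$ is monotonicity; the reverse requires showing that a direction that pumps \emph{somewhere} in $\vectSet{X}_i$ actually gives a \emph{preservant}, which is exactly where monotonicity of the SCC edges plus transitivity of $\Rel(\VAS_i,\qini,\qfini)$ is used --- pumping a cycle at a sufficiently large configuration can be transported to any starting configuration by first covering it. I expect \textbf{the main obstacle} to be condition 1, $\Pumps(\vectSet{X}_i) = \Pumps(\PX)$: making precise that ``pumpable somewhere implies pumpable as a preservant'' needs the monotonicity argument applied carefully to the \emph{section} $\vectSet{R}_i \cap \vectSet{L}_i$ rather than the bare reachability relation, i.e.\ one must simultaneously supply enough periods of $\vectSet{L}_i$ to stay inside the section while shifting the base point of the line; formula \eqref{EquationPumpsFormula} and Lemma \ref{LemmaPumpsOfNiceSet} provide the semilinear handle that makes this bookkeeping finite, but the argument that every direction in $\dir(\vectSet{C})^{\ast} \cap \Pumps(\vectSet{L}_i)$ is genuinely a preservant direction (not merely a pump) is the delicate point.
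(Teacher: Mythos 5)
Your overall architecture matches the paper's: establish formula \eqref{EquationPumpsFormula}, use sequentially enabled cycles for the lower bound, Leroux's Theorem VII.1 for definability of the transitive closure, and Lemma \ref{LemmaPumpsOfNiceSet} for conditions 2 and 4 of uniformity. However, there are genuine gaps. The most serious is your treatment of the inclusion \(\Pumps(\vectSet{X}_i) \subseteq \dir(\vectSet{C})^{\ast}\), which you call ``the easy direction'' and justify by saying a pumpable line ``necessarily decomposes into finitely many connected-cycle effects''. This does not follow: a line \(\vect{x}+\N\vect{v} \subseteq \vectSet{X}_i\) only says that for each \(n\) \emph{some} run realizes \(\vect{x}+n\vect{v}\), and these runs need not be structurally related in any way, so there is no a priori decomposition of \(\vect{v}\) into cycle effects. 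The paper explicitly flags this direction as the hard one and imports the flattability theorem for VASSnz from \cite{Guttenberg24} (every semilinear subrelation of the reachability relation, in particular a line, is realized by a flat family \(\rho_0'\rho_1^{\ast}\rho_1'\cdots\rho_r^{\ast}\rho_r'\)); without citing that theorem your argument for this inclusion is missing its key ingredient.

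Two further points. First, you correctly identify the delicate issue that one must show directions in \(\dir(\vectSet{C})^{\ast}\cap\Pumps(\vectSet{L}_i)\) are \emph{preservant} directions (condition 1), but you leave it unresolved; the paper resolves it by constructing a single cycle \(\rho_{\text{enable}}\) at \(\qini\), paid for by a fixed number of homogeneous solutions, that sequentially enables all needed cycles and can therefore be prepended to \emph{every} run of \(\vectSet{X}_i\), and it removes the scaling factor \(n\) coming from \(\dir\) (where the \(j_1,j_2\) trick fails for lack of full support) by writing \(\N=\{0,\dots,n-1\}+n\N\) and pumping \(n\vect{v}\) from \(n\) shifted base points \(\vect{w}_0,\dots,\vect{w}_{n-1}\in\interior(\vectSet{L}_i)\cap\PX\). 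Second, you omit the induction on the number of priorities \(k\): the edges of \(\VAS_i\) are labelled with sections of \((k-1)\)-priority VASSnz, and the paper needs the inductive hypothesis that each \(\vectSet{R}(e)\) is itself uniform (together with P1) to make sense of the period contributions in \(\vectSet{C}\) and to apply the flattability argument at the edge level; treating the edges merely as ``monotone'' is not enough.
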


\begin{proof}
Proof by induction on \(k\).

\(k=0\) (normal VASS) are a subcase of the induction step.

\(k \to k+1\): For every edge \(e\in E_i\), the relation \(\vectSet{R}(e)\) is uniform by induction. Let us consider the formula \eqref{EquationPumpsFormula}, and first prove that the pumps are at most the set given in the formula. To see \(\subseteq \text{Grid}\), remember that \(\Pumps\) is monotone and \(\text{Grid}\) is a fixed point, hence it remains to prove \(\vectSet{X}_i \subseteq \text{Grid}\). Since \(\vectSet{X}_i \subseteq \pi_{\vect{x}_i, \vect{y}_i}(\sol(\CharSys))\) by definition, we have \(\Pumps(\vectSet{X}_i) \subseteq \Pumps(\pi_{\vect{x}_i, \vect{y}_i}(\sol(\CharSys)))=\text{Grid}\).

The easiest way to argue containment in \(\dir(\vectSet{C}_i)^{\ast}\) would be by defining an ordering on runs, but we can cite a result of \cite{Guttenberg24} who performed this step for us: Semilinear subreachability relations of VASSnz (in particular the ones of the form \(\vect{x}+\N \vect{v}\) obtained from pumps) are \emph{flattable}, i.e.\ there exist finitely many runs \(\rho_0', \dots, \rho_r'\) and cycles \(\rho_1, \dots, \rho_r\) s.t. all of \(\vect{x}+\N \vect{v}\) can be reached using runs in \(\rho_0' \rho_1^{\ast} \rho_1' \dots \rho_r^{\ast} \rho_r'\) as required.

So the interesting direction is the other containment. Let \(\vect{v} \in \dir(\vectSet{C}_i)^{\ast} \cap \text{Grid}\). Then there exist vectors \(\vect{v}_0, \vect{v}_1, \dots, \vect{v}_k\) s.t. \((\vect{v}_j, \vect{v}_{j+1}) \in \dir(\vectSet{C}_i)\) for all \(0 \leq j \leq k-1\). Being \(\in \dir(\vectSet{C}_i)\) means there exist \(m_j \in \N\) (we use their common denominator to use a common \(m\)) s.t. \(m(\vect{v}_{j+1}- \vect{v}_j)\) is the effect of some cycle \(\rho_j\). At this point we can explain the main two difficulties of this proof: 1): What does it even mean to prove that \(\vect{v}\) is a pump of \(\vectSet{P}_{\vectSet{X}_i}\), and not of \(\vectSet{X}_i\) itself? 2) How to remove the factor \(m\)? The reader might think of the \(j_1, j_2\) trick from proving correctness, but it does not work here: That trick requires a full support solution. For a counter example, simply consider a 1-VASS with two states \(q_1, q_2\). On state \(q_1\) we have a loop with \(x+=2\), which is the direction we want to pump. On state \(q_2\) there is a loop with \(x-=1\). Clearly, in order to find a cycle with effect \(+=1\) it has to be allowed to move to state \(q_2\), i.e.\ we need a full support solution to pump \(+=1\), otherwise we can indeed only pump \(+=2\).

Let us explain the solutions to these questions. Regarding 1), in order to prove membership in \(\vectSet{P}_{\vectSet{X}_i}\), we have to fix an amount of homogeneous solutions only depending on \(\vect{v}\) s.t. no matter what element of \(\vectSet{X}_i\), i.e.\ what \emph{actual} run \(\rho\), we start from, we can add the fixed amount of homogeneous solutions and ensure that this only slightly enlarged run \(\rho'\) can pump \(\vect{v}\). Regarding 2), there is only one logical solution: We need to write \(\N=\{0, \dots, m-1\}+m\N\) and prove that not just \(\rho'\) itself, but also \(m-1\) neighbouring runs \(\rho_j\) with \(\dirOfRun(\rho_j)=\dirOfRun(\rho')+j \vect{v}\) for \(0 \leq j \leq m-1\) can pump \(m\vect{v}\). Then the whole line is reachable, even though we did not find a corresponding single cycle.

We inevitably have to build the solution to 2) first. Let \(\vect{w}_0, \dots, \vect{w}_{m-1} \in \vectSet{P}_{\vectSet{X}_i}\) with \(\vect{w}_{j+1}=\vect{w}_j+\vect{v}\) for all \(0 \leq j \leq m-1\). These clearly exist since ``almost all interior vectors'' are in \(\vectSet{P}_{\vectSet{X}_i}\) by correctness and the \(j_1, j_2\) trick.

Now we choose the fixed amount of homogeneous solutions to add: Fix an amount \(k\) s.t. \(k\) homogeneous solutions alone are enough to be able to embed enough up- and down-pumping sequences to perform a cycle \(\rho_{\text{enable}}\) on \(\qini\) which sequentially enables not only the \(\rho_j\) but also the loops corresponding to \(\vect{w}_0, \dots, \vect{w}_{m-1}\).

Now we have to verify our choice. Let \(\vect{x} \in \vectSet{X}_i\), i.e.\ let \(\rho\) be a run with \(\dirOfRun(\rho)=\vect{x}\) and which we know is executable. We use the run \(\rho':=\mathbf{up}^k  \rho_{\text{enable}} \mathbf{diff}^k \rho \mathbf{dwn}^k \). We argue that the run is enabled and can pump the required vectors: The run \(\mathbf{up}^k \rho_{\text{enable}} \mathbf{diff}^k\) is enabled by choice of \(k\). Also by choice of \(k\) and \(\rho_{\text{enable}}\), this only consumes the extra values we added using the homogeneous solutions. Hence the run \(\rho\) which was originally enabled is still enabled. Finally it is easy to see that also \(\mathbf{dwn}^k\) is enabled, since it was chosen as a down-pumping sequence.

To see that the new run can enable all the \(\rho_j\) in sequence, remember that this was specifically how \(\rho_{\text{enable}}\) was chosen. To argue the factor \(m\) away, remember that \(\rho_{\text{enable}}\) also enables the \(\vect{w}_j\). Hence we have \((\dirOfRun(\rho')+\vect{w}_0)+\N \vect{v} \subseteq \vectSet{X}_i\). In particular not \(\rho'\) itself can pump \(\N \vect{v}\), but \(\rho'+\vect{w}_0\). Since \(\rho\) was chosen arbitrarily, we not only found a pump in \(\vectSet{X}_i\), in fact we have shown that \(k_{\text{hom}}+\vect{w}_0+\N \vect{v} \subseteq \vectSet{P}_{\vectSet{X}_i}\), where \(k_{\text{hom}}\) stands for the constant number of homogeneous solutions we had to add.
\end{proof}

\subsection{Step 2 Towards Strong Approximability}

In this section we prove that uniform sets are stable under non-degenerate composition. To this end, we first have to repeat the most important property connected to non-degenerate intersection of \(\N\)-g. sets: Fusing lines. We do this over the course of two lemmas.

\begin{lemma}
Let \(\vectSet{P}_1 \subseteq \vectSet{P}_2\) be \(\N\)-g. with \(\dim(\vectSet{P}_1)=\dim(\vectSet{P}_2)\). Then they generate the same vector space \(\vectSet{V}\), and for all \(\vect{p}_2 \in \vectSet{P}_2\) there exists \(\vect{p}_2' \in \vectSet{P}_2\) with \(\vect{p}_2+\vect{p}_2' \in \vectSet{P}_1\). \label{LemmaNondegenerateInclusion}
\end{lemma}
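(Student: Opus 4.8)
The plan is to prove the two assertions separately; both are standard ``fusing lines'' facts about $\N$-generated sets of equal dimension, so the proof should be short and essentially self-contained given Lemma~\ref{LemmaFromJerome}.

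For the first assertion I would argue as follows. Since $\vectSet{P}_1 \subseteq \vectSet{P}_2$ we get $\Q(\vectSet{P}_1) \subseteq \Q(\vectSet{P}_2)$ by monotonicity of the $\Q$-span. By Lemma~\ref{LemmaFromJerome}, $\dim(\Q(\vectSet{P}_i)) = \dim(\vectSet{P}_i)$ for $i \in \{1,2\}$, so the hypothesis $\dim(\vectSet{P}_1) = \dim(\vectSet{P}_2)$ yields $\dim(\Q(\vectSet{P}_1)) = \dim(\Q(\vectSet{P}_2))$. Two nested finite-dimensional vector spaces of the same dimension coincide (a basis of the smaller one is a linearly independent set of full size in the larger one, hence a basis of it), so $\Q(\vectSet{P}_1) = \Q(\vectSet{P}_2) =: \vectSet{V}$.

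For the second assertion, fix $\vect{p}_2 \in \vectSet{P}_2$. Then $\vect{p}_2 \in \vectSet{P}_2 \subseteq \vectSet{V} = \Q(\vectSet{P}_1)$, so I can write $\vect{p}_2 = \sum_i \lambda_i \vect{f}_i$ with $\lambda_i \in \Q$, $\vect{f}_i \in \vectSet{P}_1$; clearing denominators produces $N \in \N_{\geq 1}$ and $\mu_i \in \Z$ with $N\vect{p}_2 = \sum_i \mu_i \vect{f}_i$. Splitting the sum by the sign of $\mu_i$, put $\vect{a} := \sum_{\mu_i > 0} \mu_i \vect{f}_i$ and $\vect{b} := \sum_{\mu_i < 0} (-\mu_i)\vect{f}_i$; since $\vectSet{P}_1$ is $\N$-generated (closed under addition and $\N$-scaling, containing $\vect{0}$), both $\vect{a}, \vect{b} \in \vectSet{P}_1$, and $N\vect{p}_2 + \vect{b} = \vect{a}$. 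Now set $\vect{p}_2' := (N-1)\vect{p}_2 + \vect{b}$. Because $\vect{p}_2 \in \vectSet{P}_2$, $\vect{b} \in \vectSet{P}_1 \subseteq \vectSet{P}_2$, and $\vectSet{P}_2$ is $\N$-generated, we get $\vect{p}_2' \in \vectSet{P}_2$; and $\vect{p}_2 + \vect{p}_2' = N\vect{p}_2 + \vect{b} = \vect{a} \in \vectSet{P}_1$, as required.

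I do not expect a genuine obstacle here. The only point that needs care is that clearing denominators gives a statement about the multiple $N\vect{p}_2$, not about $\vect{p}_2$ itself, which is exactly why the extra summand $(N-1)\vect{p}_2$ is absorbed into $\vect{p}_2'$; and one should note in passing that the degenerate cases $\vect{p}_2 = \vect{0}$ and $N = 1$ are covered uniformly by the same formula. The subsequent lemma (presumably the actual line-fusing statement used for composition of uniform sets) will build on this, but the present statement itself is elementary.
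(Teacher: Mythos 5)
Your proof is correct and follows essentially the same route as the paper: the equality of spans via Lemma~\ref{LemmaFromJerome} plus nested equal-dimensional vector spaces, and then clearing denominators and moving the negative part of the combination into $\vect{p}_2'$ together with the surplus multiple $(N-1)\vect{p}_2$. Your write-up is in fact slightly more explicit than the paper's (which phrases the sign-splitting as choosing a compensating vector $\vect{p}_2''$ and then rescaling), but the underlying argument is identical.
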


\begin{proof}
Let \(\vectSet{V}_1\) be the vector space generated by \(\vectSet{P}_1\), and \(\vectSet{V}_2\) the vector space generated by \(\vectSet{P}_2\). Since \(\vectSet{P}_1 \subseteq \vectSet{P}_2\) we have \(\vectSet{V}_1 \subseteq \vectSet{V}_2\). On the other hand remember Lemma \ref{LemmaFromJerome}: We obtain \(\dim(\vectSet{V}_1)=\dim(\vectSet{P}_1)= \dim(\vectSet{P}_2)=\dim(\vectSet{V}_2)\). Together we obtain \(\vectSet{V}_1=\vectSet{V}_2\).

Now let \(\vect{p}_2\in \vectSet{P}_2\). Write \(\vect{p}_2 \in \vectSet{V}_2=\vectSet{V}_1\) as linear combination of \(\vectSet{P}_1\). Simply choose \(\vect{p}_2''\) such that all negative components of the linear combination giving \(\vect{p}_2\) become positive. Now there exists \(m \in \N\) s.t. \(m(\vect{p}_2+\vect{p}_2'') \in \vectSet{P}_1\). Choose \(\vect{p}_2':=(m-1) \vect{p}_2 + m \vect{p}_2' \in \N(\vectSet{P}_2)=\vectSet{P}_2\).
\end{proof}

\begin{lemma}
Let \(\vectSet{P}_1, \vectSet{P}_2\) be \(\N\)-g. with a non-degenerate intersection. Then for all \(\vect{p}_1 \in \vectSet{P}_1\) and \(\vect{p}_2 \in \vectSet{P}_2\) there exist \(\vect{p}_1' \in \vectSet{P}_1\) and \(\vect{p}_2' \in \vectSet{P}_2\) such that \(\vect{p}_1+\vect{p}_1'=\vect{p}_2+\vect{p}_2'\). \label{LemmaNonDegenerateCombineVectors}
\end{lemma}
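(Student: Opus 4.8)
The statement is the standard ``fusing lines'' lemma, and the plan is to reduce it directly to Lemma~\ref{LemmaNondegenerateInclusion}, which has already done the genuinely computational work (clearing denominators in the common vector space). The key observation is that $\vectSet{Q} := \vectSet{P}_1 \cap \vectSet{P}_2$ is itself $\N$-generated, since it is closed under addition and contains $\vect{0}$, so that $\N(\vectSet{Q}) = \vectSet{Q}$. Moreover, by the hypothesis that the intersection is non-degenerate we have $\dim(\vectSet{Q}) = \dim(\vectSet{P}_1) = \dim(\vectSet{P}_2)$. Hence both inclusions $\vectSet{Q} \subseteq \vectSet{P}_1$ and $\vectSet{Q} \subseteq \vectSet{P}_2$ satisfy the hypotheses of Lemma~\ref{LemmaNondegenerateInclusion}.

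\textbf{Main steps.} First I would apply Lemma~\ref{LemmaNondegenerateInclusion} to $\vectSet{Q} \subseteq \vectSet{P}_1$ at the given vector $\vect{p}_1 \in \vectSet{P}_1$: this yields $\vect{a} \in \vectSet{P}_1$ with $\vect{c}_1 := \vect{p}_1 + \vect{a} \in \vectSet{Q}$. Symmetrically, applying Lemma~\ref{LemmaNondegenerateInclusion} to $\vectSet{Q} \subseteq \vectSet{P}_2$ at $\vect{p}_2 \in \vectSet{P}_2$ yields $\vect{b} \in \vectSet{P}_2$ with $\vect{c}_2 := \vect{p}_2 + \vect{b} \in \vectSet{Q}$. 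Since $\vectSet{Q}$ is closed under addition, $\vect{z} := \vect{c}_1 + \vect{c}_2 \in \vectSet{Q}$. Now set $\vect{p}_1' := \vect{a} + \vect{c}_2$ and $\vect{p}_2' := \vect{b} + \vect{c}_1$. Then $\vect{p}_1' \in \vectSet{P}_1$ because $\vect{a} \in \vectSet{P}_1$ and $\vect{c}_2 \in \vectSet{Q} \subseteq \vectSet{P}_1$, and likewise $\vect{p}_2' \in \vectSet{P}_2$ because $\vect{b} \in \vectSet{P}_2$ and $\vect{c}_1 \in \vectSet{Q} \subseteq \vectSet{P}_2$. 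Finally
\[
\vect{p}_1 + \vect{p}_1' = \vect{p}_1 + \vect{a} + \vect{c}_2 = \vect{c}_1 + \vect{c}_2 = \vect{c}_2 + \vect{c}_1 = \vect{p}_2 + \vect{b} + \vect{c}_1 = \vect{p}_2 + \vect{p}_2',
\]
which is exactly the claim.

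\textbf{Main obstacle.} There is essentially no difficulty here beyond bookkeeping; the one point that needs an explicit (one-line) justification is that $\vectSet{P}_1 \cap \vectSet{P}_2$ is an $\N$-generated set of the correct dimension, so that Lemma~\ref{LemmaNondegenerateInclusion} is applicable to both inclusions $\vectSet{Q} \subseteq \vectSet{P}_i$. If one prefers to avoid invoking Lemma~\ref{LemmaNondegenerateInclusion}, an alternative is to note $\vect{p}_1, \vect{p}_2$ both lie in the common vector space $\vectSet{V} = \Q(\vectSet{Q})$ (using Lemma~\ref{LemmaFromJerome}), write $\vect{p}_1 - \vect{p}_2 = \vect{r}_+ - \vect{r}_-$ with $\vect{r}_\pm \in \Q_{\geq 0}(\vectSet{Q})$, clear denominators to get $n$ with $n\vect{r}_\pm \in \vectSet{Q}$, and take $\vect{p}_1' := (n-1)\vect{p}_1 + n\vect{r}_-$, $\vect{p}_2' := (n-1)\vect{p}_2 + n\vect{r}_+$; but this merely reproduces the argument inside Lemma~\ref{LemmaNondegenerateInclusion}, so reusing that lemma is cleaner.
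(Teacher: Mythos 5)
Your proof is correct and is essentially identical to the paper's: both apply Lemma~\ref{LemmaNondegenerateInclusion} to the two inclusions $\vectSet{P}_1 \cap \vectSet{P}_2 \subseteq \vectSet{P}_i$ to push $\vect{p}_1$ and $\vect{p}_2$ into the intersection, and then add the resulting intersection elements crosswise (your $\vect{p}_1' = \vect{a}+\vect{c}_2$, $\vect{p}_2' = \vect{b}+\vect{c}_1$ matches the paper's choice exactly). Your extra remark that $\vectSet{P}_1 \cap \vectSet{P}_2$ is $\N$-generated of the right dimension is a point the paper leaves implicit, but it is the same argument.
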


\begin{proof}
Since \(\vectSet{P}_1 \cap \vectSet{P}_2\) is non-degenerate, we can apply Lemma \ref{LemmaNondegenerateInclusion} twice, namely with \(\vectSet{P}_1 \cap \vectSet{P}_2 \subseteq \vectSet{P}_i\) for \(i\in \{1,2\}\). The vector spaces generated by \(\vectSet{P}_1\) and \(\vectSet{P}_2\) are hence equal, and there are \(\vect{p}_1'', \vect{p}_2''\) with \(\vect{p}_1+\vect{p}_1'', \vect{p}_2+\vect{p}_2'' \in \vectSet{P}_1 \cap \vectSet{P}_2\). Choose \(\vect{p}_1':=\vect{p}_1''+(\vect{p}_2+\vect{p}_2'') \in \vectSet{P}_1+\vectSet{P}_1 \subseteq \vectSet{P}_1\), and similarly \(\vect{p}_2':=\vect{p}_2''+(\vect{p}_1+\vect{p}_1'') \in \vectSet{P}_2\). Clearly \(\vect{p}_1+\vect{p}_1'=\vect{p}_2+\vect{p}_2'\).
\end{proof}

Lemma \ref{LemmaNonDegenerateCombineVectors} might seem inconspicuous, but is the basis of computing with pumps in case of non-degenerate intersection/composition. Namely Lemma \ref{LemmaNonDegenerateCombineVectors} implies many results like the following, always using exactly the same technique we call \emph{fusing lines}:

\begin{lemma}
Let \(\vectSet{P}_1, \vectSet{P}_2\) be \(\N\)-g., \(\vectSet{P}_1 \cap \vectSet{P}_2\) non-degenerate. Then \(\Pumps(\vectSet{P}_1)\cap \Pumps(\vectSet{P}_2) = \Pumps(\vectSet{P}_1 \cap \vectSet{P}_2)\).
\end{lemma}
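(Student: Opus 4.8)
The plan is to prove the two inclusions separately, the nontrivial one being a direct application of the ``fusing lines'' technique packaged in Lemma~\ref{LemmaNonDegenerateCombineVectors}.

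For ``$\supseteq$'': since $\vectSet{P}_1 \cap \vectSet{P}_2 \subseteq \vectSet{P}_i$ for $i \in \{1,2\}$ and $\Pumps$ is monotone (Lemma~\ref{LemmaObviousPropertiesOfPumps}), we get $\Pumps(\vectSet{P}_1 \cap \vectSet{P}_2) \subseteq \Pumps(\vectSet{P}_i)$ for both $i$, hence $\Pumps(\vectSet{P}_1 \cap \vectSet{P}_2) \subseteq \Pumps(\vectSet{P}_1) \cap \Pumps(\vectSet{P}_2)$. This direction uses no hypothesis on the intersection. For ``$\subseteq$'': fix $\vect{v} \in \Pumps(\vectSet{P}_1) \cap \Pumps(\vectSet{P}_2)$. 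By definition of $\Pumps$ there are base points $\vect{x}_1, \vect{x}_2$ with $\vect{x}_1 + \N \vect{v} \subseteq \vectSet{P}_1$ and $\vect{x}_2 + \N \vect{v} \subseteq \vectSet{P}_2$; in particular (take $n=0$) $\vect{x}_1 \in \vectSet{P}_1$ and $\vect{x}_2 \in \vectSet{P}_2$. Since $\vectSet{P}_1 \cap \vectSet{P}_2$ is non-degenerate, Lemma~\ref{LemmaNonDegenerateCombineVectors} applied to $\vect{x}_1, \vect{x}_2$ yields $\vect{p}_1' \in \vectSet{P}_1$ and $\vect{p}_2' \in \vectSet{P}_2$ with $\vect{y} := \vect{x}_1 + \vect{p}_1' = \vect{x}_2 + \vect{p}_2'$, so in particular $\vect{y} \in \vectSet{P}_1 \cap \vectSet{P}_2$. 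The claim is then that $\vect{y} + \N \vect{v} \subseteq \vectSet{P}_1 \cap \vectSet{P}_2$, which gives $\vect{v} \in \Pumps(\vectSet{P}_1 \cap \vectSet{P}_2)$ and finishes the argument: for every $n \in \N$ we have $\vect{y} + n\vect{v} = (\vect{x}_1 + n\vect{v}) + \vect{p}_1'$, a sum of two elements of the addition-closed set $\vectSet{P}_1$, so $\vect{y} + n\vect{v} \in \vectSet{P}_1$, and symmetrically $\vect{y} + n\vect{v} = (\vect{x}_2 + n\vect{v}) + \vect{p}_2' \in \vectSet{P}_2$.

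There is essentially no real obstacle: the entire content is the fusion step, and Lemma~\ref{LemmaNonDegenerateCombineVectors} (resting on Lemma~\ref{LemmaNondegenerateInclusion}) already supplies it. The one point to be careful about is that after fusing the two base points one must translate the \emph{whole} line $\vect{x}_i + \N \vect{v}$, not merely a single point, into the intersection — which works precisely because $\vectSet{P}_i$ is $\N$-generated and hence closed under adding the fixed offset $\vect{p}_i'$. It is also worth noting that non-degeneracy is essential: for $\N$-generated sets spanning different directions the two $\Pumps$ can meet only in $\{\vect{0}\}$ while their intersection may already be $\{\vect{0}\}$, matching the phenomenon already illustrated after Lemma~\ref{LemmaFinitelyGeneratedNondegenerateIntersection}.
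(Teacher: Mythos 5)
Your proof is correct and follows essentially the same route as the paper: the easy inclusion by monotonicity, and the hard one by fusing the two base points via Lemma~\ref{LemmaNonDegenerateCombineVectors} and then translating the whole line by the fixed offsets using closure under addition. The only (harmless) extra detail you add is the explicit remark that $\vect{x}_i \in \vectSet{P}_i$ via $n=0$, which the paper leaves implicit.
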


\begin{proof}
``\(\supseteq\)'' is trivial. Hence let \(\vect{v} \in \Pumps(\vectSet{P}_1)\cap \Pumps(\vectSet{P}_2)\). We have to prove that \(\vect{v} \in \Pumps(\vectSet{P}_1 \cap \vectSet{P}_2)\).

By definition of \(\Pumps\), there exist \(\vect{x}_1, \vect{x}_2\) s.t. \(\vect{x}_1+\N \vect{v} \subseteq \vectSet{P}_1\) and \(\vect{x}_2+\N \vect{v} \subseteq \vectSet{P}_2\). By Lemma \ref{LemmaNonDegenerateCombineVectors}, there exist \(\vect{p}_1, \vect{p}_2\) s.t. \(\vect{x}:=\vect{x}_1+\vect{p}_1=\vect{x}_2+\vect{p}_2 \in \vectSet{P}_1 \cap \vectSet{P}_2\). We claim \(\vect{x}+\N \vect{v} \subseteq \vectSet{P}_1 \cap \vectSet{P}_2\). We prove the two containments separately. To see containment in \(\vectSet{P}_1\), for all \(m\in \N\) we have \[\vect{x}+m \vect{v}=\vect{x}_1+\vect{p}_1+m\vect{v}=(\vect{x}_1+m \vect{v})+\vect{p}_1 \in \vectSet{P}_1+\vectSet{P}_1 \subseteq \vectSet{P}_1\]
as claimed.
\end{proof}

What happened is simple: When ``fusing'' \(\vect{x}_1\) and \(\vect{x}_2\) into a common point \(\vect{x}\), also the ``lines above them'' in direction \(\vect{v}\) were fused. 

One might wonder how to prove that arbitrary \(\N\)-g. sets have a non-degenerate intersection. Regarding this problem we have the following:

\begin{proposition}\cite[Prop. 3.9]{GuttenbergRE23} \label{PropositionCheckNonDegenerate}
Let \(\vectSet{P}_1, \vectSet{P}_2\) be \(\N\)-g. s.t. \(\Pumps(\Pumps(\vectSet{P}_i))\) are semilinear for \(i \in \{1,2\}\) and \(\Pumps(\Pumps(\vectSet{P}_1)) \cap \Pumps(\Pumps(\vectSet{P}_2))\) is non-degenerate. Then \(\vectSet{P} \cap \vectSet{P}'\) is non-degenerate.
\end{proposition}

I.e.\ non-degenerate intersection transfers from the overapproximation to the actual sets. The simplest proof of Proposition \ref{PropositionCheckNonDegenerate} using theory from this paper is to use that \(\vectSet{P} \HybridizationRelation \Pumps(\Pumps(\vectSet{P}))\) and use utbo overapproximation, i.e.\ Lemma \ref{LemmaUTBOApproximation}. 

We now start the main proof of this section.

\begin{lemma}
Let \(\vectSet{X}_{12}\HybridizationRelation \vectSet{L}_{12}\) and \(\vectSet{X}_{23}\HybridizationRelation \vectSet{L}_{23}\) be s.t. \(\vectSet{X}_{12}, \vectSet{X}_{23}\) are uniform and \(\piin(\vectSet{L}_{23}) \cap \piout(\vectSet{L}_{12})\) is non-degenerate. Then \(\vectSet{X}_{12} \circ \vectSet{X}_{23} \HybridizationRelationDirected \vectSet{L}_{12} \circ \vectSet{L}_{23}\) is uniform with the set of pumps 

\(\Pumps(\vectSet{X}_{12} \circ \vectSet{X}_{23})=\Pumps(\vectSet{X}_{12}) \circ \Pumps(\vectSet{X}_{23})\).
\end{lemma}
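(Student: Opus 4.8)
The plan is to prove the three assertions of the statement in turn, bootstrapping each from the previous: (i) that $\vectSet{X}_{12}\circ\vectSet{X}_{23}\HybridizationRelation\vectSet{L}_{12}\circ\vectSet{L}_{23}$ and that $\vectSet{L}_{12}\circ\vectSet{L}_{23}$ is directed hybridlinear; (ii) the pump identity $\Pumps(\vectSet{X}_{12}\circ\vectSet{X}_{23})=\Pumps(\vectSet{X}_{12})\circ\Pumps(\vectSet{X}_{23})$; and (iii) the four conditions of Definition \ref{DefinitionUniform} for $\vectSet{Z}:=\vectSet{X}_{12}\circ\vectSet{X}_{23}$. Write $\vectSet{L}_{12}=\vectSet{B}_{12}+\N(\vectSet{F}_{12})$ and $\vectSet{L}_{23}=\vectSet{C}_{23}+\N(\vectSet{G}_{23})$. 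For (i), the nice-overapproximation part is exactly the composition case of Lemma \ref{LemmaShiftGoodOverapproximation}, whose hypothesis is the assumed non-degeneracy of $\piin(\vectSet{L}_{23})\cap\piout(\vectSet{L}_{12})$; and $\vectSet{L}_{12}\circ\vectSet{L}_{23}=\pi\bigl((\vectSet{L}_{12}\times\N^{d_3})\cap(\N^{d_1}\times\vectSet{L}_{23})\bigr)$ is directed hybridlinear because the two cylindrifications are directed hybridlinear with a non-degenerate intersection (the same non-degeneracy, transported to the cylinders, as in the proof of Lemma \ref{LemmaShiftGoodOverapproximation}), hence so is their intersection by Lemma \ref{LemmaDirectedHybridlinearNondegenerateIntersection}, and projections of directed hybridlinear sets are directed hybridlinear by the characterization in Lemma \ref{LemmaDirectedHybridlinearEquivalence}(3).

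For the inclusion "$\supseteq$" of (ii) --- the constructive core, and the part not furnished by prior work --- I would argue by \emph{fusing lines}. Let $(\vect{u},\vect{v})\in\Pumps(\vectSet{X}_{12})$ and $(\vect{v},\vect{w})\in\Pumps(\vectSet{X}_{23})$. By uniformity condition 1 of the factors, $\Pumps(\vectSet{X}_{12})=\Pumps(\vectSet{P}_{\vectSet{X}_{12}})$ and likewise for $\vectSet{X}_{23}$, so there are \emph{preservant} lines $\vect{r}_1+\N(\vect{u},\vect{v})\subseteq\vectSet{P}_{\vectSet{X}_{12}}$ and $\vect{r}_2+\N(\vect{v},\vect{w})\subseteq\vectSet{P}_{\vectSet{X}_{23}}$. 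Their middle endpoints lie in the $\N$-generated sets $\piout(\vectSet{P}_{\vectSet{X}_{12}})$ and $\piin(\vectSet{P}_{\vectSet{X}_{23}})$. These two sets have a non-degenerate intersection: since $\N_{\geq1}(\vectSet{F}_{12})\subseteq\Pumps(\vectSet{X}_{12})=\Pumps(\vectSet{P}_{\vectSet{X}_{12}})$ and $\vectSet{P}_{\vectSet{X}_{12}}\subseteq\Pumps(\vectSet{L}_{12})=\Q_{\geq0}(\vectSet{F}_{12})\cap\Z(\vectSet{F}_{12})$ (Lemma \ref{LemmaPumpsDirectedHybridlinear}), one gets via Lemma \ref{LemmaObviousPropertiesOfPumps} that $\Pumps(\Pumps(\piout(\vectSet{P}_{\vectSet{X}_{12}})))=\Q_{\geq0}(\piout\vectSet{F}_{12})\cap\Z(\piout\vectSet{F}_{12})$ --- semilinear, with the same $\Q$-span as $\piout(\vectSet{L}_{12})$ --- and similarly on the right, so Proposition \ref{PropositionNonDegenerateIntersection} transports the assumed non-degeneracy down to $\piout(\vectSet{P}_{\vectSet{X}_{12}})\cap\piin(\vectSet{P}_{\vectSet{X}_{23}})$. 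Now Lemma \ref{LemmaNonDegenerateCombineVectors} yields $\vect{s}_1\in\piout(\vectSet{P}_{\vectSet{X}_{12}})$ and $\vect{s}_2\in\piin(\vectSet{P}_{\vectSet{X}_{23}})$ with $\piout(\vect{r}_1)+\vect{s}_1=\piin(\vect{r}_2)+\vect{s}_2$; lifting $\vect{s}_1,\vect{s}_2$ to preservants $\vect{p}_1\in\vectSet{P}_{\vectSet{X}_{12}},\vect{p}_2\in\vectSet{P}_{\vectSet{X}_{23}}$ and adding them to $\vect{r}_1,\vect{r}_2$ realigns the middle coordinates of the two preservant lines at \emph{every} step (each $\vectSet{P}$ being closed under addition keeps each point a preservant). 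Composing the aligned preservants step by step gives a preservant line of $\vectSet{Z}$ in direction $(\vect{u},\vect{w})$, so $(\vect{u},\vect{w})\in\Pumps(\vectSet{P}_{\vectSet{Z}})\subseteq\Pumps(\vectSet{Z})$; in fact this shows $\Pumps(\vectSet{X}_{12})\circ\Pumps(\vectSet{X}_{23})\subseteq\Pumps(\vectSet{P}_{\vectSet{Z}})$, which delivers uniformity condition 1 for $\vectSet{Z}$ once the opposite inclusion of (ii) is known.

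That opposite inclusion "$\subseteq$" of (ii) I would obtain by lifting to uniform sets the corresponding abstract statement of \cite{Leroux13} (and \cite{Guttenberg24} for nested zero tests): given $(\vect{a},\vect{b})+\N(\vect{u},\vect{w})\subseteq\vectSet{Z}$, choose intermediate vectors $\vect{m}_n$ with $(\vect{a}+n\vect{u},\vect{m}_n)\in\vectSet{X}_{12}$, $(\vect{m}_n,\vect{b}+n\vect{w})\in\vectSet{X}_{23}$, and run nested Dickson extractions using well-directedness of both factors (hence of $\vectSet{T}:=(\vectSet{X}_{12}\times\N^{d_3})\cap(\N^{d_1}\times\vectSet{X}_{23})$) to obtain infinitely many $j<k$ with $((k-j)\vect{u},\vect{m}_k-\vect{m}_j)\in\Pumps(\vectSet{X}_{12})$ and $(\vect{m}_k-\vect{m}_j,(k-j)\vect{w})\in\Pumps(\vectSet{X}_{23})$; passing through $\dir$ (where positive scaling is free) and recovering $\Pumps$ from the definable description of $\dir$ together with the $\Q_{\geq0}(\cdot)\cap\Z(\cdot)$ shape of the double-pumps --- exactly the flatability machinery already invoked in Lemma \ref{LemmaSingleComponentUniform} --- removes the spurious scalar $k-j$ and yields $(\vect{u},\vect{w})\in\Pumps(\vectSet{X}_{12})\circ\Pumps(\vectSet{X}_{23})$. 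For (iii): well-directedness of $\vectSet{Z}$ is the same nested-extraction observation (products, intersections and projections of well-directed sets are well-directed); conditions 2 and 4 follow from (i), since $\vectSet{Z}\HybridizationRelation\vectSet{L}_{12}\circ\vectSet{L}_{23}=\vectSet{B}'+\N(\vectSet{F}')$ gives $\N_{\geq1}(\vectSet{F}')\subseteq\Pumps(\vectSet{Z})$, hence $\Pumps(\Pumps(\vectSet{Z}))=\Q_{\geq0}(\vectSet{F}')\cap\Z(\vectSet{F}')$, hence $\vectSet{Z}\subseteq\vectSet{B}'+\Pumps(\Pumps(\vectSet{Z}))$ with $\vectSet{B}'$ finite and $\vectSet{B}'+\Pumps(\Pumps(\vectSet{Z}))$ directed hybridlinear (base points of $\vectSet{L}_{12}\circ\vectSet{L}_{23}$ differ by elements of $\Z(\vectSet{F}')$, Lemma \ref{LemmaDirectedHybridlinearEquivalence}(3)); condition 1 is the inclusion just discussed; and condition 3, definability of $\dir(\vectSet{Z})$, follows from (ii) because $\dir(\vectSet{Z})=\{\vect{v}\mid\exists n\geq1,\ n\vect{v}\in\dir(\vectSet{X}_{12})\circ\dir(\vectSet{X}_{23})\}$, an existential projection plus scaling applied to the definable sets $\dir(\vectSet{X}_{12}),\dir(\vectSet{X}_{23})$.

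The main obstacle is the "$\subseteq$" inclusion of (ii), equivalently uniformity condition 1 for $\vectSet{Z}$: the naive extraction only produces a pump in a \emph{scaled} direction, and cancelling that scalar --- bridging from $\dir$ back to $\Pumps$ --- is precisely where the non-constructive composition results for semilinear subreachability relations of VASS and VASSnz must be invoked in the uniform-set form. A secondary, purely bookkeeping obstacle is ensuring that the assumed non-degeneracy of $\piin(\vectSet{L}_{23})\cap\piout(\vectSet{L}_{12})$ genuinely propagates --- first to the cylindrified relations in (i), and then to the projected preservant sets $\piout(\vectSet{P}_{\vectSet{X}_{12}})$ and $\piin(\vectSet{P}_{\vectSet{X}_{23}})$ in the fusing step --- both of which go through Lemma \ref{LemmaShiftGoodOverapproximation} / Proposition \ref{PropositionNonDegenerateIntersection} combined with the $\Q_{\geq0}(\cdot)\cap\Z(\cdot)$ description of the relevant double-pumps.
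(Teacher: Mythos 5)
Your proposal follows essentially the same route as the paper's proof: Lemma \ref{LemmaShiftGoodOverapproximation} for the nice overapproximation, Proposition \ref{PropositionNonDegenerateIntersection} to push the non-degeneracy down to $\piout(\vectSet{P}_{\vectSet{X}_{12}}) \cap \piin(\vectSet{P}_{\vectSet{X}_{23}})$, line fusion via Lemma \ref{LemmaNonDegenerateCombineVectors} for the inclusion $\Pumps(\vectSet{X}_{12}) \circ \Pumps(\vectSet{X}_{23}) \subseteq \Pumps(\vectSet{P}_{\vectSet{X}_{12} \circ \vectSet{X}_{23}})$, and well-directedness plus nested extraction for the converse, leaving the removal of the scalar factor $k-j$ at the same sketch level (interior directions plus line fusion) as the paper does. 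The remaining uniformity conditions are discharged the same way, so the proposal is correct and matches the paper's argument.
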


\begin{proof}
Observe first that by Proposition \ref{PropositionCheckNonDegenerate} not just \(\piin(\vectSet{L}_{23})=\piout(\vectSet{L}_{12})\) is non-degenerate, but also \(\piin(\vectSet{P}_{\vectSet{X}_{23}}) \cap \piout(\vectSet{P}_{\vectSet{X}_{12}})\) is non-degenerate, hence we can use line fusion.

First we prove that \(\vectSet{X}_{12} \circ \vectSet{X}_{23} \neq \emptyset\). We know that \(\vectSet{X}_{12}\) and \(\vectSet{X}_{23}\) are non-empty since they are uniform. Hence let \((\vect{x}_1, \vect{x}_2) \in \vectSet{X}_{12}\) and \((\vect{x}_2', \vect{x}_3) \in \vectSet{X}_{23}\). By Lemma \ref{LemmaNonDegenerateCombineVectors}, we can fuse the points \(\vect{x}_2\) and \(\vect{x}_2'\), in the sense that there exists \(\vect{p}_2 \in \piout(\vectSet{P}_{\vectSet{X}_{12}})\) and \(\vect{p}_2' \in \piin(\vectSet{P}_{\vectSet{X}_{23}})\) s.t. \(\vect{x}_2+\vect{p}_2=\vect{x}_2'+\vect{p}_2'\). By definition of projection, there exist \(\vect{p}_1, \vect{p}_3\) s.t. \((\vect{p}_1, \vect{p}_2) \in \vectSet{P}_{\vectSet{X}_{12}}\) and \((\vect{p}_2', \vect{p}_3) \in \vectSet{P}_{\vectSet{X}_{23}}\). By definition of preservants, we have \((\vect{x}_1+\vect{p}_1, \vect{x}_2+\vect{p}_2) \in \vectSet{X}_{12}\) and \((\vect{x}_2'+\vect{p}_2', \vect{x}_3+\vect{p}_3) \in \vectSet{X}_{23}\). Therefore \((\vect{x}_1+\vect{p}_1, \vect{x}_3+\vect{p}_3) \in \vectSet{X}_{12} \circ \vectSet{X}_{23}\), proving non-emptiness. A similar line fusion will happen many more times in this proof.

As a composition of well-directed relations, \(\vectSet{X}_{12} \circ \vectSet{X}_{23}\) is again well-directed. To prove this, simply pick \(N_2 \subseteq N_1 \subseteq \N\), i.e.\ a subsubsequence.

Next we check the properties of Definition \ref{DefinitionUniform}. 

1) We will prove that \(\Pumps(\vectSet{X}_{12} \circ \vectSet{X}_{23}) \subseteq \Pumps(\vectSet{X}_{12}) \circ \Pumps(\vectSet{X}_{23}) = \Pumps(\vectSet{P}_{\vectSet{X}_{12}}) \circ \Pumps(\vectSet{P}_{\vectSet{X}_{23}}) \subseteq \Pumps(\vectSet{P}_{\vectSet{X}_{12} \circ \vectSet{X}_{23}}) \subseteq \Pumps(\vectSet{X}_{12} \circ \vectSet{X}_{23})\). Observe first that the \(=\) in the middle immediately follows since \(\vectSet{X}_{12}, \vectSet{X}_{23}\) are uniform, and the last \(\subseteq\) is trivial since the preservants always have less pumps. It remains to show the other containments.

\(\Pumps(\vectSet{P}_{\vectSet{X}_{12}}) \circ \Pumps(\vectSet{P}_{\vectSet{X}_{23}}) \subseteq \Pumps(\vectSet{P}_{\vectSet{X}_{12} \circ \vectSet{X}_{23}})\): Let \((\vect{v}_1, \vect{v}_2) \in \Pumps(\vectSet{P}_{\vectSet{X}_{12}})\) and \((\vect{v}_2, \vect{v}_3) \in \Pumps(\vectSet{P}_{\vectSet{X}_{23}})\). Then there exists \((\vect{x}_1, \vect{x}_2)\) and \((\vect{x}_2', \vect{x}_3)\) s.t. \((\vect{x}_1, \vect{x}_2)+\N(\vect{v}_1, \vect{v}_2) \subseteq \vectSet{P}_{\vectSet{X}_{12}}\) and \((\vect{x}_2',\vect{x}_3)+\N (\vect{v}_2, \vect{v}_3) \subseteq \vectSet{P}_{\vectSet{X}_{23}}\). Since \(\piout(\vectSet{P}_{\vectSet{X}_{12}}) \cap \piin(\vectSet{P}_{\vectSet{X}_{23}})\) is non-degenerate, we fuse the lines using Lemma \ref{LemmaNonDegenerateCombineVectors} and obtain \((\vect{v}_1, \vect{v}_3) \in \Pumps(\vectSet{P}_{\vectSet{X}_{12} \circ \vectSet{X}_{23}})\) as claimed.

\(\Pumps(\vectSet{X}_{12} \circ \vectSet{X}_{23}) \subseteq \Pumps(\vectSet{X}_{12}) \circ \Pumps(\vectSet{X}_{23})\): Let \(\vect{v}=(\vect{v}_1, \vect{v}_3) \in \Pumps(\vectSet{X}_{12} \circ \vectSet{X}_{23})\). Then there exists \(\vect{x}=(\vect{x}_1, \vect{x}_3)\) s.t. \((\vect{x}_1, \vect{x}_3)+\N (\vect{v}_1, \vect{v}_3) \subseteq \vectSet{X}_{12} \circ \vectSet{X}_{23}\). By definition of \(\circ\), there exist \(\vect{x}_{2,m}\) s.t. \((\vect{x}_1+m \vect{v}_1, \vect{x}_{2,m})\in \vectSet{X}_{12}\) for all \(m\) and \((\vect{x}_{2,m}, \vect{x}_3+m \vect{v}_3) \in \vectSet{X}_{23}\).

Since \(\vectSet{X}_{12}\) is well-directed, there exists a subset \(N_1 \subseteq \N\) of indices s.t. \( ((k-j) \vect{v}_1, \vect{x}_{2,k}-\vect{x}_{2,j}) \in \dir(\vectSet{X}_{12})\) for all \(k>j\) in \(N_1\). Since \(\vectSet{X}_{23}\) is well-directed, there exists a subset \(N_2 \subseteq N_1\) of indices s.t. additionally \((\vect{x}_{2,k}-\vect{x}_{2,j}, (k-j) \vect{v}_3) \in \dir(\vectSet{X}_{23})\) for all \(k>j\) in \(N_2\). Hence \((k-j) \vect{v} \in \dir(\vectSet{X}_{12}) \circ \dir(\vectSet{X}_{23})\). We will not describe the way to instead obtain \(\Pumps(\vectSet{X}_{12}) \circ \Pumps(\vectSet{X}_{23})\) in detail, but the idea is simple: The reason we only obtained directions is due to the same problem we had in the proof of Lemma \ref{LemmaSingleComponentUniform}: Along the boundary, a single run can only pump multiples of the direction. But by cleverly choosing interior directions to add, we generate both in \(\vect{X}_{12}\) and \(\vectSet{X}_{23}\) a line without the factor \((k-j)\). But now the lines are in completely different locations. Hence we have to use the line fusion idea Lemma \ref{LemmaNonDegenerateCombineVectors} to finish the proof.

2) Obvious. A composition of definable cones is definable.

3) Let us first prove that \(\vectSet{L}_{12} \circ \vectSet{L}_{23}\) is directed. Let \((\vect{x}_1, \vect{x}_3), (\vect{y}_1, \vect{y}_3) \in \vectSet{L}_{12} \circ \vectSet{L}_{23}\). Then there exist \(\vect{x}_2, \vect{y}_2\) s.t. \((\vect{x}_1, \vect{x}_2), (\vect{y}_1, \vect{y}_2) \in \vectSet{L}_{12}\) and \((\vect{x}_2, \vect{x}_3), (\vect{y}_2, \vect{y}_3) \in \vectSet{L}_{23}\). Since \(\vectSet{L}_{12}\) is directed, there exists \((\vect{z}_1, \vect{z}_2) \in \vectSet{L}_{12}\) above both \((\vect{x}_1, \vect{x}_2)\) and \((\vect{y}_1, \vect{y}_2)\). Similarly, since \(\vectSet{L}_{23}\) is directed, there exists \((\vect{z}_2', \vect{z}_3) \in \vectSet{L}_{23}\) above both \((\vect{x}_2, \vect{x}_3)\) and \((\vect{y}_2, \vect{y}_3)\). Since \(\piin(\vectSet{L}_{23}) \cap \piout(\vectSet{L}_{12})\) is non-degenerate, we can combine the points \(\vect{z}_2\) and \(\vect{z}_2'\) to a common point \(\vect{z}_{2,\text{new}}\).

We use the base points of \(\vectSet{L}_{12} \circ \vectSet{L}_{23}\) as \(\vectSet{B}\).

To prove \(\vectSet{X} \subseteq \vectSet{B}+\Pumps(\Pumps(\vectSet{X}))\), we claim \(\Pumps(\Pumps(\vectSet{X}_{12} \circ \vectSet{X}_{23}))=\Pumps(\vectSet{L}_{12} \circ \vectSet{L}_{23})\). 3) then follows by choice of \(\vectSet{B}\).

Proof of claim: By monotonicity of \(\Pumps\) we have \(\Pumps(\Pumps(\vectSet{X}_{12} \circ \vectSet{X}_{23}))=\Pumps(\Pumps(\vectSet{X}_{12}) \circ \Pumps(\vectSet{X}_{23}))\) is lower bounded by \(\Pumps(\interior(\vectSet{L}_{12}) \circ \interior(\vectSet{L}_{23}))\) and upper bounded by \(\Pumps(\Pumps(\vectSet{L}_{12}) \circ \Pumps(\vectSet{L}_{23}))\). These are both equal to \(\Pumps(\vectSet{L}_{12} \circ \vectSet{L}_{23})\) by observing that adding or removing along the boundary does not influence the pumps of a directed hybridlinear set.
\end{proof}

\subsection{Step 3 Towards Primitive 2}

At this point we have shown that for every perfect m-eVASS \(\VAS\), one can in elementary time compute a semilinear formula for \(\Pumps(\vectSet{X})\), where \(\vectSet{X}:=\Rel(\VAS, \qin, \qfin)\). Namely, we use \eqref{EquationPumpsFormula} for every strongly-connected component \(\VAS_i\), and perform composition. It remains to check Primitive 2 using this fact. As already mentioned, our main contribution here was the introduction of the definition of \emph{uniform} and thereby making the theory of \cite{Leroux13} and \cite{GuttenbergRE23} computable. At this point we can simply rely on their semilinearity algorithm, while substituting our procedure at certain locations. We only summarize the idea of the algorithm quickly here. 

Since we will now work across perfect m-eVASSs, as opposed to before where \(\VAS_i\) used to refer to an SCC of \(\VAS\), we now use different indices \(\VAS_j\) to refer to different perfect m-eVASS, and write \(\vectSet{Y}_j:=\Rel(\VAS_j, \qin, \qfin)\) (\(\vectSet{Y}\) instead of \(\vectSet{X}\) to again clarify it is not just an SCC, but a different VASS).

The algorithm for Primitive 2 is as follows: Some of the notions we will explain afterwards, but we prefer to give a full description of the algorithm first.

Step 1: Apply the approximation algorithm to obtain a finite set \(\{\vectSet{Y}_1, \dots, \vectSet{Y}_k\}\) of perfect m-eVASS.

Step 2: Compute semilinear sets \(\vectSet{S}_j=\vectSet{C}_j \cap \vectSet{G}_j=\Pumps(\vectSet{Y}_j)\) for all \(j \in \{1,\dots, k\}\), where \(\vectSet{C}_j\) is \(\Q_{\geq 0}\)-g. and \(\vectSet{G}_j\) is \(\Z\)-g. (we use the letter \(\vectSet{G}\) because \(\Z\)-g. sets are also called \emph{grids}). Discard all m-eVASS where \(\dim(\vectSet{S}_j)< \dim(\vectSet{L})\).

Step 3: Let \(\{\vect{b}_1, \dots, \vect{b}_m\}\) be a representative system for the cosets of \(\vectSet{G}:=\bigcap_{j=1}^k \vectSet{G}_j\) inside \(\vectSet{L}-\vectSet{L}\). For every \(\text{iter}=1, \dots, m\) do: Define \(I_{\text{iter}}:=\{j \in \{1,\dots, k\} \mid \vectSet{Y}_j \cap (\vect{b}_j+ \vectSet{G}) \neq \emptyset\}\) as the set of m-eVASS which intersect this coset. Now for the finite set \(\{\vectSet{C}_j \mid j \in I_{\text{iter}}\}\) of \(\Q_{\geq 0}\)-g. relations check for existence of a so-called \emph{complete extraction} (described in a moment). If it does not exist, reject. Otherwise we are in the situation in the right of Figure \ref{FigureIntuitionPreservants}: We compute for the corresponding complete extraction \(\{\vectSet{K}_j \mid j \in I_{\text{iter}}\}\) base points \(\vect{x}_{\text{iter},j}\) such that \(\vect{x}_{\text{iter}, j}+(\vectSet{K}_j \cap \vectSet{G}) \subseteq \vectSet{Y}_j\). We then do a recursive call on \(\vectSet{L} \setminus \bigcup_{\text{iter}, j} \vect{x}_{\text{iter},j}+(\vectSet{K}_j \cap \vectSet{G})\), which corresponds to the red lines in the right of Figure \ref{FigureIntuitionPreservants}.

There is a lot to unpack and understand here. First of all, let us start with possible questions regarding the \(\vectSet{G}_j\) and \(\vectSet{G}\). A \(\Z\)-g. set carries modulo information, i.e.\ is defined by some conjunction of formulas \(\vect{a} \vect{x} \equiv b \mod m\), where \(b,m \in \N\) and \(\vect{a} \in \Z^n\). When we form the intersection of the \(\vectSet{G}_j\), we hence deal with the fact that some set might only be able to reach even numbered points, and another set only ones divisible by \(3\) etc. This also explains the ``representative system for the cosets'': Simply use one representative for every possible remainder \(r\) \(\mod \vectSet{G}\). Then the set \(I_{\text{iter}}\) contains the ``active'' sets in an SCC: If a set is only active on even numbers, then for the coset of odd numbers this set will not exist. 

Finally, we can define complete extraction (see also \cite{Leroux13}, Appendix E): 

\begin{definition}
Let \(\mathcal{K}=\{\vectSet{C}_1, \dots, \vectSet{C}_k\}\) be a finite set of \(\FO(\Q,\leq,+)\) definable \(\Q_{\geq 0}\)-g. sets. A \emph{complete extraction} is a finite set \(\{\vectSet{K}_1, \dots, \vectSet{K}_k\}\) of \(\Q_{\geq 0}\)-\emph{finitely} generated sets s.t. \(\vectSet{K}_j \subseteq \vectSet{C}_j\) for all \(1 \leq j \leq k\) and \(\bigcup_{j=1}^k \vectSet{K}_j =\bigcup_{j=1}^k \vectSet{C}_j\).
\end{definition}

The idea is best explained using the left of Figure \ref{FigureIntuitionPreservants}: It consists of two uniform sets, whose union of pumps is \(\{(x,y) \mid 0 \leq y \leq x\}\). So when we only consider the union, the algorithm would not detect the ``hole'' between the two uniform sets. However, complete extraction catches this: The set of cones \(\vectSet{C}_1:=\{(x,y) \mid y=0\}\) and \(\vectSet{C}_2:=\{(x,y) \mid 0 < y \leq x\}\) does not have a complete extraction, since any finite set of vectors from \(\vectSet{C}_2\) (remember \(\vectSet{K}_2\) has to be finitely generated) would have a minimal steepness \(>0\), and hence some angle would remain uncovered. 

In fact, complete extraction perfectly characterizes whether the union of multiple uniform sets (which agree on the modulus, hence we had to consider the cosets of \(\vectSet{G}\)) cover ``most of the space'': If there is no complete extraction, then a full dimensional hole will remain, otherwise we will be at least in the situation depicted on the right of Figure \ref{FigureIntuitionPreservants}: The complete extractions give rise to linear sets contained in the respective uniform sets. The starting points might be dislodged, such that some lines remain in the middle, but these can now be taken care of by recursion.

Returning to the actual algorithm, what we just explained is the reason for considering every coset of \(\vectSet{G}\) separately, and checking for complete extractions: From it we will be able to determine linear sets contained in the respective m-eVASS which together cover most of the space. Finally, in the algorithm we indirectly state what kind of linear set will be contained in the uniform set when finding a complete extraction: Some base point \(\vect{x}_{\text{iter},j}+(\vectSet{K}_j \cap \vectSet{G})\). And by complete extraction, they will cover most of the coset.

Since \(\vectSet{K}_j\) is a \(\Q_{\geq 0}\)-finitely generated set, \(\vectSet{K}_j \cap \vectSet{G}\) is \(\N\)-finitely generated, and our algorithm computing \(\vect{x}_{\text{iter},j}\) simply applies the argument of Lemma \ref{LemmaSingleComponentUniform} to find a run pumping the finitely many directions \(\vect{v}_1, \dots, \vect{v}_k\) which have to be pumped.

\end{document}